\DeclareFontFamily{U}{mathb}{\hyphenchar\font45}
\DeclareFontShape{U}{mathb}{m}{n}{
      <5> <6> <7> <8> <9> <10> gen * mathb
      <10.95> mathb10 <12> <14.4> <17.28> <20.74> <24.88> mathb12
}{}
\DeclareSymbolFont{mathb}{U}{mathb}{m}{n}
\DeclareMathSymbol{\llcurly}{3}{mathb}{"CE}
\DeclareMathSymbol{\ggcurly}{3}{mathb}{"CF}
\newcommand{\itm}{good}
\newcommand{\itms}{goods}
\newcommand{\bestkvalue}{best-k-value}
\newcommand{\bestkremainer}{best-k-remaining \itms}
\newcommand{\with}{relative to}
\newcommand{\unassignedItms}{S}
\newcommand{\argmax}{H}
\newcommand{\argmin}{L}
\newcommand{\eftx}{EF$2$X}
\newcommand{\efxf}{EFX-feasible}
\newcommand{\bstf}{EFX-best}
\newcommand{\bsttwof}{EF$2$X-best}
\newcommand{\eftxf}{EF$2$X-feasible}
\newcommand{\bstkf}{EFkX-best}
\newcommand{\efkxf}{EFkX-feasible}
\newcommand{\efxenvy}{EFX-envy}
\newcommand{\efxenvies}{EFX-envies}
\newcommand{\eftxenvy}{EF$2$X-envy}
\newcommand{\eftxenvies}{EF$2$X-envies}
\newcommand{\efxfset}{EFXF}
\newcommand{\bstfset}{EFXBest}
\newcommand{\eftxfset}{EF$2$XF}
\newcommand{\bsttfset}{EF$2$XBest}
\newcommand{\isefxf}[3]{$#1 \in \text{\efxfset}_{#2}(#3)$}
\newcommand{\isnotefxf}[3]{$#1 \not\in \text{\efxfset}_{#2}(#3)$}
\newcommand{\isbstf}[3]{$#1 \in \text{\bstfset}_{#2}(#3)$}
\newcommand{\isefxffor}[3]{$#1 \in \text{\efxfset}_{#2}(#3)$}
\newcommand{\isnotefxffor}[3]{$#1 \not\in \text{\efxfset}_{#2}(#3)$}
\newcommand{\isbstffor}[3]{$#1 \in \text{\bstfset}_{#2}(#3)$}
\newcommand{\isnotbstffor}[3]{$#1 \not \in \text{\bstfset}_{#2}(#3)$}
\newcommand{\iseftxffor}[3]{$#1 \in \text{\eftxfset}_{#2}(#3)$}
\newcommand{\arenoteftxffor}[3]{$#1 \not\in \text{\eftxfset}_{#2}(#3)$}
\newcommand{\isbsttwoffor}[3]{$#1 \in \text{\bsttfset}_{#2}(#3)$}
\newcommand{\arenotbsttwoffor}[3]{$#1 \not\in \text{\bsttfset}_{#2}(#3)$}
\newcommand{\areefxffor}[3]{$#1 \in \text{\efxfset}_{#2}(#3)$}
\newcommand{\arenotefxffor}[3]{$#1 \not\in \text{\efxfset}_{#2}(#3)$}
\newcommand{\areeftxffor}[3]{$#1 \in \text{\eftxfset}_{#2}(#3)$}
\newcommand{\good}{in-stage}
\newcommand{\goodpart}{in-stage partition}
\newcommand{\stageOne}{stage A}
\newcommand{\stageOnei}{stage Ai}
\newcommand{\stageOneii}{stage Aii}
\newcommand{\StageOnei}{Stage Ai}
\newcommand{\StageOneii}{Stage Aii}
\newcommand{\stageTwo}{stage B}
\newcommand{\stageTwoA}{stage B1}
\newcommand{\stageTwoNormal}{stage B2}
\newcommand{\stageTwoC}{stage B2i}
\newcommand{\stageTwoPerfect}{stage B2ii}
\newcommand{\normalize}{swap-optimize}
\newcommand{\normalized}{swap-optimal}
\newcommand{\normalizationprocess}{swap-optimization}
\newcommand{\normalizationality}{swap-optimality}
\newcommand{\StageOne}{Stage A}
\newcommand{\StageTwo}{Stage B}
\newcommand{\StageTwoA}{Stage B1}
\newcommand{\StageTwoNormal}{Stage B2}
\newcommand{\StageTwoC}{Stage B2i}
\newcommand{\StageTwoPerfect}{Stage B2ii}
\newcommand{\NormalizationProcess}{Swap-Optimization}
\newcommand{\X}{\mathbf{X}}
\newcommand{\Y}{\mathbf{Y}}
\newcommand{\Z}{\mathbf{Z}}
\newcommand{\xp}{\mathbf{X'}}
\newcommand{\xz}{\mathbf{X''}}
\newcommand{\xt}{\tilde{\X}}
\newcommand{\yp}{\Y'}
\newcommand{\ya}{Y_1}
\newcommand{\yb}{Y_2}
\newcommand{\yc}{Y_3}
\newcommand{\yd}{Y_4}
\newcommand{\yap}{Y'_1}
\newcommand{\ybp}{Y'_2}
\newcommand{\ycp}{Y'_3}
\newcommand{\ydp}{Y'_4}
\newcommand{\partt}{(X_1,X_2,X_3)}
\newcommand{\parttp}{(X_1^\prime,X_2^\prime,X_3^\prime)  }
\newcommand{\minone}{v_1(X_1)}
\newcommand{\mintwo}{\min_{r\in [2]} v_1(X_r)}
\newcommand{\minthree}{\min_{r\in [3]} v_1(X_r)}
\newcommand{\minfour}{\min_{r \in [4]} v_1(X_r)}
\newcommand{\minonep}{v_1(X'_1)}
\newcommand{\mintwop}{\min_{r\in [2]} v_1(X'_r)}
\newcommand{\minthreep}{\min_{r\in [3]} v_1(X'_r)}
\newcommand{\minfourp}{\min_{r \in [4]} v_1(X'_r)}
\newcommand{\minfourz}{\min_{r \in [4]} v_1(X''_r)}
\newcommand{\minonet}{v_1(\tilde{X_1})}
\newcommand{\mintwoy}{\min_{r\in [2]} v_1(Y_r)}
\newcommand{\xa}{X_1}
\newcommand{\xb}{X_2}
\newcommand{\xc}{X_3}
\newcommand{\xd}{X_4}
\newcommand{\xat}{\tilde{X_1}}
\newcommand{\xbt}{\tilde{X_2}}
\newcommand{\xct}{\tilde{X_3}}
\newcommand{\xdt}{\tilde{X_4}}
\newcommand{\xap}{X^\prime_1}
\newcommand{\xbp}{X^\prime_2}
\newcommand{\xcp}{X^\prime_3}
\newcommand{\xdp}{X^\prime_4}
\newcommand{\xaz}{X^{\prime \prime}_1}
\newcommand{\xbz}{X^{\prime \prime}_2}
\newcommand{\xcz}{X^{\prime \prime}_3}
\newcommand{\xdz}{X^{\prime \prime}_4}
\newcommand{\xab}{$(X_1,X_2)$ }
\newcommand{\lst}[3]{\ell_{#1}^{#2}(#3)}
\newcommand{\hst}[3]{h_{#1}^{#2}(#3)}
\newcommand{\lstitem}[3]{\ell_{#1}^{#2}(#3)}
\newcommand{\hstitem}[3]{h_{#1}^{#2}(#3)}
\newcommand{\biglstitem}[3]{\ell_{#1}^{#2}\Big(#3\Big)}
\newcommand{\bigerlstitem}[3]{\ell_{#1}^{#2}\big(#3\big)}
\newcommand{\bigerhstitem}[3]{h_{#1}^{#2}\big(#3\big)}
\newcommand{\hi}{\lst{i}{1}{X_4}}
\newcommand{\Hi}{\lst{i}{2}{X_4}}
\newcommand{\hj}{\lst{j}{1}{X_4}}
\newcommand{\Hj}{\lst{j}{2}{X_4}}
\newcommand{\hu}{\lst{u}{1}{X_4}}
\newcommand{\hip}{\lst{i}{1}{X'_4}}
\newcommand{\Hip}{\lst{i}{2}{X'_4}}
\newcommand{\hjp}{\lst{j}{1}{X'_4}}
\newcommand{\Hjp}{\lst{j}{2}{X'_4}}
\newcommand{\hup}{\lst{u}{1}{X'_4}}
\newcommand{\hiz}{\lst{i}{1}{X''_4}}
\newcommand{\Hiz}{\lst{i}{2}{X''_4}}
\newcommand{\hjz}{\lst{j}{1}{X''_4}}
\newcommand{\Hjz}{\lst{j}{2}{X''_4}}
\newcommand{\hiy}{\lst{i}{1}{Y_4}}
\newcommand{\gi}{\lst{i}{1}{X_3}}
\newcommand{\Gi}{\lst{i}{2}{X_3}}
\newcommand{\gj}{\lst{j}{1}{X_3}}
\newcommand{\Gj}{\lst{j}{2}{X_3}}
\newcommand{\gip}{\lst{i}{1}{X'_3}}
\newcommand{\Gip}{\lst{i}{2}{X'_3}}
\newcommand{\gjp}{\lst{j}{1}{X'_3}}
\newcommand{\Gjp}{\lst{j}{2}{X'_3}}
\newcommand{\gup}{\lst{u}{1}{X'_3}}
\newcommand{\giz}{\lst{i}{1}{X''_3}}
\newcommand{\Giz}{\lst{i}{2}{X''_3}}
\newcommand{\gjz}{\lst{j}{1}{X''_3}}
\newcommand{\Gjz}{\lst{j}{2}{X''_3}}
\newcommand{\lowerval}[1]{\prec_{#1}}
\newcommand{\lowereqval}[1]{\preceq_{#1}}
\newcommand{\greaterval}[1]{\succ_{#1}}
\newcommand{\greatereqval}[1]{\succeq_{#1}}
\newcommand{\efxenvyll}[1]{\llcurly_{#1}}
\newcommand{\efxenvygg}[1]{\ggcurly_{#1}}
\newcommand{\doesnotefxenvyll}[1]{\centernot\llcurly_{\!\!#1~}}
\newcommand{\doesnotefxenvygg}[1]{\centernot\ggcurly_{\!\!#1~}}
\newcommand{\caseA}{Case 1}
\newcommand{\caseB}{Case 2}
\newcommand{\subcaseA}{Sub-case 1}
\newcommand{\subcaseB}{Sub-case 2}
\newcommand{\subcaseC}{Sub-case 3}
\newcommand{\donef}{
Then we can construct a new partition $\xt$ that is either \eftxf \ or \good\
with $\phi(\xt)>\phi(X)$.}
\newcommand{\donew}{
Therefore by \cref{ToGoodPart3}, we can construct a new partition $\xt$ that is either \eftxf \ or \good \ 
with $\phi(\xt)\geq \mintwop > \phi(\X)$.}
\theoremstyle{plain}
\newtheorem{theorem}{Theorem}[section]
\newtheorem{definition}[theorem]{Definition}
\newtheorem{lemma}[theorem]{Lemma}
\newtheorem{corollary}[theorem]{Corollary}
\newtheorem{observation}[theorem]{Observation}
\newtheorem{claim}[theorem]{Claim}
\newtheorem{remark}[theorem]{Remark}
\tikzstyle{startstop} = [rectangle, rounded corners, minimum width=2.5cm, minimum height=1cm,text centered, draw=black, fill=red!30]
\tikzstyle{io} = [trapezium, trapezium left angle=70, trapezium right angle=110, minimum width=3cm, minimum height=1cm, text centered, draw=black, fill=blue!30]
\tikzstyle{process} = [rectangle, minimum width=2.5cm, minimum height=1cm, text centered, draw=black, fill=blue!30]
\tikzstyle{decision} = [diamond, minimum width=3cm, minimum height=1cm, text centered, draw=black, fill=green!30]
\tikzstyle{arrow} = [thick,->,>=stealth]
\tikzstyle{diam} = [diamond, aspect=2, draw, fill=red!40, text width=6em,text centered ]
\tikzstyle{block} = [rectangle, draw, fill=blue!20, text width=3cm,text centered, rounded corners, minimum height=2em ]
\tikzstyle{trap} = [trapezium, trapezium left angle=70, trapezium right angle=110, minimum height=2em, text centered, draw=red, fill=green!30]
\tikzstyle{rect} = [rectangle, minimum width=3cm, minimum height=1cm, text centered, draw=red, fill=orange!30]
\tikzstyle{line} = [draw, -latex]
\title{EF$2$X Exists For Four Agents}
\author[1]{Arash Ashuri\thanks{arash.ashoori0330@sharif.edu}}
\author[2]{Vasilis Gkatzelis\thanks{gkatz@drexel.edu. This author was partially supported by  NSF CAREER award CCF-2047907.}}
\author[3]{Alkmini Sgouritsa\thanks{alkmini@aueb.gr. The research project is implemented in the framework of H.F.R.I call “Basic research Financing (Horizontal support of all Sciences)” under the National Recovery and Resilience Plan “Greece 2.0” funded by the European Union-–NextGenerationEU (H.F.R.I. Project Number:15635).}}
\affil[1]{Sharif University of Technology}
\affil[2]{Drexel University}
\affil[3]{Athens University of Economics and Business, and Archimedes/Athena RC}
\date{}
\begin{document}
\maketitle

\begin{abstract}
    We study the fair allocation of indivisible goods among a group of agents, aiming to limit the envy between any two agents. The central open problem in this literature, which has proven to be extremely challenging, is regarding the existence of an EFX allocation, i.e., an allocation such that any envy from some agent $i$ toward another agent $j$ would vanish if we were to remove any single good from the bundle allocated to $j$. When the agents' valuations are additive, which has been the main focus of prior works, \cite{CGM24} were able to show that an EFX allocation is guaranteed to exist for all instances involving up to three agents. Subsequently, \cite{BCFF21} extended this guarantee to nice-cancelable valuations and \cite{ACGMM22} to MMS-feasible valuations. However, the existence of EFX allocations for instances involving four agents remains open, even for additive valuations.

    We contribute to this literature by focusing on EF2X, a relaxation of EFX which requires that any envy toward some agent vanishes if any \emph{two} of the goods allocated to that agent were to be removed. Our main result shows that EF2X allocations are guaranteed to exist for any instance with four agents, even for the class of cancelable valuations, which is more general than additive. Our proof is constructive, proposing an algorithm that computes such an allocation in pseudopolynomial time. Furthermore, for instances involving three agents we provide an algorithm that computes an EF2X allocation in polynomial time, in contrast to EFX, for which the fastest known algorithm for three agents is only pseudopolynomial.
\end{abstract}

\section{Introduction}
    During the last decade, the main focus of the fair division literature has been on the fair allocation of indivisible goods, and one of the main goals within this line of work is to limit the amount of envy between any two agents. Since it is well-known that some amount of envy is, in some cases, inevitable when allocating indivisible goods,\footnote{E.g., if one of the goods is more valuable than all other goods combined, any agent who receives it is bound to be envied.} a vast amount of work has focused on relaxations of ``envy-freeness'' and the extent to which they can be achieved or approximated. 
    
    A central problem within this literature is regarding the existence of allocations satisfying the EFX property, introduced by \citet{CKMPSW19}. An allocation is EFX if removing any \itm\ from the bundle allocated to an agent would ensure that no agent envies the remaining bundle. Despite a plethora of attempts to prove the existence (or non-existence) of EFX allocations in general, this problem remains open. \citet{PR20} proved the existence of EFX allocations for two agents.
    For instances with three agents, \citet{CGM24} were able to prove the existence of EFX allocations for additive valuations, and this was subsequently extended to nice-cancelable valuations (a special case of cancelable valuations) by \citet{BCFF21} and 
    even more general valuations by \cite{ACGMM22}. 
    In a very recent independent work that was done in parallel with our work,
    \cite{PGNV24} proved EFX allocations exist for any number of agents when there are at
    most three types of additive valuations.
    The existence of EFX allocations for instances involving four agents remains a very challenging open problem, even in the case of additive valuations.

    Our work focuses on an interesting relaxation of EFX, namely EF2X.
    An allocation is EF2X if removing any \emph{two} goods from the bundle allocated to an agent 
    would ensure that no agent envies the remaining bundle.
    This notion was introduced by \citet{ARS22}, who proved that an EF2X allocation exists for instances involving any number of agents, but with valuations that are rather restrictive: an agent's value for a bundle equals the sum of their values for each good in the bundle, like additive valuations, but the agent's value for each good $g$ is restricted to be either $0$ or $v(g)$, where the latter value is the same for all agents. In very recent work, \citet{KSS24} proved the same result for a different class of rather restrictive valuations ($(\infty,1)$-bounded valuations): this class defines a \itm\ to be ``relevant'' for an agent if its marginal value is not always zero, and the crucial restriction is that any two agents share at most one relevant \itm.

\subsection{Our Results}
    Rather than significantly restricting the valuations that the agents may have, we focus on valuations that are much more expressive, and we study the existence of EF2X allocations for instances involving four agents. Specifically, we consider \emph{cancelable valuations}, introduced by~\citet{BCFF21}, which strictly generalize the well-studied class of additive valuations. A valuation $v$ over a set of \itms\ $M$ is cancelable if for any two bundles $S, T \subset M$ and any \itm \ $g \in M \setminus (S \cup T)$, we have that $v (S \cup \{g\}) > v (T \cup \{g\})$ implies $v (S) > v (T)$. 
        
    Our main result uses a constructive argument to prove that EF2X allocations always exist for any instance involving four agents with cancelable valuations.
        
    \vspace{0.1in}
    \noindent {\bf Theorem:} For every instance involving four agents with cancelable valuation functions and any number of goods there exists an EF2X allocation.
    \vspace{0.1in}
        
    In fact, we only require three of these agents to have cancelable valuations; our result holds even if one agent has arbitrary monotone valuations. 
    Furthermore, we show that we can compute such an allocation in pseudopolynomial time.
    
    Proving this result is quite demanding, as it inherits a lot of the complexity of the EFX problem. Note that proofs of EFX existence for three agents tend to be quite long, requiring non-trivial techniques and careful case analysis. As a result, extending those arguments to instances with four agents becomes rather intractable. We overcome this obstacle by introducing some new notions and techniques that are likely to be useful more broadly, e.g., for solving the EFX problem (in fact, the allocation that we compute is often EFX, not just EF2X). These contributions allow us to make the problem more tractable, but our algorithm and its analysis are still quite long and technically demanding. 
    
    As a secondary result, using the same tools, for instances involving only three agents, we provide an algorithm that terminates in polynomial time.
        
    \vspace{0.1in}
    \noindent {\bf Theorem:} For every instance involving three agents with cancelable valuations and any number of goods, we can compute an EF2X allocation in polynomial time.

\subsection{Additional Related Work}
    EFX allocations have been shown to exist for any number of agents when these agents have identical valuations \citep{PR20}, lexicographic preferences \citep{HSVX21}, additive valuations with at most two types of goods \citep{GMV23}, as well as binary valuations \citep{HPPS20}, subsequently extended to bi-valued valuations \citep{ABRHV21}. \citet{CFKS23} proved EFX allocations exist when the agent's valuations are captured by a graph. 
    
    Another line of research has aimed to achieve multiplicative approximations of EFX. \citet{PR20} showed the existence of $1/2$-EFX allocations for subadditive valuations (subsequently derived in poly-time by \citet{CCLW19}), and \citet{AMN20} proved the existence of $1/\phi\approx 0.618$-EFX allocations for additive valuations. For more restrictive valuation functions, \citet{MS23} and \citet{ARS24} proved the existence of $2/3$-EFX allocations. 
    \citet{BKP24} achieved improved EFX approximations for restricted settings.

    For more general settings, a lot of work has focused on relaxations of EFX. One such relaxation has focused on ``partial allocations,'' donating some of the goods to charity and achieving EFX with the rest. This was first studied by \citet{CGH19} who showed the existence of a partial allocation that satisfies EFX and its Nash social welfare is half of the maximum possible. \citet{CKMS20} proved that EFX allocations exist for $n$ agents if up to $n-1$ goods can be donated; moreover, no agent envies the donated bundle. \citet{BCFF21} improved this number to $n-2$ goods with nice cancelable valuations, and \citet{M24} extended this to monotone valuations.
    \citet{BCFF21} further showed that an EFX allocation exists for $4$ agents with at most one donated good.   
    The number of donated goods was subsequently improved at the expense of achieving $(1-\varepsilon)$-EFX, instead of exact EFX \citep{CGMMM21, ACGMM22, BBK22, CSJS23}.      
    
    The existence of (approximate) EFX has also been studied along with Nash social welfare (NSW) guarantees. \citet{ABRHV21} showed that for additive bi-valued preferences, maximizing the NSW is always EFX. The tradeoff between EFX and NSW was recently studied by \citet{FMP24}.

    Another well-studied relaxation of EFX is {\em  envy freeness up to one good} (EF1), introduced by \citet{B10}. An allocation is EF1 if for every pair of agents $i$ and $j$, there exists some good in $j$'s bundle whose removal would ensure that $i$ does not envy $j$. \citet{LMMS04} showed that EF1 allocations always exist, even for general monotone valuations, and can be computed efficiently.
    There have been several other relaxations for EFX that are stronger than EF1, like EFL \cite{BBMN18}, EFR \cite{FHLSY21}, EEFX and MXS \cite{CGRSV22}, and their combinations (e.g., MXS and EFL \cite{AG24})).

    For the case of chore allocation, recently \citet{CS24} showed that the existence of EFX is not guaranteed even for instances with just 3 agents and 6 chores. The non-existence of EFX was also known for mixed goods and chores (mixed manna), where the valuations are not monotone \cite{BBBGKKKM24, HSVX23}.
    
    A broader overview of discrete fair division can be found in a recent survey by \citet{AABRLMVW23}.

    \section{Preliminaries}\label{prelims}
    An instance of discrete fair division is a tuple $\langle N,M,V\rangle$, where
    $N=[n]=\{1,2,\ldots,n\}$ is a set of agents, $M$ is a set of $m$ indivisible \itms, and $V= (v_1, v_2, \ldots, v_n)$ is a profile of valuation functions, where $v_i: 2^M \to \mathbb{R}_{\geq 0}$ for each agent $i\in N$ determines $i$'s value for each subset of goods. Whenever $v_i(S) < v_i(T)$, agent $i$ strictly prefers set $T$ to set $S$, and we denote this preference by $S \lowerval{i} T$; similarly, we use $S\lowereqval{i} T$ to denote weak preference, meaning that $v_i(S)\leq v_i(T)$. For notational simplicity, 
    we sometimes use $v(g)$ to denote $v(\{g\})$, i.e., the value for a single good, $S \cup g$ to denote $S \cup \{g\}$, and $S\setminus g$ 
    to denote $S \setminus \{g\}$. 
    Moreover, when we perform a union over disjoint sets, 
    we sometimes use $\sqcup$ instead of $\cup$ to emphasize their disjointness.

    \paragraph{Types of valuation functions.} 
    We consider valuation functions that are monotone, i.e., for any $S\subseteq T\subseteq M$, $v(S) \leq v(T)$. 
    A valuation function $v$ is \emph{non-degenerate} if $v (S) \ne v (T)$ for any two different bundles $S, T$.
    For convenience, throughout the paper, we assume that the valuation function of agent 1
    is non-degenerate (\cref{non degenerate} shows that
    this is without loss of generality). One of the most well-studied classes of valuation functions is \emph{additive}: a valuation function $v$ is additive if the value for any bundle $S\subseteq M$ is equal to the sum of the values of its goods, i.e., $v(S)=\sum_{g\in S} v(g)$.
    In this paper, we focus on the more general class of \emph{cancelable} valuations: a valuation function $v$ is \emph{cancelable}~\citep{BCFF21} 
    if for any two bundles $S, T \subset M$ 
    and any \itm \ $g \in M \setminus (S \cup T)$,
    if $v (S \cup g) > v (T \cup g)$, then $v (S) > v (T)$, i.e., removing the same good from two different bundles would not change the relative preference between the two. It is easy to verify
    that for any cancelable valuation $v$ and bundles $S, T, R$ such that $R \subseteq M \setminus (S \cup T)$ we have the following two properties, which we heavily use throughout the paper: 
    \begin{align*}
        v(S \cup R) > v(T\cup R)  &\Rightarrow v(S) > v(T) ~~~\text{and}\\
        v(S\cup R) \leq v(T\cup R) &\Leftarrow  v(S) \leq v(T).
    \end{align*}

    A class of valuations that is even more general than cancelable valuations\footnote{The fact that every cancelable valuation is MMS-feasible is shown in Lemma 2 of \citet{ACGMM22}. They claim that ``nice'' cancelable valuations are MMS-feasible, but their definition of nice cancelable coincides with our definition of a cancelable function, which is also the one originally used by \citet{BCFF21}.} is \emph{MMS-feasible} valuations: a valuation function $v$ is MMS-feasible if for every bundle $S \subseteq M$ and any two bipartitions
    of $S$, $\Y = (\ya, \yb)$ and $\yp = (\yap,\ybp)$, we have $\max(v(\ya), v(\yb)) \geq \min(v(\yap), v(\ybp))$.

    \paragraph{Sets of bundles and notation.}
    Throughout the paper, we use $\X= (X_1, X_2, \dots, X_k)$ and variants such as $\X'$ and $\tilde{\X}$ to denote a partition of all goods in $M$ into $k$ (pairwise disjoint) bundles, i.e. $\bigcup_{i\in[k]}X_i=M$. If the union of the bundles does not need to equal the set of all goods, we instead use $\Y = (Y_1, Y_2, \dots, Y_k)$. We let $\argmin_i(\Y)\in \arg\min_{j\in [k]} v_i(Y_j)$ be a least valuable bundle in $\Y$ from agent $i$'s perspective, 
    and $\argmax_i(\mathbf{Y})\in \arg\max_{j\in [k]} v_i(Y_j)$ to be a most valuable bundle. 
    In the extreme case where $v_i(Y_j)$ is the same for all $j\in [k]$ (and $k\geq 2$), we assign different bundles to $\argmin_i(\mathbf{Y})$ and $\argmax_i(\mathbf{Y})$, so it always holds that $\argmin_i(\mathbf{Y})\neq \argmax_i(\mathbf{Y})$. 
    For simplicity we write $\argmin_i(Y_1, Y_2, \dots , Y_k)$ instead of $\argmin_i((Y_1, Y_2, \dots , Y_k))$, and similarly for $\argmax_i$.
    
    Given an ordering $g_1\lowereqval{i} g_2 \lowereqval{i}\ldots \lowereqval{i} g_{|S|}$ of the goods of some bundle $S$ induced by the preferences of some agent $i$, we use $\lst{i}{k}{S}$ for some natural number $k\leq |S|$ to denote the $k$-th least valued \itm\ in bundle $S$ from agent $i$'s perspective, i.e., $\lst{i}{k}{S} = g_k$. Also, we let $\hst{i}{k}{S}$ be the $k$-th highest value \itm\ in $S$ from agent $i$'s perspective, i.e., $\hst{i}{k}{S}=g_{|S|+1-k}$. In the extreme case that $v_i(g)$ is the same for all $g\in S$ (and $|S|\geq 2$), we assign different goods  to $\lstitem{i}{1}{S}$ and $\hstitem{i}{1}{S}$, so we always have $\lstitem{i}{1}{S} \neq \hstitem{i}{1}{S}$.

\paragraph{\{EF, EFX, EF2X\}-envy.}  
    We say that an agent $i$ envies some bundle $T$ \with\ bundle $S$, if $S\lowerval{i} T$. 
    We say that agent $i$ \efxenvies\ $T$ \with\ $S$ 
    if there exists some $g\in T$ such that $S\lowerval{i} T\setminus g$, and we denote this by $S\efxenvyll{i} T$.
    We further write $S \doesnotefxenvyll{i} T $ if agent $i$ does not \efxenvy\  $T$ \with\ $S$. 
    We say that agent $i$ \eftxenvies \ $T$ \with\ $S$ 
    if there exist two distinct \itms\ $g_1,g_2 \in T$ 
    such that $S\lowerval{i} T \setminus \{g_1,g_2\}$.
    For simplicity, we use $(Y_1,...,Y_p) \doesnotefxenvyll{i} (Y'_1,...,Y'_q)$ 
    if agent $i$ does not \efxenvy\ any of the bundles $Y'_1,...,Y'_q$ \with\ 
    any of the bundles $Y_1,...,Y_p$.

    \paragraph{\{EFX, EF2X\}-feasibility.} 
    Given a set of disjoint bundles of \itms\ $\Y=(Y_1,...,Y_k)$, 
    we say that bundle $Y_j$ is \efxf\ for agent $i$ (or for valuation $v_i$) in $\Y$ 
    if agent $i$ does not \efxenvy\ any other bundle in $\Y$ \with\ $Y_j$.
    Similarly, we say that $Y_j$ is \eftxf\ for agent $i$ (or for valuation $v_i$) in $\Y$
    if agent $i$ does not \eftxenvy\ any bundle in $\Y$ \with\ $Y_j$. 
    We say that partition $\X$ is \efxf\ (respectively \eftxf) 
    if $k=n$ 
    and there is a way to assign each bundle in $\X$ to a distinct agent in $N$, 
    such that each bundle is \efxf\ (respectively \eftxf) for the agent it is assigned to in $\X$.

\paragraph{\{EFX, EF2X\}-best sets.} 
    Given a bundle $S\subseteq M$ and some natural number $k\leq |S|$, we use $G_k(S)= \{T\subseteq S : |T| = k\}$ to denote the set of subsets of $S$ of size $k$. Then, if $T^*\in \arg\max_{T \in G_k(S)}v_i (S\setminus T)$, we refer to $S\setminus T^*$, i.e., the best subset of $S$ that one can get after removing $k$ of its goods, as the \bestkremainer\ of $S$ w.r.t.\ agent $i$'s valuation. We refer to $w^k_i(S)=v_i (S\setminus T^*)$ as the {\em \bestkvalue} of a set $S$ for agent $i$; if $k>|S|$, we define the \bestkvalue\ of a set $S$ for any agent $i$ to be $0$. 
    We say that bundle $X_j$ is {\em EFX-best} and {\em EF2X-best} for agent $i$ in $\X$, if it has the maximum best-$1$-value and best-$2$-value, respectively, among all bundles in $\X$. 
    We write $\text{\efxfset}_i(\X)$, $\text{\eftxfset}_i(\X)$, and $\text{\bstfset}_i(\X)$
    to denote the set of \efxf, \eftxf, and \bstf\  bundles for agent $i$
    in partition $\X$, respectively.

\paragraph{The Plaut-Roughgarden (PR) algorithm.}
    This local search algorithm, dubbed the \emph{PR algorithm} by \citet{ACGMM22}, takes as input a set of bundles $\Y=(Y_1, Y_2,..., Y_k)$ and a valuation function $v$, and it returns a set of bundles $\yp=(Y'_1, Y'_2,..., Y'_k)$ that is \efxf\ for valuation $v$ in $\yp$ (i.e., all of its bundles are \efxf\ in $\yp$). If $Y_j$ is the least valuable bundle in $\Y$ w.r.t.\ valuation $v$, the algorithm checks whether $Y_j$ is \efxf\ w.r.t.\ $v$ in $\Y$. If it is, the algorithm terminates and returns $\Y$. On the other hand, if $Y_j$ is not \efxf\ in $\Y$, then there must exist another bundle $Y_i$ such that $v(Y_j)< v(Y_i \setminus g)$ for some good $g\in Y_i$. The PR algorithm then removes $g$ from $Y_i$, it adds it to $Y_j$, and it repeats the same sequence of steps until it reaches an \efxf\ set of bundles.

    Each time the PR algorithm moves a good, the minimum value over all the bundles, i.e., $\min_{Y\in \Y}v(Y)$, weakly increases, and if $v$ is non-degenerate, it strictly increases.
\begin{observation}
\label{PR}
    Let $\yp=(Y'_1,...,Y'_k)$ be the set of bundles returned if we run the PR algorithm with input $\Y=(Y_1,...,Y_n)$ and valuation function $v$. Then:
    \begin{equation*}
    \min(v(Y_1),...,v(Y_k)) \leq \min(v(Y'_1),...,v(Y'_k)).
    \end{equation*}
    Moreover, if there exists a bundle $Y_j$ that is not \efxf\ in $\Y$ w.r.t.\ valuation $v$ and $v$ is non-degenerate:
        \begin{equation*}
          \min\left(v(Y_1),...,v(Y_k)\right) < \min(v(Y'_1),...,v(Y'_k)).
        \end{equation*}
\end{observation}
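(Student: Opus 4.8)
The plan is to reduce both inequalities to the analysis of a single iteration of the algorithm and then compose these single-step guarantees along the finite sequence of good-moves performed before termination. Fix one iteration and let $Y_j$ be the least valuable bundle, so that $v(Y_j)=\min_r v(Y_r)$, and suppose it is not EFX-feasible, so there exist a bundle $Y_i$ and a good $g\in Y_i$ with $v(Y_j)<v(Y_i\setminus g)$. The algorithm then replaces $Y_j$ and $Y_i$ by $Y_j\cup g$ and $Y_i\setminus g$, leaving every other bundle untouched.

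First I would establish the weak inequality. After the move, bundle $j$ has value $v(Y_j\cup g)\geq v(Y_j)$ by monotonicity, bundle $i$ has value $v(Y_i\setminus g)>v(Y_j)$ by the choice of $g$, and every remaining bundle is unchanged with value at least $v(Y_j)$ (since $Y_j$ attained the minimum). Hence every bundle value after the move is at least the old minimum $v(Y_j)$, so the minimum weakly increases at each iteration; chaining this over all iterations yields $\min_r v(Y_r)\leq\min_r v(Y'_r)$, and this argument uses only monotonicity of $v$.

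For the strict inequality I first need to guarantee that the algorithm performs at least one move whenever some bundle of $\Y$ fails to be EFX-feasible. The key step is the implication that if the least valuable bundle $Y_j$ is EFX-feasible, then every bundle is EFX-feasible: for any $Y_l$ and any $r\neq l$ with $g\in Y_r$, if $r=j$ then $v(Y_r\setminus g)\leq v(Y_j)\leq v(Y_l)$ by monotonicity and minimality, while if $r\neq j$ then EFX-feasibility of $Y_j$ gives $v(Y_r\setminus g)\leq v(Y_j)\leq v(Y_l)$; in both cases $Y_l$ does not EFX-envy $Y_r$. Taking the contrapositive, if some bundle is not EFX-feasible then the least valuable bundle is not EFX-feasible, so the algorithm does not terminate immediately and makes at least one move. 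When $v$ is non-degenerate, distinct bundles receive distinct values, so the initial minimum is attained by a unique bundle $Y_j$; after the first move bundle $j$ has value $v(Y_j\cup g)>v(Y_j)$ (monotonicity combined with $g\notin Y_j$ and non-degeneracy), bundle $i$ has value $v(Y_i\setminus g)>v(Y_j)$, and each untouched bundle retains its value, which is strictly above $v(Y_j)$. Thus the minimum strictly increases at the first move and then weakly increases, giving $\min_r v(Y_r)<\min_r v(Y'_r)$.

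The step I expect to be the main obstacle is the reduction lemma that passes from ``some bundle is not EFX-feasible'' to ``the least valuable bundle is not EFX-feasible,'' since this is what links the global hypothesis of the statement to the purely local behavior of the algorithm, which only ever inspects the minimum bundle. A secondary point that needs care is the uniqueness of the minimizer under non-degeneracy, which I would invoke to conclude that the first move strictly raises the minimum rather than merely redistributing goods among several equally-valued least bundles.
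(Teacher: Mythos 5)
Your proof is correct and follows essentially the same route as the paper, which justifies this observation with exactly the per-move argument you formalize: each transfer weakly raises the minimum bundle value (strictly under non-degeneracy), and these single-step guarantees are chained over the finitely many moves. Your two supporting points---that EFX-feasibility of the least valuable bundle implies EFX-feasibility of all bundles (so a move actually occurs), and that non-degeneracy makes the minimizer unique so untouched bundles sit strictly above the old minimum---are precisely the details the paper leaves implicit.
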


\section{Computing \eftx \ Allocations for Four Agents}
\label{sec:4agents}
    
    Our main result (\cref{thm:main_result}) proves the existence of an EF2X allocation for instances involving an arbitrary number of goods and four agents with cancelable valuations; in fact, one of them can have an arbitrary monotone valuation. 
    
\begin{theorem}\label{thm:main_result}
    For every instance involving four agents with cancelable valuation functions and any number of goods, there exists an EF2X allocation, and we can compute one in pseudo-polynomial time. 
\end{theorem}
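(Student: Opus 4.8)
The plan is to single out agent~1 as a distinguished agent whose cancelable valuation---non-degenerate without loss of generality by \cref{non degenerate}---drives the whole construction through a potential function $\phi(\X)$ that measures how well-off agent~1 is under a partition $\partf$, most naturally the egalitarian quantity $\minfour$ (possibly refined by lexicographic tie-breaking). The goal is to reach an \eftxf\ partition, since by definition such a partition admits an assignment of its bundles to the agents under which everyone is EF$2$X-satisfied, which is exactly an \eftx\ allocation.

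A useful first reduction concerns the agent with a monotone valuation. If a bundle $X_j$ is \bsttwof\ for some agent $i$, then $v_i(X_j) \geq w^2_i(X_j) \geq w^2_i(X_r)$ for every bundle $X_r$---the first inequality by monotonicity and the second by the defining maximality of the best-$2$-value---so agent~$i$ does not \eftxenvy\ any bundle. Consequently the single agent allowed an arbitrary monotone valuation (the hardest to handle, since cancelability is unavailable for it) can always be satisfied by giving it an \bsttwof\ bundle, and the delicate \eftxenvy\ reasoning only has to be carried out for the remaining cancelable agents, in particular agent~1, for whom the cancelability identities of \cref{prelims} are available.

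The engine of the proof is a potential-based local-search scheme. I would maintain a class of ``\good'' partitions encoding the invariants the algorithm preserves---chiefly that $\X$ is \efxf\ for $v_1$, which can always be re-established by running the PR algorithm with $v_1$, together with the bookkeeping needed to match the remaining cancelable agents to bundles. The central lemma is a progress step: any \good\ partition that is not yet \eftxf\ can be transformed into a partition $\xt$ that is either \eftxf\ or \good\ with strictly larger potential, $\phi(\xt) > \phi(\X)$. Since every good-moving operation is followed by the PR algorithm, \cref{PR} and the non-degeneracy of $v_1$ guarantee that $\minfour$ never decreases and strictly increases at the decisive steps, so $\phi$ is a genuine monotone potential. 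As $\phi$ is bounded above by $v_1(M)$ and strictly increases at every non-terminating iteration, the process must halt at an \eftxf\ partition; with integer values the number of iterations is pseudo-polynomial and each iteration runs in polynomial time, yielding the claimed running time.

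The main obstacle is the progress lemma itself, the home of the stage-by-stage case analysis (\StageOne, \StageTwoA, \StageTwoC, \StageTwoPerfect, and so on). Unlike the three-agent EFX arguments, here two cancelable agents besides agent~1 must be matched \emph{simultaneously} to the leftover bundles, so one must track a whole web of \eftxenvy\ relations among them and show that, whenever no valid matching exists, one can relocate a good or apply \normalizationprocess\ to a pair of bundles so as to strictly raise $\phi$ while restoring every invariant. I expect the \normalizationprocess\ subroutine and the repeated appeal to cancelability---to argue that adding or deleting a common good preserves the relevant preferences after a reassignment---to be the crux, with most of the work going into ruling out the stubborn configurations where neither EF$2$X-feasibility nor an increase of $\phi$ is immediately available.
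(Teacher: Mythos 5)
Your skeleton---a distinguished agent~1, a potential function, a class of \good\ partitions maintained via the PR algorithm and \normalizationprocess, and a progress lemma proven by stage-wise case analysis---is indeed the architecture of the paper's proof, but as a proof your proposal has a decisive gap: the ``central lemma'' (every \good\ partition that is not \eftxf\ can be turned into one that is \eftxf\ or \good\ with strictly larger potential) is precisely the entire technical content of the theorem, and you explicitly leave it as ``the main obstacle.'' In the paper this step occupies everything from \cref{stage A or B section} through \cref{stage 2d section}: the formal definitions of \stageOne, \stageTwo, \stageTwoA, \stageTwoNormal, \stageTwoC, and \stageTwoPerfect; a dozen transformation lemmas (e.g., \cref{ToGoodPart3}, \cref{ToGoodPart5}, \cref{AgentAEnvies0}, \cref{SeperateBest-2-feasible}, \cref{auxiliary lemma case ii}); a matching step via Hall's theorem (\cref{bipartite graph}); and the stage theorems \cref{Stage 1 Proceed}, \cref{ToNormal}, \cref{PerfectExists}, \cref{Perfect}, and \cref{solve perfect}. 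Note also that the paper never establishes a single-step strict increase: most transitions only weakly increase the potential (some keep the partition intact and merely re-classify its stage), and termination follows because every \emph{cycle} of the flowchart must traverse an edge with a strict increase. Asserting that a progress lemma of this kind should exist restates the goal; it does not establish it.

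Beyond this, two of the concrete design choices you do commit to would fail. First, the potential: you propose $\phi(\X)=\minfour$, but the transformations this argument needs routinely remove goods from $X_3$ or $X_4$ (e.g., in \stageOne\ the good $\hj$ is moved from $X_4$ to $X_1$), which can strictly decrease $\min_{r\in[4]}v_1(X_r)$ even while agent~1's designated bundle improves; and your fallback---that running the PR algorithm after each move keeps this minimum monotone---is unsound, since \cref{PR} only guarantees the minimum does not drop relative to the \emph{input of the PR call}, not relative to the partition before the goods were moved. The paper instead uses $\phi(\X)=v_1(X_1)$, where the stage invariants pin $X_1$ down as a bundle that is \efxf\ for agent~1, and all monotonicity claims are proven for that specific quantity. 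Second, the monotone agent: you keep agent~1 cancelable and plan to satisfy the monotone agent by handing it an \bsttwof\ bundle. The paper does the reverse---agent~1 \emph{is} the (possibly merely monotone, non-degenerate) agent, see \cref{non degenerate}---because the analysis of the other three agents needs item-level cancelability: \normalizationprocess\ (\cref{Stability In Normalization}, \cref{Normalization best-k-feasibility}), the identity $w^1_i(S)=v_i(S\setminus \lst{i}{1}{S})$ behind \cref{k-min with k-least} and \cref{best props}, and the MMS-feasibility invoked in \cref{ToGoodPart3} and \cref{ToGoodPart5} all fail for general monotone valuations. Your observation that an \bsttwof\ bundle satisfies a monotone agent is correct, but it is not a reduction: which bundle is \bsttwof\ for that agent changes with every reallocation and may collide with the bundles the remaining agents need, and resolving exactly such collisions is what the stage-by-stage analysis exists to do.
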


    As the theorem statement suggests, instead of a purely existential argument, we provide a constructive proof using an algorithm that computes an EF2X allocation in pseudo-polynomial time. The algorithm starts with some arbitrary initial partition of the goods into four bundles, and it gradually manipulates this until it reaches a partition that is \eftxf. Given the complexity of the problem, the ways in which the algorithm gradually transforms a partition aiming to achieve EF2X-feasibility are quite non-trivial and technical. Therefore, to be able to clearly present this algorithm, we define a sequence of stages (formally defined later on), each of which determines a list of structural properties that a partition in that stage needs to satisfy. Given these stages, we can provide a high-level description of this complicated algorithm using the simple diagram of \cref{fig:algorithm-flow}.

    Our algorithm follows a sequence of steps, each of which takes as input some partition $\X=(\xa, \xb, \xc, \xd)$ and returns a partition $\xt=(\xat, \xbt, \xct, \xdt)$. The way in which each step transforms $\X$ into $\xt$ depends on the stage that $\X$ is in, i.e., on the structural properties that $\X$ satisfies. Using these properties, our algorithm carefully re-arranges the goods across bundles and our proof shows that the resulting partition, $\xt$, satisfies all of the corresponding properties of a new stage. 
    The diagram of \cref{fig:algorithm-flow} illustrates the high-level structure of the algorithm by providing a node for each stage that a partition $\X$ may be in and a directed edge between two stages for each step that transforms a partition $\X$ to $\xt$. E.g., the edge from \stageTwoA\ to \stageTwoNormal\ corresponds to a transformation that takes as input a partition $\X$ in \stageTwoA\ and returns a partition $\xt$ in \stageTwoNormal\ (see \cref{ToNormal}). 
    
    The algorithm starts with an arbitrary initial partition, and it uses the PR algorithm to get a partition whose bundles are all \efxf\ for agent $1$. It then turns this into an \good\ partition (the first stage in~\cref{fig:algorithm-flow}). From that point onward, at the end of every step (i.e., after using each edge of~\cref{fig:algorithm-flow}), every bundle in the resulting partition $\xt$ will have at least one agent for whom it is \efxf\ in $\xt$. If we can match the agents to distinct bundles that are \eftxf\ for them, then the algorithm can terminate. If not, then we essentially identify an ``under-demanded'' bundle (one that is \efxf\ for agent $1$, but not \eftxf\ for any other agent) and some ``over-demanded'' bundles. Each step then carefully re-distributes goods across these bundles to make the under-demanded bundle more appealing.

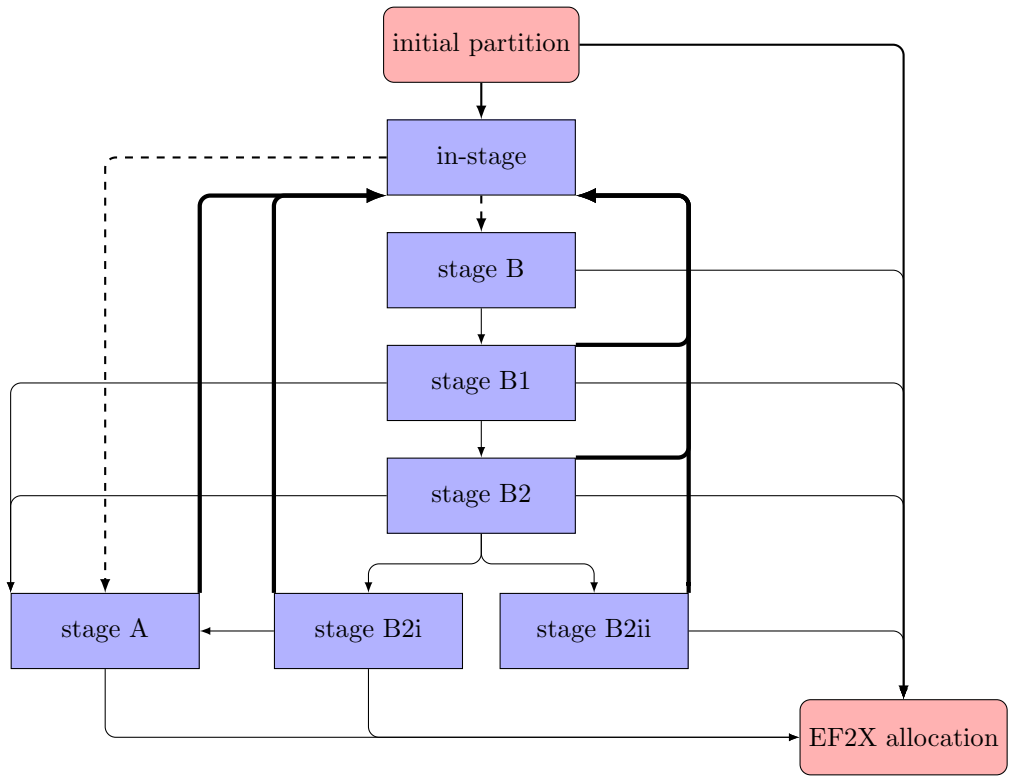
\begin{figure}
    \centering
    \begin{tikzpicture}[node distance=1.5cm and 2cm]
<TikZ code>
\node (start) [startstop] {initial partition};
\node (good part) [process, below of=start] {\good};
\node (stage 2) [process, below of=good part] {\stageTwo};
\node (stage 2A) [process, below of=stage 2]{\stageTwoA};
\node (normal) [process, below of=stage 2A]{\stageTwoNormal};
\node (stage 2C) [process, below of=normal, xshift=-1.5cm, yshift=-0.3 cm]{\stageTwoC};
\node (perfect) [process, below of=normal, xshift=1.5cm, yshift=-0.3 cm] {\stageTwoPerfect};
\node (stage 1) [process, left of=stage 2C, node distance=3.5 cm] {\stageOne};
\node (EF2X) [startstop, below right of=perfect, node distance=2cm, xshift=2.7cm] {\eftx\ allocation};

\path[line, thick, rounded corners] (start) -- (good part);
\path[line, thick, rounded corners] (start) -| (EF2X);

\path[line, thick, dashed, rounded corners] (good part.west) -| (stage 1);
\path[line, thick, dashed, rounded corners] (good part) --  (stage 2);

\path[line, ultra thick, rounded corners] (stage 1.north east) |-  (good part.south west);
\path[line, rounded corners] (stage 1) |- (EF2X);

\path[line, rounded corners] (stage 2) --  (stage 2A);
\path[line, rounded corners] (stage 2.east) -|  (EF2X);

\path[line, rounded corners] (stage 2A) -- (normal);
\path[line, rounded corners] (stage 2A.east) -| (EF2X);
\path[line, rounded corners] (stage 2A.west) -| (stage 1.north west);
\path[line, ultra thick, rounded corners] (stage 2A.north east) -| ($(perfect.north east) + (0,4)$) |- (good part.south east);

\path[line, rounded corners] (normal.east) -| (EF2X);
\path[line, rounded corners] (normal.west) -| (stage 1.north west);
\path[line, ultra thick, rounded corners] (normal.north east) -| ($(perfect.north east) + (0,4)$) |- (good part.south east);
\path[line, rounded corners] (normal.south) |- ($(normal.south) + (0.5,-0.4)$) -| (perfect.north);
\path[line, rounded corners] (normal.south) |- ($(normal.south) + (-0.5,-0.4)$) -| (stage 2C.north);

\path[line, rounded corners] (stage 2C.south) |- (EF2X);
\path[line, rounded corners] (stage 2C.west) -- (stage 1);
\path[line, ultra thick, rounded corners] (stage 2C.north west) |-  (good part.south west);

\path[line, rounded corners] (perfect.east) -| (EF2X);
\path[line, ultra thick, rounded corners] (perfect.north east) -| ($(perfect.north east) + (0,4)$) |- (good part.south east);
\end{tikzpicture}
    \caption{Dashed edges correspond to transitions where the partition remains the same. Non-dashed edges correspond to transitions where the partition may change. The non-dashed edges are bold whenever the potential strictly increases during the transition and non-bold whenever the potential weakly increases.}
    \label{fig:algorithm-flow}
\end{figure}

    \paragraph{Potential function and running time.} To ensure that the algorithm is ``making progress'' toward its goal of reaching an EF2X allocation, we use a potential function that weakly increases after each step.
    Specifically, we set aside one agent (Agent 1), who plays a special role throughout the algorithm, and we use that agent's value for the first bundle of the partition as the potential of that partition.
    \begin{definition}[potential function]
    The {\em potential} of a partition $\X$ is $\phi(\X) = v_1(X_1)$.
    \end{definition}
    Note that the bold edges in \cref{fig:algorithm-flow} (all edges going upward) correspond to steps of the algorithm that guarantee a strict increase of the potential. All other edges ensure that the potential does not drop. Using this observation, combined with the fact that the number of potential increases can be at most pseudo-polynomial, and the fact that each step takes no more than pseudo-polynomial time, we can conclude that the overall running time of this algorithm is pseudo-polynomial (a more detailed analysis can be found in \cref{running_time_4agents}).

    Note that throughout the algorithm agent $1$, who has a nondegenerate valuation, is fixed. On the other hand, the remaining three agents, $\{2,3,4\}$, are not fixed, and we instead use indices $i,j,u$ to refer to them and their preferences.

\subsection*{Formal Definitions of the Algorithm's Stages.}
    We now provide the formal definitions of the stages.

\begin{definition}[\stageOne]
    We say partition $\X$ is in \stageOne\ 
    if \areefxffor{\xa,\xb,\xc}{1}{\X} with
    $\xa = \argmin_1(X_1, X_2, X_3)$, 
    and at least one of the following holds:
    
    i) $\xd$ is \bstf \ for at least one agent $j \ne 1$ and
     $\xd  \setminus \hj  \greatereqval{j}  X_1 \cup \hj$, 
    
    ii) $\xd$ is \bstf \ for at least two distinct agents $i,j \ne 1$. 
\end{definition}

\begin{definition}[\stageTwo]
    We say partition $\X$ is in \stageTwo\ 
    if \areefxffor{\xa,\xb}{1}{\X}, 
    $\xa= \argmin_1(X_1, X_2)$,
    \isefxf{\xc}{i}{\X} for some agent $i \ne 1$, and \isefxf{\xd}{j}{\X} for some agent $j\notin \{1,i\}$.
\end{definition}

\begin{definition}[\goodpart]
    We say a partition is \good\ if it is either in \stageOne\ or in \stageTwo.
\end{definition}

\begin{definition}[\stageTwoA]
    We say partition $\X$ is in \stageTwoA\
    if it is in \stageTwo, \isbstffor{\xd}{j}{\X}, and
    \begin{center}
        $\xd \setminus \hj \greaterval{j} (\xa, \xb)$.
    \end{center}
    \end{definition}

\begin{definition}[\normalized\ sets]
\label{normalized sets def}
    We say that two sets $(S,T)$ are \normalized\ w.r.t.\ the agents $i$ and $j$ 
    if there exists no $g \in S, h \in T$ such that
    $g\lowereqval{i} h$, $h\lowereqval{j} g$, 
    and at least one of these two preferences is strict.
\end{definition}
     
\begin{definition}[\stageTwoNormal]
    \label{normal stage 2 def}
    We say partition $\X$ is in \stageTwoNormal\ 
    if it is in \stageTwoA \ and
    $(\xc, \xd)$ are \normalized\  w.r.t.\ the agents $i$ and $j$ that satisfy the conditions of \stageTwo.
\end{definition}

\begin{definition}[\stageTwoC]
    We say partition $\X$ is in \stageTwoC\ if it is in \stageTwoNormal \ and it also satisfies all the following conditions:
    \begin{align*}
        \xd \setminus \{\hi, \Hi\}  &\greaterval{i} \xa , \xb \\
        \xd \setminus \{\hj, \Hj\}  &\greaterval{j} \xa , \xb \\
        \argmax_i (\xa, \xb) &\greatereqval{i} \xc \setminus \gi \\
        \xc \cup \hj &\efxenvygg{j} \xd \setminus \hj 
    \end{align*}
\end{definition}

\begin{definition}[\stageTwoPerfect]
    \label{perfect def}
    We say partition $\X$ is in \stageTwoPerfect, if 
    it is in \stageTwoNormal\
    and
    \begin{equation*}
        \xc \setminus \gi \greaterval{i} (\xa, \xb).
    \end{equation*}
\end{definition}

\subsubsection*{Outline of the Section}

    In \cref{techniques}, we provide some of the high-level ideas used by our algorithm and new concepts introduced in the paper. 
    In  \cref{cancelabel valuations properties},  
    we provide some lemmas and observations that we will be using throughout the paper.
    In \cref{{stage A or B section}}, we provide the way we reallocate goods given either a \stageOne\ or a \stageTwo\ partition.
    In \cref{swap optimization section}, we define the \normalizationprocess\ process and its properties. In \cref{stage 2A section}, we provide the steps we follow given an partition in \stageTwoA.  In \cref{stage 2B section}, we provide the steps we follow given a \stageTwoNormal\ partition, and in \cref{stage 2c section}, we provide the steps for \stageTwoC \ partitions.
    Finally, in \cref{stage 2d section}, we conclude with the steps followed given a  \stageTwoPerfect\ partition.

\subsection{Our Techniques}
\label{techniques}

    We now present an overview of some of the high-level ideas of our algorithm and new concepts introduced in the paper.

    \paragraph{EFkX-Best-Bundle.}
    One concept that we introduce is the \bstkf\ bundle in a set of bundles $\Y$ for some agent $i$ (see \cref{prelims} for a definition). 
    This satisfies the very useful property that any bundle that is \bstkf\ (and, hence, also \efkxf) for some agent $i$, remains \efkxf\ even if we remove $i$'s $k$ least valued \itms\ from it (and keep those \itms\ unallocated). This holds since the remaining bundle is, by definition, at least as valuable for $i$ as any other bundle $T$, after the removal of any $k$ \itms\ from $T$. This is a property that we leverage when, during some reallocation process, we remove goods from some bundle, temporarily leading to a {\em partial} partition.
    
    To extract useful conditions for {\em full} partitions, we combine the above idea with the following observation: Given a partition, let $S$ be the \bsttwof\  bundle for
    some agent $i$ and $T$ be another bundle that is not \eftxf\ for $i$. Then, if $i$ received $T$, she would EF2X-envy $S$; even if we removed $i$'s two least valuable \itms\ from $S$, she would still prefer $S$ to $T$. However, the trick here is to instead remove just the {\em one} least valuable \itm\ of $S$ and add it to $T$. This gives a (full) partition where agent $i$ still prefers the new bundle $S$ to the new bundle $T$ (this is due to cancelable valuations). Moreover, since $S$ was \bsttwof\ for $i$ before, it remains at least \eftxf\ for $i$ after this change. 
    This way, some ``underdemanded'' bundle has gained a new good, and the one that was \bsttwof\ for $i$ is still at least \eftxf\ for her, even though it lost a good. Such re-allocations allow us to make progress toward the final allocation.
    
    Although this procedure will eventually increase the number of bundles that are \eftxf\ for an agent, we cannot use it by itself: it may compromise EF2X-feasibility for other agents, who may end up EF2X-envying the bundle that gained a good. To avoid this, we require conditions for other agents (one example is achieved by the \normalizationprocess\ that we discuss next) and possibly some case analysis.

\paragraph{\NormalizationProcess.}
    We also introduce a procedure we call {\em \normalizationprocess}, in which two bundles, each associated with a different agent, swap items in such a way that those agents are more satisfied with their new bundles. 
   
    \NormalizationProcess\ is useful to overcome some issues discussed at the end of the EFkX-Best-Bundle paragraph.
    Specifically, consider a partition of the goods such that three bundles, $S,T$, and $Q$, satisfy that $S$ is \bsttwof\ for agent $i$, $T$ is \bsttwof\ for agent $j$,
    and $Q$ is not \eftxf\ for either of them. Given such conditions, the EFkX-Best-Bundle paragraph suggested moving to $Q$ either $i$'s least valued \itm\ from $S$ or $j$'s least valued \itm\ from $T$. To ensure that this does not cause any envy, we can first perform \normalizationprocess\ between $S,T$ w.r.t.\ $i,j$, which maintains these initial conditions (by Lemmas~\ref{Normalization best-k-feasibility}~and~\ref{best props}). If $g$ is $i$'s least valuable good in $S$ and $h$ is $j$'s least valuable good in $T$ after the swap-optimization, we move to $Q$ $i$'s least valuable \itm\ in $\{g,h\}$. This guarantees that neither $i$ nor $j$ will envy the resulting bundle $Q$. To see this, suppose $i$ weakly prefers $h$ to $g$. Then, $j$ also weakly prefers $h$ to $g$, since $S$ and $T$ were swap-optimized. As we argued in the EFkX-Best-Bundle paragraph, $i$ will not envy the resulting bundle $Q$ after adding $g$ to it. Similarly, agent $j$ would not envy $Q$ even if we were to move $h$ from $T$ to $Q$, so she will not envy $Q$ if we add $g$ to it, which she values less, and keep $T$ unchanged. Similar arguments apply if the moved good is $h$.

\paragraph{Maintaining EFX-feasibility for 2 agents and more goods.}
    Suppose $i,j$ are two agents with MMS-feasible valuations, 
    and  
    $\Y=(\ya,\yb)$ are two disjoint bundles such that $\ya$ and $\yb$ are \efxf\ for agents $i,j$, respectively, in $\Y$. 
    Suppose now that we want to introduce a set $\unassignedItms$ of additional \itms\ and re-allocate, so that the above conditions are preserved.
    We will show that (in proof of \cref{ToGoodPart3})
    it is possible to rearrange all those \itms\ and derive a set of two disjoint bundles $\yp=(\yap,\ybp)$ (where $\yap\sqcup\ybp= \ya \sqcup \yb \sqcup \unassignedItms$) such that 
    $\yap$ and $\ybp$ are \efxf\ for agents $i,j$, respectively, in $\yp$, 
    and moreover, both agents are weakly more satisfied than before, i.e., $\yap \greatereqval{i} \ya$ and $\ybp \greatereqval{j} \yb$.

    This tool is useful in cases where two agents are \efxf\ with one distinct bundle each, and in our attempt to satisfy another agent (agent $1$), we need to remove some goods from another bundle and then add them 
    to the bundles associated with these two agents without breaking the condition that those bundles are \efxf\ for them.

    

\subsection{Some Useful Observations and Lemmas}

Before diving into the main technical sections of the paper, we dedicate this section to proving some facts that we use repeatedly later on. We first provide some useful observations regarding the class of cancelable valuations. We then focus on properties related to \bstkf\ and \efkxf\ bundles, and we conclude with some lemmas and observations regarding some fundamental transformations of partitions to achieve desired properties, which we often use as building blocks for our procedures later on.

\subsubsection*{Observations regarding cancelable valuations}
\label{cancelabel valuations properties}

\begin{observation}
\label{cancelable prop 1}
    If agent $i$ has a cancelable valuation function, then for any two disjoint bundles $S, T$ and any \itms\ $g_1, g_2 \in S$ such that
    $g_1\lowereqval{i} g_2$ and $S \setminus \{g_1,g_2\}  \greatereqval{i} T$ we have
\begin{center}
    $S \setminus g_1 \greatereqval{i} T \cup g_2$ ~~and~~ $S \setminus g_1  \greatereqval{i} T \cup g_1$.
\end{center}
\end{observation}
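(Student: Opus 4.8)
The plan is to prove the two inequalities in turn, using the first as a stepping stone for the second. The only tool I will need is the pair of derived cancelable properties stated in \cref{prelims}, namely that $v(A) \leq v(B)$ implies $v(A \cup R) \leq v(B \cup R)$ whenever $R$ is disjoint from $A \cup B$. I will treat $g_1, g_2$ as distinct (the case $g_1 = g_2$ makes the statement collapse to the hypothesis), so that the set identity $S \setminus g_1 = (S \setminus \{g_1,g_2\}) \cup g_2$ holds, which is what lets me bring the cancelable property to bear.

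First I would establish $S \setminus g_1 \greatereqval{i} T \cup g_2$. The idea is simply to take the hypothesis $S \setminus \{g_1,g_2\} \greatereqval{i} T$ and adjoin the \itm\ $g_2$ to both bundles. Here the cancelable property is applied with the added set $R = \{g_2\}$: since $g_2 \in S$ and $S, T$ are disjoint, we have $g_2 \notin T$, and of course $g_2 \notin S \setminus \{g_1,g_2\}$, so the disjointness requirement is met. This yields $(S \setminus \{g_1,g_2\}) \cup g_2 \greatereqval{i} T \cup g_2$, and the left-hand side is exactly $S \setminus g_1$, giving the first inequality.

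For the second inequality $S \setminus g_1 \greatereqval{i} T \cup g_1$, I would reduce it to the first by comparing the two candidate right-hand sides $T \cup g_1$ and $T \cup g_2$. Starting from the hypothesis $g_1 \lowereqval{i} g_2$ (i.e.\ the single-\itm\ comparison $v_i(\{g_1\}) \leq v_i(\{g_2\})$), I apply the cancelable property with the added set $R = T$, which is disjoint from both $\{g_1\}$ and $\{g_2\}$ because $g_1, g_2 \in S$ and $S, T$ are disjoint. This gives $T \cup g_1 \lowereqval{i} T \cup g_2$. Chaining this with the first inequality $T \cup g_2 \lowereqval{i} S \setminus g_1$ completes the argument.

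Since each step is a single invocation of the cancelable property, there is no real obstacle here; the one thing I would check carefully at every step is that the set being adjoined is genuinely disjoint from the two bundles being compared, which is precisely where the disjointness of $S$ and $T$ together with the memberships $g_1, g_2 \in S$ get used. I would also make explicit the reliance on $g_1 \neq g_2$, since it is what guarantees that removing $g_1$ while retaining $g_2$ matches the set identity $S \setminus g_1 = (S \setminus \{g_1,g_2\}) \cup g_2$ driving the first inequality.
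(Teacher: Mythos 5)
Your proof is correct and follows essentially the same route as the paper's: the first inequality is obtained by adjoining $g_2$ to both sides of $S \setminus \{g_1,g_2\} \greatereqval{i} T$ via cancelability (using the identity $S\setminus g_1 = (S\setminus\{g_1,g_2\})\cup g_2$), and the second by noting $g_1 \lowereqval{i} g_2$ gives $T \cup g_1 \lowereqval{i} T \cup g_2$ and chaining. One tiny inaccuracy in an aside: if $g_1 = g_2$ the statement does not ``collapse to the hypothesis''---the conclusion $S\setminus g_1 \greatereqval{i} T\cup g_1$ is then strictly stronger and can fail (e.g.\ $S=\{g\}$, $T=\emptyset$)---but since the observation is stated and used only for two distinct \itms, this does not affect your argument.
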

\begin{proof}
    The first condition is due to cancelability, by adding $g_2$ to $S \setminus \{g_1,g_2\}$ and to $T$. The second condition  
    is due to the fact that $g_1\lowereqval{i} g_2$, 
    which implies that $T \cup g_2 \greatereqval{i} T \cup g_1$.
\end{proof}

\begin{observation}
\label{cancp}
    If agent $i$ has a cancelable valuation function and for some bundles $S,T,Q,R$ 
    we have $S\lowereqval{i} T$, $Q\lowereqval{i} R$, and 
    $(S\cup T)\cap(Q\cup R)=\emptyset$,
    then $S\cup Q \lowereqval{i}  T \cup R$.
\end{observation}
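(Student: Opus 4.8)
The plan is to derive the conclusion by two successive applications of the second cancelable property recorded in the preliminaries, namely that $v(A) \leq v(B)$ implies $v(A \cup C) \leq v(B \cup C)$ whenever $C$ is disjoint from both $A$ and $B$. The whole argument is a transitivity chain through the intermediate bundle $T \cup Q$.

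First I would take the hypothesis $S \lowereqval{i} T$ and augment both sides by the bundle $Q$. The disjointness assumption $(S\cup T)\cap(Q\cup R)=\emptyset$ guarantees in particular that $Q$ is disjoint from $S \cup T$, so the cancelable property applies and yields
\[
    S \cup Q \lowereqval{i} T \cup Q.
\]
Next I would take the hypothesis $Q \lowereqval{i} R$ and augment both sides by the bundle $T$. Again the same disjointness assumption guarantees that $T$ is disjoint from $Q \cup R$, so the cancelable property applies and yields
\[
    T \cup Q \lowereqval{i} T \cup R.
\]

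Chaining the two inequalities by transitivity of $\lowereqval{i}$ gives $S \cup Q \lowereqval{i} T \cup Q \lowereqval{i} T \cup R$, which is exactly the claimed $S\cup Q \lowereqval{i}  T \cup R$. There is no genuine obstacle here beyond bookkeeping: the only point requiring care is verifying that the disjointness precondition of the cancelable property is satisfied at each of the two steps, and both checks follow immediately from the single hypothesis $(S\cup T)\cap(Q\cup R)=\emptyset$ (the first step needs $Q \cap (S\cup T)=\emptyset$, the second needs $T \cap (Q\cup R)=\emptyset$, and both are sub-statements of it). A symmetric chain through $S \cup R$ instead of $T \cup Q$ would work equally well, so the choice of intermediate bundle is immaterial.
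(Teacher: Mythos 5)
Your proof is correct and is essentially identical to the paper's own argument: both establish $S \cup Q \lowereqval{i} T \cup Q$ by adjoining $Q$, then $T \cup Q \lowereqval{i} T \cup R$ by adjoining $T$, and conclude by transitivity, with the disjointness checks handled the same way.
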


\begin{proof}
    Since $S\lowereqval{i} T$, and $Q$ is disjoint from $S$ and $T$, 
    by cancelability it holds that $S \cup Q \lowereqval{i} T \cup Q$. 
    Similarly, since $Q \lowereqval{i} R$, and $T$ is disjoint from $Q$ and $R$, 
    by cancelability it holds that $ T \cup Q \lowereqval{i} T \cup R$. By combining these two, we have that
    $S \cup Q \lowereqval{i} T \cup Q \lowereqval{i} T \cup R.$
\end{proof}

\begin{lemma}
\label{k-min with k-least}
    Given a bundle $S$ and an agent $i$ with a cancelable valuation function, 
    if $g_1,g_2,\ldots, g_k$ are the $k$ least valued \itms\ in $S$ w.r.t.\ agent $i$,
    then for any $k$ \itms\ $h_1,h_2,\ldots, h_k$ in $S$ we have 
    $S \setminus (g_1,g_2,\ldots,g_k) \greatereqval{i} S \setminus (h_1,h_2,\ldots,h_k)$. 
    Therefore, if $(g_1,\ldots, g_k)$ are the $k$ least valued \itms\ w.r.t.\ agent $i$'s valuation
    in $S$, then
    $w^k_i(S)= v_i(S\setminus (g_1,\ldots,g_k))$.
\end{lemma}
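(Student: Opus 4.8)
The plan is to establish the stated inequality by a one-item exchange argument, relying on the weak form of cancelability recorded in the preliminaries (namely $S \lowereqval{i} T \Rightarrow S\cup R \lowereqval{i} T\cup R$ whenever $R$ is disjoint from $S\cup T$); the closing ``therefore'' claim about $w^k_i(S)$ then follows immediately. Write $G=\{g_1,\ldots,g_k\}$ for the set of the $k$ least valued \itms\ and $H=\{h_1,\ldots,h_k\}$ for the arbitrary $k$-subset, and let $K_G=S\setminus G$ and $K_H=S\setminus H$ be the corresponding kept bundles; both have size $|S|-k$. I want to show $K_G \greatereqval{i} K_H$, and I would do so by transforming $K_H$ into $K_G$ through a sequence of single-item swaps, each of which weakly increases $v_i$.

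First I would analyze one swap step. If $K_H \neq K_G$, then since the two sets have equal size there exist $a \in K_G\setminus K_H$ and $b\in K_H\setminus K_G$. Unwinding the definitions, $a\in H\setminus G$ (kept under the $G$-removal, discarded under the $H$-removal) while $b\in G\setminus H$ (discarded under the $G$-removal, kept under the $H$-removal). The key observation is that $b\in G$ lies among the $k$ least valued \itms\ whereas $a\notin G$ does not, so $b \lowereqval{i} a$. Now form $K' = (K_H\setminus b)\cup a$. Since $a\notin K_H$ and $b\in K_H$, the remainder $R=K_H\setminus b$ is disjoint from both $\{a\}$ and $\{b\}$; applying the weak cancelability property to $\{b\}\lowereqval{i}\{a\}$ with the common part $R$ yields $K_H = \{b\}\cup R \lowereqval{i} \{a\}\cup R = K'$.

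Next I would iterate this step: each swap decreases the symmetric difference $|K_H \triangle K_G|$ by exactly two while weakly increasing the value, so after finitely many swaps we reach $K_G$ along a chain $K_H \lowereqval{i} \cdots \lowereqval{i} K_G$. This proves $S\setminus(g_1,\ldots,g_k) \greatereqval{i} S\setminus(h_1,\ldots,h_k)$. Since this holds for every $k$-subset $H\in G_k(S)$, the maximum defining $w^k_i(S)=\max_{T\in G_k(S)} v_i(S\setminus T)$ is attained at $T=G$, giving $w^k_i(S)=v_i(S\setminus(g_1,\ldots,g_k))$.

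I do not anticipate a genuine obstacle here; the only points requiring care are verifying $b\lowereqval{i} a$ from the fact that $b$ belongs to the $k$-least set while $a$ does not (this uses the sorted order and survives, as a weak inequality, even under ties in $v_i$), and checking the disjointness needed to invoke cancelability on the single-item comparison. One could instead induct on $|H\setminus G|$ or appeal to \cref{cancp} to bundle several swaps at once, but the single-swap formulation keeps the disjointness bookkeeping trivial.
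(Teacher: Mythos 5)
Your proof is correct and takes essentially the same route as the paper: both arguments rest on the facts that every item of $G$ is weakly $v_i$-dominated by every item of $S\setminus G$ (hence by every item of $H\setminus G$), and that weak cancelability allows attaching a common disjoint remainder to both sides of such a comparison. The paper simply performs the exchange all at once --- sorting $G\setminus H$ and $H\setminus G$, comparing them pairwise, and merging via \cref{cancp} --- whereas you swap one item at a time; the bundled alternative you mention at the end is precisely what the paper does.
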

\begin{proof}
    Let $Q=\{g_1,g_2,\ldots,g_k\}, R = \{h_1,h_2,\ldots,h_k\}$, and
    w.l.o.g.\ assume that $Q\setminus R = \{g'_1,\ldots,g'_p\}$ 
    and $R\setminus Q = \{h'_1,\ldots,h'_p\}$
    such that 
    $g'_1\lowereqval{i} g'_2 \lowereqval{i} \ldots \lowereqval{i} g'_p$ and 
    $h'_1\lowereqval{i} h'_2 \lowereqval{i} \ldots \lowereqval{i} h'_p$, 
    then we have that for any $j \in [p]$, 
    $g'_j \lowereqval{i} h'_j$, 
    hence, by repeatedly using \cref{cancp},
    $Q\setminus R=\{g'_1,\ldots,g'_p\} \lowereqval{i} \{h'_1,\ldots,h'_p\}=R\setminus Q$. 
    Trivially, $S \setminus (Q\cup R) \lowereqval{i} S \setminus (Q \cup R)$, 
    so by \cref{cancp}, we get
    $S\setminus R= \big(S \setminus (Q\cup R)\big) \cup (Q \setminus R) \lowereqval{i} 
    \big(S \setminus (Q\cup R)\big)\cup(R \setminus Q)= S \setminus Q$, 
    therefore $S\setminus R\lowereqval{i} S\setminus Q$.
\end{proof}

\subsubsection*{Observations regarding \efkxf\ and \bstkf\ bundles}

\begin{observation}   
\label{EFX prop:1}
    Agent $i$ \efxenvies\ set $T$ \with\ set $S$ if and only if $v_i(S)< w^1_i(T)$.
    Suppose $\X$ is a partition. If for some agent $i$, \isnotefxf{X_j}{i}{\X} and 
    $X_a \lowereqval{i} X_j$, then \isnotefxffor{X_a}{i}{\X}.  
    Also if \isefxffor{X_j}{i}{\X} 
    and $X_j \lowereqval{i} X_a$, then \isefxffor{X_a}{i}{\X}. 
    All the above also hold for \eftx-feasibility.
\end{observation}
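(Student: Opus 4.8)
The plan is to reduce everything to the first sentence---a value-based characterization of EFX-envy---and then obtain the two feasibility statements by monotonicity of $v_i$, with the EF2X claims following by an identical argument. First I would prove the characterization straight from the definitions, with no appeal to cancelability: by definition agent $i$ \efxenvies\ $T$ \with\ $S$ precisely when some $g\in T$ satisfies $v_i(S) < v_i(T\setminus g)$, and since $w^1_i(T)=\max_{g\in T} v_i(T\setminus g)$, such a $g$ exists if and only if $v_i(S) < w^1_i(T)$. This is the whole of the first claim.

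For the second statement, I would unpack \isnotefxf{X_j}{i}{\X} as the existence of a bundle $X_b\neq X_j$ with $v_i(X_j) < w^1_i(X_b)$; combined with $X_a\lowereqval{i} X_j$ this yields $v_i(X_a)\le v_i(X_j) < w^1_i(X_b)$, so $i$ also \efxenvies\ $X_b$ \with\ $X_a$. Since $w^1_i(X_b)\le v_i(X_b)$ by monotonicity, we get $v_i(X_a) < v_i(X_b)$, hence $X_b\neq X_a$, so this genuinely witnesses \isnotefxf{X_a}{i}{\X}. For the third statement, \isefxf{X_j}{i}{\X} means $v_i(X_j)\ge w^1_i(X_b)$ for every $X_b\neq X_j$; given $X_j\lowereqval{i} X_a$ I would check all $X_b\neq X_a$ by splitting into $X_b\neq X_j$ and $X_b=X_j$ (the latter using $w^1_i(X_j)\le v_i(X_j)$), obtaining $v_i(X_a)\ge v_i(X_j)\ge w^1_i(X_b)$ in both cases, so $i$ \efxenvies\ no other bundle \with\ $X_a$, i.e. \isefxf{X_a}{i}{\X}.

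For the EF2X versions the only change is to replace the best-$1$-value by the best-$2$-value throughout: \eftxenvy\ asks for two distinct $g_1,g_2\in T$ with $v_i(S)<v_i(T\setminus\{g_1,g_2\})$, whose existence is equivalent to $v_i(S)<w^2_i(T)$, and since $w^2_i(S)\le v_i(S)$ still holds by monotonicity, the remaining steps go through verbatim. None of these steps is deep; I expect the only point requiring care---rather than a genuine obstacle---to be the ``any other bundle'' clause in the definition of feasibility, i.e. keeping track of whether the envied bundle $X_b$ may coincide with the bundle whose (in)feasibility is being asserted. That bookkeeping is exactly what the elementary inequality $w^k_i(\cdot)\le v_i(\cdot)$ resolves, so the proof should be short and fully elementary.
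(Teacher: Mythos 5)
Your proof is correct: the paper states this observation without any proof, treating it as immediate from the definitions, and your argument is exactly the definitional verification the paper implicitly relies on (EFX/EF2X-envy toward $T$ is equivalent to $v_i(S) < w^k_i(T)$ for $k\in\{1,2\}$, combined with the monotonicity inequality $w^k_i(\cdot)\le v_i(\cdot)$). Your bookkeeping for whether the envied bundle $X_b$ can coincide with $X_a$ is the one point requiring care, and you resolve it correctly.
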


\begin{observation}
\label{best props}
    Suppose $\X$ is a partition.
    Then for every agent $i$ there exists an \bstf\ and an \bsttwof\ bundle. 
    If \isbstffor{X_j}{i}{\X}, then \isefxffor{X_j}{i}{\X}. 
    Also if \isbsttwoffor{X_j}{i}{\X}, 
    then \iseftxffor{X_j}{i}{\X}. 
    If \isbstffor{X_a}{i}{\X}, then \isefxffor{X_j}{i}{\X} if and only if
    $X_j \greatereqval{i} X_a \setminus \lst{i}{1}{X_a}$.
    If \isbsttwoffor{X_a}{i}{\X}, then \iseftxffor{X_j}{i}{\X} if and only if
    $X_j \greatereqval{i} X_a \setminus \{\lst{i}{1}{X_a},\lst{i}{2}{X_a}\}$.
\end{observation}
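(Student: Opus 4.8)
The plan is to reduce every clause to two inequalities on the best-$k$-values $w^k_i(\cdot)$, using \cref{EFX prop:1} to translate feasibility into value comparisons and \cref{k-min with k-least} to identify the best-$k$-remaining bundle. Concretely, \cref{EFX prop:1} says that $X_j$ is EFX-feasible for $i$ in $\X$ exactly when $v_i(X_j)\geq w^1_i(X_r)$ for every \emph{other} bundle $X_r$, and (via its last sentence) that $X_j$ is EF2X-feasible exactly when $v_i(X_j)\geq w^2_i(X_r)$ for every other $X_r$. By \cref{k-min with k-least}, $w^1_i(X_a)=v_i(X_a\setminus\lst{i}{1}{X_a})$ and $w^2_i(X_a)=v_i(X_a\setminus\{\lst{i}{1}{X_a},\lst{i}{2}{X_a}\})$, so the two displayed hypotheses $X_j \greatereqval{i} X_a\setminus\lst{i}{1}{X_a}$ and $X_j \greatereqval{i} X_a\setminus\{\lst{i}{1}{X_a},\lst{i}{2}{X_a}\}$ are precisely $v_i(X_j)\geq w^1_i(X_a)$ and $v_i(X_j)\geq w^2_i(X_a)$. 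Throughout I will also use monotonicity in the form $v_i(X_j)\geq w^k_i(X_j)$, since the best-$k$-remaining bundle of $X_j$ is a subset of $X_j$ (and $w^k_i(X_j)=0$ when $k$ exceeds its size).

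For the existence claim I would simply observe that $\X$ has finitely many bundles and the best-$k$-value is defined for each of them, so some bundle attains $\max_r w^1_i(X_r)$ and some attains $\max_r w^2_i(X_r)$; by definition these are EFX-best and EF2X-best for $i$. For the implication that an EFX-best bundle is EFX-feasible, suppose $X_j$ is EFX-best, i.e.\ $w^1_i(X_j)=\max_r w^1_i(X_r)$. Then for every other $X_r$, $v_i(X_j)\geq w^1_i(X_j)\geq w^1_i(X_r)$, so by the feasibility characterization $X_j$ is EFX-feasible. The EF2X statement is identical with $w^2_i$ in place of $w^1_i$.

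For the two ``if and only if'' parts, let $X_a$ be EFX-best, so $w^1_i(X_a)=\max_r w^1_i(X_r)$, and recall the hypothesis reads $v_i(X_j)\geq w^1_i(X_a)$. For the ($\Leftarrow$) direction, $v_i(X_j)\geq \max_r w^1_i(X_r)$ immediately gives $v_i(X_j)\geq w^1_i(X_r)$ for every other $X_r$, so $X_j$ is EFX-feasible. For ($\Rightarrow$), assume $X_j$ is EFX-feasible: if $a\neq j$ then $v_i(X_j)\geq w^1_i(X_a)$ follows directly, since $X_a$ is one of the other bundles; if $a=j$ then $v_i(X_j)\geq w^1_i(X_j)=w^1_i(X_a)$ by monotonicity. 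Either way the stated condition holds. The EF2X version is verbatim the same with $k=2$, using $w^2_i$ and \cref{k-min with k-least} for $k=2$.

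The argument is essentially bookkeeping, so I do not anticipate a genuine obstacle; the one point that needs care is the boundary case $a=j$ in the ($\Rightarrow$) direction, where the feasibility characterization is quantified only over bundles other than $X_j$ and hence does not by itself bound $w^k_i(X_a)$, forcing the separate appeal to $v_i(X_j)\geq w^k_i(X_j)$. A secondary thing worth confirming is that \cref{EFX prop:1} indeed supplies the EF2X analogue of the envy characterization (agent $i$ EF2X-envies $T$ with $S$ exactly when $v_i(S)<w^2_i(T)$), since the EF2X-best/feasible equivalence relies on it, and that in turn leans on \cref{k-min with k-least} to justify that discarding the two least-valued goods is the optimal removal.
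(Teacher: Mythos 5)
Your proposal is correct and follows essentially the same route as the paper: both reduce feasibility to comparisons of best-$k$-values, use \cref{k-min with k-least} (via cancelability) to identify $w^k_i(X_a)$ with the value after removing the $k$ least-valued goods, and then exploit maximality of the \bstkf\ bundle. The only difference is cosmetic -- the paper argues the forward direction by contrapositive and leaves the $a=j$ boundary case implicit (it is vacuous by monotonicity), whereas you handle it explicitly via $v_i(X_j)\geq w^k_i(X_j)$.
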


\begin{proof}
    If $X_j \lowerval{i} X_a \setminus \lst{i}{1}{X_a}$, by definition, $X_j$ is not \efxf. 
    On the other hand, if $X_j \greatereqval{i} X_a \setminus \lst{i}{1}{X_a}$, 
    since the valuations are cancelable, by \cref{k-min with k-least}, we get 
    $w_i^1(X_a) = v_i(X_a \setminus \lst{i}{1}{X_a})$.
    Since $X_a$ is \bstf \ for agent $i$, 
    for every bundle $X_u$ in the partition, we get that:
    \begin{equation*}
        v_i(X_j) \geq w^1_i(X_a) \geq w^1_i(X_u) 
    \end{equation*}
    Similar arguments hold for \eftxf\ and \bsttwof \ bundles. 
\end{proof}

\begin{observation}
\label{best-feasible condition}
    Consider a partition $\X$ and suppose that for some agent $i$ and bundle $X_j$, it is \isbstffor{X_j}{i}{\X}.  
    Then agent $i$ does not \efxenvy\ any bundle $X_a \in \X$
    \with\ bundle $X_j\setminus \lst{i}{1}{X_j}$.
\end{observation}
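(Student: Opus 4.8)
The plan is to reduce the claim to the value-based characterization of EFX-envy and then invoke the defining property of an EFX-best bundle, with cancelability entering at exactly one step. First I would recall from \cref{EFX prop:1} that agent $i$ does not \efxenvy\ a bundle $X_a$ \with\ a set $S$ precisely when $v_i(S) \geq w^1_i(X_a)$. Instantiating this with $S = X_j \setminus \lst{i}{1}{X_j}$, it suffices to establish the single inequality $v_i(X_j \setminus \lst{i}{1}{X_j}) \geq w^1_i(X_a)$ for every bundle $X_a \in \X$.

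Next I would unpack the hypothesis \isbstffor{X_j}{i}{\X}: by the definition of an EFX-best bundle, $X_j$ attains the maximum best-$1$-value among all bundles of $\X$ for agent $i$, that is, $w^1_i(X_j) \geq w^1_i(X_a)$ for every $X_a \in \X$. The only remaining gap is then to identify $v_i(X_j \setminus \lst{i}{1}{X_j})$ with $w^1_i(X_j)$. This is where cancelability of agent $i$'s valuation is used: \cref{k-min with k-least} (applied with $k=1$) guarantees that the best-$1$-remaining bundle of $X_j$ is obtained by deleting its single least valued \itm\ $\lst{i}{1}{X_j}$, so $w^1_i(X_j) = v_i(X_j \setminus \lst{i}{1}{X_j})$. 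Chaining these observations yields $v_i(X_j \setminus \lst{i}{1}{X_j}) = w^1_i(X_j) \geq w^1_i(X_a)$, which is exactly the inequality required above, completing the argument.

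Since the derivation is short and purely a composition of two earlier facts, I do not expect a genuine obstacle; the one point I would be careful about is the identification $w^1_i(X_j) = v_i(X_j \setminus \lst{i}{1}{X_j})$. For a general monotone valuation the best-$1$-remaining bundle need not arise from removing the least valued \itm, so this step relies essentially on cancelability through \cref{k-min with k-least}. I would therefore make sure this hypothesis is in force for agent $i$ (as it is throughout this section of observations on \bstkf\ bundles) rather than treating the equality as automatic.
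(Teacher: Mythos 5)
Your proposal is correct and follows essentially the same route as the paper's one-line proof, which likewise chains $v_i(X_j\setminus \lst{i}{1}{X_j}) = w^1_i(X_j) \geq w^1_i(X_a)$ using \cref{k-min with k-least} for the equality and the definition of \bstf\ for the inequality. Your explicit flag that the identification $w^1_i(X_j) = v_i(X_j \setminus \lst{i}{1}{X_j})$ rests on cancelability (via \cref{k-min with k-least}) is exactly the right point of care, and the paper uses it implicitly in the same way.
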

\begin{proof}
    Since \isbstffor{X_j}{i}{\X}, for any $X_a \in \X$ it holds that   
    $v_i(X_j\setminus \lst{i}{1}{X_j}) = w^1_i(X_j) \geq w^1_i(X_a)$.
\end{proof}

\begin{observation}
\label{get item to bstf}
    Suppose that in partition $\X$, some agent $i$ does not \eftxenvy\ some other bundle $X_b$
    \with\ bundle $X_a$, and does not \efxenvy\ any other bundle in the partition \with\ $X_a$.
    Then, moving any subset  $\unassignedItms \ne \emptyset$ from $X_b$ 
	 to $X_a$ results in a partition $\xp$, for which \isbstffor{X'_a}{i}{\xp}.
\end{observation}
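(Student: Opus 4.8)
The plan is to translate both hypotheses into inequalities between $v_i(X_a)$ and the relevant best-values, and then verify the defining inequality of an \bstf\ bundle, namely that $X'_a$ attains the largest best-$1$-value in $\xp$. By \cref{EFX prop:1}, the hypothesis that agent $i$ does not \efxenvy\ any bundle $X_c$ with $c\notin\{a,b\}$ \with\ $X_a$ is equivalent to $v_i(X_a)\ge w^1_i(X_c)$ for all such $c$. For $X_b$ we only have the \emph{weaker} guarantee: directly from the definition of the best-$2$-value as the maximum of $v_i(X_b\setminus\{g_1,g_2\})$ over distinct pairs, the hypothesis that $i$ does not \eftxenvy\ $X_b$ \with\ $X_a$ is equivalent to $v_i(X_a)\ge w^2_i(X_b)$. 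Since, by definition, $X'_a$ is \bstf\ for $i$ in $\xp$ exactly when $w^1_i(X'_a)\ge w^1_i(X'_c)$ for every other bundle $X'_c$, it suffices to prove a lower bound $w^1_i(X'_a)\ge v_i(X_a)$ together with matching upper bounds $w^1_i(X'_c)\le v_i(X_a)$ on all competitors.

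For the lower bound, fix any good $s\in\unassignedItms$, which exists because $\unassignedItms\neq\emptyset$. Since $s\in X'_a=X_a\cup\unassignedItms$, the definition of the best-$1$-value and monotonicity give
\[
w^1_i(X'_a)\;\ge\; v_i(X'_a\setminus s)\;=\;v_i\big(X_a\cup(\unassignedItms\setminus s)\big)\;\ge\;v_i(X_a).
\]
For the competitors with $c\notin\{a,b\}$ the bundle is unchanged, $X'_c=X_c$, so $w^1_i(X'_c)=w^1_i(X_c)\le v_i(X_a)$ by the reformulated EFX hypothesis, matching the lower bound.

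The one delicate case -- and the crux of the argument -- is $c=b$, where we only have the best-$2$-value guarantee and where $X'_b=X_b\setminus\unassignedItms$ has lost the nonempty set $\unassignedItms$. Let $g\in X'_b$ be a good achieving $w^1_i(X'_b)=v_i(X'_b\setminus g)$. Since $g\in X'_b$ we have $g\notin\unassignedItms$, hence $g\neq s$ and both $g,s$ are distinct goods of $X_b$. Then $X'_b\setminus g=(X_b\setminus\unassignedItms)\setminus g\subseteq X_b\setminus\{s,g\}$, so by monotonicity and the definition of the best-$2$-value,
\[
w^1_i(X'_b)\;=\;v_i(X'_b\setminus g)\;\le\;v_i\big(X_b\setminus\{s,g\}\big)\;\le\;w^2_i(X_b)\;\le\;v_i(X_a).
\]
The point is that removing all of $\unassignedItms$ and then one further good amounts to discarding at least two goods of $X_b$, which is exactly what lets the (merely) EF2X hypothesis on $X_b$ dominate the best-$1$-value of the shrunken bundle; this is why the nonemptiness of $\unassignedItms$ is essential.

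Combining the two displays yields $w^1_i(X'_a)\ge v_i(X_a)\ge w^1_i(X'_c)$ for every $c\neq a$, so $X'_a$ is \bstf\ for $i$ in $\xp$. The only genuinely degenerate subcase, $\unassignedItms=X_b$ (so $X'_b=\emptyset$ and there is no eligible $g$), is immediate since then $w^1_i(X'_b)=0\le v_i(X_a)$ by the best-value convention. Throughout, only monotonicity and the definitions of the best-$k$-values are used, so cancelability is not required. I expect the main obstacle to be precisely this $c=b$ case: one must notice that the lost set $\unassignedItms$ together with one extra removed good realizes a best-$2$-type removal on $X_b$, which is the sole place the EF2X (rather than EFX) hypothesis is consumed.
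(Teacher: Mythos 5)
Your proposal is correct and follows essentially the same route as the paper's proof: lower-bound $w^1_i(X'_a)$ by $v_i(X_a)$ using $\unassignedItms\neq\emptyset$, bound the unchanged bundles via the EFX hypothesis, and handle $X'_b$ through the chain $w^1_i(X'_b)\leq w^2_i(X_b)\leq v_i(X_a)$, which is exactly where the paper also consumes the EF2X hypothesis. Your write-up merely makes explicit the witness goods $s,g$ and the degenerate case $X'_b=\emptyset$, which the paper leaves implicit.
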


\begin{proof}
    Since $X'_a = X_a \cup \unassignedItms$, we get that 
    $w^1_i(X'_a)= \max_{x \in X'_a}  (X'_a \setminus x) \geq v_i(X_a)$,
	where the inequality comes from the fact that $\unassignedItms \ne \emptyset$.
    We also have that:
    \begin{center}
    $v_i(X_a) \geq w^2_i(X_b) =  \max_{x,y \in X_b} v_i(X_b \setminus (x,y)) 
    \geq \max_{x \in X'_b} v_i(X'_b \setminus x) = w^1_i(X'_b)$        
    \end{center}
    where the first inequality comes from the fact that agent $i$ does not \eftxenvy\ bundle $X_b$ \with\ bundle $X_a$, and the last inequality comes from the fact that 
    $X'_b = X_b \setminus \unassignedItms$ and $\unassignedItms \ne \emptyset$.
    Moreover, for any other bundle $X'_j\in \xp$ except $X'_a,X'_b$,
    we have that $X'_j = X_j$ and so, $v_i(X_a) \geq w^1_i(X_j)$. 
    Overall, since $\unassignedItms \ne \emptyset$, it holds that $w^1_i(X'_a) \geq v_i(X_a) \geq w^1_i(X'_j)$, for any $X'_j\in \xp$, and so, \isbstffor{X'_a}{i}{\xp}.
\end{proof}

\subsubsection*{Lemmas and observations for partition transformations}
\label{Reductions to good partition}

\begin{lemma}
\label{toStage1}
    Given any partition $\X$ in which every bundle is \efxf\ for agent $1$, we can construct a new partition $\xt$ that is either \eftxf\ or \good\
    with potential $\phi(\xt)\geq\minfour$.
\end{lemma}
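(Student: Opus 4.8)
The plan is to obtain $\xt$ from $\X$ by a pure \emph{relabeling} of the four bundles, without moving any \itm. Two things then come for free. First, the potential requirement is automatic: since $\xat$ is one of the original bundles, $\phi(\xt)=v_1(\xat)\geq\minfour$. Second, because every bundle is \efxf\ for agent $1$ by hypothesis, the agent-$1$ requirements of both \stageOne\ and \stageTwo\ (that the two or three designated bundles be \efxf\ for agent $1$, with $\xat$ chosen as the $\argmin_1$ among them) are satisfied by \emph{any} labeling, as long as I take $\xat$ to be least valued by agent $1$ among the relevant bundles. Hence the entire task reduces to arranging the conditions on the remaining three agents $\{2,3,4\}$, and I will show we can always land in \stageOne\ or \stageTwo\ (so the \eftxf\ alternative of the statement is never needed here).

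First I would look at the three agents $\{2,3,4\}$ and their \efxf\ bundles in $\X$; by \cref{best props} each of them has at least one such bundle (its \bstf\ bundle). \textbf{Case 1:} there exist two \emph{distinct} agents $i,j\in\{2,3,4\}$ and two \emph{distinct} bundles that are \efxf\ for $i$ and for $j$ respectively. Then I label these two bundles $\xct,\xdt$ (earmarked for $i$ and $j$), label the remaining two bundles $\xat,\xbt$ with $\xat=\argmin_1(\xat,\xbt)$, and observe that all conditions of \stageTwo\ hold, so $\xt$ is \good.

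\textbf{Case 2:} no such pair exists. I claim the three agents then share one common bundle that is the unique \efxf\ bundle of each. Indeed, if their \bstf\ bundles $f_2,f_3,f_4$ were not all equal, say $f_2\neq f_3$, then $(2,f_2)$ and $(3,f_3)$ would be a Case-1 witness; hence $f_2=f_3=f_4=:B$. If moreover some agent, say $2$, also had an \efxf\ bundle $B'\neq B$, then $(2,B')$ together with $(3,B)$ would again be a Case-1 witness; so $B$ is the only \efxf\ bundle of each of $2,3,4$, and therefore (being \efxf\ and, by \cref{best props}, forced to coincide with that agent's \bstf\ bundle) it is \bstf\ for all three. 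Labeling $B$ as $\xdt$ and the other three bundles with $\xat=\argmin_1(\xat,\xbt,\xct)$ then places $\xt$ in \stageOne\ via condition (ii). In both cases $\xt$ is \good\ and, as noted, $\phi(\xt)\geq\minfour$, which finishes the argument.

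The delicate point—the real content of the lemma—is the Hall-type dichotomy used in Case 2: one must check that the failure to match two distinct agents to two distinct \efxf\ bundles genuinely forces a single bundle to be simultaneously \bstf\ for all three agents, and this requires tracking distinctness of \emph{both} the agents and the bundles carefully (an equivalent phrasing is that the bipartite \efxf-graph between $\{2,3,4\}$ and the four bundles has no matching of size two, so by König its minimum vertex cover is a single vertex, which cannot be an agent since each agent has a \bstf\ bundle, hence must be the bundle $B$). Everything else is bookkeeping over the stage definitions.
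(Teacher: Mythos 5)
Your proof is correct; like the paper's, it is a pure relabeling argument (no \itm\ ever moves, so the bound $\phi(\xt)\geq\minfour$ is immediate in both), but the case split and the destinations differ enough to be worth noting. The paper asks each of agents $2,3,4$ to name an \bstf\ bundle: if the three named bundles are pairwise distinct, it hands each agent that bundle (which is \efxf\ by \cref{best props}) and gives agent $1$ the leftover, so the relabeled partition is \efxf\ and hence \eftxf\ --- i.e., the terminal branch of the lemma; otherwise two agents share an \bstf\ bundle, which is literally condition (ii) of \stageOne, with no further argument needed. You instead split on whether two distinct agents can be matched to two distinct \efxf\ (not necessarily \bstf) bundles: if so, you land in \stageTwo\ (an outcome the paper's proof of this lemma never produces), and if not, you prove a collapse --- a single bundle is the unique \efxf, hence \bstf, bundle of all three agents --- to reach \stageOneii. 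The trade-off: the paper's non-collision case terminates the entire algorithm on the spot with an \efxf\ allocation, while your Case 1 defers to \stageTwo, which must then be processed downstream (\cref{ToPerfect1} and its successors); conversely, your collision case requires the extra uniqueness/K\"onig-type argument, whereas the paper's collision case matches the definition of \stageOneii\ verbatim. Since the lemma statement permits either outcome, both routes establish exactly what is claimed.
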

\begin{proof}
We first ask every agent other than 1 to request their \bstf \ bundle. If each of these agents requests a distinct \bstf\ bundle, then we can get an \eftxf\ partition (in fact, we even get a \efxf\ partition) by giving each of them their \bstf\ bundle  (which is also \efxf, by \cref{best props}) and giving agent 1 the remaining bundle, which is \efxf\ for them (like all bundles in $\X$). Otherwise, there exist two agents $i,j \ne 1$ that have the same \bstf\ bundle. We can rename the bundles so that i) this ``shared'' \bstf\ bundle is $\xd$ and ii) the least valuable bundle among the other three bundles w.r.t.\ agent 1's valuation is $X_1$. Then, the resulting partition $\xt$ is in \stageOneii\ and, because it has the same bundles as $\X$, we get $\phi(\xt)=v_1(\xat)=\minfour$.
\end{proof}

Using \cref{toStage1}, combined with the PR algorithm, we can now prove the following statement, which we will use as the first step of our algorithm.

\begin{lemma}
\label{ToGoodPart2}
Given any partition $\X$, we can construct a new partition $\xt$ that is either \eftxf\ or \good\ with $\phi(\xt)\geq\minfour$. Also, if at least one bundle in $\X$ is not \efxf\ for agent $1$ in $\X$, then $\phi(\xt)>\minfour$.
\end{lemma}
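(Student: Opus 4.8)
The plan is to run the PR algorithm on $\X$ with respect to agent $1$'s valuation and then feed its output into \cref{toStage1}. The point is that \cref{toStage1} already does all the combinatorial work of producing an EF2X-feasible or \good\ partition, but it needs an input in which every bundle is EFX-feasible for agent $1$; the PR algorithm is exactly the tool that supplies such an input while only increasing agent $1$'s minimum bundle value.

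First I would run the PR algorithm with input $\X$ and valuation $v_1$, obtaining a partition $\xp=(\xap,\xbp,\xcp,\xdp)$ in which every bundle is \efxf\ for agent $1$ (this is the defining output guarantee of the PR algorithm). Recalling that $v_1$ is non-degenerate by our standing assumption, \cref{PR} gives the value bound $\minfourp \geq \minfour$, and furthermore this inequality is \emph{strict} whenever at least one bundle in $\X$ is not \efxf\ for agent $1$, since in that case the PR algorithm performs at least one good-move and strictly increases the minimum over the bundles.

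Next I would apply \cref{toStage1} to the partition $\xp$. Since all of its bundles are \efxf\ for agent $1$, the hypothesis of \cref{toStage1} is met, so it produces a partition $\xt$ that is either \eftxf\ or \good, with potential $\phi(\xt)\geq \minfourp$ (here the bound in \cref{toStage1} is stated relative to the minimum over \emph{its own} input partition, which is now $\xp$). Chaining the two value bounds yields
\begin{equation*}
    \phi(\xt)\;\geq\;\minfourp\;\geq\;\minfour,
\end{equation*}
which gives the first claim, and if some bundle of $\X$ is not \efxf\ for agent $1$ then the middle inequality is strict, giving $\phi(\xt)\geq\minfourp>\minfour$, which is the second claim.

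There is no genuinely hard step here, as both ingredients are already in place; the only subtleties to keep track of are (i) that the non-degeneracy of $v_1$ is precisely what licenses the strict increase in \cref{PR}, and (ii) that the potential guarantee inherited from \cref{toStage1} is measured against the \emph{post-PR} partition $\xp$ rather than the original $\X$, so one must explicitly compose the two bounds rather than invoke either in isolation.
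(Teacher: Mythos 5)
Your proposal is correct and follows essentially the same route as the paper's own proof: run the PR algorithm on $\X$ w.r.t.\ $v_1$, apply \cref{toStage1} to the resulting partition $\xp$, and chain the bound $\phi(\xt)\geq\minfourp\geq\minfour$ via \cref{PR}, with strictness from non-degeneracy when some bundle of $\X$ is not \efxf\ for agent $1$. Your two flagged subtleties (non-degeneracy licensing the strict increase, and the \cref{toStage1} bound being relative to $\xp$) are exactly the points the paper's argument relies on.
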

\begin{proof}
    We run the PR algorithm on $\X$ w.r.t.\ the valuation function of agent $1$, so all the bundles in the resulting partition $\xp$ are \efxf\ for agent $1$. Then, by using \cref{toStage1} on $\xp$, we get partition $\xt$ that is either \eftxf\ or \good\ with $\phi(\xt) \geq \minfourp$. By using \cref{PR}, we have that $\phi(\xt) \geq \minfourp \geq \minfour$. 
    Moreover, if there exists at least one bundle in $\X$ that is not \efxf\ for agent $1$ in $\X$, then \cref{PR} implies that $\minfourp > \minfour$, and therefore $\phi(\xt) > \minfour$.
\end{proof}

\begin{lemma}
\label{ToGoodPart0}
    Given any partition $\X$ such that \areefxffor{\xa,\xb,\xc}{1}{\X}, 
    and for some agent $j\ne 1$, \isefxf{\xd}{j}{\X},
    we can construct a new partition $\xt$ that is either \eftxf\ or \good\ with $\phi(\xt)\geq \minthree$.
\end{lemma}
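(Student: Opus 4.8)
The plan is to adapt the matching argument of \cref{toStage1} to this weaker hypothesis, in which only $\xa,\xb,\xc$ are \efxf\ for agent $1$ while the fourth bundle $\xd$ is merely \efxf\ for the designated agent $j$. The crucial structural simplification is that the output $\xt$ will be $\X$ with its four bundles relabeled, so no \itm\ ever moves. This immediately tames the potential: any relabeling that places one of the original bundles $\xa,\xb,\xc$ in the first position gives $\phi(\xt)=v_1(\xat)\ge\minthree$, since the minimum of $v_1$ over any nonempty sub-collection of $\{\xa,\xb,\xc\}$ is at least $\minthree$. Thus the whole task reduces to exhibiting, in each case, either a feasible assignment or a valid stage, without moving goods.

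First I would write $\{i,u\}=\{2,3,4\}\setminus\{j\}$ for the two remaining agents and ask each of $i$ and $u$ to request an \bstf\ bundle of $\X$; such a bundle exists and is \efxf\ for that agent by \cref{best props}. The proof is then a short case analysis on these two requests. If $i$ and $u$ request two distinct bundles, neither equal to $\xd$, then these are two distinct members of $\{\xa,\xb,\xc\}$: I assign them to $i$ and $u$, give $\xd$ to $j$, and hand the last of $\{\xa,\xb,\xc\}$ to agent $1$, for whom it is \efxf; every bundle is then \efxf\ for its owner, so $\xt$ is \efxf\ and hence \eftxf, and naming agent $1$'s bundle $\xat$ secures the potential bound. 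If instead $i$ and $u$ both request $\xd$, then $\xd$ is \bstf\ for two distinct agents other than $1$, so relabeling $\xa,\xb,\xc$ so that $X_1=\argmin_1(X_1,X_2,X_3)$ places $\xt$ in \stageOneii\ with $\phi(\xt)=\minthree$. If $i$ and $u$ both request the same bundle lying in $\{\xa,\xb,\xc\}$, I relabel that bundle as $X_3$ (it is \efxf\ for $i$), keep $\xd$ (which is \efxf\ for $j\ne i$), and relabel the remaining two bundles as $X_1,X_2$ with $X_1=\argmin_1(X_1,X_2)$ (both \efxf\ for agent $1$); this is precisely \stageTwo. Finally, if the two requests are distinct with exactly one equal to $\xd$, the requester of $\xd$ has it \efxf\ while the other requests some bundle of $\{\xa,\xb,\xc\}$, \efxf\ for them; relabeling that bundle $X_3$, keeping $\xd$ as $X_4$, and naming the remaining two $X_1,X_2$ with $X_1=\argmin_1(X_1,X_2)$ again yields \stageTwo.

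I expect no genuine obstacle: the argument is essentially bookkeeping, and the only thing to get right is that each relabeling literally meets the structural conditions of \stageOne\ or \stageTwo. In particular one must verify that the new $X_1,X_2,X_3$ remain \efxf\ for agent $1$ (immediate, as relabeling does not affect \efxf\ and these bundles are all among the original $\xa,\xb,\xc$), that the agents designated for $X_3$ and $X_4$ in the \stageTwo\ branches are distinct and both different from $1$, and that $\phi(\xt)\ge\minthree$ in every branch (which follows from the first paragraph, since $\xat$ is always one of the original bundles $\xa,\xb,\xc$). Conceptually, the only case in which the requests of $i$ and $u$ fail to yield a feasible matching is exactly the case that lands in a stage, so the dichotomy asserted by the lemma is built into the case split.
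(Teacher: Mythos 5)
Your proof is correct and takes essentially the same approach as the paper's: a pure relabeling argument (no goods move) driven by \bstf\ bundles, sending the partition to \stageOneii\ when $\xd$ is \bstf\ for two agents other than $1$, and to \stageTwo\ otherwise, with the potential bound following because the new $X_1$ is always one of the original $\xa,\xb,\xc$. The only cosmetic difference is that when the two agents outside $\{1,j\}$ request distinct bundles from $\{\xa,\xb,\xc\}$ you output an \efxf\ (hence \eftxf) partition directly, whereas the paper folds that case into \stageTwo; both are valid outcomes under the lemma.
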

\begin{proof}
    If  there are two agents for whom $\xd$ is \bstf\ in $\X$, 
    rename the minimum value bundle w.r.t.\ agent $1$'s valuation among 
    $(X_1,X_2,X_3)$  to $X_1$ and the other two bundles to $X_2,X_3$ accordingly, 
    then we get a partition $\xt$ that is in \stageOneii\ with $\phi(\xt)=\minthree$.

    Assume now that there are not two agents for whom $\xd$ is \bstf\ in $\X$. If there exists some agent for whom $\xd$ is \bstf\ in $\X$, name this agent $j$; otherwise, agent $j$ is as in the lemma's statement. In any case, \isefxffor{\xd}{j}{\X}, and for any agent other than $1,j$, $\xd$ is not \bstf. 
    This means that for some agent $i\notin \{1,j\}$, some \isbstffor{\X_c}{i}{\X}, with $\X_c\neq \xd$, and therefore it also holds that \isefxffor{\X_c}{i}{\X}.
    Rename bundle $X_c$ to $X_3$ and the two bundles 
    $\{X_1,X_2,X_3\}\setminus \{X_c\}$  to $(X_1,X_2)$ such that $X_1 \lowerval{1} X_2$.
    This new partition $\xt$ is in \stageTwo \ and its potential is 
    $\phi(\xt) = \minonet \geq \minthree$.
\end{proof}

\begin{lemma}
\label{ToGoodPart1}
    Given any partition $\X$ such that $\xd$ is \efxf\ for some agent $j \ne 1$,  we can construct a new partition $\xt$ that is either \eftxf\ or \good\ with $\phi(\xt)\geq \minthree$.
\end{lemma}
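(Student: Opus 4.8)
The plan is to reduce to \cref{ToGoodPart0}, whose hypotheses are that $\xa,\xb,\xc$ are \efxf\ for agent $1$ and that $\xd$ is \efxf\ for some agent $j\ne 1$; the second hypothesis is exactly what we are given, so the only work is to also make the first three bundles \efxf\ for agent $1$ without losing the potential guarantee. I would first split on how $\xd$ compares to the other bundles for agent $1$. If $v_1(\xd)\geq\minthree$, then $\minfour=\minthree$, so I can simply invoke \cref{ToGoodPart2} on $\X$ to obtain a partition $\xt$ that is \eftxf\ or \good\ with $\phi(\xt)\geq\minfour=\minthree$; note this case does not even use the hypothesis on $\xd$.

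The interesting case is $v_1(\xd)<\minthree$, i.e., $\xd$ is the strict minimum for agent $1$. Here I would run the PR algorithm w.r.t.\ agent $1$ on $(\xa,\xb,\xc)$ only, keeping $\xd$ untouched. By \cref{PR} the minimum of the three weakly increases, so it stays $\geq\minthree>v_1(\xd)$; in particular $\xd$ remains the strict global minimum for agent $1$, hence $w^1_1(\xd)\leq v_1(\xd)$ lies below the value of every other bundle and agent $1$ cannot \efxenvy\ $\xd$. Consequently the three resulting bundles are \efxf\ for agent $1$ in the full partition. If $\xd$ is still \efxf\ for some agent $\ne 1$, I apply \cref{ToGoodPart0} to the new partition and obtain $\phi(\xt)\geq\minthree$, since the minimum of its first three bundles is at least $\minthree$.

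The step I expect to be the main obstacle is guaranteeing that $\xd$ survives as \efxf\ for some agent $\ne 1$ after this local search. Running PR w.r.t.\ agent $1$ is blind to $j$, and it can concentrate the goods that $j$ values into a single bundle $B\in\{\xa,\xb,\xc\}$, so that $j$ (and possibly every other agent $\ne 1$) now \efxenvies\ $B$ \with\ $\xd$; then $\xd$ is ``homeless'', being \efxf\ for neither agent $1$ (it is too poor) nor any agent $\ne 1$. In this situation no relabeling yields a \good\ partition: $\xd$ cannot be the distinguished bundle that \stageOne\ requires to be \bstf\ for an agent $\ne 1$, nor can it serve as one of the two bundles that \stageTwo\ requires to be \efxf\ for agents $\ne 1$, and it cannot be grouped with agent $1$'s bundles either. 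My plan for this case is to \emph{enrich} $\xd$: the bundle $B$ that became \efxf\ (indeed \bstf) for $j$ is also \efxf\ for agent $1$, and, using the EFkX-best-bundle idea together with \normalizationprocess, I would move a carefully chosen good out of $B$ into $\xd$ so that $B$ stays \efxf\ for $j$ while $\xd$ becomes \efxf\ for agent $1$ as well. Once all four bundles are \efxf\ for agent $1$ I can finish with \cref{toStage1}, after verifying that each bundle retains value at least $\minthree$ for agent $1$, so that the resulting $\minfour$ is at least $\minthree$ and hence $\phi(\xt)\geq\minthree$. Balancing ``keep $B$ good for $j$'' against ``make $\xd$ good for agent $1$'' while protecting the potential is the delicate part, and it is exactly the kind of tension the paper's swap-optimization machinery is designed to resolve.
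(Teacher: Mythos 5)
Your first case is fine, and your diagnosis of the obstacle in the second case is accurate: running the PR algorithm w.r.t.\ agent $1$ on $(\xa,\xb,\xc)$ destroys the one hypothesis you were given, namely that $\xd$ is \efxf\ for some agent $j\neq 1$, and in the resulting ``homeless $\xd$'' situation no relabeling is \good. But your proposed repair is not a proof, and it would fail on its own terms. Moving a single good from $B$ to $\xd$ need not make $\xd$ \efxf\ for agent $1$ ($\xd$ can be arbitrarily poor relative to the rest), so the transfer would have to be repeated; each repetition lowers $v_1(B)$, and nothing prevents it from dropping below $\minthree$. Moreover, your intended endgame via \cref{toStage1} only yields $\phi(\xt)\geq\minfour$ of the final partition, and since $\xd$ may remain the strict minimum for agent $1$ even after the transfers, the ``verification that each bundle retains value at least $\minthree$'' that you defer can simply be false. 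So the second case has a genuine hole exactly where you suspected it.

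The paper closes it by reversing the order of operations: grow $\xd$ \emph{before} doing anything for agent $1$, and without PR. Name $\xa$ the minimum of $\partt$ for agent $1$, shrink each of $\xb,\xc$ to a minimal subset $X'_r\greaterval{1}\xa$, and move all removed goods into $\xd$. Because $\xd$ only grows while every other bundle only shrinks, $\xdp$ is automatically still \efxf\ for $j$ --- no balancing act is needed. Minimality gives $X'_r\setminus g\lowereqval{1}\xa$ for every $g\in X'_r$, so agent $1$ has no \efxenvy\ among $(\xa,\xbp,\xcp)$; hence either these three bundles are \efxf\ for agent $1$ in $\xp$ and \cref{ToGoodPart0} finishes with $\phi(\xt)\geq\minthreep=\minthree$, or the only possible failure is that agent $1$ \efxenvies\ $\xdp$ \with\ $\xa$. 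That envy is then exploited rather than repaired: it certifies $\xdp\greaterval{1}\xa$, so $\minfourp=v_1(\xa)=\minthree$, and since $\xp$ then contains a bundle that is not \efxf\ for agent $1$, \cref{ToGoodPart2} returns a partition with $\phi(\xt)>\minfourp=\minthree$. This is the mechanism your sketch is missing: the failure mode is funneled into envy toward the enriched $\xd$, which simultaneously pins the potential at $\minthree$ and triggers the strict-increase clause of \cref{ToGoodPart2}.
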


\begin{proof}
    Rearrange the bundles such that $\xa$ has the minimum value amongst $\partt$ w.r.t.\ agent $1$'s valuation. Then for $r \in \{2,3\} $, let $X'_r$ be a minimal subset 
    of $X_r$ such that $X'_r \greaterval{1} \xa$, and then let
    $\xp=(\xa,\xbp,\xcp,\xd\cup (\xb\setminus \xbp)\cup(\xc\setminus \xcp))$. 
    Then, we have  $\xdp \greatereqval{j} \xd$ because $\xd \subseteq \xdp$. 
    Since agent $j$ does not \efxenvy\ any bundle $X \in \{\xa,\xb,\xc\}$
    \with\ bundle $\xd$, 
    she does not \efxenvy\ any subset of  $X$
    \with\ bundle $\xdp$. 
    Hence \isefxffor{\xdp}{j}{\xp}.
    
    Hence, if the bundles $\parttp$ are \efxf\ for agent $1$ in $\xp$,
    by \cref{ToGoodPart0}, 
    we can construct a new partition $\xt$ that is either 
    \eftxf\ or \good\ with $\phi(\xt)\geq \minthreep = \minthree$.
    The equality holds because $\xap$ has the least value among  $(\xap, \xbp, \xcp)$
    w.r.t.\ agent $1$'s valuation, and equals to bundle $X_1$.
    
    So, assume otherwise. We have that $\parttp \doesnotefxenvyll{1} \parttp$. 
    Hence, agent $1$ \efxenvies\ $\xdp$ \with\ bundle $\xap$.
    Therefore $\xdp \greaterval{1} \xap = \argmin_1 \parttp$. 
    Hence, by \cref{ToGoodPart2},  
    we can construct a new partition $\xt$ 
    that is either \eftxf\ or \good\ with $\phi(\xt) > \minfourp=\minthree$.
\end{proof}

\begin{lemma}
\label{ToGoodPart4}
    Given any partition $\X$ such that for some agent $i\neq 1$ and $j \notin \{1,i\}$, 
    \isefxffor{\xc}{i}{\X}, \isefxffor{\xd}{j}{\X}, 
    $\xa \doesnotefxenvyll{1} \xb$, 
    and $\xa\lowerval{1} \xb$, we can construct a new partition $\xt$ that is either \eftxf\ or \good\ with $\phi(\xt)\geq \minone$.
\end{lemma}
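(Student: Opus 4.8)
The plan is to reduce to \cref{ToGoodPart1}, which takes any partition whose fourth bundle is \efxf\ for some agent other than $1$ and returns a partition that is either \eftxf\ or \good\ with potential at least $\minthree$, i.e. the smallest value for agent $1$ among the three bundles other than the special one. The key observation is that here we may play either $\xc$ or $\xd$ in the role of this special fourth bundle, since $\xc$ is \efxf\ for $i\neq 1$ and $\xd$ is \efxf\ for $j\neq 1$. Whichever one we reserve as special, the three remaining bundles always contain $\xa$ and $\xb$; and since $v_1(\xa)<v_1(\xb)$ by hypothesis, the resulting bound is governed by $v_1(\xa)=\minone$ together with the value of whichever of $\xc,\xd$ we did \emph{not} make special. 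So I would reserve as special the one of $\xc,\xd$ that agent $1$ values less, keeping the more valuable one among the three bundles that determine the bound. Because $v_1$ is non-degenerate, $v_1(\xc)$ and $v_1(\xd)$ each differ from $v_1(\xa)$, which cleanly splits the proof into two cases.

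In the first case, at least one of $\xc,\xd$ is worth strictly more than $\xa$ to agent $1$. Let $X_s$ denote the one of $\{\xc,\xd\}$ that agent $1$ values less and $X_\ell$ the other, so $v_1(X_\ell)>v_1(\xa)$. I would apply \cref{ToGoodPart1} to the partition in which $X_s$ sits in the fourth position (it is \efxf\ for its associated agent in $\{i,j\}$), leaving $\xa,\xb,X_\ell$ as the other three bundles. Their minimum value for agent $1$ is exactly $v_1(\xa)$, since both $v_1(\xb)$ and $v_1(X_\ell)$ exceed it. Thus \cref{ToGoodPart1} produces a partition $\xt$ that is either \eftxf\ or \good\ with $\phi(\xt)\ge\minone$, as required.

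In the remaining case, agent $1$ values both $\xc$ and $\xd$ strictly below $\xa$, and then no reallocation is needed because $\X$ is already in \stageTwo. Indeed $w^1_1(\xc)\le v_1(\xc)<v_1(\xa)$ and $w^1_1(\xd)\le v_1(\xd)<v_1(\xa)$, so by \cref{EFX prop:1} agent $1$ does not \efxenvy\ $\xc$ or $\xd$ \with\ $\xa$; combined with the hypothesis $\xa\doesnotefxenvyll{1}\xb$, this makes $\xa$ \efxf\ for agent $1$. The same two inequalities, together with $w^1_1(\xa)\le v_1(\xa)<v_1(\xb)$, show that agent $1$ does not \efxenvy\ any of $\xa,\xc,\xd$ \with\ $\xb$, so $\xb$ is \efxf\ for agent $1$ as well. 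Since $\xa=\argmin_1(\xa,\xb)$, \isefxffor{\xc}{i}{\X}, and \isefxffor{\xd}{j}{\X}, the partition meets the definition of \stageTwo, and I would take $\xt=\X$ with $\phi(\xt)=v_1(\xa)=\minone$.

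The step I expect to be the real crux is the first one. Reading \cref{ToGoodPart1} literally, with $\xd$ as the special bundle, only gives $\phi(\xt)\ge\minthree$, which falls short of $\minone$ exactly when $v_1(\xc)<v_1(\xa)$. The fix is to recognize that $\xc$ is equally eligible to serve as the special bundle, so one can always relegate the \emph{smaller} of $\xc,\xd$ to that role and thereby force the bound up to $\minone$; and the one configuration this maneuver cannot reach --- both $\xc$ and $\xd$ valued below $\xa$ --- is precisely the case in which the partition is already a \good\ partition in \stageTwo, needing no work at all.
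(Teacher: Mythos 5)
Your proof is correct and takes essentially the same route as the paper: the paper's proof also falls back to observing that $\X$ is already in \stageTwo\ in one case, and otherwise applies \cref{ToGoodPart1} after a w.l.o.g.\ renaming that puts whichever of $\xc,\xd$ is appropriate in the special fourth position so that $\minthree=\minone$. The only cosmetic difference is the organization of the case split --- the paper branches on whether $(\xa,\xb)$ are \efxf\ for agent $1$ (deriving $\xa\lowerval{1}\xc$ or $\xa\lowerval{1}\xd$ from the hypothesis $\xa\doesnotefxenvyll{1}\xb$ in the negative case), whereas you branch on the value comparisons of $\xc,\xd$ against $\xa$, which is an equivalent dichotomy here.
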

\begin{proof}
    If \isefxffor{(\xa,\xb)}{1}{\X}, by the conditions of the lemma's statement, $\X$ would be in \stageTwo \ with 
    $\phi(\X) = \minone$. In a different case, since  $\xa\lowerval{1} \xb$, it should be that \arenotefxffor{\xa}{1}{\X} and so, $\xa \lowerval{1} \xc$ or $\xa \lowerval{1} \xd$. W.l.o.g.\ assume that
    $\xa \lowerval{1} \xc$. 
    Then, by \cref{ToGoodPart1}, 
    we can construct a new partition $\xt$ that is either \eftxf\ or \good\ 
    with $\phi(\xt)\geq \minthree= \minone$.
\end{proof}

\begin{lemma}
\label{ToGoodPart3}
    Suppose $\Y=(\ya,\yb,\yc,\yd)$ is a set of  disjoint bundles (the union of which may not be the set of all \itms\ $M$),
    such that for some agent $i\neq 1$, \isefxffor{\yc}{i}{\Y}, and for some agent $j \notin \{1,i\}$, \isefxffor{\yd}{j}{\Y}.
    Then, we can construct a partition $\xt$ of all the \itms\ in $M$
    that is either \eftxf\ or \good\ 
    with $\phi(\xt) \geq \mintwoy$.
\end{lemma}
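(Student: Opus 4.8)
The plan is to turn the partial collection $\Y$ into a genuine partition of $M$ by absorbing the unallocated items $\unassignedItms = M\setminus(\ya\sqcup\yb\sqcup\yc\sqcup\yd)$ into the two already-feasible bundles $\yc,\yd$, leaving $\ya,\yb$ (and hence the quantity $\mintwoy$ that the potential must dominate) completely untouched. The core of the argument is the ``two-bundle maintenance'' claim promised in the techniques section: because $\yc$ is \efxf\ for $i$ and $\yd$ is \efxf\ for $j$, and since cancelable valuations are MMS-feasible, there is a repartition $(\ycp,\ydp)$ of $\yc\sqcup\yd\sqcup\unassignedItms$ such that $\ycp$ is \efxf\ for $i$ and $\ydp$ is \efxf\ for $j$ in the pair $(\ycp,\ydp)$, with $\ycp\greatereqval{i}\yc$ and $\ydp\greatereqval{j}\yd$. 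I would prove this as a stand-alone claim and then apply it here.

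For the maintenance claim I would run a local search over the $2$-partitions $(A,B)$ of $\yc\sqcup\yd\sqcup\unassignedItms$ satisfying the two improvement invariants $A\greatereqval{i}\yc$ and $B\greatereqval{j}\yd$; this family is nonempty, as $(\yc\cup\unassignedItms,\yd)$ qualifies. Whenever the agent owning a bundle \efxenvies\ the other, move that agent's least-valued good of the envied bundle across, which weakly raises her value and preserves her own invariant. The delicate point is that such a move may break the \emph{other} agent's invariant: the good $i$ wants to pull out of $B$ may be exactly the one $j$ needs in order to keep $B\greatereqval{j}\yd$. Showing that this simultaneous block cannot persist — so that the search terminates at a partition feasible for both — is where MMS-feasibility of $i$ and $j$ is essential.

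Granting the claim, $(\ya,\yb,\ycp,\ydp)$ is a partition of all of $M$. I first verify that the two feasibilities survive in this four-bundle partition: since $\ya,\yb$ are unchanged, $\ycp\greatereqval{i}\yc$, and $\yc$ was \efxf\ for $i$ in $\Y$, \cref{EFX prop:1} gives that $i$ does not \efxenvy\ $\ya$ or $\yb$ \with\ $\ycp$, while the claim itself rules out that $i$ \efxenvies\ $\ydp$ \with\ $\ycp$; hence $\ycp$ is \efxf\ for $i$ in $(\ya,\yb,\ycp,\ydp)$, and symmetrically $\ydp$ is \efxf\ for $j$. Relabel so that $\ya=\argmin_1(\ya,\yb)$; by non-degeneracy of $v_1$ we have $\ya\lowerval{1}\yb$ and $\mintwoy=v_1(\ya)$.

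It remains to drive this full partition to a \good\ or \eftxf\ one. If at least one of $\ycp,\ydp$ has $v_1$-value at least $v_1(\ya)$, I invoke \cref{ToGoodPart1} with the \emph{smaller} of the two (still \efxf\ for its agent) as the hidden fourth bundle; the other three bundles then all have value at least $v_1(\ya)$, so the returned potential is at least $\mintwoy$. Otherwise both $\ycp\lowerval{1}\ya$ and $\ydp\lowerval{1}\ya$, so agent $1$ does not envy either of them at all; if moreover $\ya\doesnotefxenvyll{1}\yb$, then $\ya$ is \efxf\ for agent $1$, hence so is $\yb$ by \cref{EFX prop:1}, and $(\ya,\yb,\ycp,\ydp)$ is already in \stageTwo\ with potential $v_1(\ya)=\mintwoy$. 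The stubborn case, which I expect to be the main obstacle, is when both $\ycp,\ydp\lowerval{1}\ya$ \emph{and} $\ya\efxenvyll{1}\yb$: then \cref{ToGoodPart4} does not apply (its hypothesis $\ya\doesnotefxenvyll{1}\yb$ fails) and \cref{ToGoodPart1} would be forced to place a low-value bundle first, losing the potential bound. Here I would repartition $\ya\sqcup\yb$ so that agent $1$ ends up with two mutually \efxf\ bundles each of value at least $v_1(\ya)$, the danger being that rebalancing for agent $1$ can create fresh EFX-envy from $i$ or $j$ toward the new bundles; neutralizing this is exactly what the \normalizationprocess\ and EFkX-best-bundle tools are meant to handle, and closing it cleanly is the crux I would devote the most effort to.
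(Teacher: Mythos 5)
There is a genuine gap, and it is the one you flag yourself. Your ``stubborn case'' ($\ycp,\ydp \lowerval{1} \ya$ together with $\ya \efxenvyll{1} \yb$) never arises in the paper's proof, because the paper's first move is different from yours: before absorbing anything into $\yc\sqcup\yd$, it replaces $\yb$ by a \emph{minimal} subset $\xbp\subseteq\yb$ with $\xbp \greaterval{1} \ya$ and enlarges the pool of unallocated goods to $\unassignedItms = \unassignedItms_1 \cup (\yb\setminus\xbp)$. Minimality gives $\ya \doesnotefxenvyll{1} \xbp$ for free; since the second bundle only \emph{loses} goods, the hypotheses that $i$ does not \efxenvy\ it \with\ $\yc$ and $j$ does not \efxenvy\ it \with\ $\yd$ are untouched; and after redistributing $\unassignedItms$ over the third and fourth bundles, \cref{ToGoodPart4} applies unconditionally with potential $v_1(\ya)=\mintwoy$ --- including when agent $1$ \efxenvies\ the new third or fourth bundle, a situation \cref{ToGoodPart4} handles internally via \cref{ToGoodPart1,ToGoodPart2}. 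By keeping $\yb$ intact, you are forced into the configuration where no lemma of the section applies, and the repair you sketch (repartitioning $\ya\sqcup\yb$ for agent $1$) fails as stated for exactly the reason you suspect: bundles one and two may then gain goods, destroying the non-envy of $i$ and $j$ toward them on which everything else rests. The shrink-first ordering is the missing idea; with it, your remaining case analysis becomes unnecessary, since \cref{ToGoodPart4} covers all cases at once.

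A secondary, repairable gap: your two-invariant local search for the ``maintenance claim'' is left incomplete at the simultaneous-block step, and the claim does not need a new search procedure. The paper dispatches it with two one-sided checks --- if $\yd \doesnotefxenvyll{j} \yc\cup\unassignedItms$ put all of $\unassignedItms$ into the third bundle; if $\yc \doesnotefxenvyll{i} \yd\cup\unassignedItms$ put it into the fourth --- and, if both fail (so $\yd \lowerval{j} \yc\cup\unassignedItms$ and $\yc \lowerval{i} \yd\cup\unassignedItms$), a single run of the PR algorithm w.r.t.\ $v_i$ on $(\yc,\yd\cup\unassignedItms)$, giving $j$ whichever output bundle she weakly prefers. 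Both output bundles are \efxf\ for $i$ in the pair, $j$'s chosen bundle she prefers outright to the other, and MMS-feasibility supplies the value bounds $\ydp \greatereqval{j} \argmin_j(\yd,\yc\cup\unassignedItms)=\yd$ and $\ycp \greatereqval{i} \argmin_i(\yc,\yd\cup\unassignedItms)=\yc$, after which feasibility against the (shrunken) first two bundles follows as in your write-up.
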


\begin{proof}
    W.l.o.g.\ assume that $Y_1\lowerval{1} Y_2$ (otherwise swap the two). 
    Let $\unassignedItms_1$ be the set of \itms\ that are not used in $\Y$.  
    Let $\xbp$ be a minimal subset of $\yb$ such that $\xbp \greaterval{1} \ya$. 
    Therefore $\ya \doesnotefxenvyll{1} \xbp$.
    Let $\unassignedItms  = \unassignedItms_1 \cup (\yb \setminus \xbp)$.
    
    If $\yd \doesnotefxenvyll{j} \yc \cup \unassignedItms$,
    set $\xp = (\ya, \xbp , \yc \cup \unassignedItms, \yd)$.
    Then \isefxffor{\xdp=\yd}{j}{\xp}, 
    since we have that \isefxf{\yd}{j}{\X}, 
    so agent $j$ does not \efxenvy\ the bundles $\ya,\xbp$ \with\ bundle $\yd$.
    Also \isefxf{\xcp}{i}{\xp}, since \isefxf{\yc}{i}{\Y}, $\yc \subseteq \xcp$, 
    and we have not added more \itms\ to the other bundles.
    If $\yc \doesnotefxenvyll{i} \yd \cup \unassignedItms$, 
    set $\xp = (\ya, \xbp , \yc, \yd \cup \unassignedItms)$.
     Then \isefxf{\xdp}{j}{\xp} and \isefxf{\xcp}{i}{\xp} (with similar arguments). 
    
    Otherwise, we have that  $\yd \lowerval{j} \yc \cup \unassignedItms$, 
    and  $\yc \lowerval{i} \yd \cup \unassignedItms$. 
    Run the PR algorithm w.r.t.\ agent $i$'s valuation 
    on the $(\yc, \yd \cup \unassignedItms)$ to get $(\ycp,\ydp)$. 
    W.l.o.g.\ assume that $\ydp \greatereqval{j} \ycp$. 
    Now set $\xp = (\ya, \xbp , \ycp, \ydp)$.
    Note that, since agent $j$'s valuation is cancelable, it is MMS-feasible too. Therefore, 
    $\ydp = \argmax_j(\ycp,\ydp) \greatereqval{j} \argmin_j (\yd, \yc \cup \unassignedItms ) = \yd$. Since $\yd \doesnotefxenvyll{j} (\xap,\xbp)$, it holds that \isefxffor{\ydp}{j}{\xp}.
    Regarding agent $i$, $\ycp, \ydp$ are the output of the PR algorithm with agent $i$'s valuation, 
    $\ycp \doesnotefxenvyll{i} \ydp$, and since the valuations are MMS-feasible
    $\ycp \greatereqval{i} \argmin_i (\yc, \yd \cup \unassignedItms) = \yc$. Since $\yc \doesnotefxenvyll{i} (\xap,\xbp)$ it holds that 
    \isefxffor{\ycp}{i}{\xp}.
    
    Hence, in each case, we manage to construct a partition $\xp$  such that
    \isefxffor{\xcp}{i}{\xp}, 
    \isefxf{\xdp}{j}{\xp}, $\xap \doesnotefxenvyll{1} \xbp$,
    $\xap \lowerval{1} \xbp$, and $\minonep= \mintwoy$.  
    Therefore by \cref{ToGoodPart4},
    we can construct a new partition $\xt$ that is either \eftxf\ or \good\ 
    with $\phi(\xt)\geq \minonep = \mintwoy$.
\end{proof}

\begin{lemma}
\label{ToGoodPart from stage 2 case 1}
    Given a partition $\X$ in \stageTwo, suppose that $\Y$
    is a set of disjoint bundles such that $\ya = \xa \cup g$ with $g \in \xc \cup \xd$, $\yb=\xb$, 
    $\yc\sqcup \yd \subseteq (\xc \sqcup \xd) \setminus g$. Suppose further, that there exists some agent $i\neq 1$, such that
    \isefxffor{\yc}{i}{\Y}, and some agent $j\notin\{1,i\}$, such that \isefxffor{\yd}{j}{\Y}. 
    Then we can construct a new partition $\xt$ that is either \eftxf\ or \good\ 
    with $\phi(\xt)> \phi(\X)$.
\end{lemma}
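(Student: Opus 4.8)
The plan is to observe that the hypotheses of this lemma hand us, essentially for free, exactly the configuration demanded by \cref{ToGoodPart3}, so the whole proof reduces to invoking that lemma and then carefully tracking the potential of the partition it produces. By assumption the set $\Y=(\ya,\yb,\yc,\yd)$ of disjoint bundles already satisfies \isefxffor{\yc}{i}{\Y} for some agent $i\neq 1$ and \isefxffor{\yd}{j}{\Y} for some agent $j\notin\{1,i\}$, which is precisely the input requirement of \cref{ToGoodPart3}. (Note that the union of $\Y$ need not be all of $M$, since $\yc\sqcup\yd$ is only a subset of $(\xc\sqcup\xd)\setminus g$; this is fine, because \cref{ToGoodPart3} explicitly allows its input bundles to leave some items unassigned.) Applying \cref{ToGoodPart3}, I obtain a partition $\xt$ of all items in $M$ that is either \eftxf\ or \good\ with $\phi(\xt)\geq\mintwoy$.

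It then remains only to show $\mintwoy>\phi(\X)$, i.e.\ that each of $v_1(\ya)$ and $v_1(\yb)$ strictly exceeds $\phi(\X)=\minone$. For the first bundle, $\ya=\xa\cup g$ where $g\in\xc\cup\xd$ is disjoint from $\xa$; since agent $1$'s valuation is monotone and non-degenerate, appending the fresh good $g$ strictly increases the value, so $v_1(\ya)=v_1(\xa\cup g)>v_1(\xa)=\phi(\X)$. For the second bundle, $\yb=\xb$, and because $\X$ is in \stageTwo\ we have $\xa=\argmin_1(\xa,\xb)$, hence $\xa\lowereqval{1}\xb$; as $\xa\neq\xb$ and agent $1$'s valuation is non-degenerate, this weak inequality is strict, so $v_1(\yb)=v_1(\xb)>v_1(\xa)=\phi(\X)$. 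Combining the two bounds gives $\mintwoy=\min(v_1(\ya),v_1(\yb))>\phi(\X)$, and therefore $\phi(\xt)\geq\mintwoy>\phi(\X)$, as required.

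The argument is short precisely because all the heavy machinery — rearranging the bundles of $\Y$ together with the unassigned items into a full partition that is \eftxf\ or \good\ while keeping the potential at least $\mintwoy$ — is delegated to \cref{ToGoodPart3}. The only genuine content is the potential bookkeeping, so the main point to get right is the \emph{strictness} of both inequalities (we need $\phi(\xt)>\phi(\X)$, not merely $\geq$). I would therefore be explicit that $g\notin\xa$ and that $\xa\neq\xb$ before appealing to non-degeneracy, since it is exactly these disjointness/distinctness facts that upgrade the weak monotonicity bound and the weak $\argmin$ inequality into the strict gap needed to close the proof.
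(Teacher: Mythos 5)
Your proposal is correct and follows exactly the paper's own argument: invoke \cref{ToGoodPart3} on $\Y$ to get a partition $\xt$ with $\phi(\xt)\geq\mintwoy$, and then use non-degeneracy of agent $1$'s valuation to conclude $v_1(\ya)=v_1(\xa\cup g)>v_1(\xa)$ and $v_1(\yb)=v_1(\xb)>v_1(\xa)$, hence $\phi(\xt)\geq\mintwoy>\phi(\X)$. Your extra care in spelling out why the inequalities are strict ($g\notin\xa$ and $\xa\neq\xb$) is a harmless elaboration of what the paper leaves implicit.
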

\begin{proof}
    Since agent $1$'s valuation is non-degenerate, it holds that $v_1(\xb)>v_1(\xa)$, and $v_1(\xa \cup g)>v_1(\xa)$. Therefore, $\phi(\X)=v_1(\xa)<\mintwoy.$
    By \cref{ToGoodPart3}, we can construct a partition $\xt$ that is either \eftxf\ or \good\ with
    $\phi(\xt)\geq \mintwoy > \phi(\X)$.
\end{proof}

\begin{lemma}
\label{ToGoodPart5}
    Suppose $\X$ is a partition such that $\xc \sqcup \xd = \yc  \sqcup\yd$, 
    $(\xc,\xd) \doesnotefxenvyll{i} (\xa,\xb)$  for some agent $i \neq 1$, 
    and $(\yc,\yd) \doesnotefxenvyll{j}  (\xa,\xb)$ for some agent $j\notin \{1,i\}$.
    Then we can construct a new partition $\xt$ that is either \eftxf\ or \good\ 
    with $\phi(\xt)\geq \mintwo$.
\end{lemma}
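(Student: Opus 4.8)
The plan is to reduce to \cref{ToGoodPart3}. That lemma turns any four disjoint bundles $(\ya,\yb,\yc,\yd)$ with $\yc$ being \efxf\ for $i$ and $\yd$ being \efxf\ for $j$ into a (\good\ or \eftxf) partition of potential at least $\mintwoy$. So I would leave $\xa,\xb$ untouched (pinning the potential at $\mintwo$) and search for a bipartition $(Z_3,Z_4)$ of the common set $\xc\sqcup\xd=\yc\sqcup\yd$ such that, in the partition $(\xa,\xb,Z_3,Z_4)$, one part is \efxf\ for $i$ and the other is \efxf\ for $j$. Applying \cref{ToGoodPart3} with $\ya=\xa$, $\yb=\xb$ then gives $\phi(\xt)\geq\mintwoy=\mintwo$, as desired.

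First I translate the hypotheses into value inequalities via \cref{EFX prop:1}: from $(\xc,\xd)\doesnotefxenvyll{i}(\xa,\xb)$ I obtain $\min(v_i(\xc),v_i(\xd))\geq\max(w^1_i(\xa),w^1_i(\xb))$, and from $(\yc,\yd)\doesnotefxenvyll{j}(\xa,\xb)$ I obtain $\min(v_j(\yc),v_j(\yd))\geq\max(w^1_j(\xa),w^1_j(\xb))$. The obstacle is that these two guarantees live on two \emph{different} bipartitions of $\xc\sqcup\xd$, so a single partition is not obviously good for both agents at once; in particular, the cross-envy between the two new bundles is not controlled.

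To resolve this, I would run the PR algorithm with respect to agent $i$'s valuation on $(\xc,\xd)$, obtaining bundles $(Z_3,Z_4)$ that still partition $\xc\sqcup\xd$ and are mutually \efxf\ for $i$. By \cref{PR}, $\min(v_i(Z_3),v_i(Z_4))\geq\min(v_i(\xc),v_i(\xd))\geq\max(w^1_i(\xa),w^1_i(\xb))$, so neither $Z_3$ nor $Z_4$ \efxenvies\ $\xa$ or $\xb$ for agent $i$; combined with mutual EFX-feasibility, \emph{both} $Z_3$ and $Z_4$ are \efxf\ for $i$ in $(\xa,\xb,Z_3,Z_4)$. This is the crucial gain, as agent $i$ can now be given either of the two parts.

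It remains to place agent $j$, which I do using MMS-feasibility (valid since $j$'s valuation is cancelable). As $(Z_3,Z_4)$ and $(\yc,\yd)$ are both bipartitions of $\xc\sqcup\xd$, MMS-feasibility gives $\max(v_j(Z_3),v_j(Z_4))\geq\min(v_j(\yc),v_j(\yd))\geq\max(w^1_j(\xa),w^1_j(\xb))$. Relabeling so that $v_j(Z_3)\geq v_j(Z_4)$, the bound $v_j(Z_3)\geq\max(w^1_j(\xa),w^1_j(\xb))$ shows $j$ does not \efxenvy\ $\xa$ or $\xb$ from $Z_3$, while $v_j(Z_3)\geq v_j(Z_4)\geq w^1_j(Z_4)$ shows $j$ does not \efxenvy\ $Z_4$ either, so $Z_3$ is \efxf\ for $j$ in $(\xa,\xb,Z_3,Z_4)$. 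Assigning $Z_4$ to $i$ and $Z_3$ to $j$ (each still \efxf\ for its agent), I apply \cref{ToGoodPart3} to $(\xa,\xb,Z_4,Z_3)$ to obtain the desired $\xt$ with $\phi(\xt)\geq\mintwo$. The real work is concentrated in the PR-plus-MMS step; everything else is bookkeeping.
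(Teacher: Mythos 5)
Your proposal is correct and follows essentially the same route as the paper's proof: run the PR algorithm on $(\xc,\xd)$ with respect to agent $i$'s valuation, use MMS-feasibility of $j$'s (cancelable) valuation to compare the PR output against the bipartition $(\yc,\yd)$, give the $j$-preferred bundle to $j$ and the other to $i$, and invoke \cref{ToGoodPart3}. The only cosmetic difference is that you justify agent $i$'s side via the min-increase property of \cref{PR}, where the paper appeals to MMS-feasibility; both are valid, and yours is arguably the more direct citation.
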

\begin{proof}
    Run the PR algorithm w.r.t.\ agent $i$'s valuation on $(\xc,\xd)$ to get $(\ycp,\ydp)$. 
    W.l.o.g.\ assume that $\ydp\greatereqval{j} \ycp$.
    Let  $\xp = (\xa, \xb,\ycp,\ydp)$.
    Note that, since the valuation of agent $j$ is cancelable, it is MMS-feasible as well. Therefore, since $\ycp \sqcup \ydp = \xc \sqcup \xd = \yc \sqcup \yd$,
    we have that
    $\ydp \greatereqval{j}  \argmin_j(\yc,\yd)$. Since $\argmin_j(\yc,\yd)\doesnotefxenvyll{j} (\xa,\xb)$, it also holds  $\ydp\doesnotefxenvyll{j} (\xa,\xb)$, and 
    so \isefxffor{\ydp}{j}{\yp}. 
    Regarding agent $i$, since $\ycp,\ydp$ are given as the output of the PR algorithm, $\ycp \doesnotefxenvyll{i} \ydp$, and since the valuations are MMS-feasible, it holds that  
    $\ycp \greatereqval{i} \argmin_i(\xc,\xd)$. Since $\argmin_i(\xc,\xd) \doesnotefxenvyll{i} (\xa,\xb)$, it holds that $\ycp \doesnotefxenvyll{i} (\xa,\xb)$, and therefore, \isefxffor{\ycp}{i}{\yp}. Hence, by \cref{ToGoodPart3} 
    we can construct a new partition $\xt$ 
    that is either \eftxf\ or \good\ 
    with $\phi(\xt)\geq \mintwo$. 
\end{proof}

\subsection{Dealing with Partitions in \StageOne\ or \StageTwo}
\label{stage A or B section}

    In this subsection we prove (\cref{Stage 1 Proceed}) that if we start with a partition $\X$ in \StageOne, 
    then we either construct an \eftxf\ partition, or we construct another \goodpart\ 
    with higher potential.
    We further prove (\cref{ToPerfect1}) that if we start with a partition $\X$ in \StageTwo, 
    then we can either construct an \eftxf\ partition, or we construct another \goodpart\ 
    with higher potential, or we construct a partition in \stageTwoA with a potential not lower than before. Before proceeding with our first main theorem for this subsection, we restate the conditions of \stageOne.
    \\\\
    {\bf Conditions of \stageOne} (Restated).
    {\em 
        We say partition $\X$ is in \stageOne\ if $\xa= \argmin_1(X_1, X_2, X_3)$, 
        \areefxffor{\xa,\xb,\xc}{1}{\X},  and at least one of the following holds:
    \begin{itemize}
        \item i) $\xd$ is \bstf\ for at least one agent $j \neq 1$ and $\xd \setminus \hj \greatereqval{j} \xa\cup\hj$, 

        \item ii) $\xd$ is \bstf \ for at least two distinct agents $i,j \ne 1$.        
    \end{itemize}
        }

\begin{theorem}
    \label{Stage 1 Proceed}
    Given any partition $\X$ in \stageOne, 
    we can construct a new partition $\xt$ that is either \eftxf\ or \good\
    with $\phi(\xt)>\phi(\X)$.
\end{theorem}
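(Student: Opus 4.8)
The plan is to make progress by relocating a single good from the over-demanded bundle $\xd$ into the under-demanded bundle $\xa$, and then to invoke one of the partition-transformation lemmas of the previous subsection. The reason such a move raises the potential is purely structural: since $\xa = \argmin_1(X_1,X_2,X_3)$ and agent $1$ is non-degenerate, we have $v_1(\xb) > v_1(\xa) = \phi(\X)$ and $v_1(\xc) > v_1(\xa)$, while $v_1(\xa \cup g) > v_1(\xa)$ for any good $g$ moved into $\xa$. Hence all three bundles $\xa\cup g$, $\xb$, $\xc$ have agent-$1$ value strictly above $\phi(\X)$, so any lemma returning a partition whose potential is at least the minimum agent-$1$ value over these three bundles — in particular \cref{ToGoodPart1}, which guarantees $\phi(\xt)\ge\minthree$ — yields the required strict increase. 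The entire task therefore reduces to choosing $g\in\xd$ so that, after the move, the shrunk bundle $\xd\setminus g$ remains \efxf\ for some agent other than $1$.

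In \caseA\ (the partition satisfies condition~(i)), I would take $g=\hj$, the least valuable good of $\xd$ for the agent $j$ who has $\xd$ as a \bstf\ bundle. In the resulting partition $(\xa\cup\hj,\xb,\xc,\xd\setminus\hj)$, agent $j$ does not \efxenvy\ $\xb$ or $\xc$ with $\xd\setminus\hj$, because these two bundles are unchanged and $\xd\setminus\hj$ is the best-$1$-remaining of the \bstf\ bundle $\xd$ (\cref{best-feasible condition}); and $j$ does not \efxenvy\ $\xa\cup\hj$ either, since the \stageOne\ hypothesis $\xd\setminus\hj \greatereqval{j} \xa\cup\hj$ gives $v_j(\xd\setminus\hj)\ge v_j(\xa\cup\hj)\ge w^1_j(\xa\cup\hj)$. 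Thus $\xd\setminus\hj$ is \efxf\ for $j$, and \cref{ToGoodPart1} finishes \caseA\ with $\phi(\xt)\ge\min(v_1(\xa\cup\hj),v_1(\xb),v_1(\xc))>\phi(\X)$.

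In \caseB\ (condition~(ii)) there are two agents $i,j$ for whom $\xd$ is \bstf. I would first attempt the same move for each of them: moving $\hj$ keeps $\xd\setminus\hj$ \efxf\ for $j$ precisely when $v_j(\xd\setminus\hj)\ge w^1_j(\xa\cup\hj)$ — a strictly weaker demand than the inequality of condition~(i) — and symmetrically, moving $\hi$ keeps $\xd\setminus\hi$ \efxf\ for $i$ when $v_i(\xd\setminus\hi)\ge w^1_i(\xa\cup\hi)$; if either holds I finish exactly as in \caseA. The main obstacle is the remaining regime in which \emph{both} attempts fail, i.e.\ each of $i,j$ would \efxenvy\ the enriched bundle ($\xa$ together with their own least good of $\xd$) with what remains of $\xd$. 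Here both agents value that enriched bundle very highly; concretely $v_j(\xa\cup\hj)>v_j(\xd\setminus\hj)=w^1_j(\xd)\ge w^1_j(\xb),w^1_j(\xc)$ and $v_j(\xa\cup\hj)>w^1_j(\xd\setminus\hj)$, so $\xa\cup\hj$ is itself \efxf\ for $j$. I would then assign this enriched bundle to $j$ and re-examine $(\xa\cup\hj,\xb,\xc,\xd\setminus\hj)$, aiming to show it is either \eftxf\ or again \good\ — plausibly back in \stageOne\ with the smaller contested bundle $\xd\setminus\hj$, or in \stageTwo\ — and that in every outcome the potential exceeds $\phi(\X)$ because agent $1$'s minimum over its retained bundles has strictly risen. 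The delicate point is that enlarging $\xa$ may now make agent $1$ \efxenvy\ $\xa\cup\hj$ and thereby spoil the \efxf\ property of $\xb$ or $\xc$; controlling this — by a case split on whether $w^1_1(\xa\cup\hj)$ stays below $v_1(\xb),v_1(\xc)$, and resorting to a direct \eftx\ matching when it does not — is where I expect the real work to lie, and it is the step I would budget the most effort for.
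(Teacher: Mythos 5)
Your \caseA\ coincides with the paper's treatment of \stageOnei: move $\hj$ into $\xa$, check that $\xd\setminus\hj$ is \efxf\ for $j$ using \cref{best-feasible condition} together with the stage hypothesis, and close via \cref{ToGoodPart1}; that part is correct. The gap is in \caseB. Your preliminary sub-regimes, where at least one of $i,j$ does not \efxenvy\ the enriched bundle \with\ the shrunk one, are fine and again close via \cref{ToGoodPart1}. But in the remaining regime, where both agents \efxenvy\ the enriched bundle, your sketch does not close, and the step that concretely fails is the potential guarantee: once you hand the partition $(\xa\cup\hj,\xb,\xc,\xd\setminus\hj)$ to any of the transformation lemmas, the bound you inherit is a minimum of agent $1$'s values that includes $v_1(\xd\setminus\hj)$, and nothing prevents $v_1(\xd\setminus\hj)<v_1(\xa)=\phi(\X)$. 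For instance, placing $\xa\cup\hj$ fourth and applying \cref{ToGoodPart1} only yields $\phi(\xt)\geq\min\big(v_1(\xb),v_1(\xc),v_1(\xd\setminus\hj)\big)$, and if agent $1$ \efxenvies\ $\xa\cup\hj$, \cref{ToGoodPart2} likewise only beats a four-way minimum containing $v_1(\xd\setminus\hj)$. Hence your closing claim that ``agent $1$'s minimum over its retained bundles has strictly risen'' is unsubstantiated, and the proposed split on whether $w^1_1(\xa\cup\hj)$ stays below $v_1(\xb),v_1(\xc)$ plus a ``direct matching'' is not developed to a point where it can be checked.

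The missing idea is the paper's use of \cref{ToGoodPart5}, which keeps the pool $\xa\sqcup\xd$ intact instead of committing the enriched bundle to agent $j$. In the regime where you are stuck, EFX-envy implies plain envy, so for both $u\in\{i,j\}$ one has $\xd\setminus\hu \lowereqval{u} \xa\cup\hu$; combined with \cref{best-feasible condition}, this gives $(\xa\cup\hu,\ \xd\setminus\hu) \doesnotefxenvyll{u} (\xb,\xc)$, i.e., \emph{both} bundles of each of the two splits of the same pool $\xa\sqcup\xd$ are safe for the corresponding agent. Rearranging to $(\xb,\xc,\xa,\xd)$ and applying \cref{ToGoodPart5} (which internally runs the PR algorithm on the pool w.r.t.\ agent $i$ and uses MMS-feasibility of cancelable valuations to extract a single $2$-split serving both agents) yields $\phi(\xt)\geq\min\big(v_1(\xb),v_1(\xc)\big)>v_1(\xa)=\phi(\X)$, where the strict inequality follows from $\xa=\argmin_1(X_1,X_2,X_3)$ and non-degeneracy. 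Keeping all of $\xa\sqcup\xd$ in the last two slots is precisely what removes $v_1(\xd\setminus\hj)$ from the potential bound; your plan of fixing $\xa\cup\hj$ for agent $j$ and shrinking $\xd$ forgoes this recombination, which is why your \caseB\ cannot be finished along the lines you describe without new machinery.
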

\begin{proof}
    To prove this statement, we first address the case where $\X$ satisfies condition $i$ of $\stageOne$, and then condition $ii$.
    
    \textbf{\StageOnei} ($\xd$ is \bstf \ for some agent $j \neq 1$, $\xd  \setminus \hj  \greatereqval{j}  X_1 \cup \hj$): \\
    We define $$\xp = (\xa \cup \hj,\xb, \xc, \xd\setminus \hj)\,$$ 
    Then, since \isbstffor{\xd}{j}{\X},
    by \cref{best-feasible condition}, 
    agent $j$ does not \efxenvy\ the bundles $\xb,\xc$
    \with\ bundle  $\xd\setminus \hj=\xdp$.
    Also since
    $\xa \cup \hj \lowereqval{j} \xd\setminus \hj=\xdp$, 
    we get that \isefxffor{\xdp}{j}{\xp}. 
    Also we have that $\minthreep=\min (v_1(X_1 \cup \hj), v_1(\xbp), v_1(\xcp)) > v_1(X_1) = \phi(\X)$, since agent 1's valuation
    is non-degenerate. Therefore, 
    by \cref{ToGoodPart1}, 
    we can construct a new partition $\xt$ that is either \eftxf\ or \good\ 
    with $\phi(\xt)\geq \minthreep>\phi(\X)$, and the theorem follows for this case. 

    \textbf{\StageOneii} ($\xd$ is \bstf \ for at least two distinct agents $i,j \ne 1$):\\
    We consider the case that $\X$ is not in \stageOnei, otherwise we would show the theorem as in the previous case. Therefore, for any $u \in \{i,j\}$, we have
     $\xd  \setminus \hu \lowereqval{u} \xa \cup \hu$.
    By rearranging the bundles we define $$\xp  = (\xb, \xc, \xa, \xd)\,.$$ 
     Then, since \isbstffor{\xd}{u}{\X}, it holds that 
    $\xd \setminus \hu \doesnotefxenvyll{u} (\xb,\xc)$, and so it also holds that $\xa \cup \hu \doesnotefxenvyll{u} (\xb,\xc)$.
Note further that it holds that
     $\xa \sqcup \xd = (\xa \cup \hi) \sqcup (\xd  \setminus \hi) = (\xa \cup \hj) \sqcup (\xd \setminus \hj)$.
    Then, by \cref{ToGoodPart5}, 
    we can construct a new partition $\xt$ that is either \eftxf\ or \good\ 
    with $\phi(\xt) \geq \min(v_1(\xb),v_2(\xc)) > v_1(\xa) = \phi(\X)$, since agent 1's valuation
    is non-degenerate, and by this we complete the proof.
\end{proof}

Before proceeding with our second main theorem for this subsection, we restate the conditions of \stageTwo.\\\\
    {\bf Conditions of \stageTwo} (Restated).
    {\em 
        We say partition $\X$ is in \stageTwo\ if all the following hold:
        \begin{itemize}
        \item $\xa= \argmin_1(X_1, X_2)$, and \areefxffor{\xa,\xb}{1}{\X};
        \item \isefxf{\xc}{i}{\X} for some agent $i \ne 1$;
        \item $\xd \in \text{\efxfset}_j(\X)$ for some agent $j\notin \{1,i\}$;
        \end{itemize}
        }

\begin{theorem}
\label{ToPerfect1}
    Any partition $\X$ that is in \stageTwo\ is either \eftxf\ or in \stageTwoA.
\end{theorem}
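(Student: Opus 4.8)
The plan is to prove the dichotomy in its contrapositive form: I assume that $\X$ is in \stageTwo\ but is \emph{not} \eftxf, and I show that, after a suitable relabeling of the bundles and of the three non-fixed agents $i,j,u$, the partition satisfies the defining conditions of \stageTwoA.

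First I would exploit the fact that agent $1$ is \efxf\ (hence \eftxf) with both $\xa$ and $\xb$, so agent $1$ can always absorb whichever of these two bundles is left over. Consider the assignment $\xc\mapsto i$, $\xd\mapsto j$, and $\{\xa,\xb\}\mapsto\{1,u\}$. By \cref{best props}, \isefxf{\xc}{i}{\X} and \isefxf{\xd}{j}{\X} give \eftx-feasibility for $i$ and $j$, while \efx-feasibility implies \eftx-feasibility (immediate from $w^2_i\le w^1_i$). Hence this assignment can be completed into an \eftxf\ partition unless agent $u$ \eftxenvies\ some bundle \with\ both $\xa$ and $\xb$. Since $\X$ is assumed not \eftxf, agent $u$ must \eftxenvy\ some bundle \with\ $\xa$ and some bundle \with\ $\xb$.

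Next I would upgrade this \eftx-envy into \efx-envy toward $u$'s \efx-best bundle. Let $F$ be an \bstf\ bundle for $u$ (it exists by \cref{best props}). If $u$ \eftxenvies\ some $T$ \with\ $\xa$ via goods $g_1,g_2\in T$, then $v_u(\xa)<v_u(T\setminus\{g_1,g_2\})\le w^1_u(T)\le w^1_u(F)$, where the middle step is monotonicity and the last uses that $F$ maximizes the best-$1$-value; thus $u$ \efxenvies\ $F$ \with\ $\xa$, and symmetrically \with\ $\xb$. In particular $F\notin\{\xa,\xb\}$ (no bundle is \efx-envied \with\ itself, since $w^1_u\le v_u$), so $F\in\{\xc,\xd\}$. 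I would then relabel the bundles and agents so that $\xd:=F$ and $j:=u$ (promoting $u$'s \bstf\ bundle to the role of $\xd$ and $u$ to the role of $j$); the remaining bundle among the old $\{\xc,\xd\}$ becomes $\xc$, the agent for whom it was \efxf\ becomes $i$, and the last non-fixed agent becomes $u$. This leaves $\xa,\xb$ and the choice $\xa=\argmin_1(\xa,\xb)$ untouched and preserves all three \stageTwo\ conditions, while making \isbstffor{\xd}{j}{\X}. Since $\xd$ is now \bstf\ for the cancelable agent $j$, \cref{k-min with k-least} gives $w^1_j(\xd)=v_j(\xd\setminus\hj)$, so the inequalities $v_j(\xa),v_j(\xb)<w^1_j(\xd)$ become exactly $\xd\setminus\hj\greaterval{j}\xa,\xb$. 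Hence $\X$ is in \stageTwoA.

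The step I expect to be the main obstacle is this envy upgrade together with the relabeling bookkeeping. Non-\eftx-feasibility only hands us \emph{\eftx}-envy of agent $u$, which is strictly weaker than the \emph{\efx}-envy that \stageTwoA\ demands; the crux is to channel it through $u$'s \bstf\ bundle using $w^1_u\ge w^2_u$, and then to verify carefully that promoting $u$ to the role of $j$ and the bundle $F$ to the role of $\xd$ preserves \emph{precisely} the three \stageTwo\ conditions and yields the strict comparison $\xd\setminus\hj\greaterval{j}(\xa,\xb)$. One should also ensure that the agent being promoted is one of the cancelable agents, so that \cref{k-min with k-least} applies and $\hj$ (the least-valued good of $\xd$ for $j$) indeed realizes the best-$1$-remaining of $\xd$.
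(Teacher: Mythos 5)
Your proof is correct and follows essentially the same route as the paper's: if neither $\xa$ nor $\xb$ satisfies the fourth agent $u$, then $u$'s \bstf\ bundle must lie in $\{\xc,\xd\}$, and relabeling it as $\xd$ with $j:=u$ yields the \stageTwoA\ conditions via \cref{best props}. The only cosmetic difference is that the paper cases directly on whether $\xa$ or $\xb$ is \efxf\ (rather than \eftxf) for $u$ — concluding the partition is even \efxf\ in the good case — which makes your upgrade step from \eftxenvy\ to \efxenvy\ via $w^2_u\le w^1_u$ unnecessary.
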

\begin{proof}
    Since $\X$ is in \stageTwo, we have \areefxffor{(\xa,\xb)}{1}{\X}, 
    and for some distinct agents $i,j \ne 1$, we have
    \isefxffor{\xc}{i}{\X} and \isefxffor{\xd}{j}{\X}.
    Let the remaining agent be $u\notin\{1, i, j\}$. If any of the bundles $\xa, \xb$ is \efxf\ for agent $u$, 
    then $\X$ is \efxf\ (we could assign $\xc$ to $i$, $\xd$ to $j$, assign to $u$ their \efxf\ bundle from $\{\xa, \xb\}$, and the remaining bundle to agent 1) and therefore \eftxf. 
    If, on the other hand, neither $\xa$ nor $\xb$ are \efxf\ for agent $u$ in $\X$, 
    then the \bstf\ bundle for agent $u$ is either $\xc$ or $\xd$. Assume, w.l.o.g., that $\xd$ is \bstf\ for agent $u$ in $\X$. Since $\xa,\xb$ are not \efxf\ for $u$, by \cref{best props}, we get that $\xd \setminus \hu \greaterval{u} (X_1,X_2)$,
    so $\X$ is in \stageTwoA\ by setting $j=u$.
\end{proof}

\subsection{\NormalizationProcess}
\label{swap optimization section}
    
    Before going into detail regarding how we deal with partitions in the remaining stages, we dedicate this section to a subroutine, which we refer to as \normalizationprocess; this subroutine plays a crucial role in the rest of the paper. Given two bundles $S$ and $T$ ``assigned to'' two agents $i$ and $j$, this process swaps goods across the bundles based on the two agents' preferences, leading to Pareto improvements, until it reaches a local optimum w.r.t.\ such swaps.
    
\begin{definition}
    Given two disjoint bundles $S,T$ and two distinct agents $i,j$, we call the following operation \normalizationprocess\ of $(S,T)$ with respect to $(i,j)$: as long as there exists some $g \in S$ and $h \in T$ 
    such that $g\lowereqval{i} h$, $g \greatereqval{j} h$, and at least one of these two preferences is strict,
    swap these two \itms\ across the bundles.

    \begin{algorithm}[H]
    \SetAlgoRefName{Swap Optimize}
    \SetAlgoNoEnd
    \SetAlgoNoLine
    \DontPrintSemicolon
    \NoCaptionOfAlgo 
    \KwIn{$(S,T),(i,j)$} 
        \While{$\exists\ g \in S$, $h \in T$ 
                such that either $g\lowereqval{i} h$ and $g \greaterval{j} h$, or $g\lowerval{i} h$ and $g \greatereqval{j} h$,}{
            $S\gets (S\setminus g) \cup h$\\
            $T\gets (T\setminus h) \cup g$}
        \Return{$(S,T)$}        
    \caption{Swap Optimize}
    \label{shift sub-chain}
    \end{algorithm}
\end{definition}

\begin{remark}
    \NormalizationProcess\ runs polynomial time.
\end{remark}
    
    We now provide some lemmas regarding the \normalizationprocess\ process, which we then repeatedly use in our subsequent theorems and lemmas.
    
    Given a bundle $S\subseteq M$ and some natural number $k\leq |S|$, we use $G_k(S)= \{T\subseteq S : |T| = k\}$ to denote the set of subsets of $S$ of size $k$. Then, if $T^*\in \arg\max_{T \in G_k(S)}v_i (S\setminus T)$, we refer to $S\setminus T^*$, i.e., the best subset of $S$ that one can get after removing $k$ of its goods, as the \bestkremainer\ of $S$ w.r.t.\ agent $i$'s valuation. We refer to $w^k_i(S)=v_i (S\setminus T^*)$ as the {\em \bestkvalue} of a set $S$ for agent $i$; if $k>|S|$, we define the \bestkvalue\ of a set $S$ for any agent $i$ to be $0$.

\begin{lemma}
\label{best-k-value prop}
    Given a bundle $S$, two goods $g\in S$ and $h \not\in (S\setminus g)$, some agent $i$, and any $k\in \mathbb{N}$:
    \begin{itemize}
        \item if $g\lowereqval{i} h$, then $w^k_i(S) \leq w^k_i((S\setminus g)\cup h)$, and 
        \item if $g\greatereqval{i} h$, then $w^k_i(S) \geq w^k_i((S\setminus g)\cup h)$.
    \end{itemize}
\end{lemma}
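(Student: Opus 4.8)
The plan is to prove the first bullet directly from the definition of the best-$k$-value and then obtain the second by a symmetry argument. Write $S'=(S\setminus g)\cup h$; since $h\notin S\setminus g$, we have $|S'|=|S|$, so the two trivial regimes coincide: if $h=g$ then $S'=S$ and both claims are immediate, and if $k>|S|=|S'|$ then $w^k_i(S)=w^k_i(S')=0$ by definition. Hence I may assume $h\neq g$ (so $h\notin S$) and $k\le |S|$.

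For the first bullet, suppose $g\lowereqval{i} h$. Let $T^*\in G_k(S)$ be an optimal removal set, so that $w^k_i(S)=v_i(S\setminus T^*)$. I will exhibit a removal set in $G_k(S')$ whose complement in $S'$ is at least as valuable to agent $i$; this suffices, since $w^k_i(S')$ is a maximum over all such sets. I split into two cases according to whether $g$ is removed. If $g\in T^*$, take $T'=(T^*\setminus g)\cup h$; then $|T'|=k$ (as $h\notin S\supseteq T^*$), $T'\subseteq S'$, and a direct set computation gives $S'\setminus T'=S\setminus T^*$, whence $w^k_i(S')\ge v_i(S\setminus T^*)=w^k_i(S)$. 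If instead $g\notin T^*$, then $T^*\subseteq S\setminus g\subseteq S'$, so I may remove the same set $T^*$ from $S'$; writing $R=(S\setminus T^*)\setminus g$ one checks $S\setminus T^*=R\cup g$ and $S'\setminus T^*=R\cup h$.

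The one place where cancelability enters is this last case: from $g\lowereqval{i} h$, i.e.\ $v_i(g)\le v_i(h)$, together with the fact that $R$ is disjoint from both $g$ and $h$ (we have $g\notin R$ by construction, and $h\notin S\supseteq R$), the cancelable property gives $v_i(R\cup g)\le v_i(R\cup h)$, and therefore $w^k_i(S')\ge v_i(R\cup h)\ge v_i(R\cup g)=w^k_i(S)$. The only care the argument requires is precisely in verifying these disjointness conditions so that cancelability applies, and in checking the sizes and containments of the two constructed removal sets; everything else is routine set bookkeeping, and I expect this verification to be the main (modest) obstacle.

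Finally, the second bullet follows by symmetry. Observe that $S=(S'\setminus h)\cup g$, with $h\in S'$ and $g\notin S'\setminus h=S\setminus g$, so the roles of $(S,g)$ and $(S',h)$ are interchangeable. If $g\greatereqval{i} h$, then $h\lowereqval{i} g$, and applying the already-proved first bullet to $S'$ (swapping out $h$ for $g$) yields $w^k_i(S')\le w^k_i\big((S'\setminus h)\cup g\big)=w^k_i(S)$, which is exactly the desired inequality.
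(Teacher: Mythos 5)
Your proof is correct and follows essentially the same route as the paper's: both split on whether $g$ belongs to the optimal size-$k$ removal set of $S$, exhibit a feasible removal set for $(S\setminus g)\cup h$ in each case (swapping $h$ for $g$ when $g$ is removed, reusing $T^*$ when it is kept), and invoke cancelability exactly once, to compare $R\cup g$ with $R\cup h$ in the latter case. Your formal reduction of the second bullet to the first via the involution $S=(S'\setminus h)\cup g$ is just a tidier packaging of the paper's remark that the case $g\greatereqval{i} h$ ``can be handled in the same manner,'' so no substantive difference there.
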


\begin{proof}
    If $k=|S|$, we have that $w^k_i(S)=0$ and $w^k_i((S\setminus g)\cup h)=0$ so the statement obviously holds. Therefore, for the rest of this proof, we assume that $k<|S|$. We also suppose that $g\lowereqval{i} h$; the case when $g\greatereqval{i} h$ can be handled in the same manner.

    Let $S'$ be the \bestkremainer\ of $S$. 
    If $g\in S'$, by \cref{k-min with k-least}, all \itms\ of $S''=S\setminus S'$ are not more valuable than $g$ for agent $i$. Since $g\lowereqval{i} h$, the same holds for $h$ and, as a result, 
    the \bestkremainer\ of $(S\setminus g) \cup h$ is still 
    $((S\setminus g) \cup h)\setminus S'' = (S'\setminus g) \cup h)\greatereqval{i} S'$, 
    since $h\greatereqval{i} g$. 
    In the case that $g\notin S'$, meaning that $g\in S''$, 
    since $|(S''\setminus g)\cup h|=k$, we have that:
    \[w^k_i((S\setminus g)\cup h)\geq 
    v_i\Big(\big((S\setminus g)\cup h\big) \setminus 
    \big((S''\setminus g) \cup h\big)\Big) =v_i(S')=w_i^k(S).\qedhere\]
\end{proof}

\begin{lemma}
\label{Stability In Normalization}
    If we \normalize\ two bundles $(S,T)$ w.r.t.\ two agents $i,j$, 
    this weakly increases $w^k_i(S)$ and weakly decreases $w^k_i(T)$ 
    for every $k\in \mathbb{N}$. 
    Analogously, $w^k_j(S)$ weakly decreases and $w^k_j(T)$ weakly increases. 
\end{lemma}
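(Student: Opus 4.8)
The plan is to reduce the whole \normalizationprocess\ to a single swap step, show that every individual swap moves the four quantities $w^k_i(S)$, $w^k_i(T)$, $w^k_j(S)$, $w^k_j(T)$ in the claimed directions, and then compose these weak inequalities along the (finite) sequence of swaps that the process performs. The entire content of the lemma is therefore an application of \cref{best-k-value prop}, invoked once for each bundle and each agent, followed by a trivial induction on the number of swaps. Note that I may freely assume the process terminates, since the preceding remark already asserts it runs in polynomial time.

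First I would fix an arbitrary swap of the process. By the definition of \normalizationprocess, such a swap picks $g \in S$ and $h \in T$ with $g \lowereqval{i} h$ and $g \greatereqval{j} h$ (at least one strict), and replaces $(S,T)$ by $\big((S\setminus g)\cup h,\ (T\setminus h)\cup g\big)$. Since $S$ and $T$ are disjoint and this disjointness (and the union $S\sqcup T$) is preserved by every swap, the hypotheses of \cref{best-k-value prop} are met in each of the four applications: $h\notin(S\setminus g)$ because $h\in T$, and $g\notin(T\setminus h)$ because $g\in S$. Now I apply \cref{best-k-value prop} four times. For $S$ with agent $i$: the removed good $g$ satisfies $g\lowereqval{i} h$, so the first bullet gives $w^k_i(S)\le w^k_i\big((S\setminus g)\cup h\big)$, i.e.\ $w^k_i(S)$ weakly increases. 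For $T$ with agent $i$: here the removed good is $h$ and the added good is $g$, and $g\lowereqval{i} h$ means $h\greatereqval{i} g$, so the second bullet gives $w^k_i(T)\ge w^k_i\big((T\setminus h)\cup g\big)$, i.e.\ $w^k_i(T)$ weakly decreases. Symmetrically, using $g\greatereqval{j} h$: for $S$ with agent $j$ the second bullet gives that $w^k_j(S)$ weakly decreases, and for $T$ with agent $j$ (removed good $h\lowereqval{j} g$) the first bullet gives that $w^k_j(T)$ weakly increases.

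Having established the effect of one swap, I would finish by composing: the \normalizationprocess\ is a finite chain of swaps, and along the chain each of the four quantities changes monotonically in the same direction at every step, so the net effect between the initial and final $(S,T)$ is exactly the claimed weak monotonicity for every $k\in\mathbb{N}$. I expect the only place that requires care—the ``main obstacle,'' such as it is—is correctly transposing the roles of $g$ and $h$ when applying \cref{best-k-value prop} to the bundle $T$, since there the good being \emph{removed} is $h$ and the good being \emph{added} is $g$, which flips the direction of each preference relative to its statement for $S$; once this bookkeeping is done consistently, the result is immediate and no genuine difficulty remains.
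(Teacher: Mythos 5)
Your proposal is correct and follows essentially the same route as the paper: the paper's proof likewise reduces the lemma to a single swap, applies \cref{best-k-value prop} to get $w_i^k\big((S\setminus g)\cup h\big)\geq w_i^k(S)$ and $w_i^k\big((T\setminus h)\cup g\big)\leq w_i^k(T)$, and handles agent $j$ by symmetry. Your write-up merely makes explicit what the paper leaves implicit (the hypothesis checks from disjointness, the role-reversal of $g$ and $h$ for bundle $T$, and the composition over the finite chain of swaps), all of which is sound.
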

\begin{proof}
    It suffices to show that the statement is true at the end of each swap of the \normalizationprocess\ process. 
    In every swap of this process, some item $g\in S$ is swapped with some item $h\in T$ such that
    $h\greatereqval{i}g$. So, by \cref{best-k-value prop}, we get that
    $w_i^k((S\setminus g)\cup h)\geq w_i^k(S)$ and $w_i^k((T\setminus h)\cup g)\leq w_i^k(T)$.
    A symmetric argument can be used for agent $j$.
\end{proof}

\begin{lemma}
\label{Normalization best-k-feasibility}
Suppose $\X$  is a partition and we \normalize\ two of its bundles, $(X_i, X_j)$, w.r.t.\ agents $(i,j)$ to get the bundles $(X'_i , X'_j)$, leading to partition $\X'$ with $X'_k = X_k$ for all $k\notin \{i,j\}$. Then, for every $u \in \{i,j\}$ and $k\in \mathbb{N}$, if  $X_u$ is \bstkf\ or \efxf\ for agent $u$ in $\X$, then $X^\prime_u $ is \bstkf\ or \efxf, respectively, for agent $u$ in $\xp$.  
\end{lemma}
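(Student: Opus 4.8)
The plan is to handle the \bstkf\ claim and the \efxf\ claim separately, reducing each to the monotonicity guarantees of \cref{Stability In Normalization}. Fix $u \in \{i,j\}$ and write $S = X_u$, $T = X_{u'}$ for the pair we \normalize, where $\{u,u'\} = \{i,j\}$, with $S' = X'_u$, $T' = X'_{u'}$ the resulting bundles; all remaining bundles are unchanged. Since the \normalizationprocess\ of $(X_i,X_j)$ w.r.t.\ $(i,j)$ is identical to that of $(X_j,X_i)$ w.r.t.\ $(j,i)$, the two agents play symmetric roles, so it suffices to treat $u = i$. Throughout I would use that \cref{Stability In Normalization} guarantees $w^k_i(S') \geq w^k_i(S)$ (the designated bundle only improves for $i$) and $w^k_i(T') \leq w^k_i(T)$ (its partner only worsens for $i$), for every $k$.

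For the \bstkf\ claim, assume $X_i$ is \bstkf\ for agent $i$ in $\X$, i.e.\ $w^k_i(X_i) \geq w^k_i(X_l)$ for every $l$. I would check $w^k_i(X'_i) \geq w^k_i(X'_l)$ by cases. For $l = i$ this is $w^k_i(S') \geq w^k_i(S)$; for the partner $l = j$ we chain $w^k_i(X'_j) = w^k_i(T') \leq w^k_i(T) = w^k_i(X_j) \leq w^k_i(X_i) \leq w^k_i(X'_i)$; and for each untouched $l \notin \{i,j\}$ we use $w^k_i(X'_l) = w^k_i(X_l) \leq w^k_i(X_i) \leq w^k_i(X'_i)$. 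Hence $X'_i$ has the maximum best-$k$-value in $\xp$ and is \bstkf\ for $i$ there.

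For the \efxf\ claim I would invoke \cref{EFX prop:1}, which says $X_i$ is \efxf\ for $i$ exactly when $v_i(X_i) \geq w^1_i(X_l)$ for every other $l$. Two facts suffice. First, every swap in the swap-optimization of $S$ replaces some $g \in S$ by some $h \in T$ with $g \lowereqval{i} h$; since $i$'s valuation is cancelable, substituting a good by a weakly more valuable one does not decrease the bundle's value, so $v_i(X'_i) \geq v_i(X_i)$. Second, \cref{Stability In Normalization} with $k = 1$ gives $w^1_i(X'_j) \leq w^1_i(X_j)$, while $w^1_i(X'_l) = w^1_i(X_l)$ for untouched $l$. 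Combining with the hypothesis, for every $l \neq i$ we obtain $v_i(X'_i) \geq v_i(X_i) \geq w^1_i(X_l) \geq w^1_i(X'_l)$, so $X'_i$ is \efxf\ for $i$ in $\xp$.

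The argument is largely bookkeeping once \cref{Stability In Normalization} is available, so I do not expect a real obstacle; the only point requiring care is keeping the directions of monotonicity straight — ensuring the designated bundle's value goes up while every competitor's best-$k$-value, especially that of the partner bundle $X'_j$ which actually changed, goes down or stays fixed. The one fact worth stating explicitly is that a single beneficial swap weakly increases $v_i(S)$ (the total-value analogue of \cref{best-k-value prop}), since the \efxf\ claim is phrased via $v_i(X'_i)$ rather than a best-$k$-value.
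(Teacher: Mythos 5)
Your proof is correct and follows essentially the same route as the paper's: reduce to $u=i$ by symmetry, then derive both the \bstkf\ and \efxf\ claims from the monotonicity of best-$k$-values under swaps (\cref{Stability In Normalization}), with untouched bundles unchanged. Your explicit justification that $v_i(X'_i)\geq v_i(X_i)$ via cancelability is a welcome touch of care --- the paper attributes this directly to \cref{Stability In Normalization}, implicitly treating the full value as the $k=0$ case --- but this is a presentational difference, not a different argument.
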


\begin{proof}
    We prove the statement above for the case where $u=i$. By symmetry, the same arguments can be used for agent $j$, as well.
    
    If $X_i$ is \bstkf\
    in $\X$ for agent $i$, then it has 
    the maximum \bestkvalue\ among all the bundles in $\X$ for agent $i$. By \cref{Stability In Normalization}, the \bestkvalue\ of $i$ for $X_i'$ is weakly higher, and her value for all other bundles of $\X'$ is weakly lower than the corresponding bundles in $\X$. Therefore, we can conclude that $X'_i$ has the maximum \bestkvalue\ among the bundles in $\xp$, implying that $X_i'$ is the \bstkf\ bundle in $\xp$.
    
    If $X_i$ is \efxf\ for agent $i$ in $\X$, her value for $X_i$ is greater than her best-1-value for any other bundle in $\X$. By Lemma  \ref{Stability In Normalization}, $X_i'\greatereqval{i} X_i$ and the best-1-value of other bundles does not increase, therefore  $X_i'$ is \efxf\ for agent $i$ in $\xp$.
\end{proof}

\subsection{Dealing with Partitions in \StageTwoA}
\label{stage 2A section}

    In this subsection we prove (\cref{ToNormal}) that if we start with a partition $\X$ in \StageTwoA, 
    then we either construct an \eftxf\ partition, or we construct another \goodpart\ 
    with higher potential, or we construct a partition with a potential not lower than before
    in \stageOne\ or in \stageTwoNormal.
    We first restate the conditions of \stageTwoA. \\\\
    {\bf Conditions of \stageTwoA} (Restated).
    {\em 
        We say partition $\X$ is in \stageTwoA\ if all the following hold:
        \begin{itemize}
        \item $\xa= \argmin_1(X_1, X_2)$, and \areefxffor{\xa,\xb}{1}{\X};
        \item \isefxf{\xc}{i}{\X} for some agent $i \ne 1$;
        \item $\xd \in \text{\bstfset}_j(\X)$ for some agent $j\notin \{1,i\}$, and $\xd \setminus \hj \greaterval{j} (\xa, \xb)$;
        \end{itemize}
        }

\begin{theorem}
\label{ToNormal}
    Given any partition $\X$ in \stageTwoA, we can construct a new partition $\xt$ such that one of the following holds:
\begin{itemize}
    \item $\xt$ is \eftxf.
    \item $\xt$ is \good\ with $\phi(\xt)>\phi(\X)$.
    \item $\xt$ is in \stageTwoNormal \ with $\phi(\xt)\geq\phi(\X)$.
    \item $\xt$ is in \stageOne \ with $\phi(\xt)\geq\phi(\X)$.
\end{itemize}
\end{theorem}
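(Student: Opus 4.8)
The natural operation for reaching \stageTwoNormal\ is to \normalize\ the pair $(\xc,\xd)$ with respect to the agents $i,j$ of \stageTwo, producing $(\xcp,\xdp)$ with $\xcp\sqcup\xdp=\xc\sqcup\xd$ and leaving $\xa,\xb$ untouched; call the result $\xp=(\xa,\xb,\xcp,\xdp)$. I would first check that the three conditions involving $i$ and $j$ survive. By \cref{Normalization best-k-feasibility}, $\xcp$ stays \efxf\ for $i$ and $\xdp$ stays \bstf\ for $j$; and by \cref{Stability In Normalization} the best-$1$-value $w^1_j(\xd)$ weakly increases, so $v_j(\xdp\setminus\hjp)=w^1_j(\xdp)\ge w^1_j(\xd)=v_j(\xd\setminus\hj)>\max(v_j(\xa),v_j(\xb))$, i.e.\ the \stageTwoA\ inequality $\xdp\setminus\hjp\greaterval{j}(\xa,\xb)$ is preserved. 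Since $(\xcp,\xdp)$ are \normalized\ by construction and the potential $\phi(\xp)=v_1(\xa)=\phi(\X)$ is unchanged, the \emph{only} thing that can stop $\xp$ from being in \stageTwoNormal\ (outcome~3) is whether $(\xa,\xb)$ is still \efxf\ for agent~$1$.

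This is exactly the crux. The \normalizationprocess\ only reshuffles goods within $\xc\sqcup\xd$, so it can raise agent~$1$'s best-$1$-value of $\xcp$ or $\xdp$ above $v_1(\xa)$ and thereby destroy the EFX-feasibility of $(\xa,\xb)$ for agent~$1$. If $(\xa,\xb)$ remains \efxf\ for agent~$1$, I output $\xp$, which is in \stageTwoNormal\ with $\phi(\xp)=\phi(\X)$ (outcome~3). Otherwise, since $\xa=\argmin_1(X_1,X_2)$, \cref{EFX prop:1} forces $\xa$ itself (not merely $\xb$) to fail, and because $\xb$ is unchanged the new envy must be toward $\xcp$ or $\xdp$; hence agent~$1$ \efxenvies\ one of them \with\ $\xa$, giving $w^1_1(\xcp)>\phi(\X)$ or $w^1_1(\xdp)>\phi(\X)$. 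This slack is what I would convert into one of the remaining outcomes.

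I would then split on how agent~$1$ values the bundle she does \emph{not} over-value. If agent~$1$ values each of $\xb,\xcp,\xdp$ at least as much as $\xa$, then $v_1(\xa)=\min(v_1(\xa),v_1(\xb),v_1(\xcp),v_1(\xdp))$ is agent~$1$'s global minimum over $\xp$, and since some bundle is not \efxf\ for agent~$1$, \cref{ToGoodPart2} returns an \eftxf\ or \good\ partition with $\phi(\xt)>v_1(\xa)=\phi(\X)$ --- outcome~1 or~2. The delicate case is when the bundle agent~$1$ does not over-value, which must then be the $j$-pinned $\xdp$, satisfies $v_1(\xdp)<v_1(\xa)$; here a global PR run only guarantees potential above $v_1(\xdp)$, not above $\phi(\X)$. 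In that case the over-valued bundle is $\xcp$, so $v_1(\xcp)>v_1(\xa)$ and $v_1(\xb)>v_1(\xa)$, and I would run the PR algorithm for agent~$1$ on the three bundles $(\xa,\xb,\xcp)$ only, freezing $\xdp$. Since $\xa$ is not \efxf\ among these three, \cref{PR} lifts all three resulting bundles strictly above $\phi(\X)$ for agent~$1$, while agent~$1$ does not \efxenvy\ the cheap $\xdp$; re-establishing that $\xdp$ (or some bundle) is \efxf\ for a non-$1$ agent and invoking \cref{ToGoodPart0} then gives a \good\ partition with $\phi(\xt)>\phi(\X)$ (outcome~2) or lands in \stageOne\ (outcome~4).

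The main obstacle is precisely this delicate case. The \normalizationprocess\ can simultaneously over-load $\xcp$ for agent~$1$ (breaking $\xa$'s EFX-feasibility) while leaving $\xdp$ --- which agent~$j$ demands as her \bstf\ bundle --- too cheap for agent~$1$ to anchor the potential, so agent~$1$'s two interests (raising $\phi$, keeping $\xa$ feasible) and the feasibility that agents $i,j$ require pull against each other. Guaranteeing a \emph{strict} increase, rather than stalling in \stageTwo\ with equal potential (which is not one of the four allowed outcomes), is the technically demanding point: when the frozen $\xdp$ loses its EFX-feasibility for $j$ after the restricted PR run, one cannot simply hand it off as the fourth bundle, and I expect to need an argument via \cref{ToPerfect1} (reclassifying an intermediate \stageTwo\ partition as \eftxf\ or back into \stageTwoA, then re-normalizing) together with the surgical single-good transfer of \cref{ToGoodPart from stage 2 case 1} to stay within the allowed outcomes.
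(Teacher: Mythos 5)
Your opening matches the paper: \normalize\ $(\xc,\xd)$ w.r.t.\ $(i,j)$, invoke \cref{Stability In Normalization} and \cref{Normalization best-k-feasibility} to preserve the conditions of agents $i$ and $j$, and output the result as a \stageTwoNormal\ partition when $(\xa,\xb)$ remain \efxf\ for agent~$1$; your easy case is also sound, since when agent~$1$ values every bundle of $\xp$ at least $v_1(\xa)$, \cref{ToGoodPart2} gives a strict increase because $\xa$ is no longer \efxf. The first genuine gap is your claim that otherwise ``the bundle agent~$1$ does not over-value \ldots\ must then be the $j$-pinned $\xdp$.'' If agent~$1$'s \efxenvy\ from $\xa$ is toward $\xdp$ (so $v_1(\xdp)>v_1(\xa)$), the cheap bundle can perfectly well be $\xcp$ with $v_1(\xcp)<v_1(\xa)$; this configuration fits neither your global-PR case nor your delicate case, and it is exactly the situation on which the paper spends most of its proof. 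The paper's treatment there: take a minimal $\xdz\subseteq\xdp$ with $\xdz\greaterval{1}\xa$ and move the nonempty surplus $\unassignedItms=\xdp\setminus\xdz$ into $\xcp$; by \cref{get item to bstf} the enlarged third bundle becomes \bstf\ for $i$, and then either agent~$1$ \efxenvies\ the enlarged bundle \with\ $\xa$, in which case the minimum of the four bundles is now $v_1(\xa)$ and \cref{ToGoodPart2} gives a strict potential increase, or all other bundles are \efxf\ for agent~$1$ and a short case analysis yields an \efxf\ partition, \stageOne\ condition~(ii), or a new \stageTwoA\ partition with $|X_3|$ strictly larger --- so the whole step terminates within $m$ iterations by this cardinality measure. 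Your proposal contains neither this surplus transfer nor any substitute progress argument.

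Your delicate case (envy toward $\xcp$ while $v_1(\xdp)<v_1(\xa)$) is also left unproven, as your final paragraph concedes. PR restricted to $(\xa,\xb,\xcp)$ redistributes goods by $v_1$ alone, so afterwards some bundle $T$ among the three may satisfy $w^1_j(T)>v_j(\xdp)$, i.e., agent $j$ \efxenvies\ $T$ \with\ the frozen $\xdp$; at that point none of your cited lemmas fires --- \cref{ToGoodPart0} requires the fourth bundle to be \efxf\ for some non-$1$ agent, \cref{ToPerfect1} requires a full \stageTwo\ partition, and \cref{ToGoodPart from stage 2 case 1} requires a \stageTwo\ partition plus a specific one-good transfer structure --- and you establish none of these preconditions. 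The paper avoids PR here altogether (\cref{AgentAEnvies0}): shrink $\xb$ and $\xc$ to minimal subsets strictly beating $\xa$ for agent~$1$ and dump \emph{all} surplus into $\xd$. Since $\xd$ only gains goods, agent $j$'s anchor survives; in fact \cref{get item to bstf} upgrades it to \bstf\ for $j$ with $\xdp \setminus \hjp \greatereqval{j} \xa \cup \hjp$, landing in \stageOne\ condition~(i) at equal potential, with \cref{ToGoodPart2} covering the case where agent~$1$'s feasibility among the first three bundles fails. The one idea missing from both of your hard cases is the same: route surplus goods \emph{into} the bundle anchored by a non-$1$ agent, where monotonicity protects that agent's feasibility, rather than reshuffling goods among the bundles that only agent~$1$ monitors, which is uncontrollable from the perspectives of $i$ and $j$.
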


We postpone the proof of \cref{ToNormal}, and we first prove the following useful lemma.

\begin{lemma}
\label{AgentAEnvies0}
    Suppose $\X$ is  a partition such that $\xa \lowerval{1} \xb$, 
    for some $j \ne 1$, agent $j$ does not \eftxenvy\ bundle $\xc$ \with\ bundle $\xd$, 
    and does not \efxenvy\ bundle $\xb$ \with\ bundle $\xd$, and $\xd \setminus \hj\greatereqval{j} \xa$. 
    Then if $\xa \efxenvyll{1} \xc$, we can construct a partition $\xt$ such that it is either \efxf\ or  \good\ 
    with $\phi(\xt)> \phi(\X)$
    or in \stageOne \ with $\phi(\xt) = \phi(\X)$.
\end{lemma}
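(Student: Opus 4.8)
The plan is to exploit the hypothesis $\xa \efxenvyll{1} \xc$ by transferring agent $1$'s least valuable good of $\xc$ into $\xa$, which strictly increases the potential, and then to invoke \cref{ToGoodPart1}, whose only requirement is that some single bundle be \efxf\ for one agent $j\ne 1$. Concretely, let $g=\lstitem{1}{1}{\xc}$ and form $\xp=(\xa\cup g,\xb,\xc\setminus g,\xd)$. By \cref{EFX prop:1}, $\xa \efxenvyll{1}\xc$ gives $v_1(\xa)<w^1_1(\xc)=v_1(\xc\setminus g)$, so $\xc\setminus g$ is still strictly more valuable to agent $1$ than $\xa$. Since agent $1$'s valuation is monotone and non-degenerate, $v_1(\xa\cup g)>v_1(\xa)$, and by assumption $v_1(\xa)<v_1(\xb)$; hence each of $v_1(\xa\cup g),v_1(\xb),v_1(\xc\setminus g)$ strictly exceeds $\phi(\X)=v_1(\xa)$.

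First I would check that some bundle of $\xp$ is \efxf\ for agent $j$. The hypotheses on $j$ are preserved by the move: ``$j$ does not \efxenvy\ $\xb$ \with\ $\xd$'' gives $v_j(\xd)\ge w^1_j(\xb)$, and ``$j$ does not \eftxenvy\ $\xc$ \with\ $\xd$'' gives $v_j(\xd)\ge w^2_j(\xc)\ge w^1_j(\xc\setminus g)$. Thus $\xd$ can fail to be \efxf\ for $j$ in $\xp$ only because of the bundle $\xa\cup g$, i.e.\ only when $v_j(\xd)<w^1_j(\xa\cup g)$. If instead $v_j(\xd)\ge w^1_j(\xa\cup g)$, then $\xd$ is \efxf\ for $j$ in $\xp$; taking $\xd$ as the distinguished bundle of \cref{ToGoodPart1} (so the other three are $\xa\cup g,\xb,\xc\setminus g$) yields $\xt$ that is \eftxf\ or \good\ with $\phi(\xt)\ge\min\{v_1(\xa\cup g),v_1(\xb),v_1(\xc\setminus g)\}>\phi(\X)$, which is the desired conclusion. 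This branch in particular covers the case where $g$ is agent $j$'s least valuable good of $\xa\cup g$: then $w^1_j(\xa\cup g)=v_j(\xa)\le v_j(\xd\setminus\hj)\le v_j(\xd)$, which is exactly where the hypothesis $\xd\setminus\hj\greatereqval{j}\xa$ is used.

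It remains to treat $v_j(\xd)<w^1_j(\xa\cup g)$. Here $\xa\cup g$ is itself \efxf\ for $j$ in $\xp$; indeed $w^1_j(\xa\cup g)>v_j(\xd)$ dominates $w^1_j(\xb),w^1_j(\xc\setminus g)$ and $w^1_j(\xd)$, so $\xa\cup g$ is even \bstf\ for $j$. If $v_1(\xd)>\phi(\X)$ we are done, applying \cref{ToGoodPart1} with $\xa\cup g$ as the distinguished bundle, since then the minimum over $\xb,\xc\setminus g,\xd$ still exceeds $v_1(\xa)$. The main obstacle is the residual subcase in which $\xa\cup g$ is the only bundle that is \efxf\ for $j$ and at the same time $v_1(\xd)<v_1(\xa)$, i.e.\ $\xd$ is agent $1$'s least valuable bundle. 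There \cref{ToGoodPart1} with $\xa\cup g$ distinguished guarantees only $\phi(\xt)\ge v_1(\xd)<\phi(\X)$, which is too weak, and one cannot instead reach \stageOne\ at equal potential, because that would require $\xa$ to be a minimizer among three bundles, contradicting $v_1(\xd)<v_1(\xa)$. To resolve this I would use that $\xa\cup g$ is \bstf\ (hence \eftxf) for $j$ and repeatedly move agent $j$'s current least valuable good from $\xa\cup g$ into $\xd$: by \cref{best-feasible condition} and \cref{best props} this keeps a bundle \efxf\ for $j$ while raising $v_1(\xd)$, until $\xd$ is no longer agent $1$'s unique worst bundle and the previous argument applies; alternatively one argues that some bundle is \bstf\ for an appropriate agent and relabels into \stageOne. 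Turning this final step into a construction that provably keeps the potential at least $\phi(\X)$ throughout is the crux of the proof.
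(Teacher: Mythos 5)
Your Case~(a), and Case~(b) when $\xd \greaterval{1} \xa$, are sound, but the residual subcase you isolate --- $v_j(\xd) < w^1_j(\xa\cup g)$ together with $\xd \lowerval{1} \xa$ --- is a genuine gap, and your sketched repair does not close it. If you move agent $j$'s least valuable good $x$ out of the \bstf\ bundle $\xa\cup g$ into $\xd$, then \cref{best-feasible condition} only guarantees that the shrunken bundle does not \efxenvy\ the \emph{unchanged} bundles; the recipient's best-1-value $w^1_j(\xd\cup x)$ can exceed $w^1_j(\xa\cup g)$, because $x$, although least in $\xa\cup g$, may be more valuable to $j$ than every good in $\xd$, so EFX-feasibility for $j$ can break after a single transfer and the iteration has no invariant to lean on. Moreover, the bundle losing goods is the first one, so even if the iteration reached $\xd \greaterval{1} \xa$, nothing controls the potential of the partition you end with --- exactly the difficulty you flag yourself. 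Since the hypotheses say nothing about $v_1(\xd)$ and do not prevent $g$ from being very valuable to $j$, this subcase is nonvacuous and the proof is incomplete.

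The paper sidesteps the problem by transferring goods in the opposite direction and never touching $\xa$: it takes minimal $\xbp\subseteq\xb$ and $\xcp\subseteq\xc$ with $\xbp,\xcp \greaterval{1} \xa$, and moves the surplus $S=(\xb\setminus\xbp)\cup(\xc\setminus\xcp)$ --- nonempty precisely because $\xa\efxenvyll{1}\xc$ --- into $\xd$. If agent $1$ then \efxenvies\ $\xd\cup S$ \with\ one of the first three bundles, \cref{ToGoodPart2} gives a strict potential increase; otherwise \cref{get item to bstf} together with the hypotheses on $j$ makes $\xd\cup S$ \bstf\ for $j$ with $\xdp\setminus\hjp \greatereqval{j} \xa\cup\hjp$, which is exactly condition~(i) of \stageOne\ at potential $v_1(\xa)=\phi(\X)$. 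Note that your choice of moving $g$ \emph{into} $\xa$ structurally forecloses this equal-potential \stageOne\ exit (it raises agent $1$'s value of the first bundle, while \stageOne\ pins the potential to the minimum of the first three bundles), so when the strict-increase route fails you have no stage to land in; the paper's direction of transfer is what makes the dichotomy close.
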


\begin{proof}
    For $i \in \{2,3\}$, let $X'_i$ be a minimal subset of $X_i$ such that $X'_i \greaterval{1} \xa$ and
    let  $\unassignedItms = (\xc \setminus \xcp) \cup (\xb \setminus \xbp)$.
    The minimality of those sets gives that
    $(\xa,\xbp,\xcp) \doesnotefxenvyll{1} (\xa ,\xbp,\xcp)$.
    Note, that since $\xa \efxenvyll{1} \xc$, we have that $\unassignedItms \ne \emptyset$.
    We define $$\xp = (\xa, \xbp, \xcp, \xd \cup \unassignedItms)\,.$$  
    If any of the bundles $\xap, \xbp,\xcp$ is not \efxf\ for agent $1$ in $\xp$, 
    by \cref{ToGoodPart2},
    we get a partition $\xt$ such that it is either \efxf\ or  \goodpart \ such that
    $\phi(\xt) > \minfourp = v_1(\xa)=\phi(\X)$.
    So assume otherwise, i.e.,\ \isefxffor{(\xap, \xbp,\xcp)}{1}{\xp}. Since $\unassignedItms \ne \emptyset $ and $\xdp = \xd \cup \unassignedItms$, we get:
    \begin{center}
    $ \xdp \setminus (\hjp ,\Hjp) \greatereqval{j} (\xdp \setminus\unassignedItms) \setminus \hj = \xd  \setminus \hj \greatereqval{j} \xa \hspace{2mm} 
     \Rightarrow  \hspace{2mm}  \xdp \setminus \hjp \greatereqval{j} \xa \cup \hjp$
    \end{center}
    Also since $\unassignedItms \ne \emptyset$, by \cref{get item to bstf} and assumptions of the lemma,
    we get that \isbstffor{\xdp}{j}{\xp}.  Also we had that \areefxffor{\xap,\xbp, \xcp}{1}{\xp}. 
    Hence, partition $\xp$ is in \stageOne\ (where the first condition is satisfied), and we have $\phi(\xp) = v_1(\xa) =\phi(\X)$, which completes the proof.
\end{proof}

We are now ready to prove the main theorem of this subsection. 

\begin{proof}[\textbf{Proof of \cref{ToNormal}}]
    
    We will show that there exists either a partition satisfying the theorem's conditions or a partition $\Z$ in \stageTwoA, with $\phi(\Z)\geq \phi(\X)$ and $|Z_3|>|\xc|$. In the second case, we set $\Z$ as the initial partition and repeat; since the cardinality of the third bundle strictly increases at each iteration, 
    the procedure will terminate after at most $m$ steps to a partition satisfying the theorem's conditions. 
    
    Let $i,j$ be the agents as defined in \stageTwoA.
    After performing a \NormalizationProcess\ on $(\xc, \xd)$ w.r.t.\ agents $(i,j)$, let $(\xcp, \xdp)$ be the two new bundles. We define the partition 
    $$\xp = (\xa, \xb, \xcp, \xdp)\,.$$
    Note that in the \normalizationprocess, the cardinality of each bundle does not change, so $|\xcp| = |\xc|$.  
    By \cref{Stability In Normalization}, we get that: 
    \begin{center}
    $\xdp \setminus  \lst{j}{1}{\xdp}  \greatereqval{j}  \xd \setminus \lst{j}{1}{\xd} \greaterval{j} (\xa,\xb).$
    \end{center}
    Moreover, regarding $\X$ we had \isefxffor{\xc}{i}{\X}  and \isbstffor{\xd}{j}{\X}, so by \cref{Normalization best-k-feasibility},  
    we still have \isefxffor{\xcp}{i}{\xp} and \isbstffor{\xdp}{j}{\xp}. 
    
    If \areefxffor{\xap, \xbp}{1}{\xp}, 
    then $\xp$ would be in \stageTwoNormal \ with $\phi(\xp) = \phi(\X)$, and so the theorem would follow. If we assume otherwise, it should be either $\xap \efxenvyll{1} \xcp$, or $\xap \efxenvyll{1} \xdp$, since $\xap=\argmin_1(\xap,\xbp)$. If the first was true, the conditions of \cref{AgentAEnvies0} would be satisfied and the theorem would follow.  
    Thus, we assume that $\xap \efxenvyll{1} \xdp$.

    Let $\xdz$
    be a minimal subset of $\xdp$ such that $\xdz \greaterval{1} \xap$, and let $\unassignedItms = \xdp \setminus \xdz$. 
    Since $\xap \efxenvyll{1} \xdp$, we have that
    $\unassignedItms \ne \emptyset$. 
    We define $$\xz = (\xa, \xb, \xcp \cup \unassignedItms, \xdp \setminus \unassignedItms)\,. $$ 
    Since $\unassignedItms \ne \emptyset$, we get that $|\xcz|  >|\xcp| = |\xc|$. 
    Also since \isefxffor{\xcp}{i}{\xp}, by \cref{get item to bstf}, we get \isbstffor{\xcz}{i}{\xz}.
    If $\xaz \efxenvyll{1} \xcz$, then by \cref{ToGoodPart2}, we can construct a partition $\xt$ 
    that is either \eftxf\ or \good\ with $\phi(\xt) > \minfourz=v_1(\xa) = \phi(\X)$, and the theorem would follow.
    By assuming otherwise, since $\argmin_i(\xa, \xb, \xdz)=\xa$, it holds that \areefxffor{\xaz, \xbz, \xdz}{1}{\xz}.  
    
    If for any of the agents $j\notin \{1,i\}$, it was  \isbstf{\xcz}{j}{\xz}, 
    $\xz$ would be in the \stageOne\ (where the second condition is satisfied), with $\phi(\xz) = \phi(\X)$. If the other two agents, but $1$ and $i$, had different \bstf\ bundles in $\xz$, different from $\xcz$, then $\xz$ would be \efxf. So, consider the last case where both those agents have the same unique \bstf\ bundles in $\xz$, which is different from $\xcz$. Let this bundle be $Z_4$, and the two remaining ones among $\xaz, \xbz, \xdz$ be $Z_1,Z_2$ such that $Z_1\lowerval{1} Z_2$.
    We finally define the partition 
    $$\Z=(Z_1, Z_2, \xcz, Z_4)\,.$$
    For any agent $j\notin \{1,i\}$, \arenotefxffor{(Z_1,Z_2)}{j}{\Z}, and therefore, by \cref{best props}, $Z_4 \setminus \lst{j}{1}{Z_4} \greaterval{j}  (Z_1,Z_2).$ Moreover, since \isbstffor{\xcz}{i}{\xz}, it holds that \isbstffor{Z_3}{i}{\Z}, since $\Z$ is just a rearrangement of the bundles of $\xz$. Overall, $\Z$ is in \stageTwoA, with $\phi(\Z)\geq \minfourz=v_1(\xa) = \phi(\X)$, and $|Z_3|=|\xcz|>|\xc|$. By setting $\X=\Z$ and repeatedly transforming the partition as above, the procedure will terminate to a partition that satisfies the conditions of the theorem since, at each round, we strictly increase the cardinality of the third set. 
\end{proof}

\subsection{Dealing with Partitions in \StageTwoNormal}
\label{stage 2B section}
    
    In this subsection we prove (\cref{PerfectExists}) that if we start with a partition $\X$ in \stageTwoNormal, 
    then we either construct an \eftxf\ partition, or we construct another \goodpart\ 
    with higher potential, or we construct a partition with a potential not lower than before
    that is in \stageOne, or in \stageTwoC, or in \stageTwoPerfect. 
    We first restate the conditions of \stageTwoNormal. \\\\
{\bf Conditions of \stageTwoNormal} (Restated).
{\em 
    We say partition $\X$ is in \stageTwoNormal\ if all the following hold:
    \begin{itemize}
    \item $\xa= \argmin_1(X_1, X_2)$, and \areefxffor{\xa,\xb}{1}{\X};
    \item \isefxf{\xc}{i}{\X} for some agent $i \ne 1$;
    \item $\xd \in \text{\bstfset}_j(\X)$ for some agent $j\notin \{1,i\}$, and $\xd \setminus \hj \greaterval{j} (\xa, \xb)$;
    \item $(\xc, \xd)$ are \normalized\  w.r.t.\ agents $i,j$ (see Definition~\ref{normalized sets def});
    \end{itemize}
    }

\begin{theorem}
\label{PerfectExists}
Given any partition $\X$ in \stageTwoNormal, we can construct a new partition $\xt$ such that one of the following holds:
\begin{itemize}
    \item $\xt$ is \eftxf.
    \item $\xt$ is \good\ with $\phi(\xt)>\phi(\X)$.
    \item $\xt$ is in \stageOne \ with $\phi(\xt)\geq\phi(\X)$.
    \item $\xt$ is in \stageTwoC  \ with $\phi(\xt)\geq\phi(\X)$.
    \item $\xt$ is in \stageTwoPerfect  \ with $\phi(\xt)\geq\phi(\X)$.
\end{itemize}
\end{theorem}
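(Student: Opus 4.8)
The plan is to fix the agents $i,j$ witnessing \stageTwoNormal, write $u$ for the remaining agent $u\notin\{1,i,j\}$, and either read off an \eftx\ matching or push $\X$ into \stageOne, \stageTwoC, or \stageTwoPerfect. I would first dispose of a termination case. Since $\xa,\xb$ are \efxf\ (hence \eftxf) for agent $1$, $\xc$ is \efxf\ for $i$, and $\xd$ is \bstf\ (hence \eftxf, by \cref{best props}) for $j$, it suffices for $\X$ to be \eftxf\ that agent $u$ be \eftxf\ with $\xa$ or $\xb$: the assignment $\xc\to i$, $\xd\to j$, the feasible bundle to $u$, and the remaining one of $\xa,\xb$ to $1$ works. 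So from here on I assume $u$ is \eftxf\ with neither $\xa$ nor $\xb$; in particular $u$'s \bsttwof\ bundle is $\xc$ or $\xd$, a fact worth carrying through the argument so that no matching is overlooked.

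Next I would branch on the comparison that separates \stageTwoC\ from \stageTwoPerfect. If $\xc\setminus\gi \greaterval{i} (\xa,\xb)$ then, as $\X$ already satisfies every condition of \stageTwoNormal, it satisfies \cref{perfect def} and is in \stageTwoPerfect; I return $\xt=\X$ with $\phi(\xt)=\phi(\X)$. Otherwise $\argmax_i(\xa,\xb)\greatereqval{i}\xc\setminus\gi$, which is exactly the third condition of \stageTwoC. It then remains to establish the other three: $\xd\setminus\{\hi,\Hi\}\greaterval{i}\xa,\xb$, $\xd\setminus\{\hj,\Hj\}\greaterval{j}\xa,\xb$, and $\xc\cup\hj\efxenvygg{j}\xd\setminus\hj$. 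The first two say that \emph{both} $i$ and $j$ \eftxenvy\ $\xd$ \with\ each of $\xa,\xb$; I would try to derive them from the \stageTwoNormal\ data, chiefly $\xd\setminus\hj\greaterval{j}(\xa,\xb)$ together with the \normalizationality\ of $(\xc,\xd)$ (via \cref{Stability In Normalization}, which relates $i$'s and $j$'s best-$k$-values across $\xc$ and $\xd$).

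The heart of the proof is the case analysis for when one of these three conditions fails. Whenever a condition fails I would manufacture progress rather than verify it: move the offending least-valued \itm\ of $\xd$ (namely $\hj$, or the relevant \itm\ for $i$) into $\xa$ and invoke \cref{ToGoodPart from stage 2 case 1} to obtain a partition that is \eftxf\ or \good\ with $\phi(\xt)>\phi(\X)$; alternatively, when the transferred \itm\ only makes $\xa$ ``best'' for some agent without raising the potential, rename the bundles to land in \stageOne\ with $\phi(\xt)=\phi(\X)$, exactly as in the proof of \cref{ToNormal} (using \cref{get item to bstf} to certify \bstf-status and \cref{best props} to certify the \stageOne\ inequality). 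For the last condition $\xc\cup\hj\efxenvygg{j}\xd\setminus\hj$ I would again use \normalizationality: since no Pareto-improving $(\xc,\xd)$-swap remains for $i,j$, the \itm\ $\hj$ is ``pinned'' in $\xd$, and either the EFX-envy already holds or its failure gives room to shift $\hj$ into $\xc$ and reassign to an earlier stage with non-decreasing potential.

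The main obstacle I expect is the feasibility bookkeeping inside this case analysis: moving a \itm\ into $\xa$ can destroy $\xc$'s \efxf-feasibility for $i$ or $\xd$'s \bstf-status for $j$, so each reallocation must be fed through the building-block lemmas (\cref{ToGoodPart2}, \cref{ToGoodPart3}, \cref{ToGoodPart from stage 2 case 1}) rather than checked by hand, and I would \normalize\ first so that the two least-valued \itms\ governing the two EF2X-envy conditions and the EFX-envy in the last condition are simultaneously controlled for $i$ and $j$ (\cref{Normalization best-k-feasibility}). Threading agent $u$'s forced EF2X-envy through every branch, to be sure that no branch secretly admits an \eftx\ matching we should have returned, is the subtle point that makes the full argument long.
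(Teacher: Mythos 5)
Your perimeter is right (the immediate \eftx\ matching when agent $u$ is \eftxf\ with $\xa$ or $\xb$, returning $\X$ itself as a \stageTwoPerfect\ partition when $\xc\setminus\gi \greaterval{i} (\xa,\xb)$, and reading the third \stageTwoC\ condition off the complement), but the engine of the paper's proof is missing. The paper does not attempt to verify the remaining \stageTwoC\ conditions on $\X$; it first performs a \emph{maximal} transfer: it takes the largest $p$ such that moving $j$'s $p$ least valued goods $\unassignedItms=\{\lst{j}{1}{\xd},\ldots,\lst{j}{p}{\xd}\}$ from $\xd$ to $\xc$ keeps $\xdp\setminus\hjp \greaterval{j} (\xa,\xb)$ and $\xdp \doesnotefxenvyll{j} \xcp$. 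The maximality of $p$ then hands you, for free, the disjunction that either $\xdp\setminus\{\hjp,\Hjp\} \lowereqval{j} \argmax_j(\xa,\xb)$ or $\xdp\setminus\hjp \efxenvyll{j} \xcp\cup\hjp$, and it is exactly the second disjunct that supplies the fourth \stageTwoC\ condition in the terminal branch. Your closing remark about shifting $\hj$ into $\xc$ when the EFX-envy condition fails is a one-step shadow of this construction, but without the guard $\xdp\setminus\hjp\greaterval{j}(\xa,\xb)$, without a termination argument, and, most importantly, without exploiting the \emph{failure} of the guard as data for the later case analysis. Relatedly, your hope that $\xd\setminus\{\hi,\Hi\}\greaterval{i}(\xa,\xb)$ and $\xd\setminus\{\hj,\Hj\}\greaterval{j}(\xa,\xb)$ can be derived from the \stageTwoNormal\ data is false: these are strictly stronger than $\xd\setminus\hj\greaterval{j}(\xa,\xb)$ and genuinely fail on some inputs. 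When they fail, the right move is not a reallocation; it is the observation that the agent in question then has a second \eftxf\ bundle among $\xap,\xbp,\xcp$, after which the paper concludes via the Hall-type matching of \cref{bipartite graph}, via \cref{SeperateBest-2-feasible} in the crossed case where $\xdp$ is \bsttwof\ for $i$ while $\xcp$ is \bsttwof\ for $j$ (yielding an \goodpart\ with strictly higher potential), or via \cref{ToNormal0} with the roles of $\xc$ and $\xd$ swapped when the EFX-best bundle of $u$, or of $j$, turns out to be $\xcp$. None of these three branches appears in your outline.

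Moreover, your fallback for a failed condition --- move the offending least-valued item of $\xd$ into $\xa$ and invoke \cref{ToGoodPart from stage 2 case 1} --- breaks precisely where you would need it. To apply that lemma with first bundle $\xa\cup\hj$ and fourth bundle $\xd\setminus\hj$, you must certify $\xd\setminus\hj \doesnotefxenvyll{j} \xa\cup\hj$, and the natural route to this (via \cref{cancelable prop 1}) is $\xd\setminus\{\hj,\Hj\}\greatereqval{j}\xa$ --- the very condition whose failure triggered the move. In the failure branch the transfer can leave $j$ EFX-envying the enlarged first bundle, so the lemma's hypotheses cannot be certified and the step collapses. (Agent $1$'s EFX-feasibility after the $\xd$-to-$\xc$ transfer is likewise not automatic; the paper discharges it through \cref{AgentAEnvies0}, which you invoke only by analogy.) In short, the easy cases are correctly dispatched, but the maximal-transfer construction and the three-way resolution of condition failure --- Hall matching, \cref{SeperateBest-2-feasible}, and \cref{ToNormal0} --- constitute the substance of the theorem and are absent or incorrect in the proposal.
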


We postpone the proof of \cref{PerfectExists} for later, and we first give three lemmas that are useful for proving \cref{PerfectExists}.

\begin{lemma}
    \label{ToNormal0}
    Suppose $\X$ is a partition in \stageTwoA \ with corresponding agents $i,j$
    such that $\xc \setminus \gi \greaterval{i} (\xa,\xb)$. 
    Then we can construct a new partition $\xt$ such that one of the following holds:

    \begin{enumerate}
        \item $\xt$ is \eftxf. 
        \item $\xt$ is \good\ with $\phi(\xt)>\phi(\X)$.
        \item $\xt$ is in \stageOne \ with $\phi(\xt)\geq \phi(\X)$.
        \item $\xt$ is in \stageTwoPerfect \ with $\phi(\xt)\geq \phi(\X)$. 
    \end{enumerate}
\end{lemma}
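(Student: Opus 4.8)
The plan is to reuse the construction from the proof of \cref{ToNormal} almost verbatim, while additionally carrying along the defining inequality of \stageTwoPerfect, namely $\xc \setminus \gi \greaterval{i} (\xa, \xb)$, which is equivalent to $w^1_i(\xc) > \max\{v_i(\xa), v_i(\xb)\}$ by \cref{k-min with k-least}. The first point I would make is that this inequality is \emph{robust} under the only two operations the construction ever applies to the $X_3$-bundle. When we \normalize\ $(\xc, \xd)$ w.r.t.\ $(i, j)$, \cref{Stability In Normalization} gives $w^1_i(\xcp) \geq w^1_i(\xc)$ while leaving $\xa, \xb$ untouched; and when we only enlarge the bundle, say to $\xcz \supseteq \xcp$, monotonicity gives $w^1_i(\xcz) \geq v_i(\xcp) \geq w^1_i(\xcp)$. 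Hence the inequality survives both steps, so at every moment where the proof of \cref{ToNormal} certifies membership in \stageTwoNormal, the extra inequality is automatically present and upgrades that conclusion to \stageTwoPerfect.

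Concretely, after \normalizationprocess\ I set $\xp = (\xa, \xb, \xcp, \xdp)$; as in \cref{ToNormal}, \cref{Normalization best-k-feasibility} keeps $\xcp$ \efxf\ for $i$ and $\xdp$ \bstf\ for $j$, and \cref{Stability In Normalization} keeps $\xdp \setminus \hjp \greaterval{j} (\xa, \xb)$. If $\xa, \xb$ are \efxf\ for agent $1$ in $\xp$, then $\xp$ is in \stageTwoNormal, hence in \stageTwoPerfect by the robustness remark, with $\phi(\xp) = \phi(\X)$ (outcome 4). Otherwise, since $\xap = \argmin_1(\xap, \xbp)$, either $\xap \efxenvyll{1} \xcp$ or $\xap \efxenvyll{1} \xdp$. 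In the former case the hypotheses of \cref{AgentAEnvies0} are met (because $\xdp$ is \bstf\ for $j$, agent $j$ neither \eftxenvies\ $\xcp$ nor \efxenvies\ $\xb$ \with\ $\xdp$, and $\xdp \setminus \hjp \greatereqval{j} \xa$), and the outcomes it produces --- \efxf, or \good\ with strictly larger potential, or \stageOne\ with equal potential --- are all permitted (an \efxf\ partition is in particular \eftxf).

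The work lies in the case $\xap \efxenvyll{1} \xdp$. Following \cref{ToNormal}, I would take a minimal-for-agent-$1$ subset $\xdz$ of $\xdp$ with $\xdz \greaterval{1} \xap$, put $\unassignedItms = \xdp \setminus \xdz \neq \emptyset$, and move it into $\xcp$, forming $\xz = (\xa, \xb, \xcz, \xdz)$ with $\xcz = \xcp \cup \unassignedItms$; by \cref{get item to bstf}, $\xcz$ is \bstf\ for $i$, and by the monotonicity half of the robustness remark the \stageTwoPerfect inequality persists for $\xcz$. If $\xaz \efxenvyll{1} \xcz$ then \cref{ToGoodPart2} gives outcome 1 or 2. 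Otherwise $\xa, \xb, \xdz$ are \efxf\ for agent $1$, and I branch on the \bstf\ bundle of the two agents outside $\{1, i\}$ exactly as in \cref{ToNormal}: if it is $\xcz$ we obtain a \stageOne\ partition (outcome 3); if the two agents disagree we obtain an \eftxf\ partition (outcome 1); and if they share a bundle $Z_4 \neq \xcz$ we form $\Z = (Z_1, Z_2, \xcz, Z_4)$, which is in \stageTwoA\ with $|Z_3| > |\xc|$.

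It remains to close this last sub-case while \emph{preserving} the \stageTwoPerfect inequality, so that the induction on $|X_3|$ terminates. I would split on $Z_4$. If $Z_4 = \xdz$, then $\{Z_1, Z_2\} = \{\xa, \xb\}$ and the surviving inequality $\xcz \setminus \giz \greaterval{i} (Z_1, Z_2)$ is exactly the \stageTwoPerfect hypothesis for $\Z$, so I recurse (which terminates since $|X_3|$ strictly increases and is bounded by $m$). If $Z_4 = \xa$, then $\{Z_1, Z_2\} \subseteq \{\xb, \xdz\}$, both valued by agent $1$ strictly above $\xa$, giving $\phi(\Z) > \phi(\X)$, so $\Z$ is \good\ with strictly larger potential (outcome 2, as \stageTwoA\ partitions are \good). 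The delicate sub-case is $Z_4 = \xb$, where $\xdz \in \{Z_1, Z_2\}$ and the hypothesis additionally requires $\xcz \setminus \giz \greaterval{i} \xdz$; if this holds I recurse as before, but if it fails then $v_i(\xdz) \geq w^1_i(\xcz)$, which together with $\xcz$ being \bstf\ for $i$ makes $\xdz$ itself \efxf\ for $i$. I expect this residual configuration --- agent $i$ holding two \efxf\ bundles while both remaining agents insist on $\xb$ --- to be dischargeable into one of the four allowed outcomes by a dedicated reassignment or potential-increase argument, and nailing that argument down cleanly is the step I anticipate to be the main obstacle.
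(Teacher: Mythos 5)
There is a genuine gap, and you flag it yourself: your treatment of the case $\xap \efxenvyll{1} \xdp$ ends with an undischarged residual configuration (the sub-case $Z_4 = \xb$ with $\xcz \setminus \giz \lowereqval{i} \xdz$), so the proposal is not a complete proof. The missing idea is that the extra hypothesis of this lemma makes the entire \cref{ToNormal} machinery unnecessary in precisely that case. After swap-optimization, \cref{Stability In Normalization} preserves \emph{both} $\xdp \setminus \hjp \greaterval{j} (\xa,\xb)$ and $\xcp \setminus \gip \greaterval{i} (\xa,\xb)$, so the pairs $(\xcp, i)$ and $(\xdp, j)$ now play fully symmetric roles: each bundle is \efxf\ for its agent in $\xp$ (by \cref{Normalization best-k-feasibility}), so that agent does not \efxenvy\ (hence does not \eftxenvy) the other bundle or $\xb$ \with\ it, and it remains strictly preferred to $\xa$ even after deleting the agent's least valued good. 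Consequently \cref{AgentAEnvies0} applies not only when $\xa \efxenvyll{1} \xcp$ (as you use it) but also, with the roles of the third and fourth bundles and of agents $i$ and $j$ exchanged, when $\xa \efxenvyll{1} \xdp$. This is exactly what the paper's proof does: it never extracts a minimal subset $\xdz$, never moves goods into $\xcp$, and never recurses on $|X_3|$; both envy cases are closed by a single invocation of \cref{AgentAEnvies0}, yielding an \efxf\ (hence \eftxf) partition, or an \good\ partition with $\phi(\xt)>\phi(\X)$, or a \stageOne\ partition with $\phi(\xt)=\phi(\X)$ --- all within outcomes 1--3, with outcome 4 reserved for the case where $\xa,\xb$ stay \efxf\ for agent $1$.

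By importing the construction of \cref{ToNormal} --- which was designed for \stageTwoA\ partitions \emph{without} the inequality $\xc\setminus\gi \greaterval{i} (\xa,\xb)$ --- you force yourself into the cardinality induction, and its invariant genuinely can break: when the two agents outside $\{1,i\}$ share the \bstf\ bundle $Z_4 = \xb$, the new pair $(Z_1,Z_2)$ contains $\xdz$, and nothing in your setup bounds $v_i(\xdz)$ by $w^1_i(\xcz)$. Your fallback observation (that then $\xdz$ is \efxf\ for $i$ by \cref{best props}) does not by itself produce any of the four outcomes: agent $i$ holding two \efxf\ bundles while both remaining agents demand $\xb$ matches none of the tools available here (e.g., \cref{bipartite graph} needs one of the first two bundles to be \eftxf\ for $i$ or $j$, which you do not have). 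So the step you anticipate to be the main obstacle is indeed where the argument fails, and the fix is not a new reassignment or potential argument but the symmetry observation above, which deletes that branch entirely.
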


\begin{proof}
    We remind the reader that a partition is in \stageTwoPerfect\ if it is in \stageTwoNormal\ and also it holds that
    $\xc \setminus \gi \greaterval{i} (\xa,\xb)$ where $i$ is the agent as defined in \stageTwoNormal.
    We now proceed with the proof. We \normalize\  $(\xc,\xd)$ w.r.t.\ agents $(i,j)$ to get $(\xcp,\xdp)$ and we keep the first two bundles the same, i.e., $\xap=\xa$ and $\xbp =\xb$. In the resulting partition $\xp = (\xap, \xbp, \xcp, \xdp)$, we have:
    \begin{center}
        $\xdp \setminus \hjp \greatereqval{j}  \xd \setminus \hj \greaterval{j}  (\xa,\xb)$    \hspace{8mm} 
            (By \cref{Stability In Normalization})

        $\xcp \setminus \gip \greatereqval{i} \xc \setminus  \gi  \greaterval{i} (\xa,\xb)$  \hspace{8mm} 
            (By \cref{Stability In Normalization})
\end{center}
    By \cref{Normalization best-k-feasibility}, 
    \isefxf{\xcp}{i}{\xp} and \isbstffor{\xdp}{j}{\xp} and so it is \isefxf{\xdp}{j}{\xp}, as well.
    If \areefxffor{\xa,\xb}{1}{\xp},  then $\xp$ is in \stageTwoPerfect \ and $\phi(\xp)=v_1(\xa)$,
    so by setting $\xt$, the lemma would follow. If this is not the case, then it is either $\xa \efxenvyll{1} \xcp$ or $\xa \efxenvyll{1} \xdp$. 
    In each case, using \cref{AgentAEnvies0}, 
    we can construct a partition $\xt$ that it is either \efxf\ or  \good\ 
    with $\phi(\xt)> v_1(\xa)$ or in \stageOne \ with $\phi(\xt) = v_1(\xa)$, and the lemma follows again.
\end{proof}

\begin{lemma}
\label{SeperateBest-2-feasible}
    Suppose $\X$ is a partition such that $\xa \lowerval{1} \xb$ and 
    for two distinct agents $i,j \ne 1$ the following two conditions hold:
    \begin{align}
        (\xa,~\xb,~\xd \setminus \{\hi,\Hi\}) \lowereqval{i} \xc \setminus \{\gi,\Gi\} \label{sep:1} \\ 
        (\xa,~\xb,~\xc \setminus \{\gj,\Gj\}) \lowereqval{j} \xd \setminus \{\hj,\Hj\} \label{sep:2}
    \end{align}
    Then we can construct a new partition $\xt$ that is either \eftxf \ or \good\
    with $\phi(\xt)>\phi(\X)$.
\end{lemma}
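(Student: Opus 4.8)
The plan is to read conditions~\eqref{sep:1} and~\eqref{sep:2} as saying that $\xc$ is \bsttwof\ for $i$ and $\xd$ is \bsttwof\ for $j$ (indeed they give $w^2_i(\xc)\geq v_i(\xa),v_i(\xb),w^2_i(\xd)$ and the symmetric inequalities for $j$), and then to boost the potential by moving one good into $\xa$ and reduce to \cref{ToGoodPart3}. Concretely, I aim to build a partition $(\xa\cup g,\ \xb,\ Y_3,\ Y_4)$ with $Y_3$ \efxf\ for $i$ and $Y_4$ \efxf\ for $j$; then \cref{ToGoodPart3} produces an \eftxf\ or \good\ partition $\xt$ with $\phi(\xt)\geq\min(v_1(\xa\cup g),v_1(\xb))$, which exceeds $\phi(\X)=v_1(\xa)$ because $v_1(\xa\cup g)>v_1(\xa)$ (monotonicity and non-degeneracy of agent~$1$) and $v_1(\xb)>v_1(\xa)$ (as $\xa\lowerval{1}\xb$).

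First I would \normalize\ $(\xc,\xd)$ w.r.t.\ $(i,j)$ to get $(\xcp,\xdp)$. By \cref{Stability In Normalization} this only raises $w^2_i(\xcp),w^2_j(\xdp)$ and lowers $w^2_i(\xdp),w^2_j(\xcp)$, so~\eqref{sep:1} and~\eqref{sep:2} are preserved, and by \cref{Normalization best-k-feasibility} $\xcp$ stays \bsttwof\ for $i$ and $\xdp$ stays \bsttwof\ for $j$. Comparing the least goods $\gip,\hjp$ under $v_i$, I may assume w.l.o.g.\ that $\gip\lowereqval{i}\hjp$ (the opposite case is symmetric, swapping the roles of $(i,\xcp)$ and $(j,\xdp)$). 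The crucial use of \normalizationality\ is that $\gip\lowereqval{i}\hjp$ forces $\gip\lowereqval{j}\hjp$, so $\gip$ is the less valuable of the two goods for \emph{both} agents. I then move $g=\gip$ into $\xa$ and leave $\xdp$ untouched, producing $\Y=(\xa\cup\gip,\ \xb,\ \xcp\setminus\gip,\ \xdp)$.

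Next I would check that the two best bundles remain EF2X-feasible. Since $\gip,\Gip$ are the two $i$-least goods of $\xcp$ and $v_i(\xa),v_i(\xb)\leq w^2_i(\xcp)=v_i(\xcp\setminus\{\gip,\Gip\})$ (\cref{k-min with k-least}), \cref{cancelable prop 1} and \cref{cancp} give $\xcp\setminus\gip\greatereqval{i}\xa\cup\gip$ and $\xcp\setminus\gip\greatereqval{i}\xb$; combined with the fact that $w^1_i(\xcp\setminus\gip)=w^2_i(\xcp)$ is at least the best-$2$-value of every other bundle, this makes $\xcp\setminus\gip$ \eftxf\ for $i$ in $\Y$. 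Symmetrically, as $\gip\lowereqval{j}\hjp$ is $j$-small and $v_j(\xa)\leq w^2_j(\xdp)$, a cancelability argument yields $w^2_j(\xa\cup\gip)\leq v_j(\xdp)$, so $j$ does not \eftxenvy\ $\xa\cup\gip$ with $\xdp$, and the \bsttwof\ bundle $\xdp$ stays \eftxf\ for $j$. In particular neither $i$ nor $j$ EF2X-envies the enlarged bundle $\xa\cup\gip$.

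The hard part will be upgrading this EF2X-feasibility to the \efxf-ness that \cref{ToGoodPart3} demands, because an \bsttwof\ bundle need not be \efxf: one agent can \efxenvy\ the other's bundle when it conceals a single very valuable good. To handle this I would feed $\xcp\setminus\gip,\ \xdp$ and the first-two bundles $(\xa\cup\gip,\xb)$ into \cref{ToGoodPart5}, whose internal PR re-allocation redistributes the high-value goods of $\xcp\cup\xdp$ between $i$ and $j$; the point is that~\eqref{sep:1} and~\eqref{sep:2} bound the EFX-envy thresholds $w^1_i(\xb),w^1_i(\xa\cup\gip)$ (and their $j$-analogues) by the best-$2$-values of $\xcp,\xdp$, so no half of the re-allocation can drop below them. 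When a required non-\efxenvy\ condition already holds I would instead note that the offending best bundle is actually \efxf\ and apply \cref{ToGoodPart3} directly. I expect the genuine obstacle to be the \emph{mixed} situation, where (say) $\xcp\setminus\gip$ is already \efxf\ for $i$ but $\xdp$ still needs its valuable goods rebalanced for $j$: reconciling both agents at once is exactly where \normalizationality\ (aligning the two agents' orderings across the bundles) together with cancelability is indispensable, and it will likely need a careful case analysis.
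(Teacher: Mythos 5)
You correctly identify the right first moves (swap-optimize $(\xc,\xd)$ w.r.t.\ $(i,j)$, move a least good toward $\xa$, and finish via \cref{ToGoodPart3}), but your construction diverges from the paper's at the crucial point and leaves a genuine gap. You remove a good only from $\xcp$ and leave $\xdp$ intact, which creates exactly the problem you then acknowledge: conditions~\eqref{sep:1}--\eqref{sep:2} only bound $\xdp$ with \emph{two} goods removed, so agent $i$ may \efxenvy\ the untouched $\xdp$ \with\ $\xcp\setminus\gip$, and \cref{ToGoodPart3} demands \efxf, not \eftxf, bundles. Your proposed patch via \cref{ToGoodPart5} is not carried through, and as sketched it can fail: that lemma requires, for agent $j$, a decomposition of the union into two halves \emph{neither} of which \efxenvies\ $(\xa\cup\gip,\xb)$, but with the natural decomposition the half $\xcp\setminus\gip$ need not satisfy this. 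Swap-optimality only gives $\gip\lowereqval{j}\hjp$; it does not place $\gip$ among $j$'s two least goods \emph{within} $\xcp$, so $v_j(\xcp\setminus\gip)$ can be far below $v_j(\xcp\setminus\{\gjp,\Gjp\})$ (e.g., additive $v_j$ with $\xcp=\{55,1,1\}$, $\xdp=\{60,60,60\}$, $v_j(\xa)=v_j(\xb)=50$, and $\gip$ the $55$-good satisfies all hypotheses and swap-optimality, yet $v_j(\xcp\setminus\gip)=2$). Finding a re-decomposition that works for $j$ while simultaneously preserving the $i$-side conditions is a nontrivial missing step, which you yourself flag as ``will likely need a careful case analysis'' -- that case analysis is the actual content of the lemma.

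The paper avoids this mixed case entirely by a different construction: it removes one good from \emph{each} bundle, forming $\Y=\big(\xa\cup\bigerlstitem{i}{1}{\{\gi,\hj\}},\ \xb,\ \xc\setminus\gi,\ \xd\setminus\hj\big)$ and leaving the other good of $\{\gi,\hj\}$ \emph{unallocated} (this is legal because \cref{ToGoodPart3}, via \cref{ToGoodPart from stage 2 case 1}, accepts bundle sets whose union need not be $M$). Since each of $\xc,\xd$ loses exactly one good, the best-$2$ hypotheses~\eqref{sep:1}--\eqref{sep:2} together with \cref{cancelable prop 1} and \cref{k-min with k-least} directly yield that $\xc\setminus\gi$ is \efxf\ for $i$ and $\xd\setminus\hj$ is \efxf\ for $j$ in $\Y$ -- the ``one good removed plus one good for the EFX threshold'' accounting closes with no residual envy case; swap-optimality is needed only once, to show $j$ does not envy $\xa\cup\gi$ when $\gi$ is the moved good. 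If you want to rescue your version, the fix is to adopt this two-removals-one-leftover construction rather than to rebalance via \cref{ToGoodPart5}.
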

\begin{proof}
    We \normalize\ $(\xc, \xd) $ w.r.t.\ agents $(i,j)$ to get $(\xcp, \xdp)$. 
    By \cref{Stability In Normalization}, all upper conditions hold for $\xp$ too.
    Therefore, for simplicity, suppose that originally $\xc,\xd$ are \normalized\  w.r.t.\ agents $(i,j)$.
    We define the following set of distinct bundles: 
    \begin{center}
        $\Y = \Big(\xa \cup \bigerlstitem{i}{1}{\{\gi,\hj\}},\hspace{2mm} \xb,
        \hspace{2mm} \xc \setminus \gi,\hspace{2mm} \xd \setminus \hj \Big)$
    \end{center}
    with the set $\unassignedItms = \{ \bigerhstitem{i}{1}{\{\gi,\hj\}} \}$ of unused \itms. 
    We now show that \isefxffor{\yc}{i}{\Y} and \isefxffor{\yd}{j}{\Y}.
    Using \cref{sep:1}, \cref{sep:2}, and \cref{cancelable prop 1} we get:
    \begin{align}
        (\xa\cup\gi,~ \yb, ~\xd \setminus \{\hi,\Hi\}) \lowereqval{i} \xc \setminus \gi =\yc  \label{sep:3} \\
        (\xa\cup\hj,~ \yb, ~\xc \setminus \{\gj,\Gj\}) \lowereqval{j} \xd \setminus \hj =\yd  \label{sep:4}
    \end{align}
    By using the fact that $\bigerlstitem{i}{1}{\{\gi,\hj\}} \lowereqval{i} \gi$, and \cref{sep:3} we get:
    \begin{align}
        &\ya =\xa \cup \bigerlstitem{i}{1}{\{\gi,\hj\}} \lowereqval{i} \xa \cup \gi
        \lowereqval{i} \yc\\
        &\yd\setminus \hiy \lowereqval{i} \xd \setminus \{\hi,\Hi\} \lowereqval{i} \yc
    \end{align}
    Therefore, \isefxffor{\yc}{i}{\Y}.

    Now we prove \isefxffor{\yd}{j}{\Y}.
    With similar arguments we have  $\yd \doesnotefxenvyll{j} (\yb,\yc)$.
    So it remains to show that $\yd \doesnotefxenvyll{i} \ya$.
    By \cref{sep:4}, we have $\xa \cup \hj \lowereqval{j} \yd$. 
    Now we prove that $\ya = \xa \cup \bigerlstitem{i}{1}{\{\gi,\hj\}} \lowereqval{j} \yd$.
    If $\bigerlstitem{i}{1}{\{\gi,\hj\}} = \hj$, the argument holds obviously.
    If $\bigerlstitem{i}{1}{\{\gi,\hj\}} = \gi$, 
    we get $\gi \lowereqval{i} \hj$ therefore by \normalizationprocess\ of $(\xc,\xd)$ w.r.t.\ $(i,j)$,
    we get $\gi \lowereqval{j} \hj$, so  
    we get $\xa \cup \gi \lowereqval{j} \xa \cup \hj \lowereqval{j} \yd$.
    So \isefxffor{\yd}{j}{\Y}.
    
    Therefore by \cref{ToGoodPart from stage 2 case 1}, we can construct a new partition $\xt$ of all the items 
    that is either \eftxf \ or \good\ 
    with $\phi(\xt)\geq \mintwoy > \phi(\X)$.
\end{proof}

\begin{lemma}
\label{bipartite  graph}
    Suppose that $\X$ is a partition such that
    \areeftxffor{\xa,\xb}{1}{\X}, for distinct agents $i,j\ne 1$,
    \iseftxffor{\xc}{i}{\X}, \iseftxffor{\xd}{j}{\X},
    agents $i,j$ have at least two \eftxf\ bundles in $\X$
    and at least one of the bundles $(\xa,\xb)$ is \eftxf\ for either agent $i$ or $j$.
    Then $\X$ is \eftxf.
\end{lemma}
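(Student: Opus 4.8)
The plan is to cast the statement as the existence of a perfect matching in a bipartite graph and to verify it through Hall's condition (this is also what the lemma's name suggests). Let $u\notin\{1,i,j\}$ be the fourth agent, and build the bipartite graph $G$ whose two sides are the agents $\{1,i,j,u\}$ and the bundles $\{\xa,\xb,\xc,\xd\}$, placing an edge between an agent and a bundle exactly when that bundle is \eftxf\ for that agent in $\X$. By the definition of an \eftxf\ partition, $\X$ is \eftxf\ if and only if $G$ admits a perfect matching, so it suffices to produce one. Since both sides have size four, saturating the agent side is equivalent to a perfect matching, and by Hall's theorem this amounts to checking $|N(A)|\ge|A|$ for every set of agents $A$, where $N(A)$ is the set of bundles that are \eftxf\ for at least one agent in $A$.

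First I would record the edges guaranteed by the hypotheses. Agent $1$ is adjacent to both $\xa$ and $\xb$; agent $i$ is adjacent to $\xc$ and, having at least two \eftxf\ bundles, to at least one further bundle; symmetrically agent $j$ is adjacent to $\xd$ and to at least one further bundle. Crucially, \emph{every} agent has some \eftxf\ bundle: by \cref{best props} each agent has an \bsttwof\ bundle, and such a bundle is \eftxf\ for her, so in particular the otherwise-unconstrained agent $u$ has degree at least one in $G$. Finally, the hypothesis that one of $\xa,\xb$ is \eftxf\ for $i$ or $j$ supplies an edge from $\{i,j\}$ into $\{\xa,\xb\}$.

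With these edges in hand, verifying Hall's condition is a finite check over subsets of $\{1,i,j,u\}$. Singletons are fine because every agent has degree at least one. For pairs, the only tight cases involve $u$: the pairs $\{i,u\}$ and $\{j,u\}$ are handled precisely because $i$ and $j$ each have at least two \eftxf\ bundles (so $|N(\{i,u\})|\ge|N(\{i\})|\ge 2$), while any pair containing agent $1$ already sees $\{\xa,\xb\}$. For triples, $\{1,i,j\}$, $\{1,i,u\}$, and $\{1,j,u\}$ each contain $\{\xa,\xb\}$ together with $\xc$ or $\xd$, giving at least three bundles; the delicate triple is $\{i,j,u\}$, whose neighborhood contains $\xc$ and $\xd$ from $i$ and $j$ and a third bundle in $\{\xa,\xb\}$ exactly by the last hypothesis. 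The full agent set sees all four bundles, since $\xa,\xb,\xc,\xd$ are \eftxf\ for $1,1,i,j$ respectively. Hence Hall's condition holds and the desired matching, hence the \eftxf\ assignment, exists.

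I expect the main (and essentially only) subtlety to be matching each hypothesis to the Hall inequality it rescues: the ``at least two \eftxf\ bundles'' clauses for $i$ and $j$ are exactly what prevent the pairs $\{i,u\}$ and $\{j,u\}$ from collapsing, and the ``one of $\xa,\xb$ is \eftxf\ for $i$ or $j$'' clause is exactly what saves the triple $\{i,j,u\}$. The one easily-overlooked point is that agent $u$, though unmentioned in the hypotheses, is never a liability, because \cref{best props} already guarantees her an \eftxf\ bundle; this is what rules out the otherwise trivial Hall violation at the singleton $\{u\}$.
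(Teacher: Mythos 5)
Your proposal is correct and takes essentially the same route as the paper's own proof: both reduce \eftx-feasibility of $\X$ to a perfect matching in the agent--bundle bipartite graph and check Hall's condition over subsets of agents, with the ``at least two \eftxf\ bundles'' hypotheses covering the pairs involving the fourth agent and the hypothesis on $(\xa,\xb)$ covering the triple $\{i,j,u\}$. If anything, you are more explicit than the paper on the one point it dismisses as trivially satisfied, namely that agent $u$ has degree at least one in the graph, which, as you note, follows from \cref{best props} because her \bsttwof\ bundle is \eftxf\ for her.
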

\begin{proof}
    By lemma's statement, one of the bundles \xab is \eftxf\ for either agent $i$ or $j$. Assume, w.l.o.g. that this is agent $i$ (the conditions of \stageTwo\ for agents $i,j$ are symmetric).
    Therefore, suppose that $\argmax_i (\xa, \xb)$ is \eftxf\ for agent $i$. 
    
    We construct a bipartite graph $G(A,X,E)$, where $A$ is the set of agents, $X$ is the set of bundles of $\X$, and $(i,X_j) \in E$ whenever agent $i$ is \eftxf\ with $X_j$. It is 
    easy to verify that the condition of Hall's Theorem is satisfied and there exists a perfect matching corresponding to an EF2X allocation. To see that the condition holds, notice 
    that agent 1 is \eftxf\ with $\xa$ and $\xb$, agent $i$ is \eftxf\ with one of the bundles of \xab and $\xc$, and agent $j$ is \eftxf\ with $\xd$ and some other bundle.
    So, for any set of agents of cardinality 1, 2, or 4, the condition is trivially satisfied. For any set of three agents, if agent 1 is included, then $\xa$ and $\xb$ belong to their neighbors, and there should be one of the $\xc$ or $\xd$ as the neighbor of the other two agents. If agent 1 is not included in the set, then the sets $\argmax_i (\xa, \xb)$, $\xc$ and $\xd$ belong to their neighbors. So, overall, the condition of Hall's Theorem is satisfied.   
\end{proof}

    Now we are ready to prove \cref{PerfectExists}.

\begin{proof}[\textbf{Proof of \cref{PerfectExists}}]
    We define an integer $ p \in [0, |\xd|-1]$ to be the greatest integer such that for  
    $\unassignedItms= \{\lst{j}{1}{\xd}, \lst{j}{2}{\xd},\ldots, \lst{j}{p}{\xd}\}$, 
    $\xdp = \xd \setminus \unassignedItms$, and $\xcp = \xc \cup \unassignedItms$,
    the following hold:
    \begin{align}
    \xdp \setminus \hjp  &\greaterval{j} (\xa, \xb)  \mbox{ and }
    \xdp  \doesnotefxenvyll{j} \xcp \label{eq:minimalityForX4}.
    \end{align}
    It may be that $S=\emptyset$, in which case we consider $p=0$. 
    Note that such $p$ exists since for $p = 0$, the above two conditions hold, and for $p = |\xd|-1$, the first condition doesn't hold.

    Since $p$ is maximal, for any integer greater than $p$, at least one of the conditions in \cref{eq:minimalityForX4} does not hold, so it is:
    \begin{align}
    \xdp \setminus \{\hjp, \Hjp\} &\lowereqval{j} \argmax_j (\xa, \xb) \mbox{\qquad or \qquad}
    \xdp \setminus \hjp  \efxenvyll{j} \xcp \cup \hjp\,. \label{eq:conditionsWRTp} 
    \end{align}

    Let $\xp = (\xa, \xb, \xc \cup \unassignedItms, \xd \setminus \unassignedItms)$.  
    Then by upper inequalities, we get \isefxffor{\xdp}{j}{\xp}, 
    and since for $q \in \{1,2,4\}$, we have $X'_q \subseteq X_q$, $\xc \subseteq \xcp$, 
    and since \isefxffor{\xc}{i}{\X}, 
    we get that \isefxffor{\xcp}{i}{\xp}.
    
    If any of the bundles $\xap,\xbp$ is not \efxf\ for agent $1$ in ${\xp}$, 
    then since $(\xa,\xb) \doesnotefxenvyll{1} \xd$ and 
    $\xdp \subseteq \xd$, 
   it holds that $(\xap,\xbp) \doesnotefxenvyll{1} \xdp$, 
    and since $\xa\lowerval{1}\xb$ it should be that $\xap \efxenvyll{1} \xcp$. Then, by \cref{AgentAEnvies0}, the theorem would follow. So, suppose otherwise, i.e., \isefxffor{(\xap,\xbp)}{1}{\xp}.

    If $\xp$ is \efxf, the theorem would follow, since $\phi(\xp)=v_1(\xa)=\phi(\X)$. So, suppose  $\xp$ is not \efxf. 
    Let $u$ be the remaining agent $u \not\in \{1,i,j\}$.
    Note that $(\xap, \xbp)$ are not \efxf\ for agent $u$, and therefore not \bstf \ in $\xp$, otherwise $\xp$ would be \efxf. We will consider the other two cases, \isbstf{\xcp}{u}{\xp} and \isbstf{\xdp}{u}{\xp}, separately, and we will further use the fact that since \isnotefxf{(\xap, \xbp)}{u}{\xp}: 
    \begin{align}
        \xcp \setminus \gup \greaterval{u} (\xap,\xbp) &\mbox{\quad if \isbstf{\xcp}{u}{\xp}} \, \text{ and}\label{eq:uIsNotEFXFwithX1X2-forX3}\\
        \xdp \setminus \hup \greaterval{u} (\xap,\xbp)  &\mbox{\quad if \isbstf{\xdp}{u}{\xp}} \,. \label{eq:uIsNotEFXFwithX1X2-forX4}
    \end{align} 
        
    If \isbstf{\xcp}{u}{\xp}, it further hold that
     $\xcp \setminus \gup \greaterval{u} (\xap,\xbp)$ (by \eqref{eq:uIsNotEFXFwithX1X2-forX3}),
    $\xdp \setminus \hjp \greaterval{j} (\xap,\xbp)$ (by \eqref{eq:minimalityForX4}), and $\phi(\X)=\phi(\xp)$. 
    By \cref{ToNormal0}, (after swapping $\xc$ and $\xd$)
    we can construct a new partition $\xt$ that has one of the desired properties of the theorem, and the proof would be complete.
    
    If \isbstffor{\xdp}{u}{\xp}, by \cref{best props}, it also holds that \isefxffor{\xdp}{u}{\xp}.  
    Note that $(\xa,\xb)$ are not \efxf\ for agent $j$ in $\xp$, otherwise $\xp$ would be \efxf; therefore,  
    $(\xa,\xb)$ are not \bstf\ for agent $j$ in $\xp$. 
    If \isbstffor{\xcp}{j}{\xp}, 
    then it would be that $\xcp \setminus \gjp \greaterval{j} (\xa, \xb)$, 
    and since $\xdp \setminus \hup \greaterval{u} (\xa, \xb)$ (by \eqref{eq:uIsNotEFXFwithX1X2-forX4}), the theorem would follow by \cref{ToNormal0}.
    So assume that \isbstffor{\xdp}{j}{\xp}, and for the same reason it holds that $\xdp \setminus \hjp \greaterval{j} (\xa, \xb)$.

    Overall, that far, for partition $\xp = (\xap, \xbp, \xcp, \xdp)$ we assume that
    $\xap \lowerval{1} \xbp$, \areefxffor{(\xap,\xbp)}{1}{\xp}, 
    \isefxffor{\xcp}{i}{\xp}, \isbstffor{\xdp}{j}{\xp}, and $\xdp \setminus \hjp \greaterval{j}  (\xap, \xbp)$, otherwise the theorem follows. It further holds that $(\xcp, \xdp)$ are \normalized\  with respect to $(i,j)$. The reason is that by the definition of $\unassignedItms$, 
    we have that $\forall x \in \unassignedItms, \forall y \in \xdp : x \lowereqval{j} hy$, 
    and since $(\xc,\xd)$ was \normalized\  w.r.t.\ agents $(i,j)$, 
    we have that $(\xcp, \xdp)$ are \normalized\  w.r.t.\ agents $(i,j)$. Therefore, $\xp$ is in \stageTwoNormal\ with $\phi(\xp)=\phi(\X)$ (otherwise the theorem follows).

    If $\xcp \setminus \gip \greaterval{i} (\xap, \xbp)$, then $\xp$ is in \stageTwoPerfect, and the theorem would follow. So, we suppose that 
    \begin{equation}
        \xcp \setminus \gip \lowereqval{i} \argmax_i(\xap, \xbp)\,. \label{eq:iPrefersX_1X_2}
    \end{equation}

    Recall now that at least one of the conditions in \cref{eq:conditionsWRTp} holds.
    If the first condition of \eqref{eq:conditionsWRTp} is true, then it is $\xdp \setminus \{\hjp, \Hjp\} \lowereqval{j} \argmax_j (\xa, \xb)$. If the second condition of \eqref{eq:conditionsWRTp} is true, then it is $\xdp \setminus \{\hjp, \Hjp\} \lowerval{j} \xcp$,
    since otherwise we would get that $\xdp \setminus \hjp \greatereqval{j} \xcp \cup \hjp$,
    which is a contradiction to that condition. Therefore, overall it should be that   \isefxffor{\argmax_j (\xap, \xbp, \xcp)}{j}{\xp}, and since $\xp$ is in \stageTwoNormal, it also holds that \isefxffor{\xdp}{j}{\xp}; then it is also \areeftxffor{\argmax_j (\xap, \xbp, \xcp), \xdp}{j}{\xp}.
    
    In order to complete the proof, we will divide it into cases and sub-cases regarding $\xap, \xbp$ being \eftxf\ for agents $i,j$ or not.
    
    \noindent\textbf{$\bullet$ \caseA. \iseftxffor{\mathbf{\argmax_i(\xap, \xbp)}}{i}{\xp}:}
    The conditions of \cref{bipartite graph} are satisfied for $\xp$, so $\xp$
    \eftxf\ and the proof completes.
     
    \noindent\textbf{$\bullet$ \caseB. \arenoteftxffor{\mathbf{\xap, \xbp}}{i}{\xp}:}
     If $\xdp \setminus \{\hip, \Hip\} \lowereqval{i} (\xap, \xbp)$, by \cref{eq:iPrefersX_1X_2} it would be that \iseftxffor{\mathbf{\argmax_i(\xap, \xbp)}}{i}{\xp}, which contradicts the condition of this case. Therefore, it is
     $\xdp \setminus \{\hip, \Hip\} \greaterval{i} (\xap, \xbp)$, and by \cref{eq:iPrefersX_1X_2}, we can infer that \isbsttwoffor{\xdp}{i}{\xp}, and therefore, \areeftxffor{\xdp}{i}{\xp}. Since $\xp$ is in \stageTwoNormal, it also holds that \areeftxffor{\xcp}{i}{\xp}
    
    \textbf{$\bullet$ \subcaseA. \iseftxffor{\mathbf{\argmax_j(\xap,\xbp)}}{j}{\xp}:}
    The conditions of \cref{bipartite  graph} are satisfied for $\xp$, so $\xp$ is \eftxf\ and the proof completes.
    
    \textbf{$\bullet$ \subcaseB. \arenoteftxffor{\mathbf{\xap,\xbp}}{j}{\xp}:}
    Suppose that \isbsttwoffor{\xcp}{j}{\xp}, and 
    we have further showed for \caseB\ that \isbsttwoffor{\xdp}{i}{\xp}. 
    Since both $(\xap,\xbp)$ are not \eftxf\ for agents
    $i,j$, we get that $(\xap,\xbp) \lowerval{j} \xcp \setminus \{\gjp, \Gjp\}$ and $(\xap,\xbp) \lowerval{i} \xdp \setminus \{\hip, \Hip\}$. Overall, the conditions of \cref{SeperateBest-2-feasible} are satisfied, and  
    we can construct a new partition $\xt$ that is either \eftxf\ or \good\
    with $\phi(\xt)>\phi(\xp)=\phi(\X)$, and the proof completes.
    Consider now the other case that \arenotbsttwoffor{\xcp}{j}{\xp}, and therefore it is \isbsttwoffor{\xdp}{j}{\xp}.
    Since $(\xap,\xbp)$ are not \eftxf\ for agent
    $j$, it is $(\xap,\xbp) \lowerval{j} \xdp \setminus \{\hjp, \Hjp\}$, which means that the first condition of \cref{eq:conditionsWRTp} does not hold, so the second one should hold, i.e., 
    $\xdp \setminus \hjp  \efxenvyll{j} \xcp \cup \hjp$.
    Hence, partition $\xp$ is in \stageTwoC\ (by conditions showed in \subcaseB\ and \cref{eq:iPrefersX_1X_2}), and since $\phi(\xp)=\phi(\X)$, the theorem follows.
\end{proof}

\subsection{Dealing with Partitions in \StageTwoC}
\label{stage 2c section}

    In this subsection we prove (\cref{Perfect}) that if we start with a partition $\X$ in \stageTwoC, 
    then we either construct an \eftxf\ partition, or 
    we construct another \goodpart\ with higher potential, 
    or we construct a partition in \stageOne\ with the same potential.
    We first restate the conditions of \stageTwoC. \\\\
    {\bf Conditions of \stageTwoC} (Restated).
    {\em 
        We say partition $\X$ is in \stageTwoPerfect\ if all the following hold:
        \begin{itemize}
        \item $\xa= \argmin_1(X_1, X_2)$, and \areefxffor{\xa,\xb}{1}{\X};
        \item \isefxf{\xc}{i}{\X} for some agent $i \ne 1$, 
        and $\xc \setminus \gi \lowereqval{i} \argmax_i(\xa, \xb) \lowerval{i} \xd \setminus \{\hi, \Hi\}$;
        \item $\xd \in \text{\bstfset}_j(\X)$ for some agent $j\notin \{1,i\}$, 
        and $(\xa, \xb) \lowerval{j} \xd \setminus \{\hj, \Hj\}$;
        \item $\xd \setminus \hj \efxenvyll{j} \xc \cup \hj$; 
        \item $(\xc, \xd)$ are \normalized\  w.r.t.\ agents $i,j$ (see Definition~\ref{normalized sets def});
        \end{itemize}
        }

\begin{theorem}
\label{Perfect}
    Given any partition $\X$ in \stageTwoC, 
    we can construct a new partition $\xt$ such that one of the following holds:
    \begin{itemize}
        \item $\xt$ is \eftxf.
        \item $\xt$ is \good\ with $\phi(\xt)>\phi(\X)$.
        \item $\xt$ is in \stageOne \ with $\phi(\xt)=\phi(\X)$. 
    \end{itemize}
\end{theorem}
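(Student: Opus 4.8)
The plan is to leverage the two structural features that single out \stageTwoC. On agent $i$'s side, the chain $\xc \setminus \gi \lowereqval{i} \argmax_i(\xa,\xb) \lowerval{i} \xd \setminus \{\hi,\Hi\}$ says that $w^2_i(\xd)$ strictly dominates the best-$2$-value of each of $\xa,\xb,\xc$ (using $w^1_i(\xc)=v_i(\xc\setminus\gi)$ from \cref{k-min with k-least}); hence $\xd$ is \bsttwof\ and \bstf\ for $i$, and $i$ \eftxenvies\ $\xd$ \with\ both $\xa$ and $\xb$. On agent $j$'s side, the hypothesis $\xd \setminus \hj \efxenvyll{j} \xc \cup \hj$ says that shifting the single good $\hj$ out of $\xd$ is precisely enough to make $\xc\cup\hj$ outrank $\xd$ for $j$.

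Accordingly, the pivotal step is to move $\hj$ into $\xc$, obtaining $\xp=(\xa,\xb,\xcp,\xdp)$ with $\xcp=\xc\cup\hj$, $\xdp=\xd\setminus\hj$, and $\phi(\xp)=\phi(\X)$. Since $\hj$ is $j$'s least valued good of $\xd$, we have $v_j(\xdp)=w^1_j(\xd)$, so the hypothesis reads $w^1_j(\xcp)>v_j(\xdp)=w^1_j(\xd)\geq w^1_j(X_r)$ for all $r$, making $\xcp$ \bstf\ for $j$ in $\xp$; and since $\xcp\supseteq\xc$ while every other bundle only shrinks, $\xcp$ inherits being \efxf\ for $i$. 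Thus after a single good has moved, both $i$ and $j$ are served by $\xcp$, and $i$ still covets the remnant $\xdp$.

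I would then branch. If agent $1$ \efxenvies\ $\xcp$ \with\ $\xa$ (the only bundle that grew, hence the only possible new source of envy for agent $1$), I would transfer a good of $\xcp$ into $\xa$, strictly raising $\phi$, re-split the goods of $\xc\sqcup\xd$ into a bundle \efxf\ for $i$ and one \efxf\ for $j$, and invoke \cref{ToGoodPart from stage 2 case 1}, yielding a \good\ or \eftxf\ partition with $\phi(\xt)>\phi(\X)$. Otherwise $\xa,\xb$ stay \efxf\ for agent $1$, and I would split on agent $i$. If $i$ \efxenvies\ $\xcp$ \with\ $\xdp$, a short argument (using that $\xc$ was already \efxf\ for $i$ and $\xcp\supseteq\xc$) shows $\xcp$ is \bstf\ for $i$ as well; then $\xcp$ is \bstf\ for the two distinct agents $i,j\neq 1$, matching condition (ii) of \stageOne\ with $\xcp$ as the special bundle, and provided $\xa,\xb,\xdp$ are \efxf\ for agent $1$ (otherwise transfer a good into $\xa$ and proceed as in the previous branch) we relabel the $1$-minimum of these three as the first bundle to land in \stageOne\ with $\phi(\xt)=\phi(\X)$. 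If instead $i$ does not \efxenvy\ $\xcp$ \with\ $\xdp$, then $\xdp$ stays \bstf\ for $i$; I would move $i$'s least valued good of $\xdp$ into $\xa$ (strictly raising $\phi$), keeping $\xcp$ \bstf\ for $j$ by \cref{get item to bstf} and the shrunken $\xdp$ \efxf\ for $i$ by \cref{cancelable prop 1}, and again apply \cref{ToGoodPart from stage 2 case 1} (legitimate since \stageTwoC\ is contained in \stageTwo) to finish with $\phi(\xt)>\phi(\X)$.

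The main obstacle is the simultaneous-demand problem: both $i$ and $j$ rank $\xd$ first, so giving $\xd$ (or its remnant) to one risks leaving the other envious. The $\efxenvyll{j}$-hypothesis is exactly the lever that redirects $j$'s demand onto $\xc\cup\hj$, freeing $\xdp$ for $i$; the remaining delicate bookkeeping is (i) ruling out new EFX-envy created for agent $1$ when $\xc$ grows, and (ii) checking that the good pushed into $\xa$ does not destroy the \efxf\ property of the two bundles earmarked for $i$ and $j$. Both are controlled by the \normalizationality\ of $(\xc,\xd)$ w.r.t.\ $(i,j)$ together with \cref{best props} and \cref{cancelable prop 1}, and constitute the technically heaviest part of the proof.
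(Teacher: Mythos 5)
Your opening move is sound and in fact matches the paper's \caseB\ of \cref{DerivingPerfect3}: the paper also forms $\xp=(\xa,\xb,\xc\cup\hj,\xd\setminus\hj)$, and your verification that $\xcp$ becomes \bstf\ for $j$ (via $w^1_j(\xcp)>v_j(\xdp)=w^1_j(\xd)$) is correct. The gaps are in the branching that follows. In your first branch, ``re-split the goods of $\xc\sqcup\xd$ into a bundle \efxf\ for $i$ and one \efxf\ for $j$'' is asserted, not constructed: \cref{ToGoodPart from stage 2 case 1} takes the bundles $\yc,\yd$ and their feasibility \emph{in} $\Y$ --- which includes non-envy toward the enlarged bundle $\xa\cup g$ --- as hypotheses, it does not produce them. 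Producing such a split is the entire difficulty of \stageTwoC, and it is exactly what the paper's case analysis on $\gi$ versus $\Hi$ and $\hj$ versus $\gj$ (\cref{DerivingPerfect1}, \cref{DerivingPerfect2}, \cref{DerivingPerfect3}) supplies, through tailored exchanges (swapping $\gi$ with $\hj$, or $\{\hi,\Hi\}$ with $\gj$), \normalizationality, and fallbacks through \cref{ToGoodPart5} and \cref{AgentAEnvies0}. Your sub-branch 2b founders on the same rock: \cref{get item to bstf} concerns the bundle that \emph{receives} goods, so it says nothing about $\xcp$ remaining \bstf\ for $j$ against $\xa\cup g$; and when $\hj\neq\hi$, the pair removed from $\xd$ is $\{\hj,\hi\}$, which is not $i$'s two least valued items of $\xd$, so the only available bound $\xd\setminus\{\hi,\Hi\}\greaterval{i}(\xa,\xb)$ points the wrong way ($v_i(\xd\setminus\{\hj,\hi\})\lowereqval{i}\xd\setminus\{\hi,\Hi\}$ by \cref{k-min with k-least}), \cref{cancelable prop 1} does not apply, and nothing controls $w^1_i(\xcp)$.

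Sub-branch 2a contains an outright error. Since $\xa$ is \efxf\ for agent $1$ in $\X$ and $v_1$ is non-degenerate, we have $v_1(\xa) > w^1_1(\xd)\geq v_1(\xd\setminus\hj)=v_1(\xdp)$, so the $1$-minimum of $\{\xa,\xb,\xdp\}$ is \emph{always} $\xdp$, and your relabeled \stageOne\ partition has potential $v_1(\xdp)<\phi(\X)$ rather than $\phi(\X)$; this violates the theorem's guarantee and with it the algorithm's global progress argument (moreover, EFX-feasibility of $\xdp$ for agent $1$, which \stageOne\ requires of all three non-special bundles, is never verified, and your parenthetical fallback defers to the unproven re-split of branch 1). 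More broadly, your plan insists that agent $i$ end up with a remnant of $\xd$, whereas the paper's endgame in the analogous case shows that $\argmax_i(\xa,\xb)$ and $\xcp$ are \eftxf\ for $i$ --- here the \stageTwoC\ condition $\argmax_i(\xa,\xb)\greatereqval{i}\xc\setminus\gi$, which you never use, is load-bearing --- and then closes either with a Hall's-theorem matching (\cref{bipartite graph}) or with a drop into \stageOne\ via \cref{AgentAEnvies0}, which keeps $\xa$ (suitably augmented) as the first bundle precisely so that the potential is preserved. The single opening move is right, but the case analysis it is meant to replace is where the theorem actually lives.
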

\begin{proof}
    We divide the proof into three cases:
    \begin{enumerate}
         \item $\Hi \greaterval{i} \gi$    
         \item $\Hi \lowereqval{i} \gi \hspace{0.2 cm}$ and $\hspace{0.2 cm} \hj \greatereqval{j} \gj$
         \item $\hj \lowerval{j} \gj$.
    \end{enumerate}
    Then we handle these three cases in \cref{DerivingPerfect1}, \cref{DerivingPerfect2}, and \cref{DerivingPerfect3}, respectively.
\end{proof}

\begin{lemma}
\label{DerivingPerfect1}
    Suppose $\X$ is a partition in \stageTwoNormal \ with corresponding agents $i,j$ such that:
    \begin{align}
         (\xa, \xb) &\lowerval{i} \xd \setminus \{\hi, \Hi\}\label{p1:xa xb <i xd--}  \\ 
         (\xa,\xb)  &\lowerval{j} \xd \setminus \{\hj, \Hj\}\label{p1:xa xb <j xd--}   \\
          \gi &\lowerval{i} \Hi \label{p1:gi <i Hi} 
    \end{align}
    \donef
\end{lemma}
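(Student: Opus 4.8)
The conclusion asks for a \emph{strict} potential increase together with an \eftxf\ or \good\ partition, which is precisely the guarantee of \cref{ToGoodPart from stage 2 case 1} (the same tool behind \cref{SeperateBest-2-feasible}). So the plan is to build disjoint bundles $\Y=(\ya,\yb,\yc,\yd)$ in the form that lemma requires: keep $\yb=\xb$, move a single good $g\in\xc\cup\xd$ into the first bundle so that $\ya=\xa\cup g$, and split the rest of $\xc\sqcup\xd$ into $\yc$ (to be \efxf\ for $i$) and $\yd$ (to be \efxf\ for $j$). Because the first bundle becomes $\xa\cup g$ while agent $1$ is non-degenerate with $\xa=\argmin_1(\xa,\xb)$, \cref{ToGoodPart from stage 2 case 1} will certify the strict increase $\phi(\xt)>\phi(\X)$. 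As $\X$ is in \stageTwoNormal\ it is in \stageTwo, so that lemma applies once such a $\Y$ is exhibited.

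The construction I would use turns the hypothesis $\gi\lowerval{i}\Hi$ into a gain for agent $i$ by trading $i$'s cheapest good of $\xc$ for $i$'s cheapest good of $\xd$. Let $g^-$ and $g^+$ be the $i$-cheaper and the $i$-dearer of the two goods $\{\gi,\hi\}$, and set
\[
\Y=\Big(\xa\cup g^-,\ \ \xb,\ \ (\xc\setminus\gi)\cup g^+,\ \ \xd\setminus\hi\Big).
\]
The feasibility for $i$ is then clean. Since $g^+\greatereqval{i}\gi$ we get $\yc\greatereqval{i}\xc$, and because $\xc$ is \efxf\ for $i$ we have $v_i(\xc)\geq w^1_i(\xd)=v_i(\xd\setminus\hi)$; as $w^1_i(\yd)=v_i(\xd\setminus\{\hi,\Hi\})$, this shows $i$ does not \efxenvy\ $\yd$ \with\ $\yc$. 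Moreover $g^-\lowereqval{i}\Hi$ (both $\gi$ and $\hi$ are $\lowereqval{i}\Hi$, using $\gi\lowerval{i}\Hi$ and $\hi\lowereqval{i}\Hi$), so applying \cref{cancp} to $\xa\lowereqval{i}\xd\setminus\{\hi,\Hi\}$ (condition~\eqref{p1:xa xb <i xd--}) and $g^-\lowereqval{i}\Hi$ gives $\ya=\xa\cup g^-\lowereqval{i}\xd\setminus\hi\lowereqval{i}\xc\lowereqval{i}\yc$; similarly $\yb=\xb\lowerval{i}\xd\setminus\{\hi,\Hi\}\lowereqval{i}\yc$ by condition~\eqref{p1:xa xb <i xd--}. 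Hence $\yc$ is \efxf\ for $i$ in $\Y$.

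The crux is the remaining claim, that $\yd=\xd\setminus\hi$ is \efxf\ for $j$. The trouble is that $\hi$ is $i$'s cheapest good of $\xd$ but may be valuable for $j$, so deleting it could drop $v_j(\yd)$ below $w^1_j(\yc)$ or below $v_j(\ya)$. To control this I would combine three ingredients: $\xd$ is \bstf\ for $j$ (giving $w^1_j(\xd)\geq w^1_j(\xc)$), condition~\eqref{p1:xa xb <j xd--} (which keeps $j$'s best-$2$-remaining of $\xd$ above $\xa$ and $\xb$ even after one more deletion), and the \normalizationality\ of $(\xc,\xd)$ w.r.t.\ $(i,j)$, which together with $\gi\lowerval{i}\Hi$ pins down how $j$ ranks the relevant goods (for instance it forces $\Hi\greaterval{j}\gi$). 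I expect the argument to split on whether $\hi$ equals $j$'s cheapest good $\hj$ of $\xd$: if it does, $\yd=\xd\setminus\hj$ inherits $j$-optimality directly; if not, one shows via the normalized structure that removing $\hi$ is $j$-affordable, or else falls back either to donating a different good and re-splitting $\xc\sqcup\xd$ with the PR algorithm (as in the proof of \cref{ToGoodPart3}), or to the observation that if agent $1$ instead \efxenvies\ the enlarged third bundle then \cref{ToGoodPart2} already yields a \good\ partition of strictly larger potential. Establishing the $j$-side under merely cancelable (non-additive) valuations—where adding a common good need not preserve strict inequalities—is the delicate part; once it is done, \cref{ToGoodPart from stage 2 case 1} closes the proof with $\phi(\xt)>\phi(\X)$.
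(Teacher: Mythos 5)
Your high-level skeleton matches the paper's: it too proves this lemma by exhibiting bundles of exactly the shape required by \cref{ToGoodPart from stage 2 case 1} ($\ya=\xa\cup g$ with $g\in\xc\cup\xd$, $\yb=\xb$, the rest split between $i$ and $j$), and your verification of the $i$-side of $\Y=\big(\xa\cup g^-,\xb,(\xc\setminus\gi)\cup g^+,\xd\setminus\hi\big)$ is sound. But the $j$-side is not merely ``the delicate part'': for your fixed choice of deleting $\hi$ from $\xd$ it is false in general, and no bookkeeping with the stated hypotheses can rescue it. \Normalizationprocess\ of $(\xc,\xd)$ only constrains comparisons between a good of $\xc$ and a good of $\xd$; it says nothing about how $j$ ranks goods \emph{within} $\xd$, so $\hi$ can be $j$'s most valuable good of $\xd$. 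Concretely, take $j$ additive with $\xd=\{\hi,a,b\}$, $v_j(\hi)=100$, $v_j(a)=v_j(b)=10$, and $v_j(\xa)=v_j(\xb)=50$ with goods worth $25$ each: then \eqref{p1:xa xb <j xd--} holds (since $\xd\setminus\{\hj,\Hj\}=\{\hi\}$), $\xd$ can be \bstf\ for $j$, yet $v_j(\xd\setminus\hi)=20$, so $j$ \efxenvies\ $\ya$ (and the enlarged $\yc$, which even gained $\hi$) \with\ $\yd$. Neither \eqref{p1:xa xb <j xd--} nor the \bstf ness of $\xd$ gives any lower bound on $v_j(\xd\setminus\hi)$ — they only control $v_j(\xd\setminus\hj)$ and $w^2_j(\xd)$.

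The paper's proof is instead a cascade built around deleting $j$'s \emph{own} least good. If $\hi=\hj$, it moves that common good to $\xa$ and keeps $\xc$ intact. If $\hi\neq\hj$, then $\gi\lowerval{i}\Hi\lowereqval{i}\hj$, and the primary partition is $\xp=\big(\xa\cup\gi,\xb,(\xc\setminus\gi)\cup\hj,\xd\setminus\hj\big)$: removing $\hj$ keeps $j$ safe against $\xb$ and $\xa\cup\gi$ (the latter via swap-optimality, which turns $\gi\lowerval{i}\hj$ into $\gi\lowereqval{j}\hj$), while the same chain handles the $i$-side. Only when this fails because $\xdp\lowereqval{j}\xcp$ does the paper pivot to your partition $\xz=\big(\xa\cup\gi,\xb,(\xc\setminus\gi)\cup\hi,\xd\setminus\hi\big)$ — and there the \emph{third} bundle is certified for $j$, not $i$ as in your assignment, using precisely the failure inequality $\xd\setminus\hj\lowereqval{j}(\xc\setminus\gi)\cup\hi$. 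If that also fails ($\xdz\lowereqval{i}\xcz$), the two failure assumptions yield the dominance chains $(\xap,\xbp)\lowereqval{i}\xdz\lowereqval{i}\xcz$ and $(\xap,\xbp)\lowereqval{j}\xdp\lowereqval{j}\xcp$, which are exactly the hypotheses \cref{ToGoodPart5} needs before re-splitting $\xcp\sqcup\xdp=\xcz\sqcup\xdz$ with the PR algorithm. Your proposed fallbacks do not substitute for this: a PR re-split as in \cref{ToGoodPart3}/\cref{ToGoodPart5} requires those bipartition conditions, which only the cascade's failure assumptions supply; and invoking \cref{ToGoodPart2} when agent $1$ envies the enlarged third bundle only guarantees a potential at least the minimum of agent $1$'s values over the four \emph{new} bundles, which may sit below $v_1(\xa)$ (e.g., at $v_1(\xd\setminus\hi)$), so it does not deliver $\phi(\xt)>\phi(\X)$.
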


\begin{proof}
    By using \cref{p1:xa xb <i xd--}, \cref{p1:xa xb <j xd--} and \cref{cancelable prop 1} 
    we get the following equations:
    \begin{align}
        \xa \cup \hi \lowereqval{i} \xa \cup \Hi &\lowereqval{i} \xd \setminus \hi \label{p:xa xb <i xd--} \\
        \xa \cup \hj \lowereqval{j} \xa \cup \Hj &\lowereqval{j} \xd \setminus \hj \label{p:xa xb <j xd--}
    \end{align}
    \noindent\textbf{$\bullet$ \caseA. $\mathbf{\hj = \hi}$:}
    We define 
    \begin{center}
        $\xp = (\xa \cup \hj, ~~ \xb, ~~ \xc,~~ \xd \setminus \hj)$, 
    \end{center}
    then we show that \isefxffor{\xdp}{j}{\xp}. 
    The fact that \isbstffor{\xd}{j}{\X} along with \cref{best-feasible condition} results in
    $\xdp= \xd \setminus \hj \doesnotefxenvyll{j} (\xb,\xc)$. 
    Also, by \eqref{p:xa xb <j xd--}, we get that $\xdp \greatereqval{j} \xap$.
    Hence, \isefxffor{\xdp}{j}{\xp}.
    Next we show that \isefxffor{\xcp}{i}{\xp}.
    By $\hj = \hi$, \cref{p:xa xb <i xd--}, and the fact that \isefxffor{\xc}{i}{\X}, we get:
    \begin{equation}
        \xap=\xa \cup \hj = \xa \cup \hi \lowereqval{i}  \xd \setminus \hi \lowereqval{i} \xc=\xcp \notag  
    \end{equation}
    Also by  \isefxffor{\xc}{i}{\X}, we get that $\xcp=\xc \doesnotefxenvyll{i} (\xbp,\xdp)$.
    Hence, \isefxf{\xcp}{i}{\xp}.
    \donew

    \vspace{5 pt}
    \noindent\textbf{$\bullet$ \caseB. $\mathbf{\hj \ne \hi}$:}
    In this case, $\hj$ is either the second less valuable \itm\ for agent $i$ or more valuable than that, i.e., $\Hi \lowereqval{i} \hj$. By \eqref{p1:gi <i Hi} we get:
    \begin{equation}\label{p1:gi <i hj}
        \gi \lowerval{i} \Hi \lowereqval{i} \hj.        
    \end{equation}
    In this case, we define:
    \begin{center}
    $\xp=\big(\xa \cup \gi ,\ \xb,\ (\xc \setminus \gi) \cup \hj,\ \xd \setminus \hj \big)$.    
    \end{center}
    We first show that \isefxffor{\xcp}{i}{\xp}. 
    By using \isefxffor{\xc}{i}{\X} and \cref{p1:gi <i hj}, we get:
    \begin{equation} \label{p: xcp >i xd-}
        \xd \setminus  \hj \lowereqval{i} \xc = (\xc \setminus \gi) \cup \gi \lowereqval{i} 
        (\xc \setminus \gi) \cup \hj = \xcp        
    \end{equation}

    By Equations \eqref{p1:gi <i Hi}, \eqref{p:xa xb <i xd--}, \isefxffor{\xc}{i}{\X} and \cref{p: xcp >i xd-} we get:
    \begin{equation}  \label{p: xcp >i xap}
        \xap = \xa \cup \gi \lowereqval{i} \xa \cup \Hi \lowereqval{i} 
        \xd \setminus \hi \lowereqval{i} \xc \lowereqval{i} \xcp
    \end{equation}

    Finally, by \isefxffor{\xc}{i}{\X} and the fact that $\xc \lowereqval{i} \xcp$ (by \eqref{p: xcp >i xd-}) we get that $\xb  \doesnotefxenvygg{i} \xcp$, which in turns gives along with Equations \eqref{p: xcp >i xd-} and  \eqref{p: xcp >i xap}, that \isefxffor{\xcp}{i}{\xp}. 
    
    We now turn our attention to agent $j$. Since $\X$ is in \stageTwoNormal,
    $(\xc,\xd)$ are \normalized\ w.r.t.\ agents $(i,j)$,
    and since $\gi \lowerval{i} \hj$ (by \eqref{p1:gi <i hj}), it holds that $\gi \lowereqval{j} \hj$. By using also  
     \cref{p:xa xb <j xd--}, we get that: 
    \begin{equation}\label{P: xdp 1g}
        \xap = \xa \cup \gi \lowereqval{j} \xa \cup \hj \lowereqval{j} \xd \setminus \hj =\xdp
    \end{equation}
    Additionally, by \cref{p1:xa xb <j xd--} we get:
    \begin{equation}\label{P: xdp 2g}
        \xbp=\xb \lowerval{j} \xdp.
    \end{equation}
    If it was also the case that $\xdp \greatereqval{j} \xcp$, it would be that \isefxffor{\xdp}{j}{\xp}, and the lemma would follow by \cref{ToGoodPart from stage 2 case 1}. So, assume otherwise, i.e.,
    \begin{equation}\label{eq:x4<jX3}
    \xdp \lowereqval{j} \xcp.
    \end{equation}
    This in turn gives:
    \begin{equation}\label{p1:not sat}
    \xd \setminus \hj\lowereqval{j} (\xc \setminus \gi)\cup \hj \lowereqval{j}  
    (\xc \setminus \gi) \cup \hi.
    \end{equation}
    We further define the following allocation:
    \begin{center}
        $\xz =(\xa \cup \gi,~~ \xb,~~ (\xc \setminus \gi) \cup \hi,~~ \xd \setminus \hi)$.
    \end{center}
    We first show that \isefxffor{\xcz}{j}{\xz}. By \cref{p1:not sat}, we have:
    \begin{equation*}\label{p:xdz<j xcz}
     \xdz= \xd \setminus \hi \lowereqval{j} \xd \setminus \hj\lowereqval{j} \xcz\,,
    \end{equation*}
    which in combination with \cref{p1:xa xb <i xd--} gives that:
    \begin{equation*}\label{p:xaz xbz <j xcz}
    (\xa \cup \gi,\xb)  \lowereqval{j} \xd \setminus \hj \lowereqval{j} \xcz.
    \end{equation*}
    Overall, it holds that
    \isefxffor{\xcz}{j}{\xz}.
    
    Considering now agent $i$, by \cref{p1:xa xb <i xd--} it holds that
    \begin{equation}\label{p:xaz-}
        \xbz=\xb \lowereqval{i} \xd \setminus \{\hi,\Hi\}  \lowereqval{i} \xd \setminus \hi = \xdz\,,        
    \end{equation}
    and by \cref{p1:gi <i Hi} and \cref{p:xa xb <i xd--}, it holds that
    \begin{equation}\label{p:xabz-}
    \xaz=\xa \cup \gi \lowereqval{i} \xa \cup \Hi \lowereqval{i} \xd \setminus \hi= \xdz\,.  
    \end{equation}
    If it was also the case that $\xdz \greatereqval{i} \xcz$, it would be that \isefxffor{\xdz}{i}{\xz}, and the lemma would follow by \cref{ToGoodPart from stage 2 case 1}. So assume otherwise, i.e., $\xdz \lowereqval{i} \xcz$.   
    By using also Equations 
    \eqref{p:xaz-} and \eqref{p:xabz-}, it holds that
    \begin{align}
    (\xap,\xbp) = (\xaz,\xbz) &\lowereqval{i} \xdz \lowereqval{i} \xcz \,.
    \end{align}
    Moreover, by Equations \cref{P: xdp 1g}, \cref{P: xdp 2g} and  \cref{p1:not sat} \eqref{eq:x4<jX3} gives:
    \begin{align}
    (\xap,\xbp) = (\xaz,\xbz) \lowereqval{j} \xdp \lowereqval{j} \xcp\,.
    \end{align}
    Since it is also true that $\xcp \sqcup \xdp = \xcz \sqcup \xdz$, 
    by \cref{ToGoodPart5}, 
    we can construct a new partition $\xt$ that is either \eftxf\ or \good\ 
    with $\phi(\xt)\geq \mintwop > \phi(\X)$; the last inequality is due to \cref{non degenerate} and due to the fact that $\xap \greatereqval{1} \xa$ and $\xb \greatereqval{1} \xa$ (by \stageTwoNormal). 
    \end{proof}
    
\begin{lemma}
\label{DerivingPerfect2}
    Suppose $\X$ is a partition in \stageTwoC\ such that  
    $\Hi \lowereqval{i} \gi$ and $\hj \greatereqval{j} \gj$.
    \donef
\end{lemma}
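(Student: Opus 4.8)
The plan is to transform $\X$ by moving a single good from $\xc\cup\xd$ into the under-demanded bundle $\xa$—which strictly increases the potential, since $v_1$ is non-degenerate and $\xa=\argmin_1(\xa,\xb)$—while redistributing the rest of $\xc\sqcup\xd$ into one bundle that is \efxf\ for $i$ and one that is \efxf\ for $j$, and then to invoke \cref{ToGoodPart from stage 2 case 1} (or, in the harder case, \cref{ToGoodPart5}/\cref{ToGoodPart3}) to produce the required $\xt$ with $\phi(\xt)>\phi(\X)$.

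First I would record the value relations forced by \stageTwoC\ and the two hypotheses. Since $\xc$ is \efxf\ for $i$ we get $v_i(\xc)\ge w^1_i(\xd)=v_i(\xd\setminus\hi)$, i.e.\ $\xc\greatereqval{i}\xd\setminus\hi$; since $\xd$ is \bstf\ for $j$ we get $w^1_j(\xc)\le w^1_j(\xd)=v_j(\xd\setminus\hj)$. The hypothesis $\Hi\lowereqval{i}\gi$ makes $\hi,\Hi$ the two goods of $\xd$ lying $\lowereqval{i}$ every good of $\xc$, and together with $\xd\setminus\{\hi,\Hi\}\greaterval{i}\xa$ it lets me bound augmented copies of $\xa$ through \cref{cancelable prop 1}: $\xd\setminus\hi\greatereqval{i}\xa\cup\hi$ and $\xd\setminus\hi\greatereqval{i}\xa\cup\Hi$. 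The hypothesis $\hj\greatereqval{j}\gj$ gives $w^1_j(\xc\cup\hj)=v_j((\xc\setminus\gj)\cup\hj)$, so the \stageTwoC\ condition $\xd\setminus\hj\efxenvyll{j}\xc\cup\hj$ becomes $v_j((\xc\setminus\gj)\cup\hj)>v_j(\xd\setminus\hj)$. I would then split on the rank of $\hj$ in $i$'s order on $\xd$.

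In the easy case $\hj\lowereqval{i}\Hi$ (so $\hj$ is among $i$'s two least-valued goods of $\xd$, including $\hj=\hi$), moving $\hj$ out of $\xd$ costs $i$ nothing, so I set $\Y=(\xa\cup\hj,\xb,\xc,\xd\setminus\hj)$ and assign $\xc$ to $i$ and $\xd\setminus\hj$ to $j$. Agent $i$ is \efxf\ with $\xc$: envy toward $\xa\cup\hj$ is ruled out by $\xc\greatereqval{i}\xd\setminus\hi\greatereqval{i}\xa\cup\hj$ (\cref{cancelable prop 1}, using $\hj\lowereqval{i}\Hi$), envy toward $\xd\setminus\hj$ by $v_i(\xc)\ge v_i(\xd\setminus\hi)\ge w^1_i(\xd\setminus\hj)$, and envy toward $\xb$ because $\xc$ was already \efxf. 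Agent $j$ is \efxf\ with $\xd\setminus\hj$: it beats $\xc$ since $w^1_j(\xc)\le v_j(\xd\setminus\hj)$, beats $\xa\cup\hj$ by \cref{cancelable prop 1} applied to $j$ (using $\xd\setminus\{\hj,\Hj\}\greaterval{j}\xa$ and $\hj\lowereqval{j}\Hj$), and beats $\xb$ since $\xb\lowerval{j}\xd\setminus\{\hj,\Hj\}$. Then \cref{ToGoodPart from stage 2 case 1} with $g=\hj$ (so that $\yc\sqcup\yd=(\xc\sqcup\xd)\setminus\hj$) finishes with $\phi(\xt)>\phi(\X)$.

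The hard case $\hj\greaterval{i}\Hi$ is where I expect the main difficulty. Now $\hj$ is valuable for $i$, so removing it from $\xd$ hurts $i$, whereas the good $i$ would gladly release, $\hi$, is valuable for $j$ (as $\hi\greatereqval{j}\hj$), so removing it from $\xd$ hurts $j$; moreover $\xc\greatereqval{i}\xd\setminus\hi$ forces $i$ to keep a $\xc$-based bundle and $j$ a $\xd$-based one, while $v_j((\xc\setminus\gj)\cup\hj)>v_j(\xd\setminus\hj)$ forbids handing $i$ the good-for-both bundle $(\xc\setminus\gj)\cup\hj$. The plan is to move the $i$-harmless good $\hi$ to $\xa$ (its envy toward $i$ is again bounded by \cref{cancelable prop 1}, yielding $(\xc,\xd\setminus\hi)\doesnotefxenvyll{i}(\xa\cup\hi,\xb)$), keep $\xc$ for $i$, and let the \normalizationprocess/PR-based machinery of \cref{ToGoodPart5} (or \cref{ToGoodPart3}) rebalance the remaining set $\xc\sqcup(\xd\setminus\hi)$ for $j$. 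The hard part will be producing a split of $\xc\sqcup(\xd\setminus\hi)$ that is safe for $j$ once the $j$-valuable good $\hi$ has left $\xd$: I would argue from $\xd$ being \bstf\ for $j$, the inequality $v_j((\xc\setminus\gj)\cup\hj)>v_j(\xd\setminus\hj)$, the \normalizationality\ of $(\xc,\xd)$, and the MMS-feasibility of $j$'s cancelable valuation that both parts can be made to exceed $w^1_j(\xa\cup\hi)$ and $w^1_j(\xb)$. I anticipate a further sub-split on whether $\hi\lowereqval{j}\Hj$—the benign sub-case reusing the $\ya=\xa\cup\hi$ construction above, and the adversarial sub-case (where $\hi$ is valuable for $j$ as well) requiring a two-good swap between $\xc$ and $\xd$ followed by \cref{ToGoodPart5}, mirroring the fallback used in the proof of \cref{DerivingPerfect1}.
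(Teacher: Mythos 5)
Your Case~1 (where $\hj \lowereqval{i} \Hi$) is correct and complete, but Case~2 ($\hj \greaterval{i} \Hi$) --- which you yourself identify as the crux --- is left as a sketch, and the plan you outline for it has a genuine hole. First, your framing claim that $\xc \greatereqval{i} \xd\setminus\hi$ ``forces $i$ to keep a $\xc$-based bundle and $j$ a $\xd$-based one'' is false: under the hypothesis $\Hi \lowereqval{i} \gi$, \emph{every} good of $\xc$ (in particular $\gj$) is $i$-weakly-above every good in $\{\hi,\Hi\}$, so a $\xd$-based bundle topped up with a good of $\xc$ can dominate everything for $i$; ruling out a role swap between $i$ and $j$ is exactly what traps your argument. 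Second, the one concrete step you commit to in Case~2 --- moving $\hi$ into $\xa$ --- is unsafe for $j$ in precisely the sub-case you call adversarial: when $\hi$ is $j$-valuable, $w^1_j(\xa\cup\hi)$ is attained by deleting a good of $\xa$ while keeping $\hi$, so any bundle assigned to $j$ must beat a bundle still containing $\hi$, and none of the \stageTwoC\ conditions (nor $\xd$ being \bstf\ for $j$, nor \normalizationality, nor MMS-feasibility) guarantees a split of $\xc\sqcup(\xd\setminus\hi)$ with both parts that safe. You offer no construction there, only ``I would argue'' and ``I anticipate a further sub-split,'' and your benign sub-case $\hi\lowereqval{j}\Hj$ is nearly vacuous (it essentially forces $\hi=\hj$, which already sits in your Case~1). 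As written, Case~2 does not close.

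The paper's proof, by contrast, needs no case analysis and uses the role swap you excluded. It removes \emph{both} $\hi$ and $\Hi$ from $\xd$ and sets $\xap=\xa \cup \lstitem{j}{1}{\{\hi, \Hi\}}$, $\xbp=\xb$, $\xcp=(\xc \setminus \gj)\cup \hstitem{j}{1}{\{\hi, \Hi\}}$, $\xdp=(\xd \setminus\{\hi, \Hi\}) \cup \gj$. For agent $i$, the hypothesis gives $(\hi,\Hi)\lowereqval{i}\gi\lowereqval{i}\gj$, which combined with the \stageTwoC\ chain $\xc\setminus\gj \lowereqval{i} \xc\setminus\gi \lowereqval{i} \argmax_i(\xa,\xb) \lowereqval{i} \xd\setminus\{\hi,\Hi\}$ yields $(\xap,\xbp,\xcp)\lowereqval{i}\xdp$, so \isefxffor{\xdp}{i}{\xp}. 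For agent $j$, the defining \stageTwoC\ condition $\xd\setminus\hj \efxenvyll{j} \xc\cup\hj$ together with the hypothesis $\gj\lowereqval{j}\hj$ (so $\gj$ is $j$'s least good of $\xc\cup\hj$) gives $\xd\setminus\hj \lowerval{j} (\xc\setminus\gj)\cup\hj \lowereqval{j} \xcp$; non-envy toward $\xap$ follows from $\xa\lowerval{j}\xc\setminus\gj$ via \cref{cancp}; and since $\gj$ is $j$'s least good of $\xdp$, agent $j$ does not \efxenvy\ $\xdp$ either, so \isefxffor{\xcp}{j}{\xp}. Then \cref{ToGoodPart3} produces $\xt$ with $\phi(\xt)\geq\mintwop>\phi(\X)$. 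Note that the item your plan exiles to $\xa$ (a possibly $j$-valuable $\hi$) is exactly the one the paper routes into $j$'s bundle, sending only the $j$-lesser of $\{\hi,\Hi\}$ to $\xa$; some such device is what your Case~2 is missing.
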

\begin{proof}
    Let $\xp$ be the partition with the following  bundles:
    \begin{align*}
    \xap &= \xa \cup \lstitem{j}{1}{\{\hi, \Hi\}} \\
    \xbp &= \xb \\
    \xcp &= (\xc \setminus \gj)\cup \hstitem{j}{1}{\{\hi, \Hi\}} \\
    \xdp &= (\xd \setminus\{\hi, \Hi\}) \cup \gj
    \end{align*}
        
    \noindent We will show that \isefxffor{\xcp}{j}{\xp} and \isefxffor{\xdp}{i}{\xp}.
    Since $\X$ is in \stageTwoC, the following holds:
    \begin{equation}\label{p2:xd-->j xc}
        \xc \setminus \gj \lowereqval{i} \xc \setminus \gi \lowereqval{i} \argmax_i (\xa,\xb)
        \lowereqval{i} \xd \setminus\{\hi, \Hi\}.
    \end{equation}
    Due to the lemma's condition,  $\Hi \lowereqval{i} \gi$, we get that:
    \begin{equation}\label{p2: hH}
        (\hi, \Hi) \lowereqval{i} \Hi \lowereqval{i} \gi \lowereqval{i} \gj.
    \end{equation}
    By combining \cref{p2:xd-->j xc}, \cref{p2: hH}, it holds that $(\xap, \xbp, \xcp) \lowereqval{i} \xdp$ 
    and so, \isefxffor{\xdp}{i}{\xp}. 
    
    Next, we show that \isefxffor{\xcp}{j}{\xp}.   
    Since $\X$ is in \stageTwoC, we have that
    $\xd \setminus \hj \efxenvyll{j} (\xc \cup \hj)$. By the lemma's condition, $\gj \lowereqval{j} \hj$, 
    it holds that the least valued item in $\xc \cup \hj$ is $\gj$, and therefore, $\xd \setminus \hj  \lowerval{j} (\xc \setminus \gj) \cup \hj$. By further using the condition of \stageTwoC\ that $(\xa,\xb) \lowerval{j} \xd \setminus\{\hj,\Hj\}$, we get that:
    \begin{equation}\label{p2: xc-gj hj}
    \big(\xa \cup \hj, \xb\big) \lowereqval{j} \xd \setminus \hj  \lowerval{j} (\xc \setminus \gj) \cup \hj.
    \end{equation}
    Therefore, it holds that $\xa  \lowerval{j} \xc \setminus \gj$, and by
    \cref{cancp} we get that:
    \begin{equation}\label{p2: xcp efxf for j}
        \xap = \xa \cup \lstitem{j}{1}{\{\hi, \Hi\}} \lowereqval{j}
        (\xc \setminus \gj) \cup \hstitem{j}{1}{\{\hi, \Hi\}} = \xcp.
    \end{equation}
    Moreover, by \cref{p2: xc-gj hj}, we get:
    \begin{equation}\label{p2: xcp efxf for j 2}
        \xbp = \xb \lowereqval{j}  (\xc \setminus \gj) \cup \hj \lowereqval{j}
        (\xc \setminus \gj) \cup \hstitem{j}{1}{\{\hi, \Hi\}}   = \xcp.
    \end{equation}
    Since $\xd \setminus \hj \efxenvyll{j} (\xc \cup \hj)$ (by \stageTwoC) 
    we get that $ \xd \setminus \hj \lowerval{j} (\xc \cup \hj) \lowereqval{j} (\xc \cup \Hj)$, which along with $\gj \lowereqval{j} \hj$ (by lemma's condition) results in:
    \begin{equation}
         \xd \setminus\{\hi, \Hi\}\lowereqval{j}  \xd \setminus \{\hj,\Hj\}  \lowereqval{j} \xc 
         \lowereqval{j}  (\xc \setminus \gj ) \cup \hj  \lowereqval{j} \xcp
    \end{equation}

    Since $\gj \lowereqval{j} \hj$ (by lemma's condition), we have that $\gj$ is the least valued \itm\ in 
    $\xdp$ w.r.t.\ agent $j$'s valuation, 
    so $\xcp \doesnotefxenvyll{j} \xdp$. This along with
    Equations \eqref{p2: xcp efxf for j} and \eqref{p2: xcp efxf for j 2}  gives
    \isefxffor{\xcp}{j}{\xp}.
     
     \donew
\end{proof}

\begin{lemma}
\label{DerivingPerfect3}
    Suppose $\X$ is a partition in \stageTwoC \ such that $\hj \lowerval{j} \gj$.
    Then we can construct a new partition $\xt$ that is either \eftxf \ or \good\
    with $\phi(\xt)>\phi(X)$ or in \stageOne \ with $\phi(\xt) = \phi(\X)$.
\end{lemma}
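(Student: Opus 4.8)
The plan is to first use the case hypothesis $\hj \lowerval{j} \gj$ to collapse the fourth \stageTwoC\ condition into a plain value comparison, and then to feed the resulting partition into the transformation lemmas of \cref{stage A or B section}. Since $\gj$ is $j$'s least valued \itm\ of $\xc$, the hypothesis makes $\hj$ the $j$-least \itm\ of $\xc\cup\hj$, so $w^1_j(\xc\cup\hj)=v_j(\xc)$ and the condition $\xd\setminus\hj \efxenvyll{j} \xc\cup\hj$ is just $\xd\setminus\hj \lowerval{j} \xc$. Combining this with the remaining \stageTwoC\ inequalities, I would verify through \cref{EFX prop:1} and \cref{k-min with k-least} that in $\X$ \emph{both} $\xc$ and $\xd$ are \efxf\ for \emph{both} agents $i$ and $j$: $\xd$ is \efxf\ for $i$ because $\xc\setminus\gi \lowerval{i} \xd\setminus\{\hi,\Hi\} \lowereqval{i} \xd$ dominates the best-$1$-remaining of every other bundle, and $\xc$ is \efxf\ for $j$ because $\xd\setminus\hj \lowerval{j} \xc$ and $\xa,\xb \lowerval{j} \xd\setminus\{\hj,\Hj\}$. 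I would also record, from the \normalizationality\ of $(\xc,\xd)$ w.r.t.\ $(i,j)$ together with $\hj\lowerval{j}\gj$, that $\hj\lowerval{i} g$ for every $g\in\xc$ (no improving swap involving $\hj$ can exist).

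Because $\xc$ and $\xd$ are now interchangeable \efxf\ bundles for both $i$ and $j$, the only remaining obstruction to an EF2X allocation is the fourth agent $u\notin\{1,i,j\}$. If $u$ is \eftxf\ with one of $\xa,\xb$, then assigning that bundle to $u$, the other of $\xa,\xb$ to agent $1$ (\efxf, hence \eftxf, by the \stageTwoC\ conditions), $\xc$ to $i$ and $\xd$ to $j$ is a valid EF2X allocation, so $\xt=\X$ proves the statement. Otherwise $u$'s \bsttwof\ bundle lies in $\{\xc,\xd\}$ and, by \cref{best props}, $u$ EF2X-envies it \with\ both $\xa$ and $\xb$; now three agents $i,j,u$ are confined to the two bundles $\xc,\xd$, and genuine progress requires enriching $\xa$.

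For this case I would try to add one \itm\ to $\xa$, the natural candidate being $g=\hj$: setting $\Y=(\xa\cup\hj,\ \xb,\ \xc,\ \xd\setminus\hj)$ keeps $\xc$ \efxf\ for $j$, since $\xa\cup\hj \lowereqval{j} \xd\setminus\hj \lowerval{j} \xc$ by \cref{cancelable prop 1}, and it strictly raises $\phi$ because agent $1$ is non-degenerate. Provided the shrunken bundle $\xd\setminus\hj$ stays \efxf\ for $i$ (this reduces to $\xd\setminus\hj \greatereqval{i} \xc\setminus\gi$) or, failing that, for the competing agent $u$, \cref{ToGoodPart from stage 2 case 1}, through \cref{ToGoodPart3}, then finishes with a \good\ or \eftxf\ partition of strictly larger potential and absorbs agent $u$. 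In the branch where enriching $\xa$ instead causes agent $1$ to \efxenvy\ a high-value bundle, the configuration matches the hypotheses of \cref{AgentAEnvies0}, which returns either progress or a \stageOne\ partition with equal potential --- precisely the third admissible conclusion of the statement.

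The hard part is this last case, because both $\xc$ and $\xd$ are \emph{tight}. On one side, $\xd$ is best-$1$-feasible for $j$, so $\xc$ cannot surrender a good to $\xa$ without $\xc\setminus\gj \lowerval{j} \xd\setminus\hj$ destroying $j$'s \efxf. On the other side, although $\hj$ is $i$-small compared with $\xc$, it need not be $i$-small \emph{inside} $\xd$, so removing it can push $v_i(\xd\setminus\hj)$ below $w^1_i(\xc)$ and destroy $i$'s \efxf. Reconciling the demand for a strict potential increase with these two opposing tightness constraints is the main difficulty: it forces a short sub-case analysis according to whether $u$'s \bsttwof\ bundle is $\xc$ or $\xd$ and which agent can absorb the shrunken bundle, with \cref{AgentAEnvies0} mopping up the residual cases in which no strict increase is available.
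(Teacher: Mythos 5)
Your preliminary reductions are correct and match facts the paper uses: under $\hj \lowerval{j} \gj$ the fourth \stageTwoC\ condition does collapse to $\xd \setminus \hj \lowerval{j} \xc$, both $\xc$ and $\xd$ are then \efxf\ for both $i$ and $j$, and \normalizationality\ does force $\hj \lowereqval{i} \gi$. Your move $\Y=(\xa\cup\hj,\ \xb,\ \xc,\ \xd\setminus\hj)$ is essentially the paper's \caseA, though the paper triggers it by the condition $\xc \greatereqval{i} \xa \cup \hj$ and keeps $\xc$ with $i$ and $\xd\setminus\hj$ with $j$ (the latter is safe because \isbstffor{\xd}{j}{\X} and $\xd\setminus\{\hj,\Hj\}\greaterval{j}\xa$), rather than reassigning $\xc$ to $j$ and hoping $i$ or $u$ absorbs $\xd\setminus\hj$. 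A small inaccuracy in your version: EFX-feasibility of $\xd\setminus\hj$ for $i$ in $\Y$ does not reduce to $\xd\setminus\hj \greatereqval{i} \xc\setminus\gi$ alone, since the first bundle grew and one must also beat $w^1_i(\xa\cup\hj)$.

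The genuine gap is the case you yourself flag as ``the hard part'' and defer to an unworked sub-case analysis: when $\xc \lowerval{i} \xa\cup\hj$, the paper neither enriches $\xa$ nor cases on agent $u$'s \bsttwof\ bundle. It moves $\hj$ into the \emph{third} bundle, $\xp=(\xa,\ \xb,\ \xc\cup\hj,\ \xd\setminus\hj)$, accepting zero potential gain, and exploits the case hypothesis in the opposite direction: since \isefxffor{\xc}{i}{\X}, for every $h\in\xd$ we have $\xd\setminus h \lowereqval{i} \xc \lowerval{i} \xa\cup\hj$, so cancelability gives $\xdp\setminus\hjp \lowerval{i} \xa$, and $\hj\lowereqval{i}\gi$ gives $\xcp\setminus\{\gip,\Gip\}=\xc\setminus\gi \lowereqval{i} \argmax_i(\xa,\xb)$ --- hence agent $i$ becomes \eftxf\ with one of the first two bundles while $j$ is \eftxf\ with both $\xcp$ and $\xdp$. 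This is exactly what feeds \cref{bipartite graph} (Hall's condition), which disposes of agent $u$ without ever locating her \bsttwof\ bundle; \cref{AgentAEnvies0} enters only when agent $1$ \efxenvies\ $\xcp$ \with\ $\xa$, which can happen precisely because the third bundle grew. In your enrich-$\xa$ branch, agent $1$'s bundle only gains while the others shrink or stay fixed, so agent $1$ stays \efxf\ with the first bundle and your appeal to \cref{AgentAEnvies0} there has no trigger. Your guiding claim that ``genuine progress requires enriching $\xa$'' is thus the wrong axis --- the lemma's conclusion explicitly permits the \eftxf\ and \stageOne\ exits with no potential increase --- and without the pivot to enriching $\xc$, your residual case (where neither $i$ nor $u$ can absorb $\xd\setminus\hj$) has no resolution.
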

\begin{proof}
We will consider two cases regarding the preference of agent $i$.

    \noindent\textbf{$\bullet$ \caseA. [$\mathbf{\xc \greatereqval{i} \xa \cup \hj}$]:}
    We define the following allocation
    \begin{center}
        $\xp = \big(\xa \cup \hj,~~ \xb,~~ \xc,~~ \xd \setminus \hj \big)$. 
    \end{center}
    By the conditions of \StageTwoC it holds that \isbstffor{\xd}{j}{\X}, and hence 
     $\xd \setminus \hj \doesnotefxenvyll{j} (\xb,\xc)$,
    By the conditions of \StageTwoC it also holds that  $\xd \setminus \{\hj, \Hj\} \greaterval{j} \xa$, 
    that in turns gives $\xd \setminus \hj\greatereqval{j} \xa \cup \hj$. Overall \isefxffor{\xdp}{j}{\xp}.
    Regarding agent $i$, it holds that \isefxf{\xc}{i}{\X} (by \StageTwoC), which means that $\xc \doesnotefxenvyll{i} (\xb,\xd \setminus \hj)$.
    By the condition of this case, it overall holds that \isefxffor{\xcp}{i}{\xp}, 
    and by \cref{ToGoodPart from stage 2 case 1}, the lemma follows for this case.

    \vspace{5pt}
    \noindent\textbf{$\bullet$ \caseB. [$\mathbf{\xc \lowerval{i} \xa \cup \hj}$]:}
    In this case, we define the following allocation:
    \begin{equation}
        \xp = (\xa,~~ \xb,~~ \xc \cup \hj,~~ \xd \setminus \hj).
    \end{equation}
    
    \noindent Regarding agent $j$, by the fact that $\hj \lowerval{j} \gj$ (by lemma's assumption), 
    we get 
    $$(\xc \cup \hj)\setminus \{\gjp,\Gjp\} = \xc \setminus \gj \lowereqval{j} \xd \setminus \hj\,,$$
    where the last inequality is due to \isbstffor{\xd}{j}{\X} (by \stageTwoC).
    By the same condition, i.e.,  \isbstffor{X_4}{j}{\X} it turns that  
    \iseftxffor{\xdp}{j}{\xp}.
    We also have $\xcp = \xc \cup \hj \greaterval{j} \xd \setminus \hj = \xdp$ 
    (by \StageTwoC),
    also since \isbstffor{\xd}{i}{\X}, we have that $\xd \setminus \hj \doesnotefxenvyll{j} (\xa,\xb)$,
    therefore we get $\xcp \doesnotefxenvyll{j} (\xa,\xb)$, 
    so \iseftxffor{(\xcp,\xdp)}{j}{\xp}.
    
    Regarding agent $i$, since \isefxffor{\xc}{i}{\X} (by \stageTwoC), 
    for every $h \in \xd$ we get:
    $\xd \setminus h \lowereqval{i} \xc \lowerval{i} \xa \cup \hj.$
    Therefore, by cancelability, for $h = \hjp$, we get $$\xdp\setminus \hjp =\xd \setminus \{\hj, \hjp\} \lowerval{i} \xa \lowereqval{i} \argmax_i(\xa,\xb)\,.$$  
    Moreover, by the lemma's assumption, we have that $\hj \lowerval{j} \gj$, so we get $\hj \lowerval{j} \gi$, 
    and by \normalizationality \ of $(\xc,\xd)$ w.r.t.\ agents $(i,j)$, 
    we get  $\hj \lowereqval{i} \gi$.
    Therefore, $$\xcp\setminus \{\gip,\Gip\}=(\xc \cup \hj) \setminus \{\gip,\Gip\} = \xc\setminus \gi \lowerval{i} \argmax_i (\xa, X_2)\,,$$
    This means that $\argmax_i(\xap,\xbp)=$ \iseftxffor{\argmax_i(\xa,\xb)}{i}{\xp}; it also holds that \iseftxffor{\xcp}{i}{\xp}, 
    since \isefxffor{\xc}{i}{\X} (by \stageTwoC). 
     
    So far, we have that 
    \begin{align}
        \text{\iseftxffor{(\xcp,\xdp)}{j}{\xp} and \iseftxffor{(\argmax_i(\xap,\xbp),\xcp)}{i}{\xp}.} \label{eq:EF2XForij}
    \end{align} 
    Next, we turn our attention to agent $1$. If $\xa=$\isnotefxffor{\xap}{1}{\xp}, we show that the conditions of \cref{AgentAEnvies0} are satisfied for $\xp$.  Since \areefxffor{\xa,\xb}{1}{\X}, 
    we get $(\xa,\xb) \greatereqval{1} \xd \setminus \hj$.
    Therefore $\xa \efxenvyll{1} \xcp$, otherwise \isefxffor{\xap}{1}{\xp}. 
    Since $\xd \setminus \{\hj, \Hj\} \greaterval{j} \xa$, we get that $\xdp \setminus \hjp \greaterval{j} \xa$. 
    Also agent $j$ does not \eftxenvy\ bundle $\xcp$
    \with\ bundle $\xdp$ (by \ref{eq:EF2XForij}), 
    and does not \efxenvy\ bundle $\xb$, since \isbstffor{\xd}{j}{\X}. 
    Therefore, by \cref{AgentAEnvies0}
    we would complete the proof.

    Finally, suppose that \areefxffor{\xa}{1}{\xp}, which in turns means that \isefxffor{\xb}{1}{\xp}, 
    since $\xa= \argmin_1(X_1, X_2)$ (by \stageTwoC). 
    Hence, \areefxffor{\xap,\xbp}{1}{\xp}, and by combining this with  \cref{eq:EF2XForij}, we get that $\xp$ is \eftxf\ by \cref{bipartite graph}.
\end{proof}

\subsection{Dealing with Partitions in \StageTwoPerfect}
\label{stage 2d section}

    In this subsection we prove (\cref{solve perfect}) that if we start with a partition $\X$ in \stageTwoPerfect, 
    then we either construct an \eftxf\ partition or we construct another \goodpart \ 
    with higher potential. We first restate the conditions of \stageTwoPerfect. \\\\
    {\bf Conditions of \stageTwoPerfect} (Restated).
    {\em 
        We say partition $\X$ is in \stageTwoPerfect\ if all the following hold:
        \begin{itemize}
        \item $\xa= \argmin_1(X_1, X_2)$, and \areefxffor{\xa,\xb}{1}{\X};
        \item \isefxf{\xc}{i}{\X} for some agent $i \ne 1$, and $\xc \setminus \gi \greaterval{i} (\xa, \xb)$;
        \item $\xd \in \text{\bstfset}_j(\X)$ for some agent $j\notin \{1,i\}$, and $\xd \setminus \hj \greaterval{j} (\xa, \xb)$;
        \item $(\xc, \xd)$ are \normalized\  w.r.t.\ agents $i,j$ (see Definition~\ref{normalized sets def});
        \end{itemize}
        }

    Our main theorem of this section follows.

\begin{theorem}
\label{solve perfect}
    Given any partition $\X$ in \stageTwoPerfect, we can construct a new partition $\xt$ that is either \eftxf,
    or \good\ with $\phi(\xt)>\phi(\X)$.
\end{theorem}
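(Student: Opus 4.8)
The plan is to focus on the fourth agent $u\notin\{1,i,j\}$, since in \stageTwoPerfect\ the other three agents are already served by three pairwise distinct bundles: $\xa,\xb$ are \efxf\ for agent $1$, $\xc$ is \efxf\ for $i$, and $\xd$ is \bstf\ (hence \efxf) for $j$. So agent $u$ is the only possible obstruction to an \eftx\ allocation, and everything hinges on whether $u$ can be satisfied by one of $\xa,\xb$. If one of $\xa,\xb$ is \eftxf\ for $u$, I am done immediately: assign that bundle to $u$, the other of $\xa,\xb$ to agent $1$, $\xc$ to $i$, and $\xd$ to $j$; every agent then receives an \efxf\ (hence \eftxf) bundle, so $\X$ itself is \eftx\ and I set $\xt=\X$ (this is precisely the regime of \cref{bipartite graph}).

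Otherwise neither $\xa$ nor $\xb$ is \eftxf\ for $u$, so $u$'s \bsttwof\ bundle is $\xc$ or $\xd$ and $u$ \eftxenvies\ it \with\ both $\xa$ and $\xb$. In this case I would strictly increase the potential by peeling a single good off the over-demanded bundle and moving it into $\xa$. Concretely, when the over-demanded bundle is $\xd$, I set $g=\hj$, form $\Y=(\xa\cup\hj,\ \xb,\ \xc,\ \xd\setminus\hj)$, and aim to invoke \cref{ToGoodPart from stage 2 case 1}: since $\X$ is in \stageTwo\ and $g\in\xc\cup\xd$, that lemma converts $\Y$ (provided $\xc$ is \efxf\ for $i$ and $\xd\setminus\hj$ is \efxf\ for $j$ in $\Y$) into an \eftxf\ or \good\ partition $\xt$ with $\phi(\xt)>\phi(\X)$, the strict gain being automatic since $v_1(\xa\cup\hj)>v_1(\xa)$ by non-degeneracy. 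Choosing $g=\hj$ (agent $j$'s least good of $\xd$) is what keeps $\xd\setminus\hj$ attractive for its owner, because $v_j(\xd\setminus\hj)=w^1_j(\xd)$ dominates $w^1_j(\xb)$ and $w^1_j(\xc)$ as $\xd$ is \bstf\ for $j$.

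The crux — and the step I expect to be the main obstacle — is certifying the two \efxf\ conditions in $\Y$, whose only threat is the enlarged bundle $\xa\cup\hj$: I must show $i$ does not \efxenvy\ $\xa\cup\hj$ \with\ $\xc$, and $j$ does not \efxenvy\ it \with\ $\xd\setminus\hj$. I would drive this from the two domination inequalities of \stageTwoPerfect, $\xc\setminus\gi\greaterval{i}(\xa,\xb)$ and $\xd\setminus\hj\greaterval{j}(\xa,\xb)$, together with \normalizationality\ of $(\xc,\xd)$ w.r.t.\ $(i,j)$, which lets me compare the moved good $\hj$ against the goods of $\xa$ from both $i$'s and $j$'s viewpoints at once. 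I anticipate a short split on whether $\hj$ is the least valued good of $\xa\cup\hj$ for the agent in question: when it is, the relevant best-$1$-value collapses to $v(\xa)$ and the domination inequalities close the case at once; when it is not, I would combine cancelability with \normalizationality\ to rule out the envy, re-choosing the moved good if a particular choice inflates $\xa\cup\hj$ too much. The case where $u$'s \bsttwof\ bundle is $\xc$ is handled analogously by moving $g=\gi$, but it needs extra care: since $\xc$ is only \efxf\ (not \bstf) for $i$, removing $\gi$ need not keep $\xc\setminus\gi$ \efxf\ for $i$, so there I would instead lean on $u$'s \bsttwof\ status to certify the shrunken bundle is \efxf\ for a second agent, keeping $\xd$ (which is \bstf\ for $j$) as the other \efxf\ bundle before applying \cref{ToGoodPart from stage 2 case 1}.
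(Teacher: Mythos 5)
Your case split is pivoted on the wrong agent, and the second branch fails at exactly the step you flag as the crux. The paper's proof of this theorem splits on whether one of the bundles \xab is \eftxf\ for $i$ or $j$ --- not for $u$. In the affirmative branch, \cref{Cases} either concludes that $\X$ is already \eftxf\ via a Hall's-theorem matching (\cref{bipartite graph}; agent $u$ only ever needs the existence of its \bsttwof\ bundle), or restructures via \cref{ToGoodPart8}. Note that your first branch, while correct as a sufficient condition, misses that $\X$ can be \eftxf\ even when $u$ rejects both $\xa$ and $\xb$ (e.g., $i$ may accept one of \xab, freeing $\xc$ for $u$). More importantly, in the negative branch the paper's working hypothesis is that \xab are not \eftxf\ for \emph{$i$ and $j$}, which yields inequalities such as $\xd \setminus \{\hj,\Hj\} \greaterval{j} (\xa,\xb)$ and $\xd \setminus \{\hi,\Hi\} \greaterval{i} (\xa,\xb)$; these are precisely the fuel for \cref{Casei}, \cref{Caseii}, \cref{SeperateBest-2-feasible}, and \cref{auxiliary lemma case ii}. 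Your hypothesis --- that $u$ \eftxenvies\ $\xc$ or $\xd$ \with\ $\xa,\xb$ --- constrains only $v_u$, and $v_u$ bears no relation to the \normalized\ pair $(\xc,\xd)$, which is swap-optimized w.r.t.\ $(i,j)$ only; \normalizationality\ gives cross-agent comparisons only between $i$ and $j$, and only for goods lying in $\xc \times \xd$, so it can say nothing about $\hj$ versus the goods of $\xa$.

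Two concrete steps break. First, when you move $\hj$ into $\xa$: \stageTwoPerfect\ guarantees only $\xd \setminus \hj \greaterval{j} (\xa,\xb)$, whereas you need $v_j(\xd\setminus\hj) \geq w^1_j(\xa\cup\hj)$, i.e., essentially $\xd\setminus\hj \greatereqval{j} \xa\cup\hj$ --- which is exactly condition (i) of \stageOne\ and can simply fail here (take $v_j$ additive with $\xa=\{9,1\}$, $\hj$ of value $5$, and $v_j(\xd\setminus\hj)=12$: then $w^1_j(\xa\cup\hj)=14>12$, and the remaining stage conditions are easily satisfied). When it fails there is nothing to ``re-choose'': $\hj$ is already $j$'s least good in $\xd$, and removing anything from $\xc$ endangers $i$, who is only \efxf\ (not \bstf) there. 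This failure mode is precisely why the paper needs the machinery of \stageTwoC\ and the full case analysis of \cref{Casei} (cases $\gi \lowerval{i} \Hi$, $\gi \greatereqval{i} \hj$, and $\Hi \lowereqval{i} \gi \lowerval{i} \hj$), where the fix is sometimes a two-good exchange followed by \cref{DerivingPerfect1} or \cref{SeperateBest-2-feasible}, not a single-good move. Second, in your $\xc$ case, ``leaning on $u$'s \bsttwof\ status'' to certify $\xc\setminus\gi$ \efxf\ for $u$ is unsound: \bsttwof ness protects $u$ only against removing $u$'s \emph{own} two least valued goods, while $\gi$ is selected by $i$'s ordering and may be $u$'s most valuable good in $\xc$. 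If $u$ values the goods of $\xc$ at $10,1,1$ and $\gi$ is the good of value $10$, then $v_u(\xc\setminus\gi)=2 < 10 = w^2_u(\xc)$, so $u$ can outright envy (not merely \efxenvy) other bundles \with\ $\xc\setminus\gi$. As proposed, the argument therefore does not go through; the case analysis has to be driven by $i$'s and $j$'s attitudes toward \xab and by which of $\xc,\xd$ is \bsttwof\ for them, which is the decomposition the paper's proof of \cref{solve perfect} uses.
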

\begin{proof}
    If there exists an agent among agents $i,j$ (as defined in the \stageTwoPerfect),  
    for whom at least one of the bundles \xab is \eftxf, by \cref{Cases} the theorem follows. So, by assuming that \xab are not \eftxf\ for agents $i,j$, we handle separately all the three possible cases described next:
\begin{enumerate}
    \item $\xd$ is \bsttwof \ for agents $i,j$.  

    \item $\xc$ is \bsttwof \ for agents $i,j$. 

    \item$\xc$ is \bsttwof \ for $k \in \{i,j\}$ and $\xd$ is \bsttwof \ for agent $v \in \{i,j\}, v \ne k$.  
\end{enumerate}
    We show the theorem for the first and second case, later in this subsection, in Lemmas \ref{Casei} and \ref{Caseii}, respectively. The theorem for the third case follows from \cref{SeperateBest-2-feasible}.
\end{proof}
    
    The following lemma is an auxiliary lemma for the proof of \cref{Cases} that handles the situation that in a partition in \stageTwoPerfect\ one of the bundles \xab is \eftxf\ for some agent different from $1$ (used in the proof of \cref{solve perfect}).
    
\begin{lemma}
\label{ToGoodPart8}
    Suppose $\X$ is a partition such that $\xa= \argmin_1(X_1, X_2)$ and for two distinct agents $i,j \ne 1$
    we have that \isefxffor{\xd}{j}{\X}, $(\xc,\xd)$ are \normalized\ w.r.t.\ agents $(i,j)$, and also the following hold:
    \begin{align}
    \xc \setminus \{\gi, \Gi\} &\greaterval{i}  (\xa, \xb, \xd) \label{eq_i_is_not_E2FX_with_1,2_and_4}\\
    \xd \setminus \hj &\greaterval{j}  (\xa, \xb) \label{eq_j_is_not_EFX_with_1_and_2}
    \end{align}
    Then, we can construct a new partition $\xt$ that is either \eftxf\ or \good\
    with $\phi(\xt)>\phi(\X)$.
\end{lemma}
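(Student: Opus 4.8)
The plan is to raise the potential strictly by inserting a single good into $\xa$, while certifying that two of the resulting bundles stay EFX-feasible for $i$ and $j$, and then to close via \cref{ToGoodPart3}. The starting observation is that \cref{eq_i_is_not_E2FX_with_1,2_and_4} makes $\xc$ EFX-best for $i$: by \cref{k-min with k-least} we have $w^2_i(\xc)=v_i(\xc\setminus\{\gi,\Gi\})$, so \cref{eq_i_is_not_E2FX_with_1,2_and_4} yields $w^1_i(\xc)=v_i(\xc\setminus\gi)\geq w^2_i(\xc)>v_i(\xa),v_i(\xb),v_i(\xd)\geq w^1_i(\xa),w^1_i(\xb),w^1_i(\xd)$; hence $\xc$ has the strictly largest best-$1$-value for $i$ and is EFX-feasible for $i$ by \cref{best props}. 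This gives $i$ a lot of slack on $\xc$, whereas for $j$ we only know \isefxffor{\xd}{j}{\X} and the best-$1$ bound \cref{eq_j_is_not_EFX_with_1_and_2}. The good pushed into $\xa$ will be either $\gi$ (the $i$-cheapest good of $\xc$) or $\hj$ (the $j$-cheapest good of $\xd$), and the choice is dictated by \normalizationality\ of $(\xc,\xd)$ w.r.t.\ $(i,j)$: since the pair $(\gi,\hj)$ admits no profitable swap, we have the dichotomy that either $\gi\lowereqval{j}\hj$, or else $\hj\lowerval{j}\gi$ (which in turn forces $\hj\lowerval{i}\gi$).

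In the first case, $\gi\lowereqval{j}\hj$, I move $\gi$ and set $\Y=(\xa\cup\gi,\ \xb,\ \xc\setminus\gi,\ \xd)$, crucially keeping $j$'s bundle $\xd$ intact. Bundle $\xc\setminus\gi$ is EFX-feasible for $i$ in $\Y$: against $\xa\cup\gi$ I use \cref{cancelable prop 1} (with $\gi\lowereqval{i}\Gi$ and $\xc\setminus\{\gi,\Gi\}\greaterval{i}\xa$) to get $\xc\setminus\gi\greatereqval{i}\xa\cup\gi$, and against $\xb,\xd$ I use $v_i(\xc\setminus\gi)=w^1_i(\xc)>v_i(\xb),v_i(\xd)$. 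Bundle $\xd$ is EFX-feasible for $j$ in $\Y$: against $\xa\cup\gi$ the chain $w^1_j(\xa\cup\gi)\leq v_j(\xa\cup\gi)\leq v_j(\xa\cup\hj)\leq v_j(\xd)$ holds, where the middle step is $\gi\lowereqval{j}\hj$ plus cancelability and the last step is \cref{eq_j_is_not_EFX_with_1_and_2} plus cancelability; against $\xb$ and $\xc\setminus\gi$ it follows from \isefxffor{\xd}{j}{\X} together with \cref{best-k-value prop} (removing $\gi$ does not raise the best-$1$-value of $\xc$). Since $v_1(\xa\cup\gi)>v_1(\xa)=\phi(\X)$ and $v_1(\xb)>v_1(\xa)$ (both by non-degeneracy of agent $1$ and $\xa=\argmin_1(\xa,\xb)$), \cref{ToGoodPart3} produces $\xt$ that is \eftxf\ or \good\ with $\phi(\xt)\geq\min(v_1(\xa\cup\gi),v_1(\xb))>\phi(\X)$.

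In the second case, $\hj\lowerval{j}\gi$ and $\hj\lowerval{i}\gi$, the bound of the first case breaks for $j$ (the good $\gi$ is now $j$-expensive, so $v_j(\xa\cup\gi)$ can exceed $v_j(\xd)$), so I move $\hj$ instead and set $\Y=(\xa\cup\hj,\ \xb,\ \xc,\ \xd\setminus\hj)$. The $i$-side remains fine: $\hj\lowerval{i}\gi\lowereqval{i}\Gi$ gives $v_i(\xa\cup\hj)\leq v_i(\xa\cup\Gi)\leq v_i(\xc\setminus\gi)\leq v_i(\xc)$ via \cref{cancelable prop 1}, so $v_i(\xc)\geq w^1_i(\xa\cup\hj)$, and the EFX-bestness of $\xc$ covers $\xb$ and $\xd\setminus\hj$; thus $\xc$ stays EFX-feasible for $i$. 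The obstacle is the $j$-side: because we have shrunk $j$'s own bundle to $\xd\setminus\hj$, we must now certify both $v_j(\xd\setminus\hj)\geq w^1_j(\xa\cup\hj)$ and $v_j(\xd\setminus\hj)\geq w^1_j(\xc)$, and each of these can fail precisely because $\xd$ is only EFX-feasible, not best-$2$-feasible, for $j$. Concretely, \cref{eq_j_is_not_EFX_with_1_and_2} only delivers $v_j(\xa\cup\hj)\leq v_j(\xd)$, which is one good short of the required bound, and \isefxffor{\xd}{j}{\X} only delivers $v_j(\xd)\geq w^1_j(\xc)$ rather than $v_j(\xd\setminus\hj)=w^1_j(\xd)\geq w^1_j(\xc)$.

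I expect this $j$-side certification to be the main obstacle, and I would resolve it by a further sub-case on the best-$1$ comparison between $\xc$ and $\xd$ for agent $j$. When $w^1_j(\xd)\geq w^1_j(\xc)$ and the companion bound against $\xa\cup\hj$ hold, the bundle $\xd\setminus\hj$ is EFX-feasible for $j$ and the construction closes through \cref{ToGoodPart3} exactly as in the first case (with $\xc$ serving $i$ and $\xd\setminus\hj$ serving $j$). When instead $w^1_j(\xc)>w^1_j(\xd)$, the failure is itself informative: $\xc$, which is already EFX-best for $i$, is strictly more $j$-desirable than $\xd$ in the best-$1$ sense, so $\xc$ is coveted by two distinct agents other than $1$; I would exploit this extra structure — reallocating within $\xc\sqcup(\xd\setminus\hj)$ after a \normalizationprocess\ step and invoking the best-bundle facts of \cref{best props,Stability In Normalization} — to land in a \good\ partition of potential at least $\phi(\X)$, with the strict increase supplied by the good inserted into $\xa$. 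Bridging the gap between EFX-feasibility and best-$2$-feasibility of $\xd$ for $j$, against both the enlarged $\xa\cup\hj$ and the untouched $\xc$, is the step I expect to be technically heaviest and to require the careful combination of cancelability (\cref{cancelable prop 1,best-k-value prop}), \normalizationality\ (\cref{Stability In Normalization}), and the best-bundle characterizations of \cref{best props}.
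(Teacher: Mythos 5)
Your first case ($\gi\lowereqval{j}\hj$) is correct and is exactly the paper's construction specialized to that case: the paper forms $\xp = \big(\xa \cup \lstitem{j}{1}{\{\gi, \hj\}},\ \xb,\ \xc \setminus \gi,\ (\xd \setminus \hj) \cup \hstitem{j}{1}{\{\gi, \hj\}}\big)$, and when $\lstitem{j}{1}{\{\gi,\hj\}}=\gi$ this collapses to your $(\xa\cup\gi,\ \xb,\ \xc\setminus\gi,\ \xd)$, closed through \cref{ToGoodPart3} with $\min\big(v_1(\xa\cup\gi),v_1(\xb)\big)>\phi(\X)$ just as you do. Your dichotomy from \normalizationality\ (either $\gi\lowereqval{j}\hj$, or $\hj\lowerval{j}\gi$ forcing $\hj\lowerval{i}\gi$) is also exactly the case split the paper uses.

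The genuine gap is your second case. After moving $\hj$ into $\xa$ you leave $j$'s bundle as the bare $\xd\setminus\hj$, you correctly observe that neither $v_j(\xd\setminus\hj)\geq w^1_j(\xa\cup\hj)$ nor $v_j(\xd\setminus\hj)\geq w^1_j(\xc)$ is certifiable, and then you only sketch a repair (a sub-case on $w^1_j(\xc)$ versus $w^1_j(\xd)$ plus an unspecified reallocation inside $\xc\sqcup(\xd\setminus\hj)$). That sketch does not close: \cref{ToGoodPart3} and \cref{ToGoodPart5} both require precisely the feasibility statements that fail here (in particular, \cref{ToGoodPart5} needs agent $j$ not to \efxenvy\ $\xa\cup\hj$ from \emph{both} halves of the repartitioned pair, which breaks in the same way), and nothing in your sketch supplies the missing certificate. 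The paper's missing idea is to \emph{compensate} $\xd$ rather than leave it bare: take $\xp=\big(\xa\cup\hj,\ \xb,\ \xc\setminus\gi,\ (\xd\setminus\hj)\cup\gi\big)$. Since $\hj\lowereqval{j}\gi$, cancelability gives $\xdp=(\xd\setminus\hj)\cup\gi\greatereqval{j}\xd$, so every certificate from \isefxffor{\xd}{j}{\X} transfers wholesale: $\xdp\greatereqval{j}\xd\greatereqval{j}\xc\setminus\gj\greatereqval{j}\xc\setminus\gi=\xcp$, $v_j(\xdp)\geq v_j(\xd)\geq w^1_j(\xb)$, and $(\xd\setminus\hj)\cup\gi\greatereqval{j}\xa\cup\gi\greatereqval{j}\xa\cup\hj=\xap$ by \eqref{eq_j_is_not_EFX_with_1_and_2} and cancelability. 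On the $i$ side, the two-good slack \eqref{eq_i_is_not_E2FX_with_1,2_and_4} together with $\hj\lowereqval{i}\gi$ (which you already derived from swap-optimality) and \cref{cancelable prop 1} gives $\xcp=\xc\setminus\gi\greatereqval{i}\xa\cup\gi\greatereqval{i}\xa\cup\hj=\xap$, $\xcp\greatereqval{i}\xd\cup\gi\greatereqval{i}\xdp$, and $\xcp\greaterval{i}\xb$, so \isefxffor{\xcp}{i}{\xp}. You had all the ingredients — the swap-optimality dichotomy and the slack on $\xc$ — but without replacing $\hj$ by $\gi$ taken out of $\xc$ (which $\xc$ can afford for $i$), the $j$-side cannot be certified; with that change your two cases merge into the paper's single construction and the lemma closes via \cref{ToGoodPart from stage 2 case 1} exactly as in your first case.
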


\begin{proof}
    We define a new partition $\xp$ from $\X$ as follows:
        
    \begin{center}
    $\xp =  \big(\xa \cup \lstitem{j}{1}{\{\gi, \hj\}} , \hspace{2mm} \xb, 
            \hspace{2mm} \xc \setminus \gi, \hspace{2mm} (\xd \setminus \hj) \cup \hstitem{j}{1}{\{\gi, \hj\}} \big) $
    \end{center}
    We will show that \isefxffor{\xcp}{i}{\xp} and \isefxffor{\xdp}{j}{\xp}, so by \cref{ToGoodPart from stage 2 case 1} the lemma will follow. We start by showing that \isefxffor{\xdp}{j}{\xp}. For agent $j$ it holds that
    \begin{align*}
     \xdp  = (\xd \setminus \hj) \cup \hstitem{j}{1}{\{\gi, \hj\}} 
    &\greatereqval{j}   (\xd \setminus \hj) \cup \hj = \xd \greatereqval{j}    (\xb, \xcp) \,,
    \end{align*}
    where for the last preference it holds that $\xd \greatereqval{j} \xd \setminus \hj \greaterval{j}  \xb$ by \eqref{eq_j_is_not_EFX_with_1_and_2}, and since \isefxffor{\xd}{j}{\X}, $\xd \greatereqval{j} \xc \setminus \gj \greatereqval{j} \xc \setminus \gi = \xcp$. Moreover, by \eqref{eq_j_is_not_EFX_with_1_and_2} we also have that
    \begin{align*}
    \xdp  = (\xd \setminus \hj) \cup \hstitem{j}{1}{\{\gi, \hj\}} &\greatereqval{j} \xa \cup \hstitem{j}{1}{\{\gi, \hj\}}
    \greatereqval{j} \xa \cup \lstitem{j}{1}{\{\gi, \hj\}}   = \xap\,.
    \end{align*}
    Therefore \isefxffor{\xdp}{j}{\xp}. Next we show that it is also true that \isefxffor{\xcp}{i}{\xp}. For agent $i$ by \eqref{eq_i_is_not_E2FX_with_1,2_and_4} it holds that
    \begin{align*}
     \xcp = \xc \setminus \gi &\greatereqval{i} \xc \setminus \{\gi, \Gi\} \greaterval{i} \xb = \xbp \ \mbox{and}\\
    \xcp =  \xc \setminus \gi &\greatereqval{i}  X_r \cup \Gi \greatereqval{i} X_r \cup \gi \,, \mbox{ for any $r\in \{1,4\}$.} 
    \end{align*}
    If $ \lstitem{j}{1}{\{\gi, \hj\}} = \gi$ (and therefore $ \hstitem{j}{1}{\{\gi, \hj\}} = \hj$), we trivially get that 
    
    \begin{center}
       $\xcp \greatereqval{i}\xa \cup \gi = \xap$, and $\xcp \greatereqval{i}\xd  = \xdp$. 
    \end{center}
    If $ \lstitem{j}{1}{\{\gi, \hj\}} = \hj$ (and therefore $ \hstitem{j}{1}{\{\gi, \hj\}} = \gi$), then $ \hj \lowereqval{j} \gi$, 
    so by \normalizationprocess \ of $(\xc,\xd)$ w.r.t.\ agents $(i,j)$,
    we get  that $ \hj \lowereqval{i} \gi$, and so
    
    \begin{center}
       $\xcp \greatereqval{i} \xa \cup \hj = \xap$, and $\xcp \greatereqval{i}\xd \cup \gi \greatereqval{i}(\xd \setminus \hj) \cup \gi  = \xdp$. 
    \end{center}
    So, it also holds that \isefxffor{\xcp}{i}{\xp}. 
\donew
\end{proof}

\begin{lemma}\label{Cases}
    Suppose  that $\X$ is in \stageTwoPerfect\ and there exists an agent
    among agents $(i,j)$ (as defined in the \stageTwoPerfect), 
    for whom at least one of the bundles \xab is \eftxf. 
    Then we can construct a new partition $\xt$ that is either
    \eftxf\ or \good\ with $\phi(\xt)>\phi(\X)$.
\end{lemma}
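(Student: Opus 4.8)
The plan is to reduce to two already-established tools: the Hall-matching lemma \cref{bipartite graph} and the reallocation lemma \cref{ToGoodPart8}. Observe first that a \stageTwoPerfect\ partition almost satisfies the hypotheses of \cref{bipartite graph}: $\xa, \xb$ are \efxf\ (hence \eftxf) for agent $1$, $\xc$ is \eftxf\ for $i$, $\xd$ is \eftxf\ for $j$ (since it is \bstf), and the hypothesis of \cref{Cases} supplies exactly the remaining requirement that one of $\xa, \xb$ be \eftxf\ for $i$ or $j$. The only possibly-missing ingredient is that each of $i$ and $j$ owns a \emph{second} \eftxf\ bundle, so I would split the argument on this.

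If both $i$ and $j$ have at least two \eftxf\ bundles, then all hypotheses of \cref{bipartite graph} hold and $\X$ is already \eftxf; set $\xt = \X$. Otherwise, suppose agent $i$ has a unique \eftxf\ bundle, which is necessarily $\xc$. Then by \cref{best props} the \bsttwof\ bundle for $i$ is \eftxf, hence equals $\xc$, so $\xc$ is \bsttwof\ for $i$; and since $\xa, \xb, \xd$ are not \eftxf\ for $i$, the characterization in \cref{best props} yields $\xc \setminus \{\gi, \Gi\} \greaterval{i} (\xa, \xb, \xd)$. Together with $\xd$ \efxf\ for $j$, the \stageTwoPerfect\ inequality $\xd \setminus \hj \greaterval{j} (\xa, \xb)$, the \normalizationality\ of $(\xc, \xd)$ w.r.t. $(i, j)$, and $\xa = \argmin_1(X_1, X_2)$, these are precisely the hypotheses of \cref{ToGoodPart8}, which produces the desired $\xt$.

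The remaining case, where agent $j$ has a unique \eftxf\ bundle (namely $\xd$) while $i$ has at least two, is the mirror image: $\xd$ is forced to be \bsttwof\ for $j$, and since $\xa, \xb, \xc$ are not \eftxf\ for $j$ I obtain $\xd \setminus \{\hj, \Hj\} \greaterval{j} (\xa, \xb, \xc)$, while \stageTwoPerfect\ already provides $\xc$ \efxf\ for $i$ and $\xc \setminus \gi \greaterval{i} (\xa, \xb)$. I would then invoke \cref{ToGoodPart8} on the relabeled partition $(\xa, \xb, \xd, \xc)$ with the roles of $(i, \xc)$ and $(j, \xd)$ interchanged; the \normalizationality\ condition is symmetric under simultaneously swapping the two bundles and the two agents, so all hypotheses transfer. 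These three cases are exhaustive, since $\xc$ is always \eftxf\ for $i$ and $\xd$ always \eftxf\ for $j$, and the premise of \cref{Cases} is used only in the first case. I expect the main difficulty to be bookkeeping rather than ideas: cleanly justifying the ``unique \eftxf\ bundle forces \bsttwof'' step, and checking in the last case that the least-valued-good notation lines up after relabeling so that \cref{ToGoodPart8} applies verbatim.
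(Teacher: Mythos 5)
Your proposal is correct and takes essentially the same route as the paper: the paper likewise splits on whether agent $i$ or $j$ has fewer than two \eftxf\ bundles, invoking \cref{ToGoodPart8} in the unique-bundle case (and handling agent $j$ by the same role-swap you describe, justified there by the one-line remark that the \bstf\ property of $\xd$ was never used) and \cref{bipartite  graph} when both agents have at least two \eftxf\ bundles. Your more explicit bookkeeping---deriving $\xc \setminus \{\gi, \Gi\} \greaterval{i} (\xa, \xb, \xd)$ via the \bsttwof\ characterization in \cref{best props} and verifying the symmetry of \normalizationality\ under swapping $(\xc,\xd)$ and $(i,j)$---matches the paper's argument, which in fact contains a typo at this step (writing $\hi, \Hi$ where $\gi, \Gi$ is meant).
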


\begin{proof}
    We first consider the case that for either agent $i$ or agent $j$ (as defined in the \stageTwoPerfect) there does not exist at least two bundles
    that are \eftxf\ for them. 
    Suppose that this is agent $i$, 
    then since \isefxffor{\xc}{i}{\X} (by Definition of \stageTwoPerfect), $\xc$ is the only \eftxf\ for agent $i$, and therefore, 
     $\xc \setminus \{\hi, \Hi\} \greaterval{i} (\xa, \xb, \xd)$ (by \cref{{best props}}).
    Moreover, by Definition of \stageTwoPerfect, it holds that \isefxffor{\xd}{j}{\X} and $\xd \setminus \hj \greaterval{j} (\xa, \xb)$. 
    Then we use \cref{ToGoodPart8}, 
    for the construction of a new partition $\xt$ that is either \eftxf\ or \good\
    with $\phi(\xt)>\phi(\X)$.
    The same is applicable for agent $j$ since we did not use the fact that $\xd$ is \bstf \ for agent $j$.
    
     The case left is the case where for each of the agents $i$ and $j$, there are at least two bundles that are \eftxf\ for them.
     Then, by \cref{bipartite  graph} $\X$ is \eftxf.
\end{proof}

    The next lemma handles the first case of the proof of \cref{solve perfect}.
    
\begin{lemma}
\label{Casei}
    If partition $\X$ is in \stageTwoPerfect, the bundles \xab are not \eftxf\ for any of the agents but $1$, and it holds that $\xd$ is \bsttwof \ for agents $i,j$, as defined for \stageTwoPerfect, (case 1 of \cref{solve perfect}), we can construct a new partition $\xt$ that is either \eftxf\ 
    or \good\ with $\phi(\xt)>\phi(\X)$.
\end{lemma}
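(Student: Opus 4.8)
The plan is to make progress by relocating a single good of $\xd$ into $\xa$ and then invoking \cref{ToGoodPart from stage 2 case 1}. Since $\X$ is in \stageTwoPerfect\ it is in particular in \stageTwo, with $\xa=\argmin_1(\xa,\xb)$ and agent $1$ non-degenerate, so it suffices to exhibit a good $g\in\xd$ and a partition of $(\xc\sqcup\xd)\setminus g$ into $\yc,\yd$ such that $\yc$ is \efxf\ for $i$ and $\yd$ is \efxf\ for $j$ in $\Y=(\xa\cup g,\xb,\yc,\yd)$; \cref{ToGoodPart from stage 2 case 1} then returns a $\xt$ that is \eftxf\ or \good\ with $\phi(\xt)>\phi(\X)$. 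First I would record the inequalities forced by the hypotheses: as $\xd$ is \bsttwof\ for both $i$ and $j$ while neither $\xa$ nor $\xb$ is \eftxf\ for $i$ or $j$, \cref{best props} gives $v_i(\xa),v_i(\xb)<w^2_i(\xd)$ and $v_j(\xa),v_j(\xb)<w^2_j(\xd)$; moreover $\hi\lowereqval{i}\hj$ and $\hj\lowereqval{j}\hi$ hold by definition of the least goods.

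The two natural choices are $g=\hi$ and $g=\hj$, and cancelability supplies a ``safe removal'' fact for each. Combining $v_i(\xa)<w^2_i(\xd)=v_i(\xd\setminus\{\hi,\Hi\})$ with \cref{cancelable prop 1} yields $\xd\setminus\hi\greatereqval{i}\xa\cup\hi$, hence $w^1_i(\xa\cup\hi)\leq v_i(\xd\setminus\hi)=w^1_i(\xd)\leq v_i(\xc)$ (the last step is \efxf-ness of $\xc$ for $i$), so moving $\hi$ keeps $\xc$ \efxf\ for $i$. Symmetrically, \cref{best-feasible condition} handles $\xb,\xc$ and the same computation gives $w^1_j(\xa\cup\hj)\leq v_j(\xd\setminus\hj)$, so moving $\hj$ keeps $\xd\setminus\hj$ \efxf\ for $j$. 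Hence in the \emph{aligned} regime $\hj\lowereqval{i}\Hi$ I would take $g=\hj$, $\yc=\xc$, $\yd=\xd\setminus\hj$: the $j$-side is the safe-removal fact above, and for $i$ the bound $w^1_i(\xa\cup\hj)\leq v_i(\xa\cup\Hi)\leq v_i(\xd\setminus\hi)\leq v_i(\xc)$ now goes through precisely because $\hj\lowereqval{i}\Hi$. \cref{ToGoodPart from stage 2 case 1} then closes this regime.

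The remaining, and genuinely hard, regime is the \emph{crossing} one, $\hj\greaterval{i}\Hi$: here $\hi$ is $i$-least yet $j$-valuable and $\hj$ is $j$-least yet $i$-valuable, so no single relocation into $\xa$ can serve both agents. Indeed, when sending $\hj$ to $\xa$ breaks \efxf-ness of $\xc$ for $i$, one has $w^1_i(\xa\cup\hj)>v_i(\xc)\geq w^1_i(\xd)\geq v_i(\xd\setminus\hj)$, so $i$ is left without an \efxf\ bundle among $\xc,\xd\setminus\hj$; sending $\hi$ to $\xa$ does the symmetric damage to $j$. My plan is instead to route each crossed good toward the agent who values it: keep $\hi$ inside $j$'s bundle and move $\hj$ into $i$'s bundle, freeing a truly least good for $\xa$. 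Concretely I would try $\yc=(\xc\setminus\gi)\cup\hj$ and $\yd=\xd\setminus\hj$, with agent-$1$ bundles $\xa\cup\gi$ and $\xb$, and invoke \cref{ToGoodPart3}, whose guarantee $\phi(\xt)\geq\mintwoy=\min(v_1(\xa\cup\gi),v_1(\xb))>v_1(\xa)=\phi(\X)$ is exactly what is needed (both bundles strictly beat $\xa$ by non-degeneracy and $\xa=\argmin_1(\xa,\xb)$).

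I expect the main obstacle to be verifying the feasibility conditions for this crossing construction. Showing $\yc=(\xc\setminus\gi)\cup\hj$ stays \efxf\ for $i$ is the easier half, since $\yc$ replaces $i$'s least good of $\xc$ by the $i$-valuable $\hj$ and is therefore at least as good as $\xc$ for $i$, while $w^1_i(\xa\cup\gi)$ and $w^1_i(\xd\setminus\hj)$ are controlled as above. The delicate point is $\yd=\xd\setminus\hj$ \efxf\ for $j$: because $\yc$ has just absorbed $\hj$, one must bound $w^1_j(\yc)$ against $v_j(\xd\setminus\hj)=w^1_j(\xd)$, and it is here that the \bsttwof-for-both hypothesis and the \normalizationality\ of $(\xc,\xd)$ w.r.t.\ $(i,j)$ must be pushed hardest. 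I anticipate this forces a further sub-split according to how much $j$ values $\xc$: if $\xc$ (or its augmentation) turns out to be \eftxf\ for $j$, then $j$ owns a second \eftxf\ bundle and one should finish by a Hall-type matching (\cref{bipartite  graph}) rather than by relocation; otherwise $\xc$ is cheap enough for $j$ that the relocation above is feasible and \cref{ToGoodPart3} applies.
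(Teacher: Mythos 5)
Your aligned-regime move ($g=\hj$ when $\hj\lowereqval{i}\Hi$) and your crossing construction $\big(\xa\cup\gi,\ \xb,\ (\xc\setminus\gi)\cup\hj,\ \xd\setminus\hj\big)$ are sound and coincide, up to the case boundary, with the paper's Sub-cases 2 and 3 of its proof (the paper splits instead on $\gi\lowerval{i}\Hi$, dispatched to \cref{DerivingPerfect1}, then $\gi\greatereqval{i}\hj$, then $\Hi\lowereqval{i}\gi\lowerval{i}\hj$). Your dichotomy does leave a small coverage hole: in the corner $\Hi\lowerval{i}\hj\lowereqval{i}\gi$ the crossing construction loses its justification, since $(\xc\setminus\gi)\cup\hj$ need not be weakly better than $\xc$ for $i$, and your aligned chain through $\xa\cup\Hi$ is unavailable. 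The fix is the \stageTwoPerfect\ condition $\xc\setminus\gi\greaterval{i}\xa$, which you never invoke: it gives $\xc\greatereqval{i}\xa\cup\gi\greatereqval{i}\xa\cup\hj$, so moving $\hj$ to $\xa$ still works there (this is exactly the paper's Sub-case 2).

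The genuine gap is the hard branch of the crossing case, where $\xd\setminus\hj$ \efxenvies\ the augmented third bundle; you correctly flag it as the delicate point, but neither of your proposed finishes closes it. The Hall finish cannot apply: \cref{bipartite  graph} requires that one of the first two bundles be \eftxf\ for $i$ or $j$, which the present lemma's hypotheses negate, and $j$ owning a second \eftxf\ bundle among the third and fourth bundles still leaves the three agents $i,j,u$ competing for those same two bundles, so Hall's condition fails (that lemma is the engine of \cref{Cases}, i.e., the complementary situation in \cref{solve perfect}). And your other branch, ``otherwise the relocation is feasible,'' is exactly the negation of the failure event, not an argument. The paper's resolution is a further construction you would need to supply: writing $\xcp=(\xc\setminus\gi)\cup\hj$, it first rules out $\hj=\gjp$ (else $\xd\setminus\hj\lowerval{j}\xc\setminus\gi$, contradicting \isbstffor{\xd}{j}{\X}), derives from the envy that $\xd\setminus\{\hj,\Hj\}\lowerval{j}\xcp\setminus\{\gjp,\Gjp\}$, and then builds $\xz=\big(\xa,\ \xb,\ (\xc\setminus\gj)\cup\hi,\ (\xd\setminus\hi)\cup\gj\big)$, verifying via \cref{best-k-value prop}, swap-optimality (which yields $\gj\lowereqval{j}\gi\lowereqval{j}\hj\lowereqval{j}\hi$), and the fact that $\xd$ is \bsttwof\ for both agents, that the hypotheses of \cref{SeperateBest-2-feasible} hold with the roles of $i$ and $j$ exchanged. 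That lemma --- which you do deploy for case 3 of \cref{solve perfect} but not here --- is the missing tool; without this step your crossing case remains open.
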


\begin{proof}
    For agent $j$, \isbsttwoffor{\xd}{j}{\X}  by lemma's statement and, since \xab are not \eftxf\ for agent $j$, it holds that \arenotbsttwoffor{\text{\xab}}{j}{\X}. By \cref{{best props}} it then holds that
    \begin{align}
      \xd \setminus \{\hj, \Hj\} \greaterval{j} \text{\xab} \mbox{, and so, } \xd \setminus \hj \greatereqval{j}   X_r \cup \Hj  \greatereqval{j}   X_r \cup \hj \mbox{ for } r\in \{1,2\} \label{eq:jconditions}  
    \end{align} 
    By using the same reasoning for agent $i$, it holds that 
    \begin{align}
    \xd \setminus \{\hi, \Hi\} \greaterval{i} \text{\xab} \mbox{, and so, } \xd \setminus \hi \greatereqval{i}   X_r \cup \Hi  \greatereqval{i}   X_r \cup \hi \mbox{ for } r\in \{1,2\} \label{eq:iconditions}
    \end{align}
    Moreover, by the conditions of \stageTwoPerfect\ it holds that $\xc \setminus \gi \greaterval{i} \xa$, 
    and so it holds that 
    \begin{align}
    \xc \greatereqval{i}  \xa \cup \gi. \label{eq:iprefersX3}
    \end{align}

    We next consider three sub-cases based on how agent $i$ values the \itms\ $\gi, \Hi,$ and $\hj$.

    \textbf{\subcaseA.  $\mathbf{\gi \lowerval{i} \Hi}$:}
    The lemma follows by using \cref{DerivingPerfect1}, whose requirements are held by Equations \eqref{eq:jconditions} and \ref{eq:iconditions}.
    
    \textbf{\subcaseB. $\mathbf{\gi \greatereqval{i} \hj}$:}
    Consider the new partition 
    $$\xp = \big( \xa \cup \hj, \hspace{2mm}  \xb, \hspace{2mm} \xc,  \hspace{2mm} \xd \setminus \hj \big)\,.$$ 
    Regarding agent $j$, by \cref{eq:jconditions}, we have that $\xd \setminus \hj \greatereqval{j}  \xa \cup \hj$. Moreover, by the conditions of \stageTwoPerfect, it is \isbstffor{\xd}{j}{\X}, and so by \cref{best-feasible condition} it holds that $ \xd \setminus \hj \doesnotefxenvyll{j} (\xb,\xc)$. 
    Overall, \isefxffor{\xdp}{j}{\xp}.
    
    Regarding agent $i$, \isefxffor{\xc}{i}{\X} (by the conditions of \stageTwoPerfect), 
    therefore it trivially holds that $\xc \doesnotefxenvyll{i} (\xb, \xd \setminus \hj)$. Also, by using \cref{eq:iprefersX3} and the condition of this sub-case, $\xc \greatereqval{i}  \xa \cup \hj$.
    Hence, \isefxffor{\xcp}{i}{\xp}.
    
    \donew

    \textbf{\subcaseC. $\mathbf{\Hi \lowereqval{i} \gi \lowerval{i} \hj}$:} Consider the new partition
    \begin{center}
        $\xp = \big( \xa \cup \gi, \hspace{2mm}  \xb, \hspace{2mm} (\xc \setminus \gi) \cup \hj,  \hspace{2mm} \xd \setminus \hj \big)$
    \end{center}
    We first show that \isefxffor{\xcp}{i}{\xp}. 
    By using the condition of this sub-case $\xcp \greatereqval{i} \xc$, 
    and  \isefxffor{\xc}{i}{\X} (by the conditions of \stageTwoPerfect), 
    it trivially holds that $\xcp \doesnotefxenvyll{i} (\xb, \xd \setminus \hj )$.  
    By \cref{eq:iprefersX3}, it further holds that $\xcp \greatereqval{i} \xc \greatereqval{i} \xa \cup \gi$ 
    Hence, \isefxffor{\xcp}{i}{\xp}.

    Finally, for agent $j$ we either show that \isefxffor{\xdp}{j}{\xp} or we will consider another partition. By the conditions of \stageTwoPerfect, it holds that  \isbstffor{\xd}{j}{\X}, 
    and so by \cref{best-feasible condition} it holds that $\xd\setminus\hj \doesnotefxenvyll{j} \xb$. 
    By \normalizationality\ of $(\xc,\xd)$ w.r.t.\ agents $(i,j)$, 
    since $ \gi \lowerval{i} \hj$,  we get  
    \begin{align}
    \gi \lowereqval{j} \hj , \label{eq:inferbyswpa-opt} 
     \end{align}
    and by using also \cref{eq:jconditions} we get that:
    \begin{align*}
        \xd \setminus \hj &\greatereqval{j} \xa \cup \hj \greatereqval{j} \xa \cup \gi. 
    \end{align*}
    If $ \xdp=\xd \setminus \hj \doesnotefxenvyll{j} \xcp$, then \isefxffor{\xdp}{j}{\xp}, and the lemma would follow by the use of \cref{ToGoodPart from stage 2 case 1}. So, for the rest of the proof, we assume otherwise, i.e., $\xd \setminus \hj \efxenvyll{j} \xcp$. This would mean that
    \begin{align*} 
        \xd \setminus \hj &\lowerval{j} \xcp \setminus \gjp =
        \big( (\xc \setminus \gi) \cup \hj \big) \setminus \gjp. 
    \end{align*}
    If $\hj = \gjp$, then $\xd \setminus \hj \lowerval{j}  \xc \setminus \gi$, which contradicts the fact that \isbstffor{\xd}{j}{\X} (condition of \stageTwoPerfect). Hence, $\hj \ne \gjp$, so $ \Gjp \lowereqval{j} \hj \lowereqval{j} \Hj$ and by using the above preference, $\xd \setminus \hj \lowerval{j} \xcp \setminus \gjp$, we get:
    \begin{align}
       \xd \setminus \{\hj, \Hj\}  &\lowerval{j}   \xcp \setminus \{\gjp,\Gjp\}. \label{eq:jEF2X_compare_X3_X4}
    \end{align}
    
    We define the following partition, for which we will show that the conditions of 
    \cref{SeperateBest-2-feasible} are satisfied by swapping agents $i$ and $j$:  
    $$\xz = \big( \xa, \hspace{2mm}  \xb, \hspace{2mm} (\xc \setminus \gj) \cup \hi,  \hspace{2mm} (\xd \setminus \hi)  \cup \gj \big)$$
    Note that $\xcz=  (\xc \setminus \gj) \cup \hi$ can be derived  from 
    $\xcp = (\xc \setminus \gi) \cup \hj$ by swapping item $\hj$ with $\hi$ and item $\gj$ with $\gi$.
    Since we have $\hj \lowereqval{j} \hi$ and $\gj \lowereqval{j} \gi$, 
    by using \cref{best-k-value prop} twice, 
    we get $\xcz \setminus \{\gjz,\Gjz\} \greatereqval{j}  \xcp \setminus \{\gjp,\Gjp\}$. 
    Hence by using also Equations \eqref{eq:jEF2X_compare_X3_X4},  \eqref{eq:jconditions} and \eqref{eq:iconditions}: 
    \begin{align*}    
        \xcz \setminus \{\gjz,\Gjz\} &\greatereqval{j}  \xcp \setminus \{\gjp,\Gjp\}\greaterval{j}
        \xd \setminus  \{\hj, \Hj\} \greaterval{j} (\xa, \xb) \\
        \xdz \setminus \{\hiz,\Hiz\} &\greatereqval{i}  \xdz \setminus \{\gj, \Hi\} =\hspace{1mm}
        \xd \setminus  \{\hi, \Hi\}  \greaterval{i} (\xa, \xb).
    \end{align*}
 By \cref{eq:inferbyswpa-opt}, it holds that $\gi \lowereqval{j} \hj$, therefore $\gj \lowereqval{j} \gi \lowereqval{j} \hj$, 
    so we get that $\lst{j}{1}{\xdz}=\gj$, and by using also \cref{eq:jEF2X_compare_X3_X4}:
    
    \begin{center}
        $\xdz \setminus \{\hjz,\Hjz\} \lowereqval{j} \xd \setminus \{\hj, \Hj\}  \lowerval{j}    
         \xcp \setminus \{\gjp,\Gjp\} \lowereqval{j}  \xcz \setminus \{\gjz,\Gjz\} $.
    \end{center}
    
    Finally, since $\hi \lowereqval{i} \Hi \lowereqval{i} \gi$ we get that $\lst{i}{1}{\xcz}=\hi$, and by further using the fact that \isbsttwoffor{\xd}{i}{\X}:
    
    \begin{center}
        $ \xcz \setminus \{\giz,\Giz\} \lowereqval{i}    \xc \setminus \{\gi, \Gi\} \lowereqval{i}   \xd \setminus \{\hi, \Hi\}  \lowereqval{i}  \xdz \setminus (\hiz, \hiz)$.
    \end{center}

    So, we have:
    \begin{align*}
        \xcz \setminus \{\gjz,\Gjz\}  &\greaterval{j}  (\xdz \setminus \{\hjz,\Hjz\}, \xa, \xb)\\
        \xdz \setminus \{\hiz,\Hiz\}  &\greaterval{i}  (\xcz \setminus \{\giz,\Giz\}, \xa, \xb). 
    \end{align*}
    Therefore, by \cref{SeperateBest-2-feasible}, the lemma follows.
\end{proof}

    The following lemma is an auxiliary lemma for the proof of \cref{Caseii}
    that handles the situation that will occur in different cases.

\begin{lemma}
\label{auxiliary lemma case ii}
    Suppose $\X$ is a partition such that $\xa\lowerval{1}\xb$, \isbstffor{\xd}{j}{\X}, and the following hold: 
    \begin{align}
        \Gj &\lowereqval{j} \hj  \label{hj > Gj} \\  
        (\xa,\xb)&\lowereqval{j}  \xc \setminus \{\gj,\Gj\} \label{xc-->} \\  
        \xd\setminus\hi &\lowereqval{j}  \xc \setminus \{\gj,\Gj\} \cup \hi \label{xd-<xc--+}\\
        (\xa,\xb)&\lowereqval{i} \xd \setminus \hi  \label{xa xb <i xd-hi}
    \end{align}
    Then we can construct a new partition $\xt$ that is either \eftxf\ 
    or \good\ with $\phi(\xt)>\phi(\X)$.
\end{lemma}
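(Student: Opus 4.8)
The plan is to perform a ``double swap'': move agent $i$'s least good of $\xd$, namely $\hi$, into the $\xc$-side, pull agent $j$'s two least goods of $\xc$, namely $\gj$ and $\Gj$, out, and place one of them on $\xa$ and the other on the $\xd$-side. Concretely I would set $\xp=(\xap,\xbp,\xcp,\xdp)$ with
\[ \xap=\xa\cup\lstitem{i}{1}{\{\gj,\Gj\}},\quad \xbp=\xb,\quad \xcp=(\xc\setminus\{\gj,\Gj\})\cup\hi,\quad \xdp=(\xd\setminus\hi)\cup\hstitem{i}{1}{\{\gj,\Gj\}}. \]
Here $\xap=\xa\cup g$ for the single good $g=\lstitem{i}{1}{\{\gj,\Gj\}}\in\xc$, and $\xcp\sqcup\xdp=(\xc\sqcup\xd)\setminus g$, so the bundle structure required by \cref{ToGoodPart from stage 2 case 1} is met. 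The goal is to show \isefxffor{\xcp}{j}{\xp} and \isefxffor{\xdp}{i}{\xp}; then \cref{ToGoodPart from stage 2 case 1} produces $\xt$, and because agent $1$ is non-degenerate and $\xa\lowerval{1}\xb$ we get $\phi(\xt)\geq\min(v_1(\xa\cup g),v_1(\xb))>v_1(\xa)=\phi(\X)$.

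For agent $j$ I would verify the three EFX comparisons for $\xcp$. Writing $g'=\hstitem{i}{1}{\{\gj,\Gj\}}$, note $g'\lowereqval{j}\Gj\lowereqval{j}\hj$ (the second step is \cref{hj > Gj}) and $\hj\lowereqval{j}\hi$ (since $\hj$ is $j$'s least good of $\xd$ and $\hi\in\xd$); hence $g'$ is $j$'s least good of $\xdp$, so $w^1_j(\xdp)=v_j(\xd\setminus\hi)$, and $\xcp\doesnotefxenvyll{j}\xdp$ is precisely \cref{xd-<xc--+}. Against $\xap$ I use $w^1_j(\xap)\leq v_j(\xa\cup g)$, then \cref{cancp} with \cref{xc-->} to get $\xa\cup g\lowereqval{j}(\xc\setminus\{\gj,\Gj\})\cup g$, and finally $g\lowereqval{j}\hi$ (same chain as above) to get $(\xc\setminus\{\gj,\Gj\})\cup g\lowereqval{j}\xcp$. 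The comparison against $\xbp=\xb$ is immediate from \cref{xc-->}. Thus $\xcp$ is \efxf\ for $j$.

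For agent $i$ the comparison of $\xdp$ against $\xbp$ is immediate from \cref{xa xb <i xd-hi} together with $\xdp\supseteq\xd\setminus\hi$. The comparison against $\xap$ is exactly why I route the $i$-cheaper of $\{\gj,\Gj\}$ to $\xap$: since $g\lowereqval{i}g'$ by construction, \cref{cancp} with \cref{xa xb <i xd-hi} gives $\xa\cup g\lowereqval{i}(\xd\setminus\hi)\cup g'=\xdp$, hence $w^1_i(\xap)\leq v_i(\xap)\leq v_i(\xdp)$.

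The main obstacle is the last comparison, $\xdp\doesnotefxenvyll{i}\xcp$, i.e.\ $v_i(\xdp)\geq w^1_i(\xcp)$. The four stated hypotheses constrain agent $i$ only through \cref{xa xb <i xd-hi}, while $\xcp$ is $\xc$ with two $j$-cheap goods removed and the $i$-cheap good $\hi$ inserted, and nothing given directly bounds $i$'s value for it; indeed this is exactly the step that should exploit the structure present where the lemma is invoked (the case where $\xc$ is \bsttwof\ for $i$). I expect the clean route to be to show that $\hi$ is $i$'s least good of $\xcp$, so that $w^1_i(\xcp)=v_i(\xc\setminus\{\gj,\Gj\})$ (removing $\hi$), reduce via \cref{best-k-value prop} to a bound of the form $\xc\setminus\{\gj,\Gj\}\lowereqval{i}\xd\setminus\hi$, and feed that single inequality in. Once this holds, $\xdp$ is \efxf\ for $i$, both feasibility conditions are established, and the conclusion follows via \cref{ToGoodPart from stage 2 case 1} as described.
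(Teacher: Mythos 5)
Your construction of $\xp$, your treatment of agent $j$ (showing $\hstitem{i}{1}{\{\gj,\Gj\}}$ is $j$'s least good in $\xdp$ and feeding in \eqref{xd-<xc--+}), and your two comparisons $\xbp\lowereqval{i}\xdp$ and $\xap\lowereqval{i}\xdp$ all coincide exactly with the paper's proof. But the step you flag as ``the main obstacle'' is a genuine gap, and the route you sketch for closing it would fail. The lemma's hypotheses constrain agent $i$ only through \eqref{xa xb <i xd-hi}; nothing in the statement relates $i$'s value for $\xc$ to anything else, and not even swap-optimality of $(\xc,\xd)$ is assumed, so there is no way to conclude that $\hi$ is $i$'s least good of $\xcp$, nor to derive the bound $\xc\setminus\{\gj,\Gj\}\lowereqval{i}\xd\setminus\hi$ -- both can simply fail. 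Importing the \bsttwof\ structure from the calling context is not available either: the lemma must be proved from its stated hypotheses, precisely because it is invoked in two different sub-cases of \cref{Caseii} with different surrounding structure.

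The paper's resolution is different in kind: it never proves $\xdp\doesnotefxenvyll{i}\xcp$. It case-splits. If \isefxffor{\xdp}{i}{\xp}, it concludes via \cref{ToGoodPart from stage 2 case 1} exactly as you do. Otherwise $(\xap,\xbp)\lowerval{i}\xdp\lowerval{i}\xcp$, i.e., agent $i$ strictly prefers \emph{both} bundles $\xcp,\xdp$ to $(\xap,\xbp)$. The paper then defines an alternative split of the same union $\xcp\sqcup\xdp$, namely $\yc=(\xc\setminus\{\gj,\Gj\})\cup\hj$ and $\yd=(\xd\setminus\hj)\cup\hstitem{i}{1}{\{\gj,\Gj\}}$, and verifies -- using \eqref{xc-->}, \eqref{hj > Gj}, \isbstffor{\xd}{j}{\X}, and \cref{cancp} -- that agent $j$ strictly prefers both $\yc$ and $\yd$ to $(\xap,\xbp)$. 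These are precisely the hypotheses of \cref{ToGoodPart5} (with $\xcp\sqcup\xdp=\yc\sqcup\yd$), which runs the PR algorithm on the pair w.r.t.\ agent $i$ and uses MMS-feasibility to hand each of $i,j$ an \efxf\ bundle, yielding $\xt$ with $\phi(\xt)\geq\mintwop>v_1(\xa)=\phi(\X)$. This envy-tolerant fallback -- accepting that $i$ may \efxenvy\ $\xcp$ and instead exploiting that both agents then prefer everything on the $\xcp\sqcup\xdp$ side to $(\xap,\xbp)$ under suitable re-splits -- is the idea missing from your proposal.
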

\begin{proof}
    By \cref{hj > Gj} we get:
    \begin{equation}\label{hj > gj Gj}
        \gj\lowereqval{j}\Gj \lowereqval{j} \hj.
    \end{equation}
    
    Define partition $\xp$ as follows:
    \begin{center}
        $\xp= \Big( 
        \xa \cup \bigerlstitem{i}{1}{\{\gj, \Gj\}}, \hspace{2mm}  \xb, \hspace{2mm}
        (\xc \setminus \{\gj, \Gj\}) \cup \hi ,  \hspace{2mm} 
        (\xd \setminus \hi) \cup \bigerhstitem{i}{1}{\{\gj, \Gj\}} \Big)$
    \end{center}

    \noindent We first prove that \isefxffor{\xc}{j}{\xp}.
    By \cref{xc-->} we get 
    \begin{equation} \label{case ii no efx-envy for j2}
        (\xa, \xbz) \lowerval{j} \xc \setminus \{\gj, \Gj\},      
    \end{equation}
    and  by \cref{hj > gj Gj}, since $\lstitem{i}{1}{\{\gj, \Gj\}} \in \{\gj, \Gj\}$ we get
    \begin{equation} \label{lst <j hj}
         \lstitem{i}{1}{\{\gj, \Gj\}} \lowereqval{j} \hj  \lowereqval{j} \hi.
    \end{equation}
    So by \cref{case ii no efx-envy for j2}, \cref{lst <j hj} and \cref{cancp}:
    \begin{equation}\label{case ii no efx-envy for j1}
        \xap = \xa \cup \lstitem{i}{1}{\{\gj,\Gj\}} \lowereqval{j} \xc \setminus \{\gj, \Gj\}  \cup \hi =\xcp.
    \end{equation}
    
     By \eqref{hj > gj Gj}, agent $j$ prefers any good in $\xd$ over either one of her two least valuable goods in $\xc$. In other words, $j$ values any good in $\xd$ (and hence also in $\xd\setminus \hi$) at least as much as any good in $\{\gj, \Gj\}$, implying that $j$'s least valuable item in $\xdz$ is the one we moved from $\xc$ to $\xd \setminus \hj$, i.e.,:
    \begin{equation} \label{least gj or Gj}
    \lstitem{j}{1}{\xdp} =\biglstitem{j}{1}{(\xd \setminus \hi) \cup \hstitem{i}{1}{\{\gj, \Gj\}}}
    = \hstitem{i}{1}{\{\gj, \Gj\}}.  
    \end{equation}       
    Also by \cref{xd-<xc--+} we get
    \begin{equation*}
        \xd \setminus \hi \lowereqval{j}  
        \xc \setminus \{\gj, \Gj\}  \cup  \hi = \xcp,
    \end{equation*}
    so agent $j$ weakly prefers $\xcp$ over $\xd \setminus \hi$ and, 
    since \cref{least gj or Gj} shows that her least valuable item in $\xdp$ 
    is the one that was moved from  $\xc$ to $\xd \setminus \hi$, we can conclude that
    \begin{equation}\label{case ii no efx-envy for j3}
     \xcp \doesnotefxenvyll{j}
         (\xd \setminus \hi) \cup \hstitem{i}{1}{\{\gj,\Gj\}} =\xdp. 
    \end{equation}

    \noindent Hence, by \cref{case ii no efx-envy for j2}, \cref{case ii no efx-envy for j1}, 
    and \cref{case ii no efx-envy for j3} we get \isefxffor{\xcz}{j}{\xz}.
    By \cref{xa xb <i xd-hi}, we have that
    \begin{equation}
        \xbp \lowereqval{i} \xdp.
    \end{equation}
    Also by using the fact $\lstitem{i}{1}{\{\gj,\Gj\}} \lowereqval{i} \hstitem{i}{1}{\{\gj,\Gj\}}$,
    \cref{xa xb <i xd-hi}, and \cref{cancp} we get
    \begin{equation}
        \xap  = \xa \cup \lstitem{i}{1}{\{\gj,\Gj\}}  \lowereqval{i}
        (\xd \setminus \hi) \cup \bigerhstitem{i}{1}{\{\gj, \Gj\}} = \xdp.
    \end{equation}
    \noindent If \isefxffor{\xdp}{i}{\xp}, the lemma follows by \cref{ToGoodPart from stage 2 case 1}. So assume otherwise, i,e.,:
    \begin{equation}
        (\xap, \xbp) \lowerval{i} \xdp \lowerval{i} \xcp.
\end{equation}
    Define $\yc, \yd$ as follows:
    \begin{align*}
        \yc &= (\xc \setminus \{\gj, \Gj\} ) \cup  \hj  \\
        \yd &=   (\xd \setminus \hj) \cup \hst{i}{1}{\{\gj, \Gj\}} 
    \end{align*}    
    We have that $\yc \sqcup \yd = \xap \sqcup \xbp $.
    By \cref{xc-->}, \cref{hj > gj Gj}, and \cref{cancp} we get:
    \begin{equation}
        \argmax_j(\xap, \xbp) \lowereqval{j} \argmax_j(\xa, \xb)  \cup \bigerlstitem{i}{1}{\{\gj, \Gj\}}
        \lowereqval{j} (\xc \setminus \{\gj, \Gj\} ) \cup  \hj  = \yc.
    \end{equation}
    Also since \isbstffor{\xd}{j}{\X} and by \cref{xc-->} we get:
    \begin{equation}
        \xa \cup \bigerlstitem{i}{1}{\{\gj, \Gj\}} \lowereqval{j}
        \xa \cup \Gj  \lowereqval{j}  \xc \setminus \gj 
        \lowereqval{j} \xd \setminus \hj  \lowereqval{j} \yd. 
    \end{equation}
    Also since \isbstffor{\xd}{j}{\X} and by \cref{xc-->} we get:
    \begin{equation}
        \xbp = \xb \lowereqval{j}  \xc \setminus \gj \lowereqval{j} 
        \xd \setminus \hj \lowereqval{j} \yd.
    \end{equation}    
    Therefore we have that $(\yc,\yd) \greaterval{j} (\xap,\xbp)$, $(\xcp,\xdp) \greaterval{i} (\xap,\xbp)$,
    and $\yc \sqcup \yd = \xap \sqcup \xbp $,
    so by \cref{ToGoodPart5}, 
    we can construct a new partition $\xt$ that is either \eftxf\ or \good\ 
    with $\phi(\xt)\geq \mintwop >v_1(\xa) =  \phi(\X)$.
\end{proof}
    
    The next lemma handles the second case of the proof of \cref{solve perfect}.
    
\begin{lemma}
\label{Caseii}
    If partition $\X$ is in \stageTwoPerfect, the bundles \xab are not \eftxf\ for any of the agents but $1$, and it holds that $\xc$ is \bsttwof\ for agents $i,j$, as defined for \stageTwoPerfect, (case 2 of \cref{solve perfect}), we can construct a new partition $\xt$ that is either \eftxf\ 
    or \good\ with $\phi(\xt)>\phi(\X)$.
\end{lemma}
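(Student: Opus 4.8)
The plan is to drive the partition into the form produced by \cref{auxiliary lemma case ii}, which hands a trimmed copy of $\xc$ to $j$ and a trimmed copy of $\xd$ to $i$ while inserting one good into $\xa$, so that \cref{ToGoodPart from stage 2 case 1} immediately yields a \good\ partition with $\phi(\xt)>\phi(\X)$. First I would record what the hypotheses give. Because $\xc$ is \bsttwof\ for both $i$ and $j$ while $\xa,\xb$ are not \eftxf\ for either, \cref{best props} yields the strict inequalities $\xa,\xb\lowerval{i}\xc\setminus\{\gi,\Gi\}$ and $\xa,\xb\lowerval{j}\xc\setminus\{\gj,\Gj\}$; the second is exactly the hypothesis $(\xa,\xb)\lowereqval{j}\xc\setminus\{\gj,\Gj\}$ required by \cref{auxiliary lemma case ii}. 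From \stageTwoPerfect\ I also have $\xd\setminus\hj\greaterval{j}(\xa,\xb)$, and, since $\xd$ is \bstf\ while $\xc$ is \bsttwof\ for $j$, the two best-value comparisons $\xd\setminus\hj\greatereqval{j}\xc\setminus\gj$ and $\xc\setminus\{\gj,\Gj\}\greatereqval{j}\xd\setminus\{\hj,\Hj\}$. Finally, \normalizationality\ of $(\xc,\xd)$ w.r.t.\ $(i,j)$ gives the cross-bundle monotonicity $g\lowerval{i}h\Rightarrow g\lowerval{j}h$ and $g\greaterval{j}h\Rightarrow g\greaterval{i}h$ for $g\in\xc,\ h\in\xd$, which I would use repeatedly to move a single good across $\xc$ and $\xd$ without reversing either agent's preference.

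Next I would verify the three remaining hypotheses of \cref{auxiliary lemma case ii}: $\Gj\lowereqval{j}\hj$, the mixed comparison $\xd\setminus\hi\lowereqval{j}(\xc\setminus\{\gj,\Gj\})\cup\hi$, and $(\xa,\xb)\lowereqval{i}\xd\setminus\hi$. I would branch on where $\hj$ sits in agent $j$'s order of the goods of $\xc$. When $\hj\greatereqval{j}\Gj$, the first hypothesis is immediate, and I would obtain the mixed comparison from $\xc\setminus\{\gj,\Gj\}\greatereqval{j}\xd\setminus\{\hj,\Hj\}$ together with cancelability and the \normalized\ structure (to account for the transferred good $\hi$). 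When instead $\hj\lowerval{j}\Gj$, agent $j$'s cheapest good in $\xd$ lies below her second-cheapest good in $\xc$; here I expect to avoid \cref{auxiliary lemma case ii} entirely and instead mimic the direct constructions of \cref{Casei}, either adding $\hj$ to $\xa$ and keeping $\xc$ for $i$, or swapping $\hj$ into $\xc$, and closing with \cref{ToGoodPart from stage 2 case 1} or, when two symmetric best-$2$ conditions survive, \cref{SeperateBest-2-feasible}.

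The genuine obstacle is the last hypothesis, $(\xa,\xb)\lowereqval{i}\xd\setminus\hi$. Agent $i$'s strong preference is for $\xc$ (indeed $\xc\setminus\gi\greaterval{i}(\xa,\xb)$ and $\xc$ is \bsttwof\ for $i$), and nothing in \stageTwoPerfect\ forces $\xd$ to be attractive for $i$, so this inequality can fail. I would therefore sub-split on it: if $\xd\setminus\hi\greatereqval{i}(\xa,\xb)$ then \cref{auxiliary lemma case ii} applies verbatim; if $\xd\setminus\hi\lowerval{i}\argmax_i(\xa,\xb)$, then giving $\xd$ to $i$ is no longer viable, and I must instead keep $\xc$ for $i$, satisfy $j$ with $\xd$, and push a good into $\xa$. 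The delicate point there — and the place where I expect to spend the most care — is certifying that, after removing a good from $\xc$ and adding one to $\xa$, the bundle kept by $i$ is still \efxf\ (not merely \eftxf) for $i$, since \cref{ToGoodPart from stage 2 case 1} demands EFX-feasibility; this must be extracted from the \bsttwof\ status of $\xc$ via \cref{best props} and cancelability, choosing the moved good to be $i$'s cheapest so that the enlarged $\xa$ never overtakes the shrunken $\xc$ in best-$1$ value.
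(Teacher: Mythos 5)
Your toolkit is the right one (target \cref{auxiliary lemma case ii}, close with \cref{ToGoodPart from stage 2 case 1} or \cref{SeperateBest-2-feasible}), but two steps of your plan have genuine gaps. The first concerns the obstacle you flag yourself, the hypothesis $(\xa,\xb)\lowereqval{i} \xd\setminus\hi$. This never actually needs a new construction: since $\xa,\xb$ are not \eftxf\ for $i$, they are in particular not \efxf\ for $i$, so by \cref{best props} the \bstf\ bundle for $i$ must be $\xc$ or $\xd$. If it is $\xd$, then $w^1_i(\xd)\geq w^1_i(\xc)$ gives $\xd\setminus\hi \greatereqval{i} \xc\setminus\gi \greaterval{i} (\xa,\xb)$ (the last step from \stageTwoPerfect), so your hypothesis holds automatically. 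If it is $\xc$, the paper builds nothing new: it renames $\xc\leftrightarrow\xd$ (the \stageTwoPerfect\ conditions survive with the roles of $i$ and $j$ exchanged) and invokes \cref{Casei}, since then the fourth bundle is \bsttwof\ for both agents. Your sub-split branch ``$\xd\setminus\hi\lowerval{i}\argmax_i(\xa,\xb)$'' is exactly this second alternative, and your sketched treatment of it (``keep $\xc$ for $i$, satisfy $j$ with $\xd$, push a good into $\xa$,'' with the EFX-maintenance difficulty you correctly identify) is, after the renaming, precisely the content of \cref{Casei} --- a further page of sub-case analysis involving the identity $\hj=\hi$, \cref{DerivingPerfect1}, and \cref{SeperateBest-2-feasible} --- which your proposal leaves entirely unexecuted. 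Without spotting the reduction, that whole branch is an open hole.

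The second gap is that your derivation of the mixed hypothesis $\xd\setminus\hi \lowereqval{j} (\xc\setminus\{\gj,\Gj\})\cup\hi$ in the branch $\hj\greatereqval{j}\Gj$ does not go through. The best-$2$ comparison $\xc\setminus\{\gj,\Gj\} \greatereqval{j} \xd\setminus\{\hj,\Hj\}$ controls $\xd$ with \emph{two} goods removed, whereas your left-hand side removes only one; cancelability then yields at most $\xd\setminus\hj \lowereqval{j} (\xc\setminus\{\gj,\Gj\})\cup\Hj$, and you would additionally need $\Hj\lowereqval{j}\hi$, which fails whenever $\hi$ coincides with $j$'s least good in $\xd$. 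Concretely, with $j$ additive, $\xd$ valued $\{2,10,10\}$ with $\hi=\hj$ the good of value $2$, and $\xc$ valued $\{1,1,11\}$, every inequality you record holds ($w^2_j(\xc)=11\geq 10=w^2_j(\xd)$, $w^1_j(\xd)=20\geq 12=w^1_j(\xc)$, $\Gj\lowereqval{j}\hj$), yet $v_j(\xd\setminus\hi)=20 > 13 = v_j((\xc\setminus\{\gj,\Gj\})\cup\hi)$. The paper escapes this by splitting on $\gi$ versus $\hj$ (with $\Gj\lowereqval{j}\hj$ \emph{derived}, not assumed, in both sub-cases) and, crucially, by first attempting a direct one-good move closed by \cref{ToGoodPart from stage 2 case 1}; only when that move fails does the failure inequality (e.g.\ $\xd\setminus\hj \lowerval{j} \xa\cup\hj$, chained with $(\xa,\xb)\lowerval{j}\xc\setminus\{\gj,\Gj\}$) produce the mixed hypothesis of \cref{auxiliary lemma case ii}. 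Your plan confines the direct constructions to the other branch ($\hj\lowerval{j}\Gj$), so in your main branch the auxiliary lemma's hypotheses are asserted but not derivable.
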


\begin{proof}
    If $\xc \in \text{\bstfset}_i(\X)$, 
    then by renaming bundle $\xc$ to $\xd$, 
    and bundle $\xd$ to $\xc$, the lemma follows by \cref{Casei}. 
    So, we will assume otherwise, i.e., $\xc \notin \text{\bstfset}_i(\X)$. 
    Additionally, since none of the bundles \xab are \eftxf\ (and therefore \efxf) 
    for agent $i$ in $\X$, 
    by \cref{best props}, we get \xab $\notin \text{\bstfset}_i(\X)$. 
    Therefore, \isbstffor{\xd}{i}{\X}. 
    To summarize, 
    \begin{align}
    \text{$\xc$ is \bsttwof \ and $\xd$ is \bstf\ for both agents $i,j$\,,} \label{eq:EFXf-EF2XfCaseii}
    \end{align}
    and moreover, we have:
    \begin{equation} \label{xd-hi >i xa xb}
        (\xa,\xb) \lowereqval{i} \xc \setminus \gi \lowereqval{i} \xd \setminus \hi.
    \end{equation}
    Additionally, since none of the bundles \xab are \eftxf\ for agents $i$ and $j$, it holds that
    \begin{align}
     (\xa,\xb) \lowerval{i}  \xc \setminus\{\lstitem{i}{1}{\xc},\lstitem{i}{2}{\xc}\} \mbox{ and }   (\xa,\xb) \lowerval{j}  \xc \setminus\{\lstitem{j}{1}{\xc},\lstitem{j}{2}{\xc}\}. \label{eq:preferencesCaseii} 
    \end{align}

    We next consider the following two cases based on how agent $j$ values the \itms\ $\gi$ and $\hj$. 

    \vspace{10pt}
    \noindent\textbf{$\bullet$ Sub-case 1 [$\mathbf{ \gi \greaterval{j} \hj}$]:} 
    By \normalizationality\ of $(\xc,\xd)$ w.r.t.\ agents $(i,j)$,
    it further holds that 
    \begin{align} 
    \gi \greatereqval{i} \hj. \label{eq:normalizationCaseii} 
    \end{align} 

    If $\xa \cup \hj \lowereqval{j} \xd \setminus \hj$, we define the allocation:
    \begin{center}
    $\xp = \big( \xa \cup \hj, \hspace{2mm}  \xb, \hspace{2mm} \xc ,  \hspace{2mm} \xd \setminus \hj \big)$
    \end{center}
    By $\xa \cup \hj \lowereqval{j} \xd \setminus \hj$  and the fact that \isbstffor{\xd}{j}{\X} (by \eqref{eq:EFXf-EF2XfCaseii}), we get that \isefxffor{\xdp}{j}{\xp}. Regarding agent $i$, by \cref{eq:preferencesCaseii} and the fact that 
    $\gi \greatereqval{i} \hj$ (by \eqref{eq:normalizationCaseii}), we have that:
    
    \begin{center}
    $ \xc \greatereqval{i} \xc \setminus \gi \greatereqval{i} \xa \cup \Gi \greatereqval{i} 
    \xa \cup \gi \greatereqval{i} \xa \cup \hj$.
    \end{center}
    Additionally, by \cref{eq:preferencesCaseii} it holds that $\xc \greatereqval{i} \xb$, and by the conditions of \stageTwoPerfect\ it holds that \isefxffor{\xc}{i}{\X}, meaning that $\xc \greatereqval{i} \xd \setminus \hi \greatereqval{i} \xd \setminus \hj$. Overall, \isefxffor{\xcp}{i}{\xp}, and the lemma follows by \cref{ToGoodPart from stage 2 case 1}.
    
    We now consider the other case where:
    \begin{align}
        \xd \setminus \hj \lowerval{j} \xa \cup \hj \label{xd-hj<xa+hj}.
    \end{align}
    And we define instead the allocation:
    
    \begin{center}
    $\xp= \big( \xa \cup \hi,  \hspace{2mm}  \xb, \hspace{2mm} \xc ,  \hspace{2mm} \xd \setminus \hi \big)$
    \end{center}

    We first show that \isefxffor{\xdp}{i}{\xp}.
    Since \isbsttwoffor{\xc}{i}{\X}, \isbstffor{\xd}{i}{\X} (by \eqref{eq:EFXf-EF2XfCaseii}), and \isnotbstffor{\xc}{i}{\X} (by our assumption at the beginning of the proof) we get:
    \begin{align}
        \xc \setminus\{\gi, \Gi\} & \greatereqval{i} \xd \setminus\{\hi, \Hi\}, \notag\\
        \xc \setminus \gi & \lowerval{i} \xd \setminus \hi, \label{eq:X3notEFXBestFori}
    \end{align}
    therefore by \cref{cancp}, we get $\gi \lowereqval{i} \Gi \lowerval{i} \Hi$. By \cref{eq:normalizationCaseii} it therefore holds that $\hj \lowereqval{i} \gi \lowerval{i} \Hi$, and so, it should be that:
    \begin{align}
        \hj = \hi. \label{hiequalshj}    
    \end{align}
     We put the above together, and so by using the Equations \eqref{hiequalshj}, \eqref{eq:normalizationCaseii},  \eqref{eq:preferencesCaseii} and \eqref{eq:X3notEFXBestFori}, we get 
    
    \begin{center}
    $ \xa \cup \hi = \xa \cup \hj \lowereqval{i} \xa \cup \gi \lowereqval{i} \xc  \setminus \gi \lowerval{i}  \xd  \setminus \hi$.
    \end{center}
    With this the additional fact that \isbstffor{\xd}{i}{\X} (by \eqref{eq:EFXf-EF2XfCaseii}),
    results in \isefxffor{\xdp}{i}{\xp}.

    If it was also true that \isefxf{\xcp}{j}{\xp}, the lemma would follow for this subcase by \cref{ToGoodPart from stage 2 case 1}. We next show that $\xc \greatereqval{j} (\xb,\xdp \setminus \hjp)$ by using \cref{eq:preferencesCaseii} and the fact that \isbsttwoffor{\xc}{j}{\X} (by \eqref{eq:EFXf-EF2XfCaseii}): 
    \begin{align*}
    \xc &\greatereqval{j} \xc \setminus \{\gj, \Gj\}\\ 
        &\greatereqval{j} (\xb,\xd \setminus \{\hj, \Hj\})\\ 
        &\greatereqval{j} (\xb,\xd \setminus \{\hi, \hjp\})\\ 
        &\greatereqval{j} (\xb,\xdp \setminus \hjp)\,,
    \end{align*}
    
    and if it was also true that $\xc \greatereqval{j} \xap= \xa \cup \hi$ then \isefxf{\xcp}{j}{\xp}, and the lemma would follow for this subcase. So, we assume otherwise, i.e.,
    \begin{align}
    \xc \lowerval{j} \xa \cup \hi\,,      \label{eq:X3loewerxahi}
    \end{align}
and we will show that for this case, the conditions of \cref{auxiliary lemma case ii} are satisfied, and the lemma follows.
    
    By Equations \eqref{eq:preferencesCaseii} and \eqref{eq:X3loewerxahi}, we get that:
    \begin{align*}
        \xa \cup \Gj \lowereqval{j} \xc \setminus \gj &\lowereqval{j} \xc \lowerval{j} \xa \cup \hi\,,
    \end{align*}
    which in turn gives, by using also \cref{hiequalshj}:
    \begin{equation}\label{hj>gj Gj}
     \Gj \lowereqval{j} \hi = \hj,   
    \end{equation}

    \noindent Finally, by Equations \eqref{xd-hj<xa+hj}, \eqref{eq:preferencesCaseii},
    and \cref{hiequalshj} we can infer that:
    \begin{align}    
         \xd \setminus \hi   \lowereqval{j} \xd \setminus \hj   \lowerval{j} \xa \cup \hj &\lowereqval{j} 
        \xc \setminus \{\gj, \Gj\}  \cup  \hj     \notag     \\
        &=\hspace{1.9mm} \xc \setminus \{\gj, \Gj\}  \cup  \hi.    \label{aux a 4}
    \end{align}

    \noindent Overall, Equations \eqref{hj>gj Gj}, \eqref{eq:preferencesCaseii}, \eqref{aux a 4}, \eqref{xd-hi >i xa xb},
    and the fact that $\X$ is in \stageTwoPerfect,
    provide the conditions of \cref{auxiliary lemma case ii}'s statement for $\X$, which concludes the proof for sub-case 1.

    \vspace{10pt}
    \noindent\textbf{$\bullet$ Sub-case 2. [$\mathbf{ \gi \lowereqval{j} \hj}$]:} Let
    
    \begin{center}
    $\Y = \big(\xa \cup \gi, \hspace{2mm}  \xb, \hspace{2mm} \xc \setminus \gi,  \hspace{2mm} \xd \setminus \hj \big)$
    \end{center}
    with unallocated set of \itms\ $\unassignedItms = \{\hj\} $. 
    By \cref{eq:preferencesCaseii}, we get 
    \begin{equation*}
        (\xb, \xa \cup \gi) \lowereqval{i} \xc \setminus \gi, 
    \end{equation*}
    and by \cref{eq:EFXf-EF2XfCaseii}, we get
    \begin{equation} 
        \yd \setminus \hiy = \xd \setminus \{\hj, \hiy\} \lowereqval{i}
        \xd \setminus \{\hi, \Hi\} \lowereqval{i}  \xc \setminus \gi\,. \notag
    \end{equation}
    Overall, it holds that \isefxffor{\yc}{i}{\Y}. 
    Because $\X$ is in \stageTwoPerfect\ and  
    \isbstffor{\xd}{j}{\X} (by \eqref{eq:EFXf-EF2XfCaseii}):
    \begin{center}
        $(\xb, \xc \setminus \gi) \lowereqval{j} \xd \setminus \hj\,.$
    \end{center}
    If additionally $\xd \setminus \hj \greatereqval{j} \xa \cup \gi$, it would be that \isefxffor{\yd}{j}{\Y}, 
    and the lemma would follow by \cref{ToGoodPart from stage 2 case 1}. So, we assume otherwise, i.e., that:
    \begin{align}
        \xd \setminus \hj \lowerval{j} \xa \cup \gi\,,    \label{xd-hj <j xa+gi}    
    \end{align}
    and we will show that for this case, the conditions of \cref{auxiliary lemma case ii} are satisfied, and the lemma follows.
    
    By Equations \eqref{eq:preferencesCaseii}, \eqref{eq:EFXf-EF2XfCaseii} and \eqref{xd-hj <j xa+gi}:
    \begin{align*}
        \xa \cup \Gj  \lowereqval{j} \xc \setminus \gj &\lowereqval{j}  
        \xd \setminus \hj  \lowerval{j} \xa \cup \gi\,,
    \end{align*}
    which in turn gives, by using also the condition in sub-case 2: 
    \begin{align}
        \Gj \lowerval{j} \gi &\lowereqval{j} \hj \lowereqval{j} \hi\,.   \label{< order fot j}
    \end{align}
    Finally, by Equations \eqref{xd-hj <j xa+gi}, and \eqref{< order fot j}
    we get:
    \begin{equation}\label{aux 4}
        \xd \setminus \hj \lowerval{j} \xa \cup \gi \lowereqval{j}   
        \xa \cup \hi  \lowereqval{j}   \xc \setminus \{\gj, \Gj\} \cup \hi.          
    \end{equation}
    Overall, Equations \eqref{< order fot j}, \eqref{eq:preferencesCaseii}, \eqref{aux 4}, \eqref{xd-hi >i xa xb},
    and the
    fact that $\X$ is in \stageTwoPerfect,
    provide the conditions of \cref{auxiliary lemma case ii}'s statement for $\X$, which concludes the proof for sub-case 2, as well.
\end{proof}

\subsection{Running time of Algorithm for 4 Agents}\label{running_time_4agents}
    Note that throughout the algorithm agent $1$, who has a nondegenerate valuation, is fixed. On the other hand, the remaining three agents, $\{2,3,4\}$, are not fixed, and we instead use indices $i,j,u$ to refer to them and their preferences.

\begin{theorem}
    The proposed algorithm returns an \eftx\ allocation in pseudo-polynomial time.
\end{theorem}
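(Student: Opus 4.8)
The plan is to prove two things separately: correctness (on halting, the output is genuinely \eftx) and a pseudo-polynomial bound on the total work. Correctness is essentially bookkeeping over \cref{fig:algorithm-flow}. I would first observe that the algorithm maintains the invariant that the current partition is always either \eftxf\ or lies in exactly one of the defined stages. The starting move is \cref{ToGoodPart2}: after running the PR algorithm for agent~$1$ and applying \cref{toStage1}, it returns a partition that is either \eftxf\ or \good\ (i.e.\ in \stageOne\ or \stageTwo). Thereafter, whichever stage the partition occupies, the matching theorem guarantees the next partition is again \eftxf\ or in a well-defined stage: \cref{Stage 1 Proceed} handles \stageOne; \cref{ToPerfect1} moves \stageTwo\ to \stageTwoA; and \cref{ToNormal,PerfectExists,Perfect,solve perfect} handle \stageTwoA, \stageTwoNormal, \stageTwoC, and \stageTwoPerfect, respectively. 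Since an \eftxf\ partition is by definition one whose four bundles can be matched to the four agents with each bundle \eftxf\ for its agent, it is exactly an \eftx\ allocation, so the output on termination is correct.

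The heart of the running-time argument rests on the potential $\phi(\X)=v_1(X_1)$. Assuming integer valuations (without loss of generality by \cref{non degenerate}), $\phi$ is a nonnegative integer bounded by $V:=\max_{S\subseteq M} v_1(S)$, which is pseudo-polynomial in the input. I would classify each transition by whether $\phi$ \emph{strictly} increases or stays \emph{equal} (by the figure's caption $\phi$ never decreases). Strict steps occur at most $V$ times in total, since $\phi$ is a nondecreasing integer capped by $V$. To bound the equal-$\phi$ steps I would assign each stage an integer rank, e.g.\ $\mathrm{rk}(\stageTwo)=4$, $\mathrm{rk}(\stageTwoA)=3$, $\mathrm{rk}(\stageTwoNormal)=2$, $\mathrm{rk}(\stageTwoC)=1$, and $\mathrm{rk}(\stageOne)=\mathrm{rk}(\stageTwoPerfect)=0$. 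Reading off the theorem statements, every transition on which $\phi$ is equal strictly decreases this rank: $\stageTwo\!\to\!\stageTwoA$ (\cref{ToPerfect1}), $\stageTwoA\!\to\!\{\stageTwoNormal,\stageOne\}$ (\cref{ToNormal}), $\stageTwoNormal\!\to\!\{\stageOne,\stageTwoC,\stageTwoPerfect\}$ (\cref{PerfectExists}), and $\stageTwoC\!\to\!\stageOne$ (\cref{Perfect}). Crucially, \stageOne\ and \stageTwoPerfect\ have rank $0$ and admit no equal-$\phi$ outgoing edge, because \cref{Stage 1 Proceed,solve perfect} always either terminate or take a strict step. Hence at most five consecutive equal-$\phi$ steps can occur before a strict step or termination, and the total number of macro-steps is $O(V)$.

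It then remains to check that each individual macro-step runs in pseudo-polynomial time. The \normalizationprocess\ subroutine is polynomial; each call to the PR algorithm terminates in pseudo-polynomial time by \cref{PR} (every good-movement weakly increases the minimum bundle value, which is strict and integer-bounded by $V$ for agent~$1$, and the remaining calls act on only two bundles); and the two internal loops — the cardinality-increasing loop in the proof of \cref{ToNormal}, where $|Z_3|$ strictly grows, and the search for the maximal $p$ in the proof of \cref{PerfectExists} — each run for at most $m$ iterations. Multiplying the $O(V)$ macro-steps by the pseudo-polynomial cost per step yields the claimed bound.

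The main obstacle is the termination half of the argument: the real danger is that the algorithm cycles indefinitely among the \stageTwo\ substages while $\phi$ stays fixed. Ruling this out is precisely the role of the rank function, and the delicate point is to confirm from the statements of \cref{ToPerfect1,ToNormal,PerfectExists,Perfect} that no equal-$\phi$ transition ever lands on a stage of equal or higher rank (in particular, that \stageTwo\ and \stageTwoA\ are re-entered only across strict-$\phi$ edges, which all return through the \good\ node), so that the rank genuinely decreases on every potential-preserving step.
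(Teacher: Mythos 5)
Your proposal is correct and takes essentially the same route as the paper: the paper bounds termination by observing that every cycle in \cref{fig:algorithm-flow} passes through the \goodpart\ node and every edge into that node strictly increases the potential $\phi$, which is precisely the fact your explicit rank function on the stages formalizes (your ranks make rigorous what the paper reads off the flowchart, including treating the $|Z_3|$-growing loop of \cref{ToNormal} and the maximal-$p$ search of \cref{PerfectExists} as internal, $m$-bounded iterations rather than stage transitions), and both arguments then charge the dominant per-step cost to the PR subroutine. The only cosmetic difference is that you assume integer valuations so that the number of strict potential increases is capped by $\max_{S\subseteq M} v_1(S)$, whereas the paper bounds it by $v_1(M)/\Delta$ with $\Delta=\min_{S,T:\, v_1(S)\neq v_1(T)}|v_1(S)-v_1(T)|$.
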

\begin{proof}
    Every loop in the flowchart includes the \goodpart \ node, 
    and every arrow to \goodpart \ node is accompanied by a strict increase of the potential. 
    So, since every cycle can only contain arrows without lowering potential, every cycle brings about 
    a strict increase of the potential. Since there are finitely many bundles (in fact, $2^{|M|}$), there are only finitely many
    possible values of the potential of each partition. Hence, there could be only finitely many cycles in the flowchart.
    So, our algorithm will eventually terminate, and therefore it will find an \efxf\ partition.
    
    Next, we investigate its running time.
    We first argue that every PR algorithm runs in pseudo-polynomial time.
    This is because after at most $m\cdot n$ steps in PR algorithm, 
    we have a strict increase in the minimum value bundle.
    Hence if we let  $\delta = \min_{S,T\subseteq M: v(S)\ne v(T)} |v(S)-v(T)|$ where $v$ is the input valuation function,
    then we  get that there could be at most $\frac{m\cdot n\cdot v(M)}{\delta}$ step in PR algorithm.
    Hence, PR algorithm runs in pseudo polynomial time.
    Next, we show that the whole algorithm runs in pseudo-polynomial time.
    Let $\Delta = \min_{S,T\subseteq M: v_1(S)\ne v_1(T)} |v_1(S)-v_1(T)|$. 
    Then the number of cycles is polynomial in $\frac{v_1(M)}{\Delta}$;
    hence, the number of cycles is pseudo-polynomial.
    It is not hard to verify that, except running time of PR algorithm, 
    all other operations in every edge run in polynomial time.
    Also, every PR algorithm runs in pseudo-polynomial time,
    and we will have a polynomial number of executions of PR algorithm at every edge of \cref{fig:algorithm-flow}, 
    therefore, the whole algorithm runs in pseudo-polynomial time.
\end{proof}

 \section{\eftx\ Allocations for Three Agents in Polynomial Time}
\label{sec:3agents}

    For instances involving 3 agents with cancelable valuations, our procedure from the previous section can be adapted to compute EF2X allocation in polynomial time.

    \vspace{0.1in}
    \begin{theorem}
        For every instance involving three agents with cancelable valuations and any number of goods, we can compute an EF2X allocation in polynomial time.   
    \end{theorem}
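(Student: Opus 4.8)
The plan is to reuse the full toolkit developed for four agents, specialised to three bundles, and then to argue that the much simpler combinatorial structure removes both sources of pseudo-polynomiality. Agent $1$ keeps its special non-degenerate role and $i,j$ denote the two remaining agents, so a partition is now a triple $(\xa,\xb,\xc)$. The key structural simplification is that, with only three bundles, once $\xb$ and $\xc$ are handed to $i$ and $j$ there is a \emph{single} leftover bundle $\xa$ destined for agent $1$, instead of two leftover bundles contested by agent $1$ and a fourth agent $u$. Consequently the delicate sub-stages \stageTwoC\ and \stageTwoPerfect, which exist in the four-agent proof only to resolve the tension created by a fourth bundle over-demanded by $u$ together with $i$ or $j$, are no longer needed, and the flowchart collapses to ``initial partition $\to$ \goodpart $\to$ (terminate or redistribute).''

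First I would set up the three-agent analogue of a \goodpart: a triple in which $\xa$ is \efxf\ for agent $1$ with $\xa=\argmin_1(\xa,\xb,\xc)$, while $\xb$ is \efxf\ for $i$ and $\xc$ is \efxf\ for $j$. The two-bundle tools carry over unchanged: \normalizationprocess\ of $(\xb,\xc)$ w.r.t.\ $(i,j)$, the ``maintain \efxf\ for two agents'' construction from the proof of \cref{ToGoodPart3}, the \bstkf-bundle trick for pushing a single least-valued good into an under-demanded bundle, and the potential $\phi(\X)=v_1(\xa)$. Since \efxf\ implies \eftxf, from any \goodpart\ we can either read off an \eftx\ allocation immediately, or else detect that agent $1$ \eftxenvies\ one of $\xb,\xc$; in the latter case I would peel a minimal subset off the over-demanded bundle, add it to $\xa$ (so $\phi$ strictly increases), and re-optimise the two bundles of $i,j$ --- exactly the pattern of \cref{ToNormal} and \cref{PerfectExists}.

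The substance of the theorem is that this loop runs in \emph{polynomial}, not merely pseudo-polynomial, time, and this is where the \eftx\ slack (rather than exact \efx) is essential. Two bottlenecks are inherited from the four-agent analysis: invocations of the PR algorithm, and the number of strict increases of $\phi$. For the first, I would note that in the three-agent algorithm PR is only ever applied to the two bundles of $i$ and $j$, and that for two bundles an \efxf\ split of a fixed good set is computable directly in polynomial time (two-agent \efx\ is polynomial for cancelable valuations), so each call is polynomial. For the second, I would replace the value-based progress measure of the four-agent proof by a cardinality-based one: using the two goods of \eftx\ slack, each redistribution returning to the \goodpart\ node can be performed so that it only ever \emph{adds} goods to $\xa$ and never removes them, hence $|\xa|$ strictly increases and is bounded by $m$. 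A polynomial number of iterations, each doing polynomial work, then gives the claimed bound.

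The hard part will be the cardinality (rather than value) progress guarantee. In the four-agent regime the only certified progress is the value increase of $\phi$, which may occur pseudo-polynomially often; turning this into monotone growth of $\xa$ for three agents requires showing that whenever agent $1$ is not yet \eftxf, the redistribution can always be chosen to grow $\xa$ while \emph{simultaneously} preserving \efxf\ of $\xb,\xc$ for $i,j$ after the re-optimisation. Verifying this compatibility --- using the weak monotonicity of \normalizationprocess\ on the best-$k$-values and the two-agent reconstruction of \cref{ToGoodPart5} --- is the technical crux, and it is precisely the step that breaks for four agents, where the contested second leftover bundle forces genuine value (not cardinality) progress.
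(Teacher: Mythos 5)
You have the right progress measure (cardinality of the first bundle, at most $m$ iterations), but two of your structural claims contradict what actually makes the argument work. First, your assertion that \stageTwoC\ and \stageTwoPerfect\ ``exist in the four-agent proof only to resolve the tension created by a fourth bundle over-demanded by $u$'' is false: those stages are defined by conditions on agents $i$ and $j$ alone (e.g.\ $\xb \cup \gj \efxenvygg{j} \xc \setminus \gj$ and $\xa \greatereqval{i} \xb \setminus \lstitem{i}{1}{\xb}$ in the three-agent versions), and the paper explicitly defines three-agent analogues of \stageTwoA, \stageTwoNormal, \stageTwoC, and \stageTwoPerfect\ and re-runs the full case analysis of \cref{PerfectExists}, \cref{Perfect} (via Lemmas~\ref{DerivingPerfect1}--\ref{DerivingPerfect3}), and \cref{solve perfect} (via \cref{Casei} and \cref{Caseii}). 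The tension these stages resolve is between $i$ and $j$ contesting the same bundle, which persists with three agents; your collapsed loop (``peel a minimal subset off the over-demanded bundle, add it to $\xa$, re-optimise'') has no argument for re-establishing \efxf ness of both demanded bundles after the move, and that is precisely the content of those lemmas. Second, your invariant is internally inconsistent: with three agents, a partition in which $\xa$ is \efxf\ for agent $1$, $\xb$ for $i$, and $\xc$ for $j$ is already an EFX allocation, so there is nothing left to ``detect''; the paper's three-agent Stage~B deliberately requires \emph{nothing} of the first bundle or the third agent, and the whole algorithm consists of growing $\xa$ until a Hall-type matching exists.

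This second point is not cosmetic. By retaining agent $1$ and the requirement that $\xa$ be \efxf\ for her, you reintroduce exactly the branches that are pseudo-polynomial in the four-agent analysis: when agent $1$'s feasibility breaks (e.g.\ after \normalizationprocess\ of the other two bundles), the four-agent machinery repairs it via \cref{AgentAEnvies0}, \cref{ToGoodPart2}, and \stageOne, which rebuild all bundles with the PR algorithm and certify only \emph{value} progress, removing goods from $\xa$ in the process. The paper's fix is to delete agent $1$ entirely: the first bundle starts empty, no feasibility condition is ever imposed on it, the Stage~A branch is never reachable, and consequently $\xa$ never loses a good --- that is how monotone growth of $|\xa|$ is actually certified, whereas your version leaves the ``compatibility'' claim unproven at exactly the point where it is the theorem's substance. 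Finally, your appeal to ``two-agent \efx\ is polynomial for cancelable valuations'' as known is not available: the paper must \emph{modify} the PR algorithm (always transfer the least-valued good of the larger bundle) and prove, via a dedicated exchange argument, that at most $O(m^2)$ transfers occur; without that lemma even your per-iteration cost bound is unjustified, since vanilla PR is only pseudo-polynomial.
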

    
    We first define a similar \stageTwo\ for three agents.

    \begin{definition}[Stage B for three agents]
        We say partition $\X=(\xa,\xb,\xc)$ is in \stageTwo\ if $\xb$ is \efxf\ for some agent $i$
        and $\xc$ is \efxf\ for some other agent $j$.
    \end{definition}

    \begin{definition}[Stage B1 for three agents]
        We say partition $\X=(\xa,\xb,\xc)$ is in \stageTwoA\ if it is in \stageTwo,
        $\xc$ is \bstf\ for $j$, and:
        \begin{align*}
            \xc \setminus \gj \greaterval{j} \xa
        \end{align*}
    \end{definition}

    \begin{definition}[Stage B2 for three agents]
        We say partition $\X=(\xa,\xb,\xc)$ is in \stageTwoNormal\ if it is in \stageTwoA\ and
        $(\xb,\xc)$ are \normalized\ w.r.t.\ agents $(i,j)$.
    \end{definition}

    \begin{definition}[Stage B2i for three agents]
        We say partition $\X=(\xa,\xb,\xc)$ is in \stageTwoC\ if it is in \stageTwoNormal\ and:
        \begin{align*}
        \xc \setminus \{\gi, \Gi\}  &\greaterval{i} \xa \\
        \xc \setminus \{\gj, \Gj\}  &\greaterval{j} \xa \\
        \xa &\greatereqval{i} \xb \setminus \lstitem{i}{1}{\xb} \\
        \xb \cup \gj &\efxenvygg{j} \xc \setminus \gj         \end{align*}
    \end{definition}

    \begin{definition}[Stage B2ii for three agents]
        We say partition $\X=(\xa,\xb,\xc)$ is in \stageTwoPerfect\ if it is in \stageTwoNormal\ and:
        \begin{align*}
            \xb \setminus \lstitem{i}{1}{\xb} \greaterval{i} \xa
        \end{align*}
    \end{definition}

    Our proposed algorithm is similar to the one with four agent but far simpler. Once again, it starts by using the PR algorithm, but in order to guarantee a polynomial running time, it slightly modifies this algorithm so that the item removed from $Y_i$ and added to $Y_j$ is $g\in \arg\max_{x\in Y_i} v(Y_i \setminus x)$. 
    Using this modified version of the PR algorithm, we compute a partition 
    $(\xa,\xb)$ such that $\xa$ and $\xb$ are \efxf\ for some distinct agents $i$ and $j$, respectively. Using the output of the modified PR algorithm, our algorithm constructs the initial partition $\xp = (\xap =\emptyset,\xbp = \xa, \xcp = \xb)$ which is in \stageTwo\
    for three agents.
    Then, it follows the same steps as in \stageTwo\ (along with all its sub-stages for four agents), except there is no potential function and no agent $1$, so the operation relating to that agent will not be performed at all. Moreover, the bundle $\xb$ for the case of four agents is ignored here; note that in any transformation in the case with four agents, the second bundle always remains unchanged.

    At the start of each step, we will have some partition $\X = (\xa,\xb,\xc)$
    such that $\xb$ is \efxf\ for agent $i$ and $\xc$ is \efxf\ for some other agent $j$.
    Then, like the procedure in \stageTwo\ we 
    either construct an \eftxf\ partition or
    construct a partition
    $\xt = (\xa \cup x, \xbt, \xct)$ such that $\xbt$ is \efxf\ for some agent $i$
    and $\xct$ is \efxf\ for some other agent $j$. 
    At each step, the process consists of first swap-optimizing $(\xb,\xc)$
    w.r.t.\ agents $(i,j)$.
    Then, as in the \cref{PerfectExists}, we either find an \eftxf\ partition, 
    or we can find a partition $\xt$ in \stageTwoC\ or \stageTwoPerfect\ such that $\xat=\xa$.
    Finally, we execute the same as  \cref{Perfect} and \cref{solve perfect}.
    We remark that since we do not consider any agent as agent $1$, our procedure will not pass through \stageOne\ at all. 

    Since there is no agent $1$,
    the first bundle never loses an item, so this process can be repeated at most $m$ times,
    since there are $m$ goods. Also, every one of these steps runs in polynomial time.
    The only non obvious part of this claim is that the PR algorithm runs in polynomial time for two agents,
    which we prove next. Note that since there is no agent 1, no PR algorithm with more than two bundles will be executed. 

\subsection{Running Time of the Modified PR Algorithm}

    In the proof of the following lemma, for a valuation function $v$ and two subsets of items $S,T$
    we denote by $S \prec T$ whenever $v(S)<v(T)$ and we denote by $S\preceq T$ whenever $v(S)\leq v(T)$.

\begin{lemma}
    The modified PR algorithm runs in polynomial time 
    when the input is a set of two disjoint bundles 
    and a cancelable valuation function.
\end{lemma}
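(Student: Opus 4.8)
The plan is to bound the number of item-moves that the modified PR algorithm performs by a polynomial in $m$; since each iteration (finding the least-valued bundle, testing EFX-feasibility, and performing one move) runs in polynomial time, this suffices. Write the two bundles as $S,T$ and assume throughout that $S\preceq T$, so the lighter bundle plays the role of $Y_j$. By \cref{k-min with k-least}, the item removed from the heavier bundle $T$ and added to $S$ is exactly $T$'s least-valued item $\ell$, and the move is performed precisely when $S$ is not \efxf, i.e.\ when $v(S)<w^1(T)=v(T\setminus\ell)$, so that $T\setminus\ell\succ S$.

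First I would record two monotonicity facts. The quantity $\min(v(S),v(T))$ is non-decreasing along the run (this is \cref{PR}), and the quantity $\max(v(S),v(T))$ is non-increasing: from $v(S)\preceq v(T\setminus\ell)$, cancelability gives $v(S\cup\ell)\preceq v((T\setminus\ell)\cup\ell)=v(T)$, so after a move both new bundles have value at most $v(T)$. Next I would call a move a \emph{switch} if afterwards the receiving bundle becomes strictly heavier, i.e.\ $v(S\cup\ell)\succ v(T\setminus\ell)$. Between two consecutive switches the identity of the lighter bundle is fixed and it only ever gains items (one per move) and never loses any, so each maximal switch-free run consists of at most $m$ moves. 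Moreover, each switch strictly increases $\min(v(S),v(T))$: right after a switch the lighter bundle is $T\setminus\ell$, whose value $v(T\setminus\ell)$ strictly exceeds the pre-move minimum $v(S)$, which was exactly the condition that triggered the move.

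It remains to bound the number of switches by a polynomial in $m$, and this is the crux of the argument and the step I expect to require the most care. The essential point is that always removing the \emph{least-valued} item is what eliminates the pseudo-polynomial behaviour of the unmodified PR algorithm: combined with cancelability it should force irreversible combinatorial progress, whereas the plain minimum-value potential only guarantees pseudo-polynomially many distinct values. Concretely, I would leverage that $\min(v(S),v(T))$ is non-decreasing and jumps strictly upward at every switch, while within each phase the lighter bundle's cardinality increases monotonically, to argue that no configuration can recur; I would then seek to convert this into a combinatorial monovariant (tracking, for instance, the size of the heavier bundle together with the behaviour of its least element across a switch, which reflects the cardinality around $(m{+}1)/2$) to conclude that the number of switches is $O(m)$. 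Together with the at-most-$m$ moves per phase, this yields a total of $O(m^2)$ moves and hence a polynomial running time. The delicate part is exactly this bound on the number of switches: the remove-least-item modification and cancelability must be used in tandem, since either ingredient alone leaves open the possibility of an exponential cascade of alternations.
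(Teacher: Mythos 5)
Your reduction is set up correctly and matches the paper's accounting: the moved item is the donor's least-valued good (by \cref{k-min with k-least}), each maximal run of moves into a fixed lighter bundle has length at most $m$, and the minimum value strictly increases at each of your ``switches.'' But the step you yourself flag as the crux --- a polynomial bound on the number of switches --- is genuinely missing, and neither of the routes you sketch can deliver it. ``No configuration can recur'' together with the strictly increasing minimum bounds the number of switches only by the number of distinct attainable values, i.e.\ by $2^m$ (or by $v(M)/\delta$ in value terms), which is exactly the pseudo-polynomial behaviour the lemma is supposed to eliminate. Your proposed monovariant (size of the heavier bundle together with ``the behaviour of its least element,'' with the aside about cardinality near $(m{+}1)/2$) is never instantiated, and there is no apparent reason bundle cardinalities cannot oscillate across switches; as written, nothing in the proposal forces the number of switches down to $\mathrm{poly}(m)$.

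The paper closes precisely this gap with a concrete combinatorial claim, proved by contradiction using cancelability and the remove-least-item rule in tandem (as you correctly guessed they must be): if $x$ is moved from $Y_1$ to $Y_2$ at a moment when $Y_1\setminus x \succ Y_2$ and $Y_1\setminus x \prec Y_2\cup x$ (your switch), then \emph{no good $y$ with $x \preceq y$ is ever transferred again}. The proof takes the first offending $y$, observes that all intermediate transfers involve goods $\prec x$ so that $Y_1\setminus x\subseteq Y'_1$ and $Y'_2\subseteq Y_2\cup x$ at the moment $y$ moves, and derives a contradiction in both cases $y\in Y'_1$ and $y\in Y'_2$; the second case needs cancelability to rule out the tie $v(x)=v(y)$, since $Y'_2\setminus y \preceq Y_2 \prec Y_1\setminus x \preceq Y'_1$ then contradicts the rule that goods move out of the strictly heavier bundle. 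This claim makes switches irreversible in the item ordering: the goods moved at successive switches are strictly decreasing in value, so in particular the last good of each maximal run never returns, giving at most $m$ runs of length at most $m$ per bundle, hence $O(m^2)$ moves total. Without this claim (or an equivalent), your proposal establishes only the phase structure, not the polynomial bound that is the content of the lemma.
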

\begin{proof}
    Note that since valuations are cancelable, by \cref{k-min with k-least},
    when some bundle $S$ loses some good $x$  during the modified PR algorithm, $x$ is the least valued good in $S$.
    Suppose that during the PR algorithm, at some step, we have a set of bundles
    $(\ya,\yb)$ 
    and $x= \ell(\ya)$, such that $\ya\setminus x \succ \yb$ and $Y_1\setminus x \prec \yb \cup x$. 
    Then the algorithm removes $x$ from $\ya$ and adds it to $\yb$.

\begin{claim}
    Considering the above bundles $(\ya,\yb)$, after the transfer of $x$, 
    there will not be any step in the algorithm
    in which some good  $y$ is being transferred  such that $x \preceq y$.
\end{claim}
    
\begin{proof}
    Suppose on the contrary that this is not the case, and let $y$ be the first good being transferred after $x$ during the 
    execution of the algorithm, such that $x \preceq y$. 
    Then, for any good $z$ being transferred after $x$ and before $y$, if any, we have $z \prec x$. 
    Suppose that the bundles exactly before the transfer of $y$ are $(\yap,\ybp)$.
    Since we had $x = \ell(\ya)$, for every $g \in \ya \setminus x$, 
    we have $g \succeq x$, so these goods have not been transferred after $x$ and before $y$,
    hence, $\xa \setminus x \subseteq \yap$.
    Also since $\yap \sqcup \ybp = M$, we get that  
    $\ybp \subseteq \yb \cup x$. 
    In other words, all goods that are transferred between $x$ and $y$ should be transferred from $\yb\cup x$ to $\ya \setminus x$.
    
    We consider two cases based on which set, $\yap$ or $\ybp$, $y$ belongs to. 
     
    \vspace{5pt}
    \noindent $\mathbf{\bullet}$ \textbf{\caseA.} $\mathbf{y\in \yap}$:
        Since $y$ is being transferred from $\yap$, by the definition of PR algorithm, we have 
        that $y= \ell(\yap)$, so for every $g\in\yap$ we have that $g \succeq y \succeq x$,
        so $g \succeq x$. Since for every $z$ being transferred after $x$ and before $y$
        we have that $z \prec x$, we get that there is no good in $\yap$ that has been transferred from
        the other bundle to $\yap$ after
        the transfer of $x$, so $\yap = \ya \setminus x$, and $\ybp = \yb \cup x$.
        So we get $\yap = \ya \setminus x \prec \yb \cup x = \ybp$,
        which is a contradiction to the PR algorithm, since the algorithm transfers a good from the most valuable set to the least valuable one. 

    \vspace{5pt}
    \noindent $\mathbf{\bullet}$ \textbf{\caseB.} $\mathbf{y\in \ybp}$:
    Since for every good $z$ that has been transferred after $x$, we have that $z \prec x$,
    we get that $x \in \ybp$. Recall that $x \preceq y$. 
    if $x \prec y$, then $\ell(\yap) \neq y$. Therefore, $x$ has equal value to $y$, 
    which means that $\ybp \setminus y$ and $\ybp \setminus x$ 
    have the same value\footnote{This holds because if it was w.l.o.g. 
    $\ybp \setminus y \prec \ybp \setminus x$, then by subtracting $\ybp \setminus \{y,x\}$ from both sides, 
    cancelability would give $x\prec y$.}. Moreover, since
    $\ybp \subseteq \yb\cup x$, we get $\ybp\setminus x\subseteq \yb$, 
    which is translated to $\ybp \setminus x \preceq \yb$, 
    and therefore, $\ybp \setminus y \preceq \yb$. 
    Therefore, it holds:
    \begin{equation}
        \ybp \setminus y \preceq \yb \prec \ya \setminus x \preceq \yap\,, \notag
     \end{equation}
    where the last inequality is due to $\ya \setminus x \subseteq \yap$.
    So we have $\ybp \setminus y \prec \yap$, which is a contradiction, 
    since in the PR algorithm, when the algorithm removes some good $y$ from
    $\ybp$ and adds it to $\yap$, it should be that $\ybp \setminus y \succ \yap$.
    \end{proof}

    We now proceed with the lemma's proof.
    During execution of the modified PR algorithm, there can be
    at most $m$ removing good from $\ya$ and adding to $\yb$ in a row, since there are $m$ goods.
    Also, for the last good $x$, which is been removed from $\ya$ in a row, we  have  that
    $\ya \setminus x \succ \yb$, and if algorithm does not terminate it also holds that
    $\ya \setminus x \prec \yb \cup x$. Hence for every at most $m$ removal of goods from $\ya$ in a row,
    there exists at least one good (the last one) that according to the previous claim, will not come back to $\ya$
    ever again. Hence there could be at most $m^2$ removals of goods from $\ya$.
    Similarly, this holds for $\yb$ too, therefore the PR algorithm runs in polynomial time.
\end{proof}

\bibliographystyle{plainnat}
\bibliography{mybibliography}

\appendix

\section{Omitted Proofs}

\begin{definition}
    \cite{BCFF21}
    A valuation $v'$ is said to respect another valuation $v$ 
    if for every two bundles $S,T\subseteq M$ such that $v(S)>v(T)$
    it also holds that $v'(S)>v'(T)$.
\end{definition}

\begin{lemma}
\label{non-degenerate}
    \cite{ACGMM22} If $v$ is a monotone valuation, then there exists a non-degenerate valuation $v'$ that respects $v$.
\end{lemma}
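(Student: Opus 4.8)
The plan is to build $v'$ by adding to $v$ a minuscule, injective tie-breaking term that is small enough to preserve every strict inequality of $v$. First I would exploit finiteness: since $M$ is finite there are only $2^{|M|}$ bundles and hence finitely many values in the image of $v$, so I can define the smallest strictly positive gap
\[
\delta \;=\; \min_{S,T\subseteq M:\, v(S)>v(T)} \big(v(S)-v(T)\big),
\]
taking the statement to be trivial if $v$ is constant. The point of $\delta$ is that whenever $v$ strictly separates two bundles it does so by at least $\delta$, so any perturbation whose total spread stays below $\delta$ cannot reverse or erase such a separation.

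Next I would specify the perturbation. Fixing an enumeration $M=\{g_1,\dots,g_m\}$, I set $r(S)=\sum_{g_k\in S}2^{-k}\in[0,1)$; by uniqueness of binary expansions $r$ is injective, and $S\subsetneq T$ forces $r(S)<r(T)$ since $T$ contributes strictly more positive terms. Choosing any constant $c\in(0,\delta)$, I would define $v'(S)=v(S)+c\,r(S)$. This stays in $\mathbb{R}_{\ge 0}$ because $c\,r(S)\ge 0$, and it is monotone because $v$ is monotone and $c\,r$ is strictly increasing under inclusion, so $v'$ is a legitimate (in fact monotone) valuation.

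Finally I would verify the two defining properties. For respecting: if $v(S)>v(T)$ then $v(S)-v(T)\ge\delta$, while $c\,(r(S)-r(T))>-c>-\delta$ since $r(S)-r(T)>-1$, giving $v'(S)-v'(T)>\delta-c>0$. For non-degeneracy, take any distinct $S\ne T$: if $v(S)\ne v(T)$ the respecting property already separates them strictly, and if $v(S)=v(T)$ then $v'(S)-v'(T)=c\,(r(S)-r(T))\ne 0$ by injectivity of $r$. The only real subtlety, and it is mild, is that the two requirements pull in opposite directions: the perturbation must be large enough to break every tie yet small enough to preserve every strict gap. Both are secured at once by keeping the spread of $c\,r$ strictly below $\delta$ while keeping $r$ injective, which is exactly what the bound $c<\delta$ and the binary encoding guarantee.
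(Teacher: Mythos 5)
Your proof is correct and follows essentially the same approach as the paper: an additive perturbation of $v$ by an injective, powers-of-two encoding of bundles, scaled below the minimum strict gap $\delta$ so that strict inequalities are preserved while all ties are broken. The only (cosmetic) difference is that you normalize the encoding with negative powers $2^{-k}$ and a scale $c<\delta$, where the paper uses $2^{j}$ with $\epsilon 2^{m+1}<\delta$ -- the two are equivalent up to rescaling.
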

\begin{proof}
    Let $M = \{g_1, g_2, \ldots, g_m\}$. We perturb valuation $v$ to $v'$.
    Let $\delta = \min_{S,T: \hspace{1mm} v(S) \ne v(T)}  |v(S)- v(T)|$, and let $\epsilon > 0$, 
    be such that $\epsilon 2^{m+1} < \delta$. Then let:

\begin{center}
$\forall S\subset M:  \hspace{3mm}  v_i'(S) = v_i(S) + \epsilon \sum_{g_j \in S}  2^j$.
\end{center}

Now suppose that  for two arbitrary sets $S,T$, we have $v(S)>v(T)$, then:
\begin{align*}
    v_i'(S) - v_i'(T) &= v_i(S) - v_i(T) + \epsilon (\sum_{g_j \in S\setminus T} 2^j  - \sum_{g_j \in T \setminus S} 2^j )  \\
    &\geq  \delta - \epsilon \sum_{g_j \in T \setminus S} 2^j  \\
    &\geq \delta - \epsilon(2^{m+1} -1) > 0
\end{align*}

    Hence, $v'$ respects $v$. Consider any two sets $S,T \subseteq M$ such that  $S \ne T$.
    If $v(S) \ne v(T)$,  we have 
    $v'(S) \ne v'(T)$, since $v'$ respects $v$. 
    If $v(S) = v(T)$, we have
    $v'(S) - v'(T) = \epsilon (\sum_{g_j \in S\setminus T} 2^j  - \sum_{g_j \in T \setminus S} 2^j ) \ne 0 $. Therefore, $v'$ is non-degenerate.
\end{proof}

\begin{lemma}
\label{generalize to degenerate}
    If $X = (X_1, X_2, \ldots, X_n)$ is an EF$k$X allocation for agents $1,2,\ldots, n$ with valuations
    $V' =(v_1', v_2', \ldots, v_n')$, respectively, and if $V=(v_1, v_2, \ldots, v_n)$ is a set of valuations that for every
    $i \in [n]$, valuation $v_i'$ respects valuation $v_i$,
    then $X$ is also an EF$k$X allocation for agents $1,2,\ldots,n$ with valuations $v_1,v_2,\ldots,v_n$, respectively.
\end{lemma}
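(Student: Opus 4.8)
The plan is to reduce the claim to a statement about a single ordered pair of agents and then invoke the defining property of the respecting relation directly. Since an allocation fixes which bundle each agent receives---agent $i$ holds $X_i$ under both $V'$ and $V$---it suffices to show that, for every ordered pair of agents $(i,j)$, the absence of EF$k$X-envy of $X_j$ with $X_i$ under $v_i'$ implies its absence under $v_i$. I would phrase the argument by contraposition at the level of the whole allocation, which keeps the bookkeeping minimal.

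So I would suppose, for contradiction, that $X$ is \emph{not} an EF$k$X allocation under $V$. Then some agent $i$ EF$k$X-envies some bundle $X_j$ with $X_i$ under $v_i$; by the definition of EF$k$X-envy this means there exist $k$ distinct goods $g_1,\dots,g_k \in X_j$ such that $v_i(X_i) < v_i(X_j \setminus \{g_1,\dots,g_k\})$. Now I apply the definition of ``respects'' to the two bundles $S = X_j \setminus \{g_1,\dots,g_k\}$ and $T = X_i$: since $v_i(S) > v_i(T)$ and $v_i'$ respects $v_i$, we obtain $v_i'(X_j \setminus \{g_1,\dots,g_k\}) > v_i'(X_i)$. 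But this exhibits the \emph{same} $k$ goods $g_1,\dots,g_k$ as a witness that agent $i$ EF$k$X-envies $X_j$ with $X_i$ under $v_i'$, contradicting the hypothesis that $X$ is an EF$k$X allocation under $V'$. Hence $X$ must be EF$k$X under $V$.

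The only point that needs care is the direction of the respecting relation: ``respects'' provides a one-way implication, namely $v_i(S) > v_i(T) \Rightarrow v_i'(S) > v_i'(T)$, so the proof must transfer a strict preference \emph{from} $v_i$ \emph{up to} $v_i'$, which is precisely what the contrapositive framing arranges (we start from envy under $v_i$ and manufacture envy under $v_i'$). I do not expect any genuine obstacle here: the crucial structural fact is merely that the witnessing set of $k$ removed goods is preserved verbatim when passing between the two valuations, so no re-selection of goods is needed. The degenerate edge case $|X_j| < k$ also requires no separate treatment, since in that case there do not exist $k$ distinct goods in $X_j$ and EF$k$X-envy is vacuously impossible under either valuation.
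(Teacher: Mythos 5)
Your proof is correct and takes essentially the same approach as the paper: both argue by contradiction, taking a witnessing set of removed goods for EF$k$X-envy under $v_i$ and transferring the strict preference to $v_i'$ via the respecting relation, contradicting EF$k$X-feasibility under $V'$. The only cosmetic difference is the edge case, where the paper removes $p=\min(k,|X_j|)$ goods while you note that $|X_j|<k$ makes envy vacuous --- equivalent given the paper's definition of EF$k$X-envy.
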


\begin{proof}
    Let us assume that $\X$ is an EF$k$X allocation with valuations $V'$, 
    and not an EF$k$X allocation with valuations $V$. 
    Then there exist $i,j$, and $g_1,g_2,...,g_p \in X_j$, 
    with $p = \min(k, |X_j|)$, 
    such that $v_i(X_j \setminus \{g_1,\ldots, g_p\}) > v_i(X_i)$. 
    In that case, Since $v'_i$ respects $v_i$, we have
    $v'_i(X_j \setminus \{g_1,\ldots, g_p\})> v'_i(X_i)$,
    implying that $\X$ is not an EF$k$X allocation
    with valuations $V'$ as well, which is a contradiction.
\end{proof}

\begin{corollary}
\label{non degenerate}
    When agent $1$ has a monotone valuation $v$, and agents $\{2,3,4\}$ have valuations $v_1,v_2,v_3$,
    by \cref{non-degenerate}, there exists a set of non-degenerate valuations $v'_1,v_2,v_3,v_4$, 
    that respect valuations $v_1,v_2,v_3,v_4$, respectively. 
    So, we can find an \eftx allocation with these new valuations, and then by \cref{generalize to degenerate},
    this allocation is an \eftx allocation with original valuations too.
\end{corollary}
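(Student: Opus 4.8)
The plan is to reduce the general case, where agent $1$'s valuation may be degenerate, to the non-degenerate case that the main algorithm requires, using the two preceding lemmas as black boxes. The whole argument is a short ``perturb, solve, transfer back'' reduction, so most of the work has already been done in \cref{non-degenerate} and \cref{generalize to degenerate}; the corollary just composes them.

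First I would apply \cref{non-degenerate} to agent $1$'s monotone valuation to obtain a non-degenerate valuation $v_1'$ that respects it. Crucially, I would leave the cancelable valuations of agents $2,3,4$ untouched: throughout the paper only agent $1$ is assumed non-degenerate (and the remaining three cancelable, with one allowed to be an arbitrary monotone valuation), so perturbing agent $1$ alone preserves every hypothesis needed to run the algorithm. This produces a valuation profile to which \cref{thm:main_result} applies directly, yielding an \eftx\ allocation $\X$ for the perturbed profile.

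Second, I would transfer this guarantee back to the original valuations via \cref{generalize to degenerate}. Since $v_1'$ respects $v_1$ and each unchanged valuation trivially respects itself, that lemma (specialized to $k=2$) immediately certifies that the same partition $\X$ remains \eftx\ under the original profile. No further computation is needed, since the allocation itself is not modified in this step.

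The only delicate point---the real content hidden behind the one-line corollary---is the direction in which the ``respects'' relation must run. It has to preserve \emph{strict} preferences from the coarse valuation to the perturbed one, because the transfer-back argument in \cref{generalize to degenerate} is by contraposition: were some agent $i$ to EF$2$X-envy another bundle under the original $v_i$, the strict inequality $v_i(X_j\setminus\{g_1,g_2\}) > v_i(X_i)$ would, by respect, carry over to $v_i'$ and contradict \eftx\ under the perturbed profile. So the main thing to verify is that \cref{non-degenerate} yields a perturbation respecting $v$ in exactly this sense (it does, by construction via a tiny additive tie-break of size $\epsilon 2^{m+1}<\delta$); everything else is routine bookkeeping.
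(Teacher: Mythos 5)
Your proposal is correct and takes essentially the same route as the paper, which proves the corollary exactly by composing the two lemmas: perturb only agent $1$'s monotone valuation via \cref{non-degenerate} (leaving the cancelable valuations of agents $2,3,4$ untouched, each trivially respecting itself), run the main algorithm on the perturbed profile, and transfer the \eftx\ guarantee back through \cref{generalize to degenerate} with $k=2$. Your observation about the direction of the ``respects'' relation is precisely the contrapositive step carried out in the paper's proof of \cref{generalize to degenerate}, so nothing is missing.
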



\end{document}